\newcommand{\ket}[1]{\left\vert{#1}\right\rangle}
\newcommand{\qw}[1][-1]{\ar @{-} [0,#1]}
\newcommand{\qwx}[1][-1]{\ar @{-} [#1,0]}
\newcommand{\gate}[1]{*{\xy *+<.6em>{#1};p\save+LU;+RU **\dir{-}\restore\save+RU;+RD **\dir{-}\restore\save+RD;+LD **\dir{-}\restore\POS+LD;+LU **\dir{-}\endxy} \qw}
\newcommand{\control}{*!<0em,.025em>-=-{\bullet}}
\newcommand{\ctrl}[1]{\control \qwx[#1] \qw}
\newcommand{\targ}{*!<0em,.019em>=<.79em,.68em>{\xy {<0em,0em>*{} \ar @{ - } +<.4em,0em> \ar @{ - } -<.4em,0em> \ar @{ - } +<0em,.36em> \ar @{ - } -<0em,.36em>},<0em,-.019em>*+<.8em>\frm{o}\endxy} \qw}
\newcommand{\qswap}{*=<0em>{\times} \qw}
\newcommand{\gategroup}[6]{\POS"#1,#2"."#3,#2"."#1,#4"."#3,#4"!C*+<#5>\frm{#6}}
\newcommand{\lstick}[1]{*!R!<.5em,0em>=<0em>{#1}}
\providecommand{\U}[1]{\protect\rule{.1in}{.1in}}
\newtheorem{definition}{Definition}[section]
\newtheorem{theorem}{Theorem}[section]
\newtheorem{lemma}{Lemma}[section]
\newtheorem{corollary}{Corollary}[section]
\newtheorem{conjecture}{Conjecture}[section]
\theoremstyle{remark}
\newtheorem{example}{Example}[section]
\newtheorem{remark}{Remark}[section]
\begin{document}

\title{Quantum Coding with Entanglement}
\author{Mark McMahon Wilde}
\date{August 2008}
\maketitle

\begin{preface}
\pagebreak   % Dediciation
\mbox{}
\vspace{0.5in}
\addcontentsline{toc}{chapter}{Dedication}
\begin{center}
{\Large \bf Dedication}
\end{center}
\vspace{2in}
\begin{center}
{\it \large To my parents\\
Gregory and Sharon Wilde\\
And to my grandparents\\
Joseph and Rose McMahon\\
Norbert Jay and Mary Wilde}
\end{center}

\pagebreak   
\mbox{}
\vspace{0.5in}
\addcontentsline{toc}{chapter}{Acknowledgements}
\begin{center}
{\Large \bf Acknowledgements}
\end{center}
The completion of this Ph.D. has been a long and difficult road
with many wonderful memories along the way. I can truly say that
there is no way I could have completed it without the help and encouragement
from many people.

I first thank my advisor Todd Brun. His help and deep knowledge have been indispensable throughout
and I am grateful that he took me on as his student. Our many research meetings
have been valuable and I am thankful that Todd gave
his time for me. I cannot imagine what a better advisor would be like.
I can recall visiting USC in
March 2004 for a Ph.D. recruitment trip. I saw a posting on the wall
that said there would be a seminar on quantum computing and that Todd Brun was
the host. It was from this point that I became fascinated by quantum information,
had a desire to contribute original research to the field, and thought it would be great if Todd Brun
were my advisor. I look back now and am pleased that all of these dreams have
come to fruition.

I next thank Igor Devetak. He was the first to teach me quantum information theory
and I am grateful for the many times that he met with me. I met with Igor about one hour
per week in the academic semesters of 2005-2006 and these formative
meetings helped shape and refine my knowledge of quantum information theory. I am also
grateful that Igor invited me to join Todd and Igor's research group meetings.

I thank Jonathan Dowling for accepting me into his research group at LSU during
the summers of 2005-2006. I learned so much about quantum optics
by conversing with him and the other members of his lab. The research environment that Jon creates
by having so many seminars and bringing in visitors is a great one to be involved in.

I am grateful to Daniel Lidar for his collaboration and for teaching his course on quantum error
correction at USC. After this class, I can say that my research really took off because I
was able to solidify my knowledge of quantum error correction.

I am in debt to Bart Kosko for working with me during the initial period of my Ph.D.
He gave valuable advice for how to be a good teacher and researcher and
taught me how to have the discipline required of a good researcher. When I was a teaching assistant,
I simply tried to teach like he does and I think I won the Best Teaching Assistant Award
for this reason.
 
I thank Diane
Demetras for being a great grad student ``mom'' away from home.
I owe thanks to Antonio Ortega for initially giving me advice. I also thank the fifth floor
staff---Milly Montenegro, Mayumi Thrasher, and Gerrielyn Ramos---for providing a
pleasant working environment and for filing all those technical reports that I asked to file.
I hope our softball team wins! Shock 'em, Volts!

I thank my collaborators Hari Krovi and Austin Lund for helping me along the way. Austin
was invaluable during the first summer at LSU when he answered my
questions. Hari was a big help in all of our collaborations and especially when we were
first trying to figure out what Todd, Igor, and Min-Hsiu had done with
entanglement-assisted quantum codes. I also thank everyone else in the
group, Min-Hsiu Hsieh, Zhicheng Luo, Ognyan Oreshkov, Shesha Raghunathan, Bilal Shaw, Martin Varbanov, for being a good group of guys
to work with. I learned a lot from our debates and conversations.

I finally thank my family for their support. I especially thank my dad ``Jack'' for the
many long Ph.D. conversations that we have had on the phone and for being excited about my research.
I thank all of the wonderful friends I have made while in the City of Angels,
especially Ali, Ben, Mary L., and Mary B., and I thank
Fr. Lawrence for our spiritual conversations.
I lastly give a ``shout out'' to Santa Monica power yoga studios for providing a great environment
for destressing and to Pete Carroll for having an awesome Trojan football team! Fight on!

\begin{singlespace}
\tableofcontents    
\pagebreak
\addcontentsline{toc}{chapter}{List of Tables}
\listoftables     
\pagebreak
\addcontentsline{toc}{chapter}{List of Figures}
\listoffigures    
\end{singlespace}
\pagebreak
\begin{abstract}
Quantum error-correcting codes will be the ultimate enabler of a future quantum
computing or quantum communication device. This theory forms the cornerstone
of practical quantum information theory. We provide several contributions to the
theory of quantum error correction---mainly to the theory of ``entanglement-assisted''
quantum error correction where
the sender and receiver share entanglement in the form of entangled bits (\textit{ebits})
before quantum communication begins. Our first contribution is an algorithm for
encoding and decoding an entanglement-assisted quantum block code. We then give several formulas
that determine the optimal number of ebits for an entanglement-assisted code. The major contribution of this thesis
is the development of the theory of entanglement-assisted quantum convolutional coding. A convolutional code is
one that has memory and acts on an incoming stream of qubits. We explicitly show how to encode
and decode a stream of information qubits with the help of ancilla qubits and ebits.
Our entanglement-assisted convolutional codes include those with a Calderbank-Shor-Steane structure and those with a more general
structure. We then formulate convolutional protocols that correct errors in noisy entanglement.
Our final contribution is a unification of the theory of quantum error correction---these
unified convolutional codes exploit all of the known resources
for quantum redundancy.
\end{abstract}
\addcontentsline{toc}{chapter}{Abstract}
\end{preface}

\chapter{Introduction}

\begin{saying}
If computers that you build are quantum,\\
Then spies of all factions will want 'em.\\
Our codes will all fail,\\
And they'll read our email,\\
Till we've crypto that's quantum, and daunt 'em.\\
---Jennifer and Peter Shor
\end{saying}

\PARstart{Q}{uantum} computation and communication have enormous potential to
revolutionize the way we compute and communicate. From communicating by
teleportation of quantum information \cite{PhysRevLett.70.1895}\ to breaking
RSA\ encryption \cite{shor94,shor97}, quantum technological breakthroughs will
change society in a way that is difficult to predict.

One might say that the field of quantum computation began when Richard Feynman
and others suggested that a computer operating according to the principles of
quantum mechanics would have advantages over a computer operating according to
the laws of classical physics \cite{ijtp1982feynman,benioff,manin}. They
suggested that it might be able to simulate quantum-mechanical processes more
efficiently than a classical computer could. Later work then showed that it
was possible to achieve this simulation speedup \cite{science1996lloyd}. Other
major advances in the theory of quantum computing followed---examples are
Shor's algorithm for breaking RSA\ encryption\ \cite{shor94,shor97}\ and
Grover's algorithm for searching a database \cite{grover96,grover97}.

The field of modern quantum communication began with the discovery of quantum
key distribution \cite{bb84}. Major discoveries for quantum communication
followed---examples are the ability to send a quantum bit using two classical
bits and a bit of entanglement (quantum teleportation
\cite{PhysRevLett.70.1895}) and the ability to send two classical bits by
sending a quantum bit and consuming a bit of entanglement (quantum superdense
coding \cite{PhysRevLett.69.2881}). Quantum information theorists then began
to think more deeply about the ways in which we could combine the resources of
classical communication, quantum communication, and entanglement to formulate
new communication protocols.

In spite of these spectacular advances for the theories of quantum computation
and communication, a dark shadow lingered over them. A few authors disputed
that reliable quantum computation or communication would be possible because
small quantum errors would accumulate as the computation proceeds or channel
errors would destroy the quantum bits as we communicate them
\cite{landauer1982}. Rolf Landauer even urged his colleagues to include the
following disclaimer in their papers on quantum
computation:\ \textquotedblleft This scheme, like all other schemes for
quantum computation, relies on speculative technology, does not in its current
form take into account all possible sources of noise, unreliability and
manufacturing error, and probably will not work.\textquotedblright\ Various
other obstacles such as the no-cloning theorem \cite{nat1982} and measurement
destroying a quantum state seemed to pose an insurmountable barrier to a
protocol for quantum error correction.

Despite the aforementioned obstacles, Shor demonstrated the first quantum
error-correcting code that reduces the negative effects of decoherence on a
quantum bit \cite{PhysRevA.52.R2493}. Shor's code overcame all of the above
difficulties and established the basic principles for constructing a general
theory of quantum error correction
\cite{thesis97gottesman,PhysRevLett.78.405,ieee1998calderbank}. Shor's code
exploits many of the signature principles of quantum
mechanics:\ superposition, entanglement, unitary evolution, and measurement.
Mermin proclaims it a \textquotedblleft miracle\textquotedblright\ that
quantum error correction is even possible \cite{book2007mermin}.

\section{Quantum Error Correction}

Quantum error correction theory
\cite{PhysRevA.52.R2493,PhysRevA.54.1098,PhysRevLett.77.793,thesis97gottesman,ieee1998calderbank}%
\ now plays a prominent role in the practical realization and engineering of
quantum computing and communication devices. The first quantum
error-correcting codes
\cite{PhysRevA.52.R2493,PhysRevA.54.1098,PhysRevLett.77.793,ieee1998calderbank}%
\ are strikingly similar to classical block codes \cite{book1983code} in their
operation and performance. Quantum error-correcting codes restore a noisy,
decohered quantum state to a pure quantum state. Any future quantum
information processing device will operate faithfully only if it employs an
error correction scheme. This scheme can be an active scheme
\cite{thesis97gottesman},\ a passive scheme
\cite{PhysRevLett.79.3306,mpl1997zanardi,PhysRevLett.81.2594},\ or a
combination of both techniques
\cite{kribs:180501,qic2006kribs,poulin:230504,isit2007brun,hsieh:062313}.

The field of quantum error correction has rapidly expanded in recent years
because there are so many ways in which one can correct quantum error. We
briefly introduce three notable developments in the theory---these three are
by no means exhaustive. The developments include the stabilizer formalism, the
entanglement-assisted stabilizer formalism, and the convolutional stabilizer formalism.

\subsection{Stabilizer Quantum Coding}

\label{sec:intro-stab}Gottesman formalized the theory of quantum block coding
by establishing the stabilizer formalism \cite{thesis97gottesman}. A
\textit{stabilizer} quantum error-correcting code appends ancilla qubits to
information qubits that we want to protect. A unitary encoding circuit rotates
the Hilbert space of the information qubits into a subspace of a larger
Hilbert space. This highly entangled, encoded state corrects for local noisy
errors. A quantum error-correcting code makes quantum computation and quantum
communication practical by providing a way for a sender and receiver to
simulate a noiseless qubit channel given a noisy qubit channel that has a
particular error model.

The stabilizer theory of quantum error correction allows one to import some
classical binary or quaternary codes for use as a quantum code
\cite{ieee1998calderbank}. The Calderbank-Shor-Steane
(CSS)\ construction is the name for the method
for importing classical binary codes \cite{book2000mikeandike}. This idea of importing codes is useful because quantum code
designers can utilize classical block codes with high performance to construct
quantum codes with high performance. The only \textquotedblleft
catch\textquotedblright\ when importing is that the classical code must
satisfy the dual-containing or self-orthogonality constraint. Researchers have
found many examples of classical codes satisfying this constraint
\cite{ieee1998calderbank}, but most classical codes do not.

\subsection{Entanglement-Assisted Stabilizer Quantum Coding}

\label{sec:intro-EA}Bowen was the first to extend the stabilizer formalism by
providing an example of a code that exploits entanglement shared between a
sender and a receiver \cite{PhysRevA.66.052313}. The underlying assumption of
Bowen's code is that the sender and receiver share a set of noiseless ebits
(entangled qubits)\ before quantum communication begins. Many quantum
protocols such as teleportation \cite{PhysRevLett.70.1895}\ and superdense
coding \cite{PhysRevLett.69.2881}\ are \textquotedblleft
entanglement-assisted\textquotedblright\ protocols because they assume that
noiseless ebits are available.

Brun, Devetak, and Hsieh extended the standard stabilizer theory of quantum
error correction by developing the entanglement-assisted stabilizer formalism
\cite{science2006brun,arx2006brun}. They included entanglement as a resource
that a sender and receiver can exploit for a quantum error-correcting code.
They provided a \textquotedblleft direct-coding\textquotedblright%
\ construction in which a sender and receiver can use ancilla qubits and ebits
in a quantum code. An ebit is a nonlocal bipartite Bell state%
\[
\left\vert \Phi^{+}\right\rangle =\left(  \left\vert 00\right\rangle
+\left\vert 11\right\rangle \right)  /\sqrt{2}.
\]
Gottesman later showed that their construction is optimal \cite{unpub2007got}%
---it gives the minimum number of ebits required for the entanglement-assisted
quantum code.

The major benefit of the entanglement-assisted stabilizer formalism\ is that
we can construct an entanglement-assisted quantum code from two arbitrary
classical binary block codes or from an arbitrary classical quaternary block
code. The rate and error-correcting properties of the classical codes
translate to the resulting quantum codes. The entanglement-assisted stabilizer
formalism may be able to reduce the problem of finding high-performance
quantum codes approaching the quantum capacity
\cite{PhysRevA.55.1613,capacity2002shor,ieee2005dev,PhysRevLett.83.3081,ieee2002bennett}%
\ to the problem of finding good classical linear codes approaching the
classical capacity \cite{book1991cover}. The entanglement-assisted stabilizer
formalism thus is a significant and powerful extension of the stabilizer formalism.

\subsection{Convolutional Stabilizer Quantum Coding}

\label{sec:intro-conv}Another extension of the theory of quantum error
correction protects a potentially-infinite stream of quantum information
against the corruption induced by a noisy quantum communication channel
\cite{PhysRevLett.91.177902,arxiv2004olliv,isit2006grassl,ieee2006grassl,ieee2007grassl,isit2005forney,ieee2007forney,cwit2007aly,arx2007aly,arx2007wildeCED,arx2007wildeEAQCC,arx2008wildeUQCC}%
. Quantum convolutional codes have numerous benefits. The periodic structure
of the encoding and decoding circuits for a quantum convolutional code ensures
a low complexity for encoding and decoding while also providing higher
performance than a block code with equivalent encoding complexity
\cite{ieee2007forney}. The encoding and decoding circuits have the property
that the sender Alice and the receiver Bob can respectively send and receive
qubits in an \textquotedblleft online\textquotedblright\ fashion. Alice can
encode an arbitrary number of information qubits without worrying beforehand
how many she may want to send over the quantum communication channel.

We believe that quantum convolutional coding theory is one step along the way
to finding quantum error-correcting codes that approach the capacity of a
noisy quantum communication channel for sending quantum information
\cite{PhysRevA.55.1613,capacity2002shor,ieee2005dev,qcap2008first,qcap2008second,qcap2008third,qcap2008fourth}%
. Poulin et al. have recently incorporated some of this well-developed theory
of quantum convolutional coding into a theory of quantum serial-turbo coding
\cite{arx2007poulin}\ with the goal of designing quantum codes that come close
to achieving capacity.

The development of quantum convolutional codes has been brief but successful.
Chau was the first to construct some quantum convolutional codes
\cite{PhysRevA.58.905,PhysRevA.60.1966}, though some authors
\cite{ieee2007forney}\ argue that his construction does not give a true
quantum convolutional code. Several authors have established a working theory
of quantum convolutional coding based on the stabilizer formalism and
classical self-orthogonal codes over the finite field $GF\left(  4\right)  $
\cite{PhysRevLett.91.177902,arxiv2004olliv,isit2005forney,ieee2007forney}.
Others have also provided a practical way for realizing \textquotedblleft
online\textquotedblright\ encoding and decoding circuits for quantum
convolutional codes
\cite{PhysRevLett.91.177902,arxiv2004olliv,ieee2006grassl,isit2006grassl}.

Forney et al. have determined a method for importing an arbitrary classical
self-orthogonal quaternary code for use as a quantum convolutional code
\cite{isit2005forney,ieee2007forney}.\ The technique is similar to that for
importing a classical block code as a quantum block code
\cite{ieee1998calderbank}. Forney et al. showed how to produce quantum
convolutional codes from classical convolutional codes by extending the ideas
from the CSS\ construction to the convolutional setting \cite{ieee2007forney};
but again, these imported classical convolutional codes have to satisfy the
restrictive dual-containing constraint in order to form valid quantum codes.
The dual-containing constraint is actually quite a bit more restrictive in the
convolutional case because each generator for the quantum convolutional code
must commute not only with the other generators, but it must commute also with
any arbitrary shift of itself and any arbitrary shift of the other generators.
Forney et al. performed specialized searches to determine classical quaternary
codes that satisfy the restrictive self-orthogonality constraint
\cite{ieee2007forney}.

\section{Connections to Quantum Shannon Theory}

Quantum error correction theory gives practical ways to build codes for
protection of quantum information against decoherence. A quantum code
typically encodes some number $k$ of information qubits into a larger number
$n$ of physical qubits. We say that the rate of quantum communication for this
code is $k/n$.

One may ask the question: what is the maximum rate of quantum communication
for a given noisy quantum communication channel? This question is the
fundamental question of quantum Shannon theory and the answer to it is a
quantity known as the capacity of the quantum channel for quantum
communication, or more simply, the quantum capacity
\cite{PhysRevA.55.1613,capacity2002shor,ieee2005dev,arx2005dev}. Quantum
Shannon theory has many other questions. What is the rate of classical
communication for a noisy quantum communication channel?\ What is the rate if
the sender and receiver have access to entanglement?\ What is the
rate of simultaneous classical and quantum communication? The list of questions goes on and on.

The goal of quantum Shannon theory is to quantify the amount of quantum
communication, classical communication, and entanglement required for various
information processing tasks
\cite{PhysRevA.55.1613,capacity2002shor,ieee2005dev,arx2005dev}. Quantum
teleportation and superdense coding provided the
initial impetus for quantum Shannon theory because these protocols demonstrate
that we can combine entanglement, noiseless quantum communication, and
noiseless classical communication to transmit quantum or classical
information. In practice, the above resources are not noiseless because
quantum systems decohere by interacting with their surrounding environment.
Quantum Shannon theory is a collection of capacity theorems that determine the
ultimate limits for noisy quantum communication channels. Essentially all
quantum protocols have been unified as special cases of a handful of abstract
protocols \cite{arx2005dev}.

The techniques in quantum Shannon theory determine the asymptotic limits for
communication, but these techniques do not produce practical ways of realizing
these limits. This same practical problem exists with classical Shannon theory
because random coding techniques determine the limit of a noisy classical
communication channel \cite{bell1948shannon}.

\subsection{Communication Protocols}

The fundamental theorem of quantum Shannon theory is the celebrated
Lloyd-Shor-Devetak quantum capacity theorem. It determines the capacity of a
given noisy quantum communication channel for reliable quantum communication.
Suppose that a noisy quantum
channel $\mathcal{N}$ connects a sender Alice $A$ to a receiver Bob $B$. Let
$\left[  q\rightarrow q\right]  $ denote one qubit of noiseless quantum
communication.
We can state the coding theorem as a resource inequality:%
\begin{equation}
\left\langle \mathcal{N}\right\rangle \geq Q\left[  q\rightarrow q\right]  , \label{eq:q-coding-res}
\end{equation}
The above resource inequality states that $n$ uses of the noisy quantum
channel $\mathcal{N}$ are sufficient to communicate $nQ$ noiseless qubits in
the limit of a large number $n$ of channel uses. The rate $Q$ is equal to a
quantity known as the coherent information:
$I\left(  A\rangle B\right)  $ \cite{book2000mikeandike}. The entropic quantity is maximized with
respect to any resource state $\left\vert \psi\right\rangle ^{ABE}$ associated
to the noisy channel $\mathcal{N}$ and shared between Alice, Bob, and the
environment $E$. The goal of the stabilizer codes and convolutional stabilizer
codes mentioned respectively in Sections~\ref{sec:intro-stab} and
\ref{sec:intro-conv}\ is for their rates to approach the maximum capacity
given above.

Another example of an important capacity theorem from quantum Shannon theory
results from the \textquotedblleft father\textquotedblright\ protocol
\cite{arx2005dev}. The father capacity theorem determines the optimal
trade-off between the rate $E$ of ebits (entangled qubits in the state
$\left\vert \Phi^{+}\right\rangle ^{AB}\equiv(\left\vert 00\right\rangle
^{AB}+\left\vert 11\right\rangle ^{AB})/\sqrt{2}$) and the rate $Q$ of qubits
in entanglement-assisted quantum communication. This protocol is the
\textquotedblleft father\textquotedblright\ protocol because it generates many
of the protocols in the family tree of quantum information theory
\cite{arx2005dev}. The nickname \textquotedblleft father\textquotedblright\ is
a useful shorthand for classifying the protocol---there exists a mother,
grandmother, and grandfather protocol as well. Let $\left[  qq\right]  $ denote one ebit of entanglement.
The following resource inequality is a statement of the capability of the
father protocol:%
\begin{equation}
\left\langle \mathcal{N}\right\rangle +E\left[  qq\right]  \geq Q\left[
q\rightarrow q\right]  . \label{eq:father}%
\end{equation}
The above resource inequality states that $n$ uses of the noisy quantum
channel $\mathcal{N}$ and $nE$ noiseless ebits are sufficient to communicate
$nQ$ noiseless qubits in the limit of large $n$. The rates $E$ and $Q$ are
respectively equal to $\frac{1}{2}I\left(  A;E\right)  $ and $\frac{1}%
{2}I\left(  A;B\right)  $. The entropic quantities are maximized with respect
to any resource state $\left\vert \psi\right\rangle ^{ABE}$ associated to the
noisy channel $\mathcal{N}$ and shared between Alice, Bob, and the environment
$E$ (the rate $E$ is different from the environment $E$). The father capacity
theorem gives the optimal limits on the resources, but it does not provide a
useful quantum coding technique for approaching the above limits. The goal of
the entanglement-assisted stabilizer formalism mentioned in
Section~\ref{sec:intro-EA}\ is to develop quantum codes that approach the
maximal rates given in the father capacity theorem. We spend a significant
portion of this thesis (Chapters~\ref{chp:EAQCC-CSS}, \ref{chp:general-case},
and \ref{chp:free-ent}) developing entanglement-assisted convolutional
stabilizer codes. Future research might be able to incorporate these convolutional codes into
a framework for designing codes that come close to achieving the maximal rates.

Another important capacity theorem determines the ability of a noisy quantum
channel to send \textquotedblleft classical-quantum\textquotedblright\ states
\cite{cmp2005dev}. Let $\left[  c\rightarrow c\right]  $ denote one
bit of noiseless classical communication. The result of the classical-quantum
capacity theorem is also a resource inequality:%
\begin{equation}
\left\langle \mathcal{N}\right\rangle \geq Q\left[  q\rightarrow q\right]
+R\left[  c\rightarrow c\right]  . \label{eq:CQ}%
\end{equation}
The resource inequality states that $n$ uses of the noisy quantum channel
$\mathcal{N}$ are sufficient to communicate $nQ$ noiseless qubits and $nR$
noiseless classical bits in the limit of large $n$. The rates $Q$ and $R$ are
respectively equal to $I\left(  X;B\right)  $ and $I\left(  A\rangle
BX\right)  $. The entropic quantities are with respect to any state resulting
from sending the $A^{\prime}$ system of the following classical-quantum state%
\begin{equation}
\sum_{x}p_{x}\left\vert x\right\rangle \left\langle x\right\vert ^{X}%
\otimes\left\vert \phi_{x}\right\rangle \left\langle \phi_{x}\right\vert
^{AA^{^{\prime}}}%
\end{equation}
through the quantum channel $\mathcal{N}^{A^{\prime}\rightarrow B}$. $X$ and
$A$ are systems that the sender keeps. This theorem proves that we can devise
clever classical-quantum codes that perform better than time-sharing a noisy
quantum channel $\mathcal{N}$ between purely quantum codes and purely
classical codes.

The \textquotedblleft grandfather\textquotedblright\ capacity theorem
determines the optimal triple trade-off between qubits, ebits, and classical
bits for simultaneous transmission of classical and quantum information using
an entanglement-assisted noisy quantum channel $\mathcal{N}$
\cite{prep2008dev}. The grandfather resource inequality is as follows:%
\begin{equation}
\left\langle \mathcal{N}\right\rangle +E\left[  qq\right]  \geq Q\left[
q\rightarrow q\right]  +R\left[  c\rightarrow c\right]  . \label{eq:GF}%
\end{equation}
The above resource inequality is again an asymptotic statement and its meaning
is similar to that in (\ref{eq:q-coding-res}),
(\ref{eq:father}), and (\ref{eq:CQ}). The optimal rates
in the above resource inequality coincide with the father inequality
(\ref{eq:father}) when $R=0$, with the classical-quantum inequality
(\ref{eq:CQ})\ when $E=0$, and with the quantum capacity
(\ref{eq:q-coding-res}) when both $R=0$ and
$E=0$. The optimal strategy for the grandfather protocol is not time-sharing
the channel between father codes and entanglement-assisted classical codes. It
remains to be proven whether this optimal strategy outperforms time-sharing
\cite{prep2008dev}. We develop grandfather convolutional codes in
Chapter~\ref{chp:unified}. Again, future research might be able to incorporate
these grandfather convolutional codes into a framework for designing codes that come close to achieving the maximal limits given
in the grandfather capacity theorem.

\subsection{Distillation Protocols}

The aforementioned protocols employ a noisy quantum communication channel in
their operation. We now ask the question:\ given noisy entanglement, is it
possible to produce noiseless entanglement using classical communication?
Indeed, it is possible, and this procedure is entanglement distillation
\cite{PhysRevLett.76.722,PhysRevA.54.3824}. The following resource inequality
summarizes the resources consumed and produced in an entanglement distillation
protocol:%
\[
\left\langle \rho^{AB}\right\rangle +R\left[  c\rightarrow c\right]  \geq
Q\left[  qq\right]  .
\]
The statement of the resource inequality is as follows: given a large number
$n$ of noisy bipartite states $\rho^{AB}$ shared between two parties, it is
possible to generate $nQ$ noiseless ebits by consuming $nR$ classical bits in
the limit of a large number $n$ of copies of the noisy state $\rho^{AB}$. The
optimal rates are $R=I\left(  A;E\right)$ and\ $Q=I\left(  A\rangle
B\right)$ where the entropic quantities are with respect to a state
$\left\vert \psi\right\rangle ^{ABE}$ that purifies the state $\rho^{AB}$. In
Chapter~\ref{chp:ced}, we develop convolutional entanglement distillation
protocols whose goal is to approach the optimal rates given in the above
resource inequality.

\section{Organization of this Thesis}

This thesis represents a significant contribution to quantum error correction
theory. In it, we give several methods for designing quantum codes. These
methods should be useful in the future when coherent quantum encoding devices
are available. The quantum code designer will have a plethora of techniques to
exploit for protecting quantum information and the techniques in this thesis
should be part of the arsenal.

A quick glance over the thesis reveals that much of the mathematics involves
only linear algebra over matrices of binary numbers or binary polynomials. It
is a testament to the strength of the theory that it reduces to this simple
form. We show in detail especially in Chapters~\ref{chp:stab} and
\ref{chp:conv} how to reduce the theory to have this simple form.
This thesis has contributions from the following publications
\cite{arx2007wildeEAQCC,arx2008wildeOEA,prep2007wildeEAOQEC,prep2008wildeQCCFE,arx2008wildeUQCC,arx2007wildeCED,pra2007wildeEA}.

We structure the thesis as follows. The next chapter introduces the stabilizer
formalism for quantum error correction. We show in this chapter how to
represent quantum codes with binary matrices.
Chapter~\ref{chp:ent-assisted-review} introduces the entanglement-assisted
stabilizer formalism and shows how to enhance the stabilizer formalism by
assuming that the sender and receiver share entanglement. This chapter gives
two contributions:\ a method for determining the encoding circuit for a
quantum block code and a method to determine the minimal amount of
entanglement that an entanglement-assisted code requires.
Chapter~\ref{chp:conv} reviews the convolutional stabilizer formalism. In
particular, it details the operation of a quantum convolutional
code and shows how to simplify the mathematics with matrices whose entries
are binary polynomials. Chapters~\ref{chp:EAQCC-CSS}, \ref{chp:general-case},
and \ref{chp:free-ent} form the heart of this thesis. They show how to
construct entanglement-assisted quantum codes that have a convolutional
structure. The techniques developed represent a significant extension of the
entanglement-assisted stabilizer formalism. Chapter~\ref{chp:unified}%
\ presents a unifying structure for quantum error correction under which most
current schemes for quantum error correction fall. We present a method for
distilling entanglement in a convolutional manner in Chapter \ref{chp:ced}.
The major benefit of all of the above convolutional methods is that we can
import arbitrary high-performance classical codes to construct
high-performance quantum codes.

\chapter{Stabilizer Quantum Codes}

\label{chp:stab}\begin{saying}
``Fight entanglement with entanglement,'' you say,\\
'Cause entanglement keeps Eve away,\\
The hell that she raises,\\
She damps and dephases,\\
She's no match---like UCLA!
\end{saying}The stabilizer formalism is a mathematical framework for quantum
error correction \cite{PhysRevA.54.1862,thesis97gottesman}. This framework has
many similarities with classical coding theory, and it is even possible to
import a classical code for use in quantum error correction by employing the
CSS\ construction
\cite{PhysRevA.54.1098,PhysRevLett.77.793,book2000mikeandike}. We review the
stabilizer theory for quantum block codes.

The stabilizer formalism is versatile because it appears not only in quantum
error correction but also in other subjects of quantum information theory such
as stabilizer or graph states \cite{graph2005}, cluster-state quantum
computation \cite{cluster2001}, and entanglement theory \cite{arx2004fattal}.
The stabilizer formalism simplifies the theory of quantum error correction by
reducing the mathematics to linear algebra over binary matrices.

\section{Review of the Stabilizer Formalism}

\label{sec:standard-stabilizer}The stabilizer formalism exploits elements of
the Pauli group $\Pi$. The set $\Pi=\left\{  I,X,Y,Z\right\}  $ consists of
the Pauli operators:%
\[
I\equiv%
\begin{bmatrix}
1 & 0\\
0 & 1
\end{bmatrix}
,\ X\equiv%
\begin{bmatrix}
0 & 1\\
1 & 0
\end{bmatrix}
,\ Y\equiv%
\begin{bmatrix}
0 & -i\\
i & 0
\end{bmatrix}
,\ Z\equiv%
\begin{bmatrix}
1 & 0\\
0 & -1
\end{bmatrix}
.
\]
The above operators act on a single qubit---a vector in a two-dimensional
Hilbert space---and are the most important in formulating a quantum
error-correcting code. Two crucial properties of these matrices are useful:
each matrix in $\Pi$ has eigenvalues equal to $+1$ or $-1$, and any two
matrices in $\Pi$ either commute or anticommute.

In general, a quantum error-correcting code uses $n$ physical qubits to
protect a smaller set of information qubits against decoherence or quantum
noise. An $n$-qubit quantum error-correcting code employs elements of the
Pauli group $\Pi^{n}$. The set $\Pi^{n}$ consists of $n$-fold tensor products
of Pauli operators:%
\begin{equation}
\Pi^{n}=\left\{  e^{i\phi}A_{1}\otimes\cdots\otimes A_{n}:\forall j\in\left\{
1,\ldots,n\right\}  ,\ \ A_{j}\in\Pi,\ \ \phi\in\left\{  0,\pi/2,\pi
,3\pi/2\right\}  \right\}  .
\end{equation}
Matrices in $\Pi^{n}$ act on a $2^{n}$-dimensional complex vector, or
equivalently, an $n$-qubit quantum register. We omit tensor product symbols in
what follows so that $A_{1}\cdots A_{n}\equiv A_{1}\otimes\cdots\otimes A_{n}%
$. The above two crucial properties for the single-qubit Pauli group $\Pi$
still hold for the Pauli group $\Pi^{n}$ (up to an irrelevant phase for the
eigenvalue property). The $n$-fold Pauli group $\Pi^{n}$\ plays an important
role in both the encoding circuit and the error-correction procedure of a
quantum stabilizer code over $n$ qubits.

We can phrase the theory of quantum error correction in purely mathematical
terms using elements of $\Pi^{n}$. Consider a matrix $g_{1}\in\Pi^{n}$ that is
not equal to $\pm I$. Matrix $g_{1}$ then has two eigenspaces each of
dimension $2^{n-1}$. We can identify one eigenspace with the eigenvalue $+1$
and the other eigenspace with eigenvalue $-1$. Consider a matrix $g_{2}\in
\Pi^{n}$ different from both $g_{1}$ and the identity that commutes with $g_{1}$.
Matrix $g_{2}$ also has two eigenspaces each of size $2^{n-1}$ and identified
similarly by its eigenvalues $\pm1$. Then $g_{1}$ and $g_{2}$ have
simultaneous eigenspaces because they commute. These matrices together have
four different eigenspaces, each of size $2^{n-2}$ and identified by the
eigenvalues $\pm1,\pm1$ of $g_{1}$ and $g_{2}$ respectively. We can continue
this process of adding more commuting and independent matrices to a set
$\mathcal{S}$. The matrices in $\mathcal{S}$ are independent in the sense that
no matrix in $\mathcal{S}$ is a product of two or more other matrices in
$\mathcal{S}$. Adding more matrices from $\Pi^{n}$\ to $\mathcal{S}$ continues
to divide the eigenspaces of matrices in $\mathcal{S}$. In general, suppose
$\mathcal{S}$ consists of $n-k$ independent and commuting matrices $g_{1}$,
\ldots, $g_{n-k}\in\Pi^{n}$. These $n-k$ matrices then have $2^{n-k}$
different eigenspaces each of size $2^{k}$ and identified by the eigenvalues
$\pm1$, \ldots, $\pm1$\ of $g_{1}$, \ldots, $g_{n-k}$ respectively. Consider
that the Hilbert space of $k$ qubits has size $2^{k}$. A dimension count
immediately suggests that we can encode $k$ qubits into one of the eigenspaces
of $\mathcal{S}$. We typically encode these $k$ qubits into the simultaneous
$+1$-eigenspace of $g_{1}$, \ldots, $g_{n-k}$. This eigenspace is the
\textit{codespace}. The operators $g_{1},\ldots,g_{n-k}$ function in the same
way as a parity check matrix does for a classical linear block code. An
$\left[  n,k\right]  $ quantum error-correcting code encodes $k$ information
qubits into the simultaneous $+1$-eigenspace of $n-k$ matrices $g_{1}$,
\ldots, $g_{n-k}\in\Pi^{n}$. Its stabilizer $\mathcal{S}$\ is an abelian
subgroup of the $n$-fold Pauli group $\Pi^{n}$: $\mathcal{S}\subset\Pi^{n}$.
Note that it is possible to multiply the generators in the stabilizer together
and obtain an equivalent representation of the stabilizer in much the same way
that we can add vectors in a basis together and obtain an equivalent basis.%

\begin{figure}
[ptb]
\begin{center}
\includegraphics[
natheight=3.386600in,
natwidth=8.973300in,
height=1.9614in,
width=5.1742in
]%
{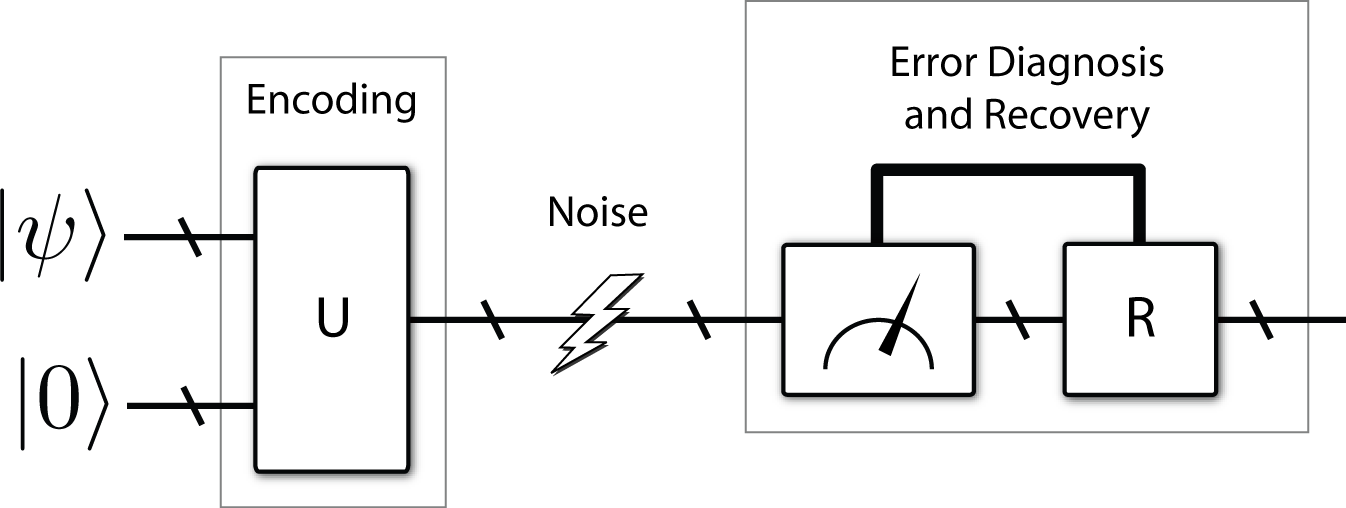}%
\caption{The operation of a stabilizer code. Thin lines denote quantum
information and thick lines denote classical information. Slanted bars denote
multiple qubits. A sender encodes a multi-qubit state $\left\vert
\psi\right\rangle $ with the help of some ancilla qubits $\left\vert
0\right\rangle $. She sends the encoded state over a noisy quantum channel.
The receiver performs multi-qubit measurements to extract information about
the errors. He finally performs a recovery operation $R$\ to reverse the
channel error.}%
\label{fig:stabilizer-code}%
\end{center}
\end{figure}

The operation of an $\left[  n,k\right]  $ quantum error-correcting code
consists of four steps. Figure~\ref{fig:stabilizer-code}\ highlights these steps.

\begin{enumerate}
\item A unitary operation $U$ encodes both $k$ information qubits in a general
state $\left\vert \psi\right\rangle $\ and $n-k$ ancilla qubits in the state
$\left\vert 0\right\rangle ^{\otimes n-k}$\ into the simultaneous
$+1$-eigenspace of the matrices $g_{1}$, \ldots, $g_{n-k}$.

\item The sender transmits the $n$ encoded qubits by using the noisy quantum
communication channel $n$ times.

\item The receiver performs quantum measurements of the $n-k$ matrices $g_{1}%
$, \ldots, $g_{n-k}$ in the stabilizer $\mathcal{S}$. These measurements learn
only about errors that may occur and do not disturb the encoded quantum
information. Each measurement gives a bit result equal to $+1$ or $-1$, and
the result of all the measurements is to project the $n$-qubit quantum
register into one of the $2^{n-k}$ different eigenspaces of $g_{1}$, \ldots,
$g_{n-k}$. Suppose that no error occurs. Then the measurements project the $n$
qubits into the simultaneous $+1$-eigenspace and return a bit vector
consisting of $n-k$ ones. This \textquotedblleft projection of the error
concept\textquotedblright\ is one of the fundamental notions in quantum error
correction theory. It suffices to correct a discrete error set with support in
the Pauli group $\Pi^{n}$ \cite{PhysRevA.52.R2493}. Now suppose that a quantum
error in an error set $\mathcal{E}\subset\Pi^{n}$ occurs. The error takes the
encoded quantum state out of the codespace and into one of the other
$2^{n-k}-1$ orthogonal eigenspaces. The measurements can detect that an error
has occurred because the result of the measurements is a bit vector differing
from the all ones vector. The receiver can identify uniquely which
error in $\mathcal{E}$\ has occurred if the set $\mathcal{E}$ satisfies the
following quantum error correction conditions:%
\begin{equation}
\forall E_{a},E_{b}\in\mathcal{E\ \ \ \ \ }\exists\ g_{i}\in\mathcal{S}%
:\left\{  g_{i},E_{a}^{\dag}E_{b}\right\}  =0\text{ or }E_{a}^{\dag}E_{b}%
\in\mathcal{S}.
\end{equation}
The first condition corresponds to the active error-correcting capability of the code, and the second condition
corresponds to its passive error-correcting capability.

\item If the receiver can identify which error occurs, he can then apply a
unitary operation $R$\ that is the inverse of the error. He finally performs a
decoding unitary that decodes the $k$ information qubits.
\end{enumerate}

\section{Clifford Encoding Unitary}

\label{sec:cliff-encoding}We comment briefly on the encoding operation $U$.
The encoding operation $U$ is a special type of unitary matrix called a
Clifford operation. A Clifford operation $U$\ is one that preserves elements
of the Pauli group under conjugation:\ $A\in\Pi^{n}\Rightarrow UAU^{\dag}%
\in\Pi^{n}$. The CNOT\ gate, the Hadamard gate $H$, and the phase gate $P$
suffice to implement any unitary matrix in the Clifford group
\cite{thesis97gottesman}. A quantum code with the CSS\ structure needs only
the CNOT\ and Hadamard gates for encoding and decoding. The matrix for the
CNOT\ gate acting on two qubits is%
\begin{equation}
\text{CNOT}=%
\begin{bmatrix}
1 & 0 & 0 & 0\\
0 & 1 & 0 & 0\\
0 & 0 & 0 & 1\\
0 & 0 & 1 & 0
\end{bmatrix}
,
\end{equation}
the matrix for the Hadamard gate $H$ acting on a single qubit is%
\begin{equation}
H=\frac{1}{\sqrt{2}}%
\begin{bmatrix}
1 & 1\\
1 & -1
\end{bmatrix}
,
\end{equation}
and the matrix for the phase gate $P$ acting on a single qubit is%
\begin{equation}
P=%
\begin{bmatrix}
1 & 0\\
0 & i
\end{bmatrix}
.
\end{equation}

The standard basis for $\Pi^{1}$ is $X$ and $Z$ because any element of
$\Pi^{1}$ is a product of elements in this generating set up to an irrelevant
phase. The standard basis for elements of the two-qubit Pauli group $\Pi^{2}$
is as follows%
\begin{equation}%
\begin{array}
[c]{cc}%
Z & I
\end{array}
,\ \
\begin{array}
[c]{cc}%
I & Z
\end{array}
,\ \
\begin{array}
[c]{cc}%
X & I
\end{array}
,\ \
\begin{array}
[c]{cc}%
I & X
\end{array}
,
\end{equation}
for the same reasons.

The
Hadamard gate $H$ transforms the standard basis of $\Pi^{1}$ under conjugation
as follows:%
\begin{equation}
Z\rightarrow X,\ \ \ \ X\rightarrow Z
\end{equation}
and the phase gate $P$ transforms the standard basis as follows:%
\begin{equation}
Z\rightarrow Z,\ \ \ \ X\rightarrow Y.
\end{equation}
For the CNOT\ gate, the first qubit is the \textquotedblleft
control\textquotedblright\ qubit and the second qubit is the \textquotedblleft
target\textquotedblright\ qubit. The CNOT\ gate transforms the standard basis
of $\Pi^{2}$ under conjugation as follows%
\begin{equation}%
\begin{array}
[c]{cc}%
Z & I
\end{array}
\rightarrow%
\begin{array}
[c]{cc}%
Z & I
\end{array}
,\ \ \
\begin{array}
[c]{cc}%
I & Z
\end{array}
\rightarrow%
\begin{array}
[c]{cc}%
Z & Z
\end{array}
,\ \ \
\begin{array}
[c]{cc}%
X & I
\end{array}
\rightarrow%
\begin{array}
[c]{cc}%
X & X
\end{array}
,\ \ \
\begin{array}
[c]{cc}%
I & X
\end{array}
\rightarrow%
\begin{array}
[c]{cc}%
I & X
\end{array}
,
\end{equation}
where the first qubit is the control and the second qubit is the target.

Section~\ref{sec:encoding-alg-ent-assist} of
Chapter~\ref{chp:ent-assisted-review}\ details an algorithm that determines a
Clifford encoding circuit for any stabilizer code or any entanglement-assisted
stabilizer code (we review entanglement-assisted codes in the next chapter).

\section{The Logical Operators and the Information Qubits}

Another aspect of the theory of quantum error correction is later useful for
our purposes in quantum convolutional coding. This aspect concerns the
information qubits and the operators that change them. Consider that the
initial unencoded state of a quantum error-correcting code is a simultaneous
+1-eigenstate of the matrices $Z_{k+1},\ldots,Z_{n}$ where $Z_{i}$ has a $Z$
matrix operating on qubit $i$ and the identity $I$ on all other qubits.
Therefore, the matrices $Z_{k+1},\ldots,Z_{n}$ constitute a stabilizer for the
unencoded state. The initial unencoded logical operators for the
$k$\ information qubits are $Z_{1},X_{1},\ldots,Z_{k},X_{k}$. The encoding
operation $U$ rotates the unencoded stabilizer matrices $Z_{k+1},\ldots,Z_{n}$
and the unencoded logical operators $Z_{1},X_{1},\ldots,Z_{k},X_{k}$ to the
encoded stabilizer operators $\bar{Z}_{k+1}$, \ldots, $\bar{Z}_{n}$ and the
encoded logical operators $\bar{Z}_{1},\bar{X}_{1},\ldots,\bar{Z}_{k},\bar
{X}_{k}$ respectively. The encoded matrices $\bar{Z}_{k+1},\ldots,\bar{Z}_{n}$
are respectively equivalent to the matrices $g_{1}$, \ldots, $g_{n-k}$ in the
discussion of the previous section.

The encoded operators obey the same commutation relations as their unencoded
counterparts. We would violate the uncertainty principle if this invariance
did not hold. Therefore, each of the encoded logical operators commutes with
elements of the stabilizer $\mathcal{S}$. Let $A$\ denote an arbitrary logical
operator from the above set and let $\bar{Z}_{i}$ denote an arbitrary element
of the stabilizer $\mathcal{S}$. The operator $A\bar{Z}_{i}$ (or equivalently
$\bar{Z}_{i}A$) is an equivalent logical operator because $A\bar{Z}_{i}$ and
$A$ have the same effect on an encoded state $\left\vert \bar{\psi
}\right\rangle $:%
\begin{equation}
\bar{Z}_{i}A\left\vert \bar{\psi}\right\rangle =A\bar{Z}_{i}\left\vert
\bar{\psi}\right\rangle =A\left\vert \bar{\psi}\right\rangle .
\end{equation}
We make extensive use of the above fact in our work.

The logical operators also provide a useful way to characterize the
information qubits. Gottesman showed that the logical operators for the
information qubits provide a straightforward way to characterize the
information qubits as they progress through a quantum circuit
\cite{thesis97gottesman}. As an example of this technique, he develops quantum
teleportation in the stabilizer formalism. The logical operators at the
beginning of the protocol are $X_{1}$ and $Z_{1}$ and become $X_{3}$ and
$Z_{3}$ at the end of the protocol. This transformation implies that the
quantum information in qubit one teleports to qubit three because the logical
operators act on only qubit three at the end of the protocol. We use the same
idea throughout this thesis to determine if our decoding circuits have truly
decoded the information qubits.

\section{The Pauli-to-Binary Isomorphism}

A simple but useful mapping\ exists between elements of $\Pi$ and the binary
vector space $\left(  \mathbb{Z}_{2}\right)  ^{2}$. This mapping gives a
simplification of quantum error correction theory. It represents quantum codes
with binary vectors and binary operations rather than with Pauli operators and
matrix operations respectively. We first give the mapping for the one-qubit
case. Suppose $\left[  A\right]  $ is a set of equivalence classes of an
operator $A$ that have the same phase:%
\begin{equation}
\left[  A\right]  =\left\{  \beta A\ |\ \beta\in\mathbb{C},\ \left\vert
\beta\right\vert =1\right\}  . \label{eq:equivalence-class}%
\end{equation}
Let $\left[  \Pi\right]  $ be the set of phase-free Pauli operators where
$\left[  \Pi\right]  =\left\{  \left[  A\right]  \ |\ A\in\Pi\right\}  $.
Define the map $N:\left(  \mathbb{Z}_{2}\right)  ^{2}\rightarrow\Pi$ as%
\begin{equation}%
\begin{tabular}
[c]{l|llll}%
$\Pi$ & $I$ & $X$ & $Y$ & $Z$\\\hline
$\left(  \mathbb{Z}_{2}\right)  ^{2}$ & $00$ & $01$ & $11$ & $10$%
\end{tabular}
\ \ \ \ . \label{eq:z2-pauli}%
\end{equation}
Suppose $u,v\in\left(  \mathbb{Z}_{2}\right)  ^{2}$. Let us employ the
shorthand $u=\left[  z|x\right]  $ and $v=\left[  z^{\prime}|x^{\prime
}\right]  $\ where $z$, $x$, $z^{\prime}$, $x^{\prime}\in\mathbb{Z}_{2}$. For
example, suppose $u=\left[  0|1\right]  $. Then $N\left(  u\right)  =X$. The
map $N$\ induces an isomorphism $\left[  N\right]  :\left(  \mathbb{Z}%
_{2}\right)  ^{2}\rightarrow\left[  \Pi\right]  $ because addition of vectors
in $\left(  \mathbb{Z}_{2}\right)  ^{2}$ is equivalent to multiplication of
Paulis up to a global phase:%
\begin{equation}
\left[  N\left(  u+v\right)  \right]  =\left[  N\left(  u\right)  \right]
\left[  N\left(  v\right)  \right]  . \label{eq:isomorphism}%
\end{equation}
We name this isomorphism the \textit{P2B isomorphism} (for Pauli-to-binary
isomorphism).

Let $\odot$\ denote the \textit{symplectic product} between two elements
$u,v\in\left(  \mathbb{Z}_{2}\right)  ^{2}$:%
\begin{equation}
u\odot v\equiv zx^{\prime}-xz^{\prime}.
\end{equation}
The symplectic product $\odot$ gives the commutation relations of elements of
$\Pi$:%
\[
N\left(  u\right)  N\left(  v\right)  =\left(  -1\right)  ^{\left(  u\odot
v\right)  }N\left(  v\right)  N\left(  u\right)  .
\]
The symplectic product and the P2B\ isomorphism $\left[  N\right]  $ thus give
a useful way to phrase Pauli relations in terms of binary algebra.

The extension of the above definitions and the P2B\ isomorphism $\left[
N\right]  $ to multiple qubits is straightforward. Let $\mathbf{A}%
=A_{1}\otimes\cdots\otimes A_{n}$ denote an arbitrary element of $\Pi^{n}$. We
can similarly define the phase-free $n$-qubit Pauli group $\left[  \Pi
^{n}\right]  =\left\{  \left[  \mathbf{A}\right]  \ |\ \mathbf{A}\in\Pi
^{n}\right\}  $\ where%
\begin{equation}
\left[  \mathbf{A}\right]  =\left\{  \beta\mathbf{A}\ |\ \beta\in
\mathbb{C},\ \left\vert \beta\right\vert =1\right\}  .
\end{equation}
The group operation $\ast$ for the above equivalence class is as follows:%
\begin{equation}
\left[  \mathbf{A}\right]  \ast\left[  \mathbf{B}\right]  \equiv\left[
A_{1}\right]  \ast\left[  B_{1}\right]  \otimes\cdots\otimes\left[
A_{n}\right]  \ast\left[  B_{n}\right]  =\left[  A_{1}B_{1}\right]
\otimes\cdots\otimes\left[  A_{n}B_{n}\right]  =\left[  \mathbf{AB}\right]  .
\end{equation}
The equivalence class $\left[  \Pi^{n}\right]  $ forms a commutative group
under operation $\ast$. Consider the $2n$-dimensional vector space%
\begin{equation}
\left(  \mathbb{Z}_{2}\right)  ^{2n}=\left\{  \left(  \mathbf{z,x}\right)
:\mathbf{z},\mathbf{x}\in\left(  \mathbb{Z}_{2}\right)  ^{n}\right\}  .
\end{equation}
It forms the commutative group $(\left(  \mathbb{Z}_{2}\right)  ^{2n},+)$ with
operation $+$ defined as binary vector addition. We employ the notation
$\mathbf{u}=\left[  \mathbf{z}|\mathbf{x}\right]  $ and $\mathbf{v}=\left[
\mathbf{z}^{\prime}|\mathbf{x}^{\prime}\right]  $ to represent any two vectors
$\mathbf{u,v}\in\left(  \mathbb{Z}_{2}\right)  ^{2n}$ respectively. Each
vector $\mathbf{z}$ and $\mathbf{x}$ has elements $\left(  z_{1},\ldots
,z_{n}\right)  $ and $\left(  x_{1},\ldots,x_{n}\right)  $ respectively with
similar representations for $\mathbf{z}^{\prime}$ and $\mathbf{x}^{\prime}$.
The \textit{symplectic product} $\odot$ of $\mathbf{u}$ and $\mathbf{v}$ is%
\begin{equation}
\mathbf{u}\odot\mathbf{v\equiv}\sum_{i=1}^{n}z_{i}x_{i}^{\prime}-x_{i}%
z_{i}^{\prime}=\sum_{i=1}^{n}u_{i}\odot v_{i},
\end{equation}
where $u_{i}=\left[  z_{i}|x_{i}\right]  $ and $v_{i}=\left[  z_{i}^{\prime
}|x_{i}^{\prime}\right]  $. Let us define a map $\mathbf{N}:\left(
\mathbb{Z}_{2}\right)  ^{2n}\rightarrow\Pi^{n}$ as follows:%
\begin{equation}
\mathbf{N}\left(  \mathbf{u}\right)  \equiv N\left(  u_{1}\right)
\otimes\cdots\otimes N\left(  u_{n}\right)  .\label{eq:map_symp_hw}%
\end{equation}
Let%
\begin{equation}
\mathbf{X}\left(  \mathbf{x}\right)  \equiv X^{x_{1}}\otimes\cdots\otimes
X^{x_{n}},\ \ \ \ \ \ \mathbf{Z}\left(  \mathbf{z}\right)  \equiv Z^{z_{1}%
}\otimes\cdots\otimes Z^{z_{n}},
\end{equation}
so that $\mathbf{N}\left(  \mathbf{u}\right)  $ and $\mathbf{Z}\left(
\mathbf{z}\right)  \mathbf{X}\left(  \mathbf{x}\right)  $ belong to the same
equivalence class:%
\begin{equation}
\left[  \mathbf{N}\left(  \mathbf{u}\right)  \right]  =\left[  \mathbf{Z}%
\left(  \mathbf{z}\right)  \mathbf{X}\left(  \mathbf{x}\right)  \right]  .
\end{equation}
The map $\left[  \mathbf{N}\right]  :\left(  \mathbb{Z}_{2}\right)
^{2n}\rightarrow\left[  \Pi^{n}\right]  $ is an isomorphism for the same
reason given in (\ref{eq:isomorphism}):
\begin{equation}
\left[  \mathbf{N}\left(  \mathbf{u+v}\right)  \right]  =\left[
\mathbf{N}\left(  \mathbf{u}\right)  \right]  \left[  \mathbf{N}\left(
\mathbf{v}\right)  \right]  ,
\end{equation}
where $\mathbf{u,v}\in\left(  \mathbb{Z}_{2}\right)  ^{2n}$. The symplectic
product captures the commutation relations of any operators $\mathbf{N}\left(
\mathbf{u}\right)  $ and $\mathbf{N}\left(  \mathbf{v}\right)  $:%
\begin{equation}
\mathbf{N\left(  \mathbf{u}\right)  N}\left(  \mathbf{v}\right)  =\left(
-1\right)  ^{\left(  \mathbf{u}\odot\mathbf{v}\right)  }\mathbf{N}\left(
\mathbf{v}\right)  \mathbf{N}\left(  \mathbf{u}\right)  .
\end{equation}
The above P2B\ isomorphism and symplectic algebra are useful in making the
relation between classical linear error correction and quantum error
correction more explicit.

\section{Example}

\label{ex:5-qubit-code}An example of a stabilizer code is the five qubit
$\left[  5,1\right]  $ stabilizer code
\cite{PhysRevLett.77.198,PhysRevA.54.3824}. It encodes $k=1$ logical qubit
into $n=5$ physical qubits and protects against an arbitrary single-qubit
error. Its stabilizer consists of $n-k=4$ Pauli operators:%
\begin{equation}%
\begin{array}
[c]{ccccccc}%
g_{1} & = & X & Z & Z & X & I\\
g_{2} & = & I & X & Z & Z & X\\
g_{3} & = & X & I & X & Z & Z\\
g_{4} & = & Z & X & I & X & Z
\end{array}
\label{eq:five-qubit-code}%
\end{equation}
The above operators commute. Therefore the codespace is the simultaneous
+1-eigenspace of the above operators. Suppose a single-qubit error occurs on
the encoded quantum register. A single-qubit error is in the set $\left\{
X_{i},Y_{i},Z_{i}\right\}  $ where $A_{i}$ denotes a Pauli error on qubit $i$.
It is straightforward to verify that any arbitrary single-qubit error has a
unique syndrome. The receiver corrects any single-qubit error by identifying
the syndrome and applying a corrective operation.

\section{Closing Remarks}

The mathematics developed in this chapter form the basis for the work in later
chapters. In particular, the binary representation of quantum codes is
important. The role of the \textquotedblleft twisted\textquotedblright%
\ symplectic product plays an important role for the entanglement-assisted
codes of the next chapter because it helps in determining the commutation
relations of an arbitrary (not necessarily commuting) set of Pauli generators.

\chapter{Entanglement-Assisted Stabilizer Quantum Codes}

\label{chp:ent-assisted-review}\begin{saying}
Classical to quantum made dinero,\\
But the twisted strange product was zero,\\
Little did we know,\\
What Entanglement could show,\\
And again become quantum's good hero.
\end{saying}The entanglement-assisted stabilizer formalism is a significant
extension of the standard stabilizer formalism that incorporates shared
entanglement as a resource for protecting quantum information
\cite{arx2006brun,science2006brun}. The advantage of entanglement-assisted
stabilizer codes is that the sender can exploit the error-correcting
properties of an arbitrary set of Pauli operators. The sender's Pauli
operators do not necessarily have to form an abelian subgroup of $\Pi^{n}$.
The sender can make clever use of her shared ebits so that the global
stabilizer is abelian and thus forms a valid quantum error-correcting code.
Figure~\ref{fig:entanglement-assisted-code}\ demonstrates the operation of a
generic entanglement-assisted stabilizer code.

Several references provide a review of this technique and generalizations of
the basic theory to block entanglement distillation \cite{luo:010303},
continuous-variable codes \cite{pra2007wildeEA}, and entanglement-assisted
operator codes for discrete-variable \cite{isit2007brun,hsieh:062313}\ and
continuous-variable systems \cite{prep2007wildeEAOQEC}.
Chapters~\ref{chp:EAQCC-CSS}, \ref{chp:general-case}, \ref{chp:free-ent},
\ref{chp:unified}, and \ref{chp:ced} of this thesis extend the
entanglement-assisted stabilizer formalism to many different \textquotedblleft
convolutional\textquotedblright\ coding scenarios. We explain what we mean by
\textquotedblleft convolutional\textquotedblright\ in the next chapter.

For now, we provide a review of the basics of entanglement-assisted coding.
This chapter also gives two original contributions:\ a method to determine the
encoding and decoding circuit for an entanglement-assisted code and several formulas that
determine the number of ebits that an arbitrary entanglement-assisted block code requires.

\section{Review of the Entanglement-Assisted Stabilizer Formalism}

\label{sec:ent-ass-gram}The fundamental unit of bipartite entanglement is the
\textit{ebit}. We must first understand the commutativity properties of the
ebit before developing the entanglement-assisted formalism. We express the
state $\left\vert \Phi^{+}\right\rangle $\ of an ebit shared between a sender
Alice and a receiver Bob as follows:%
\begin{equation}
\left\vert \Phi^{+}\right\rangle \equiv\frac{\left\vert 00\right\rangle
^{AB}+\left\vert 11\right\rangle ^{AB}}{\sqrt{2}}.
\end{equation}
The two operators that stabilize this ebit state are $X^{A}X^{B}$ and
$Z^{A}Z^{B}$. These two operators commute,%
\[
\left[  X^{A}X^{B},Z^{A}Z^{B}\right]  =0,
\]
but the local operators (operating only on either party) anticommute,%
\[
\left\{  X^{A},Z^{A}\right\}  =\left\{  X^{B},Z^{B}\right\}  =0.
\]

The above commutation relations hint at a way that we can resolve
anticommutativity in a set of generators. Suppose that we have two generators
that anticommute. We can use an ebit of entanglement to resolve the
anticommutativity in the two generators. We explain this idea in more detail in what follows.

We review the general construction of an entanglement-assisted code. An $\left[
n,k;c\right]  $\ entanglement-assisted code employs $c$\ ebits and $a$ ancilla
qubits to encode $k$ information qubits. Suppose that there is a nonabelian
subgroup $\mathcal{S}\subset\Pi^{n}$ of size $2c+a$. Application of the
fundamental theorem of symplectic geometry\footnote{We loosely refer to this
theorem as the fundamental theorem of symplectic geometry because of its
importance in symplectic geometry and in quantum coding theory.}
\cite{book2001symp}\ (Lemma~1 in \cite{science2006brun})\ states that there
exists a minimal set\ of independent generators $\left\{  \bar{Z}_{1}%
,\ldots,\bar{Z}_{a+c},\bar{X}_{a+1},\ldots,\bar{X}_{a+c}\right\}  $ for
$\mathcal{S}$ with the following commutation relations:%
\begin{align}
\forall i,j\ \ \ \ \left[  \bar{Z}_{i},\bar{Z}_{j}\right]   &
=0,\ \ \ \ \forall i,j\ \ \ \ \left[  \bar{X}_{i},\bar{X}_{j}\right]
=0,\label{eq:comm-relations}\\
\forall i\neq j\ \ \ \ \left[  \bar{X}_{i},\bar{Z}_{j}\right]   &
=0,\ \ \ \ \forall i\ \ \ \ \left\{  \bar{X}_{i},\bar{Z}_{i}\right\}
=0.\nonumber
\end{align}
The decomposition of $\mathcal{S}$ into the above minimal generating set
determines that the code requires $a$ ancilla qubits and $c$ ebits.
The parameters $a$ and $c$ generally depend on the set of generators in $\mathcal{S}$.

There exists a symplectic Gram-Schmidt orthogonalization procedure that gives
the decomposition
\cite{arx2006brun,science2006brun,arx2007wildeCED,prep2007shaw}. Specifically,
the algorithm performs row operations (multiplication of the Pauli generators)
that do not affect the code's error-correcting properties and thus gives a set
of generators that form an equivalent code. The decomposition also minimizes
the number of ebits required for the code and we prove this optimality in
Section~\ref{sec:opt-ebit}.

We present a simple stabilizer version of the symplectic Gram-Schmidt
algorithm. Suppose we have a set of $m$ generators $g_{1}$, \ldots, $g_{m}$.
Consider $g_{1}$. It either commutes with all other generators or it
anticommutes with at least one other generator. Remove it from the set if it
commutes with all other generators. Now suppose that it does not, i.e., it
anticommutes with one other generator $g_{i}$. Relabel $g_{2}$ as $g_{i}$ and
vice versa. Recall that we can multiply generators without changing the
error-correcting properties of the set of generators. So we perform several
multiplications to change the commutation relations to have the standard
commutation relations in (\ref{eq:comm-relations}). We perform the following
manipulations on the generators $g_{3}$, \ldots, $g_{m}$:%
\[
g_{j}=g_{j}\cdot g_{1}^{f\left(  g_{2},g_{j}\right)  }\cdot g_{2}^{f\left(
g_{1},g_{j}\right)  }\ \ \ \ \ \ \forall\ i\in\left\{  3,\ldots,m\right\}
\]
where the function $f$ is equal to zero if its two arguments commute and it is
equal to one if its two arguments anticommute. (This
function is the symplectic product.) Remove the generators $g_{1}$
and $g_{2}$ from the set. Repeat the algorithm on the remaining generators.
When the algorithm finishes, all the ``removed' generators constitute
the generators for the code and have the standard
commutation relations in (\ref{eq:comm-relations}).

We can partition the nonabelian group $\mathcal{S}$ into two subgroups: the
isotropic subgroup $\mathcal{S}_{I}$\ and the entanglement subgroup
$\mathcal{S}_{E}$. The isotropic subgroup $\mathcal{S}_{I}$ is a commuting
subgroup of $\mathcal{S}$ and thus corresponds to ancilla
qubits:\ $\mathcal{S}_{I}=\left\{  \bar{Z}_{1},\ldots,\bar{Z}_{a}\right\}  $.
The elements of the entanglement subgroup $\mathcal{S}_{E}$ come in
anticommuting pairs and thus correspond to halves of ebits: $\mathcal{S}%
_{E}=\left\{  \bar{Z}_{a+1},\ldots,\bar{Z}_{a+c},\bar{X}_{a+1},\ldots,\bar
{X}_{a+c}\right\}  $. The two subgroups $\mathcal{S}_{I}$ and $\mathcal{S}%
_{E}$\ play a role in the error-correcting conditions for the
entanglement-assisted stabilizer formalism. An entanglement-assisted code
corrects errors in a set $\mathcal{E}\subset\Pi^{n}$ if
\[
\forall E_{1},E_{2}\in\mathcal{E}\ \ \ \ \ \ \ \ E_{1}^{\dag}E_{2}%
\in\mathcal{S}_{I}\text{ or }E_{1}^{\dag}E_{2}\in\Pi^{n}-\mathcal{Z}\left(
\left\langle \mathcal{S}_{I},\mathcal{S}_{E}\right\rangle \right)  .
\]
The conditions correspond to error pairs $E_{1},E_{2}$ in an error set $\mathcal{E}$.
The first condition corresponds to the passive error-correcting capability of the code,
and the second condition corresponds to its active error-correcting
capability.%

\begin{figure}
[ptb]
\begin{center}
\includegraphics[
natheight=4.266100in,
natwidth=10.639800in,
height=2.3004in,
width=5.7155in
]%
{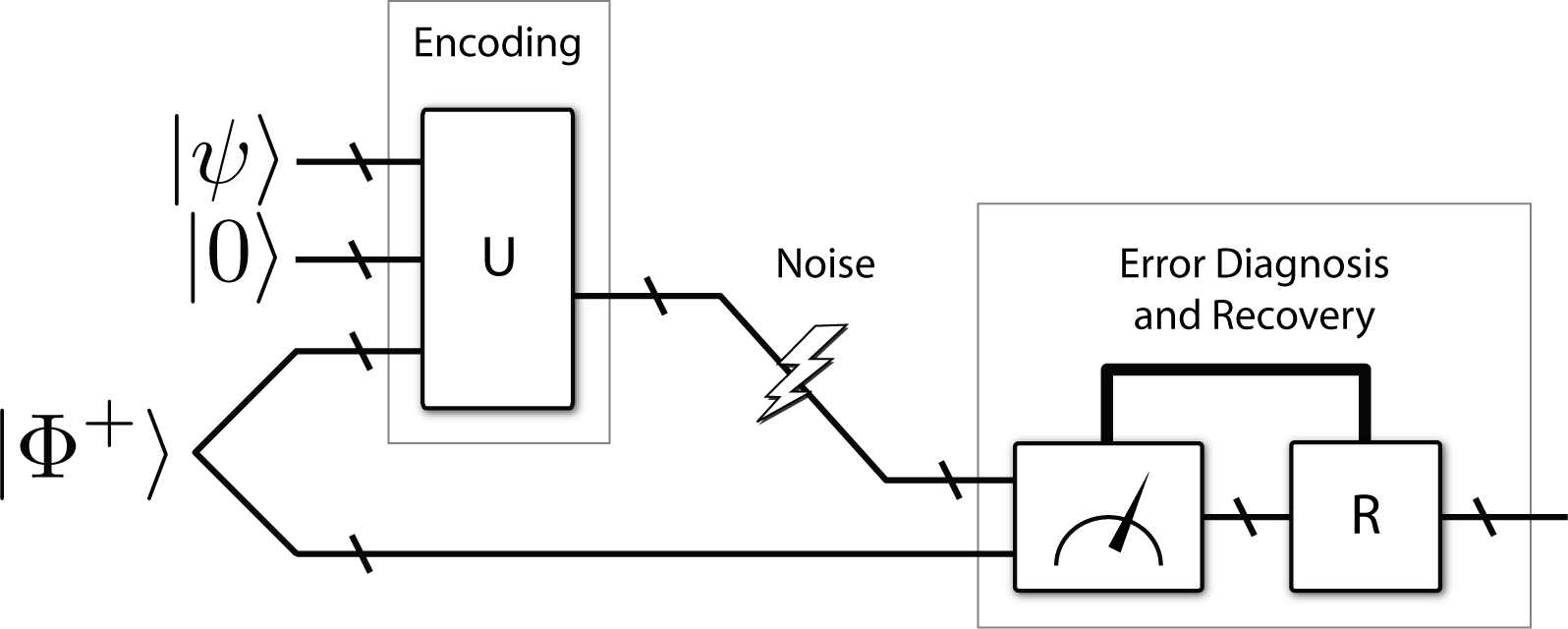}%
\caption{The operation of an entanglement-assisted quantum error-correcting
code. The sender encodes quantum information in state $\left\vert
\psi\right\rangle $\ with the help of local ancilla qubits $\left\vert
0\right\rangle $\ and her half of a set of shared ebits $\left\vert \Phi
^{+}\right\rangle $. She then sends her qubits over a noisy quantum channel.
The channel does not corrupt the receiver's half of the set of shared ebits.
The receiver performs multi-qubit measurements on all of the qubits to
diagnose the channel error. He performs a recovery unitary $R$ to reverse the
estimated channel error.}%
\label{fig:entanglement-assisted-code}%
\end{center}
\end{figure}

Figure~\ref{fig:entanglement-assisted-code}\ illustrates the operation of an
entanglement-assisted stabilizer code. The following four steps explain the
operation of an $\left[  n,k;c\right]  $\ entanglement-assisted quantum code:

\begin{enumerate}
\item The sender and receiver share $c$ ebits before quantum communication
begins and the sender has $a$ ancilla qubits. The unencoded state is a
simultaneous +1-eigenstate of the following operators:%
\begin{equation}
\left\{  Z_{a+1}|Z_{1},\ldots,Z_{a+c}|Z_{c},X_{a+1}%
|X_{1},\ldots,X_{a+c}|X_{c}, Z_{1},\ldots,Z_{a}\right\}  .
\end{equation}
The operators to the right of the vertical bars indicate the receiver's half
of the shared ebits. The sender encodes her $k$ information qubits with the
help of $a$ ancilla qubits and her half of the $c$ ebits. The encoding
operation is $U$\ in Figure~\ref{fig:entanglement-assisted-code}. The encoding
unitary transforms the unencoded operators to the following encoded operators:%
\begin{equation}
\left\{  \bar{Z}_{a+1}|Z_{1},\ldots,\bar
{Z}_{a+c}|Z_{c},\bar{X}_{a+1}|X_{1},\ldots,\bar{X}_{a+c}|X_{c},\bar{Z}_{1},\ldots,\bar{Z}_{a}\right\}  .
\end{equation}

\item The sender sends her $n$ qubits over a noisy quantum communication
channel. The noisy channel affects these $n$ qubits only and does not affect
the receiver's half of the $c$ ebits.

\item The receiver combines his half of the $c$ ebits with those he receives
from the noisy quantum channel. He performs measurements on all $n+c$ qubits
to diagnose an error that may occur on the $n$ qubits.

\item After estimating which error occurs, the receiver performs a recovery
operation that reverses the estimated error.
\end{enumerate}

\section{The Symplectic Product Matrix}

\label{sec:EA-symp-relations}Let us write the local operators $X^{A}$ and
$Z^{A}$ on the sender Alice's side as respective symplectic vectors $h_{1}$
and $h_{2}$:%
\[
h_{1}=\left[  \left.
\begin{array}
[c]{c}%
0
\end{array}
\right\vert
\begin{array}
[c]{c}%
1
\end{array}
\right]  ,\ \ \ \ \ \ \ \ h_{2}=\left[  \left.
\begin{array}
[c]{c}%
1
\end{array}
\right\vert
\begin{array}
[c]{c}%
0
\end{array}
\right]  .
\]
The \textquotedblleft symplectic product matrix\textquotedblright\ $\Omega
$\ of these two symplectic vectors is as follows:%
\begin{equation}
\Omega=\left[
\begin{array}
[c]{cc}%
0 & 1\\
1 & 0
\end{array}
\right]  .\label{eq:J-matrix}%
\end{equation}
We call this matrix the symplectic product matrix because its entries
enumerate the symplectic products:%
\begin{equation}
\left[  \Omega\right]  _{ij}=h_{i}\odot h_{j}\text{ \ \ \ }\forall
i,j\in\left\{  1,2\right\}  .
\end{equation}
The symplectic product matrix in\ (\ref{eq:J-matrix}) is so special for the
purposes of this thesis that we give it the name $J\equiv\Omega$. It means
that two generators have commutation relations that are equivalent to half of
an ebit.

A general set of generators for a quantum block code can have complicated
commutation relations. For a given set of generators, the commutation
relations for the generators resulting from the symplectic Gram-Schmidt
orthogonalization procedure have a special form. Their symplectic product
matrix $\Omega$\ has the standard form:%
\begin{equation}
\Omega=%
{\displaystyle\bigoplus\limits_{i=1}^{c}}
J\oplus%
{\displaystyle\bigoplus\limits_{j=1}^{a}}
\left[  0\right]  \label{eq:standard-symp-form}%
\end{equation}
where the large and small $\oplus$ correspond to the direct sum operation and
$\left[  0\right]  $ is the one-element null matrix. The standard form above
implies that the commutation relations of the $2c+a$ generators are equivalent
to those of $c$ halves of ebits and $a$ ancilla qubits---the commutation
relations are equivalent to those in (\ref{eq:comm-relations}).

\section{The Rate of an Entanglement-Assisted Quantum Code}

\label{sec:EA-rates}We can interpret the rate of an entanglement-assisted code
in three different ways \cite{science2006brun,arx2006brun,arx2007wildeCED}.
Suppose that an entanglement-assisted quantum code encodes $k$ information
qubits into $n$ physical qubits with the help of $c$ ebits.

\begin{enumerate}
\item The \textquotedblleft entanglement-assisted\textquotedblright\ rate
assumes that entanglement shared between sender and receiver is free. Bennett
et al. make this assumption when deriving the entanglement-assisted capacity
of a quantum channel for sending quantum information
\cite{PhysRevLett.83.3081,ieee2002bennett}. The entanglement-assisted rate is
$k/n$ for a code with the above parameters.

\item The \textquotedblleft trade-off' rate assumes that entanglement is not
free and a rate pair determines performance. The first number in the pair is
the number of noiseless qubits generated per channel use, and the second
number in the pair is the number of ebits consumed per channel use. The rate
pair is $\left(  k/n,c/n\right)  $ for a code with the above parameters.
Quantum information theorists have computed asymptotic trade-off curves that
bound the rate region in which achievable rate pairs lie \cite{arx2005dev}.
Brun et al.'s construction for an entanglement-assisted quantum block code
minimizes the number $c$ of ebits given a fixed number $k$ and $n$ of
respective information qubits and physical qubits
\cite{arx2006brun,science2006brun}.

\item The \textquotedblleft catalytic rate\textquotedblright\ assumes that
bits of entanglement are built up at the expense of transmitted qubits
\cite{arx2006brun,science2006brun}. A noiseless quantum channel or the encoded
use of noisy quantum channel are two different ways to build up entanglement
between a sender and receiver. The catalytic rate of an $\left[  n,k;c\right]
$\ code is $\left(  k-c\right)  /n$.
\end{enumerate}

Which interpretation is most reasonable depends on the context in which we use
the code. In any case, the parameters $n$, $k$, and $c$ ultimately govern
performance, regardless of which definition of the rate we use to interpret
that performance.

\section{Example of an Entanglement-Assisted Code}

\label{ex:science-code}We present an example of an entanglement-assisted code
that corrects an arbitrary single-qubit error \cite{science2006brun}. This
example highlights the main features of the theory given above. Suppose the
sender wants to use the quantum error-correcting properties of the following
nonabelian subgroup of $\Pi^{4}$:%
\begin{equation}%
\begin{array}
[c]{cccc}%
Z & X & Z & I\\
Z & Z & I & Z\\
X & Y & X & I\\
X & X & I & X
\end{array}
\label{eq:science-example}%
\end{equation}
The first two generators anticommute. We obtain a modified third generator by
multiplying the third generator by the second. We then multiply the last
generator by the first, second, and modified third generators. The
error-correcting properties of the generators are invariant under these
operations. The modified generators are as follows:%
\begin{equation}%
\begin{array}
[c]{cccccc}%
g_{1} & = & Z & X & Z & I\\
g_{2} & = & Z & Z & I & Z\\
g_{3} & = & Y & X & X & Z\\
g_{4} & = & Z & Y & Y & X
\end{array}
\end{equation}
The above set of generators have the commutation relations given by the
fundamental theorem of symplectic geometry:%
\[
\left\{  g_{1},g_{2}\right\}  =\left[  g_{1},g_{3}\right]  =\left[
g_{1},g_{4}\right]  =\left[  g_{2},g_{3}\right]  =\left[  g_{2},g_{4}\right]
=\left[  g_{3},g_{4}\right]  =0.
\]
The above set of generators is unitarily equivalent to the following canonical
generators:%
\begin{equation}%
\begin{array}
[c]{cccc}%
X & I & I & I\\
Z & I & I & I\\
I & Z & I & I\\
I & I & Z & I
\end{array}
\label{eq:canonical-Paulis}%
\end{equation}
We can add one ebit to resolve the anticommutativity of the first two
generators:%
\begin{equation}
\left.
\begin{array}
[c]{cccc}%
X & I & I & I\\
Z & I & I & I\\
I & Z & I & I\\
I & I & Z & I
\end{array}
\right\vert
\begin{array}
[c]{c}%
X\\
Z\\
I\\
I
\end{array}
\label{eq:canonical-stabilizer}%
\end{equation}
A +1-eigenstate of the above stabilizer is the following state%
\begin{equation}
\left\vert \Phi^{+}\right\rangle ^{AB}\left\vert 00\right\rangle
^{A}\left\vert \psi\right\rangle ^{A},\label{eq:canonical-state}%
\end{equation}
where $\left\vert \psi\right\rangle ^{A}$ is a qubit that the sender wants to
encode. A local encoding unitary then rotates the generators in
(\ref{eq:canonical-stabilizer})\ to the following set of globally commuting
generators:%
\begin{equation}
\left.
\begin{array}
[c]{cccc}%
Z & X & Z & I\\
Z & Z & I & Z\\
Y & X & X & Z\\
Z & Y & Y & X
\end{array}
\right\vert
\begin{array}
[c]{c}%
X\\
Z\\
I\\
I
\end{array}
\label{eq:encoded-stabilizer}%
\end{equation}
The receiver measures the above generators upon receipt of all qubits to
detect and correct errors.

\section{Encoding Algorithm}

\label{sec:encoding-alg-ent-assist}Suppose we have an arbitrary set of Pauli
matrices in $\Pi^{n}$ whose error-correcting properties we would like to
exploit. The algorithm in this section (in Ref.~\cite{arx2007wildeCED})
determines the encoding and decoding circuit for the set of Pauli generators.

It has two additional benefits. We do not necessarily know beforehand how many
ebits we require for the Pauli matrices to form a commuting set. Several
methods exist such as the Gram-Schmidt algorithm outlined in
Section~\ref{sec:ent-ass-gram} and the others in Refs.
\cite{arx2006brun,science2006brun,isit2007brun,hsieh:062313,arx2007wildeCED},
but the algorithm here also determines the optimal number of ebits and the
measurements the receiver performs to diagnose errors. It \textquotedblleft
kills three birds with one stone.\textquotedblright

We continue with the example in Section~\ref{ex:science-code}. We detail an
algorithm for determining an encoding circuit and the optimal number of ebits
for the entanglement-assisted code. The operators in (\ref{eq:science-example}%
) have the following representation as a binary matrix according to the
P2B\ isomorphism:%
\begin{equation}
H=\left[  \left.
\begin{array}
[c]{cccc}%
1 & 0 & 1 & 0\\
1 & 1 & 0 & 1\\
0 & 1 & 0 & 0\\
0 & 0 & 0 & 0
\end{array}
\right\vert
\begin{array}
[c]{cccc}%
0 & 1 & 0 & 0\\
0 & 0 & 0 & 0\\
1 & 1 & 1 & 0\\
1 & 1 & 0 & 1
\end{array}
\right]  .
\end{equation}
Call the matrix to the left of the vertical bar the \textquotedblleft
Z\textquotedblright\ matrix\ and the matrix to the right of the vertical bar
the \textquotedblleft X\textquotedblright\ matrix.

The algorithm consists of row and column operations on the above matrix. Row
operations do not affect the error-correcting properties of the code but are
crucial for arriving at the optimal decomposition from the fundamental theorem
of symplectic geometry. The operations available for manipulating columns of
the above matrix are Clifford operations (discussed at the end of
Section~\ref{sec:cliff-encoding}). The operations have the following effect on
entries in the binary matrix:

\begin{enumerate}
\item A CNOT\ gate from qubit $i$ to qubit $j$ adds column $i$ to column $j$
in the $X$ matrix and adds column $j$ to column $i$ in the $Z$ matrix.

\item A Hadamard gate on qubit $i$ swaps column $i$ in the $Z$ matrix with
column $i$ in the $X$ matrix and vice versa.

\item A phase gate on qubit $i$\ adds column $i$ in the $X$ matrix to column
$i$ in the $Z$ matrix.

\item Three CNOT\ gates implement a qubit swap operation
\cite{book2000mikeandike}. The effect of a swap on qubits $i$ and $j$ is to
swap columns $i$ and $j$ in both the $X$ and $Z$ matrix.
\end{enumerate}

The algorithm begins by computing the symplectic product between the first row
and all other rows. Leave the matrix as it is if the first row is not
symplectically orthogonal to the second row or if the first row is
symplectically orthogonal to all other rows. Otherwise, swap the second row
with the first available row that is not symplectically orthogonal to the
first row. In our example, the first row is not symplectically orthogonal to
the second so we leave all rows as they are.

Arrange the first row so that the top left entry in the $X$ matrix is one. A
CNOT, swap, Hadamard, or combinations of these operations can achieve this
result. We can have this result in our example by swapping qubits one and two.
The matrix becomes%
\begin{equation}
\left[  \left.
\begin{array}
[c]{cccc}%
0 & 1 & 1 & 0\\
1 & 1 & 0 & 1\\
1 & 0 & 0 & 0\\
0 & 0 & 0 & 0
\end{array}
\right\vert
\begin{array}
[c]{cccc}%
1 & 0 & 0 & 0\\
0 & 0 & 0 & 0\\
1 & 1 & 1 & 0\\
1 & 1 & 0 & 1
\end{array}
\right]  .
\end{equation}

Perform CNOTs to clear the entries in the $X$ matrix in the top row to the
right of the leftmost entry. These entries are already zero in this example so
we need not do anything. Proceed to the clear the entries in the first row of
the $Z$ matrix. Perform a phase gate to clear the leftmost entry in the first
row of the $Z$ matrix if it is equal to one. It is equal to zero in our
example so we need not do anything. We then use Hadamards and CNOTs to clear
the other entries in the first row of the $Z$ matrix.

For our example, perform a Hadamard on qubits two and three. The matrix
becomes%
\begin{equation}
\left[  \left.
\begin{array}
[c]{cccc}%
0 & 0 & 0 & 0\\
1 & 0 & 0 & 1\\
1 & 1 & 1 & 0\\
0 & 1 & 0 & 0
\end{array}
\right\vert
\begin{array}
[c]{cccc}%
1 & 1 & 1 & 0\\
0 & 1 & 0 & 0\\
1 & 0 & 0 & 0\\
1 & 0 & 0 & 1
\end{array}
\right]  .
\end{equation}
Perform a CNOT\ from qubit one to qubit two and from qubit one to qubit three.
The matrix becomes%
\begin{equation}
\left[  \left.
\begin{array}
[c]{cccc}%
0 & 0 & 0 & 0\\
1 & 0 & 0 & 1\\
1 & 1 & 1 & 0\\
1 & 1 & 0 & 0
\end{array}
\right\vert
\begin{array}
[c]{cccc}%
1 & 0 & 0 & 0\\
0 & 1 & 0 & 0\\
1 & 1 & 1 & 0\\
1 & 1 & 1 & 1
\end{array}
\right]  .
\end{equation}
The first row is complete. We now proceed to clear the entries in the second
row. Perform a Hadamard on qubits one and four. The matrix becomes%
\begin{equation}
\left[  \left.
\begin{array}
[c]{cccc}%
1 & 0 & 0 & 0\\
0 & 0 & 0 & 0\\
1 & 1 & 1 & 0\\
1 & 1 & 0 & 1
\end{array}
\right\vert
\begin{array}
[c]{cccc}%
0 & 0 & 0 & 0\\
1 & 1 & 0 & 1\\
1 & 1 & 1 & 0\\
1 & 1 & 1 & 0
\end{array}
\right]  .
\end{equation}
Perform a CNOT\ from qubit one to qubit two and from qubit one to qubit four.
The matrix becomes%
\begin{equation}
\left[  \left.
\begin{array}
[c]{cccc}%
1 & 0 & 0 & 0\\
0 & 0 & 0 & 0\\
0 & 1 & 1 & 0\\
1 & 1 & 0 & 1
\end{array}
\right\vert
\begin{array}
[c]{cccc}%
0 & 0 & 0 & 0\\
1 & 0 & 0 & 0\\
1 & 0 & 1 & 1\\
1 & 0 & 1 & 1
\end{array}
\right]  .
\end{equation}
The first two rows are now complete. They need one ebit to compensate for
their anticommutativity or their nonorthogonality with respect to the
symplectic product.

Now we perform row operations that are similar to the \textquotedblleft
symplectic Gram-Schmidt orthogonalization.\textquotedblright\ Add row one to
any other row that has one as the leftmost entry in its $Z$ matrix. Add row
two to any other row that has one as the leftmost entry in its $X$ matrix. For
our example, we add row one to row four and we add row two to rows three and
four. The matrix becomes%
\begin{equation}
\left[  \left.
\begin{array}
[c]{cccc}%
1 & 0 & 0 & 0\\
0 & 0 & 0 & 0\\
0 & 1 & 1 & 0\\
0 & 1 & 0 & 1
\end{array}
\right\vert
\begin{array}
[c]{cccc}%
0 & 0 & 0 & 0\\
1 & 0 & 0 & 0\\
0 & 0 & 1 & 1\\
0 & 0 & 1 & 1
\end{array}
\right]  .
\end{equation}
The first two rows are now symplectically orthogonal to all other rows per the
fundamental theorem of symplectic geometry.%

\begin{figure}
[ptb]
\begin{center}
\[
\Qcircuit@C=0.5em @R=1.0em  {
&  &  & \qw& \qw& \qw& \qw& \qw& \qw& \qw& \qw& \qw& \qw& \qw\gategroup{1}%
{3}{2}{3}{1.0em}{\{}\\
& \lstick{\raisebox{2em}{$\ket{\Phi^{+}}^{BA}$}} &  & \qw& \qw& \qw& \qw
& \qw& \ctrl{1} \qwx[3] & \gate{H} & \ctrl{1} \qwx[2] & \qw& \qswap
\qwx[1] & \qw\\
& \lstick{\ket{0}^A}      & \gate{H} & \qw& \ctrl{1} & \gate{P} & \ctrl{1}
\qwx[2] & \gate{H} & \targ& \qw& \targ& \gate{H} & \qswap& \qw\\
& \lstick{\ket{0}^A}      & \gate{H} & \ctrl{1} & \targ& \gate{H} & \targ
& \qw& \qw& \qw& \targ& \gate{H} & \qw& \qw\\
& \lstick{\ket{\psi}^A}  & \qw& \targ& \gate{H} & \qw& \targ& \qw
& \targ& \gate{H} & \qw& \qw& \qw& \qw}
\]
\end{center}
\caption{Encoding circuit for the entanglement-assisted code in the example of Section~\ref{ex:science-code}%
. The ``H'' gate is a Hadamard gate and the ``P'' gate is a phase gate.}
\label{fig:encoding-circuit-science}
\end{figure}
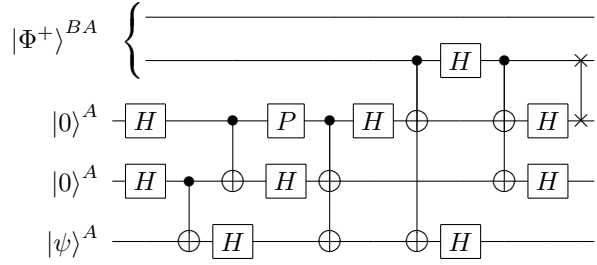

We proceed with the same algorithm on the next two rows. We can deal with the next two rows individually because they are
symplectically orthogonal to each other.
Perform a Hadamard on qubit two. The matrix becomes%
\begin{equation}
\left[  \left.
\begin{array}
[c]{cccc}%
1 & 0 & 0 & 0\\
0 & 0 & 0 & 0\\
0 & 0 & 1 & 0\\
0 & 0 & 0 & 1
\end{array}
\right\vert
\begin{array}
[c]{cccc}%
0 & 0 & 0 & 0\\
1 & 0 & 0 & 0\\
0 & 1 & 1 & 1\\
0 & 1 & 1 & 1
\end{array}
\right]  .
\end{equation}
Perform a CNOT\ from qubit two to qubit three and from qubit two to qubit
four. The matrix becomes%
\begin{equation}
\left[  \left.
\begin{array}
[c]{cccc}%
1 & 0 & 0 & 0\\
0 & 0 & 0 & 0\\
0 & 1 & 1 & 0\\
0 & 1 & 0 & 1
\end{array}
\right\vert
\begin{array}
[c]{cccc}%
0 & 0 & 0 & 0\\
1 & 0 & 0 & 0\\
0 & 1 & 0 & 0\\
0 & 1 & 0 & 0
\end{array}
\right]  .
\end{equation}
Perform a phase gate on qubit two:%
\begin{equation}
\left[  \left.
\begin{array}
[c]{cccc}%
1 & 0 & 0 & 0\\
0 & 0 & 0 & 0\\
0 & 0 & 1 & 0\\
0 & 0 & 0 & 1
\end{array}
\right\vert
\begin{array}
[c]{cccc}%
0 & 0 & 0 & 0\\
1 & 0 & 0 & 0\\
0 & 1 & 0 & 0\\
0 & 1 & 0 & 0
\end{array}
\right]  .
\end{equation}
Perform a Hadamard on qubit three followed by a CNOT\ from qubit two to qubit
three:%
\begin{equation}
\left[  \left.
\begin{array}
[c]{cccc}%
1 & 0 & 0 & 0\\
0 & 0 & 0 & 0\\
0 & 0 & 0 & 0\\
0 & 0 & 0 & 1
\end{array}
\right\vert
\begin{array}
[c]{cccc}%
0 & 0 & 0 & 0\\
1 & 0 & 0 & 0\\
0 & 1 & 0 & 0\\
0 & 1 & 1 & 0
\end{array}
\right]  .
\end{equation}
Add row three to row four and perform a Hadamard on qubit two:%
\begin{equation}
\left[  \left.
\begin{array}
[c]{cccc}%
1 & 0 & 0 & 0\\
0 & 0 & 0 & 0\\
0 & 1 & 0 & 0\\
0 & 0 & 0 & 1
\end{array}
\right\vert
\begin{array}
[c]{cccc}%
0 & 0 & 0 & 0\\
1 & 0 & 0 & 0\\
0 & 0 & 0 & 0\\
0 & 0 & 1 & 0
\end{array}
\right]  .
\end{equation}
Perform a Hadamard on qubit four followed by a CNOT\ from qubit three to qubit
four. End by performing a Hadamard on qubit three:%
\begin{equation}
\left[  \left.
\begin{array}
[c]{cccc}%
1 & 0 & 0 & 0\\
0 & 0 & 0 & 0\\
0 & 1 & 0 & 0\\
0 & 0 & 1 & 0
\end{array}
\right\vert
\begin{array}
[c]{cccc}%
0 & 0 & 0 & 0\\
1 & 0 & 0 & 0\\
0 & 0 & 0 & 0\\
0 & 0 & 0 & 0
\end{array}
\right]  .
\end{equation}
The above matrix now corresponds to the canonical Paulis
in (\ref{eq:canonical-Paulis}). Adding one half of an ebit to the receiver's side
gives the canonical stabilizer in (\ref{eq:canonical-stabilizer}) whose
simultaneous $+1$-eigenstate is the state in (\ref{eq:canonical-state}).

Figure~\ref{fig:encoding-circuit-science} gives the encoding circuit
corresponding to the above operations. The above operations in reverse order
take the canonical stabilizer (\ref{eq:canonical-stabilizer}) to the encoded
stabilizer (\ref{eq:encoded-stabilizer}).

\section{Optimal Entanglement Formulas}

\label{sec:opt-ebit}Entanglement is a valuable resource, and we would like to
minimize the amount that a sender and receiver need to consume for
entanglement-assisted quantum coding. Hsieh, Devetak, and Brun first addressed
this issue by determining a useful formula that gives the optimal number of
ebits required by a Calderbank-Shor-Steane (CSS)\ entanglement-assisted code
\cite{hsieh:062313}. In particular, the number $c$\ of ebits that a
CSS\ entanglement-assisted code requires is%
\begin{equation}
c=\mathrm{rank}\left(  HH^{T}\right)  , \label{eq:CSS-form}%
\end{equation}
where $H$ corresponds to the parity check matrix of a classical binary block
code that we import to correct quantum bit and phase flips. The same authors
also determined an upper bound for the number of ebits that an
entanglement-assisted quantum LDPC\ code requires \cite{aqis2007hsieh}.

In this section, we present several generalizations of the above formula. Our
first theorem gives a formula for the optimal number of ebits that an
arbitrary (non-CSS)\ entanglement-assisted quantum block code requires. We
find special cases of this formula that apply to an entanglement-assisted
quantum block code produced from two arbitrary classical binary block codes,
and to codes from a classical block code over $GF(4)$. Our last formula
applies to \textit{continuous-variable} entanglement-assisted codes
\cite{pra2007wildeEA}.

\begin{theorem}
\label{thm:opt-ebit-formula}Suppose we want to build an entanglement-assisted
quantum code from generators corresponding to the rows in a quantum check
matrix%
\begin{equation}
H=\left[  \left.
\begin{array}
[c]{c}%
H_{Z}%
\end{array}
\right\vert
\begin{array}
[c]{c}%
H_{X}%
\end{array}
\right]  ,
\end{equation}
where $H$ is an $\left(  n-k\right)  \times2n$-dimensional binary matrix
representing the quantum code, and both $H_{Z}$ and $H_{X}$ are $\left(
n-k\right)  \times n$-dimensional binary matrices. Then the resulting code is
an $\left[  \left[  n,k+c;c\right]  \right]  $ entanglement-assisted code and
requires $c$ ebits, where%
\begin{equation}
c=\mathrm{rank}\left(  H_{X}H_{Z}^{T}+H_{Z}H_{X}^{T}\right)  /2,
\label{eq:optimal-ebits}%
\end{equation}
and addition is binary.
\end{theorem}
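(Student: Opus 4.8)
The plan is to recognize the matrix $H_{X}H_{Z}^{T}+H_{Z}H_{X}^{T}$ as the symplectic product matrix $\Omega$ of the $n-k$ generators encoded by the rows of $H$, and then to track how the rank of $\Omega$ behaves under the symplectic Gram--Schmidt procedure of Section~\ref{sec:ent-ass-gram}. First I would write the $i$-th row of $H$ as the symplectic vector $h_{i}=\left[\,z_{i}\,\middle|\,x_{i}\,\right]$, where $z_{i}$ and $x_{i}$ are the $i$-th rows of $H_{Z}$ and $H_{X}$. By definition of the symplectic product, $h_{i}\odot h_{j}=z_{i}\cdot x_{j}-x_{i}\cdot z_{j}$, which over $\mathbb{Z}_{2}$ equals $z_{i}\cdot x_{j}+x_{i}\cdot z_{j}$; assembling these entries row by row gives
\[
\Omega = H_{Z}H_{X}^{T}+H_{X}H_{Z}^{T}\pmod 2 ,
\]
so the matrix in the theorem is $\Omega$ itself.

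Next I would observe that the symplectic Gram--Schmidt orthogonalization consists of row operations, i.e.\ of replacing the generating set by $h'_{i}=\sum_{\ell}R_{i\ell}h_{\ell}$ for some invertible $R$ over $\mathbb{Z}_{2}$; these operations multiply Pauli generators and hence leave the code's error-correcting properties and the underlying group $\mathcal{S}$ intact. Under such a change, $h'_{i}\odot h'_{j}=\sum_{\ell,m}R_{i\ell}R_{jm}\,(h_{\ell}\odot h_{m})$, so the new symplectic product matrix is the congruent matrix $R\,\Omega\,R^{T}$, and since $R$ is invertible $\mathrm{rank}(R\,\Omega\,R^{T})=\mathrm{rank}(\Omega)$. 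Thus $\mathrm{rank}(\Omega)$ is an invariant of the code. I would then invoke the fundamental theorem of symplectic geometry (Lemma~1 of \cite{science2006brun}): a suitable $R$ brings the generators to the canonical form whose symplectic product matrix is $\Omega_{\mathrm{std}}=\bigoplus_{i=1}^{c}J\oplus\bigoplus_{j=1}^{a}[0]$ of (\ref{eq:standard-symp-form}). Each block $J$, displayed in (\ref{eq:J-matrix}), is invertible over $\mathbb{Z}_{2}$ (its determinant is $1$), hence has rank $2$, while the null blocks contribute nothing, so $\mathrm{rank}(\Omega_{\mathrm{std}})=2c$. Combining this with the invariance gives
\[
c=\tfrac{1}{2}\,\mathrm{rank}(\Omega)=\tfrac{1}{2}\,\mathrm{rank}\!\left(H_{X}H_{Z}^{T}+H_{Z}H_{X}^{T}\right),
\]
which is the claimed formula, and optimality is automatic because $c$ has just been identified with a quantity that is invariant under every row operation preserving the code (and, since a Clifford column operation induces a symplectic map that fixes $\Omega$ outright, under those too). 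The remaining parameters follow by a dimension count: after the decomposition the $n-k$ (independent) generators split into $2c$ generators forming $c$ anticommuting pairs and $a=(n-k)-2c$ mutually commuting generators, so the canonical unencoded state of Section~\ref{sec:ent-ass-gram} carries $c$ ebit halves, $a$ ancilla qubits, and $n-a-c=k+c$ information qubits, i.e.\ an $\left[\left[n,k+c;c\right]\right]$ code.

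The main obstacle is the bookkeeping in the middle step: one must be sure that the operations used by the decomposition genuinely act on $\Omega$ by congruence $\Omega\mapsto R\Omega R^{T}$ with $R$ invertible, so that rank is preserved, and that the canonical symplectic product matrix has rank exactly $2c$ --- equivalently, that no choice of generators for the same code can realize it with fewer anticommuting pairs. Once those two facts are in place, the rest is routine $\mathbb{Z}_{2}$ linear algebra: identifying $\Omega$ with $H_{Z}H_{X}^{T}+H_{X}H_{Z}^{T}$, computing $\mathrm{rank}(\bigoplus J)=2c$, and the parameter count.
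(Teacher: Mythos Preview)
Your proposal is correct and follows essentially the same approach as the paper: identify $H_{X}H_{Z}^{T}+H_{Z}H_{X}^{T}$ as the symplectic product matrix $\Omega_{H}$, note that the Gram--Schmidt row operations act as a congruence $\Omega_{H}\mapsto G\,\Omega_{H}\,G^{T}$ with $G$ full rank (hence rank-preserving), reduce to the standard form $\bigoplus_{i=1}^{c}J\oplus\bigoplus_{j}[0]$ of rank $2c$, and read off the parameters. The only minor difference is that the paper cites an external reference (Gottesman) for the optimality of $c$, whereas you argue it from rank invariance; your invariance argument shows $c$ is well-defined for the generating set but strictly speaking does not by itself rule out a different construction using fewer ebits, so you may want to cite the same reference or supply that lower-bound step.
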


\begin{proof}%

Consider the symplectic product matrix $\Omega_{H}=H_{X}H_{Z}^{T}+H_{Z}%
H_{X}^{T}$. Its entries are the symplectic products between all rows of $H$ so
that%
\begin{equation}
\left[  \Omega_{H}\right]  _{ij}=h_{i}\odot h_{j},
\end{equation}
where $h_{i}$ is the $i^{\text{th}}$ row of $H$. The matrix $\Omega_{H}$ is a
$\left(  n-k\right)  \times\left(  n-k\right)  $-dimensional binary matrix.

Refs.~\cite{arx2006brun,science2006brun}\ and the algorithm in Section~\ref{sec:ent-ass-gram} outline a symplectic Gram-Schmidt
orthogonalization procedure (SGSOP)\ that uniquely determines the optimal
(i.e., minimal) number of ebits that the code requires, and
Ref.~\cite{unpub2007got} proves that the SGSOP gives the optimal number of
ebits. The code construction in Refs.~\cite{arx2006brun,science2006brun} shows
that the resulting entanglement-assisted quantum code requires at most $c$
ebits. The essence of the argument in Ref.~\cite{unpub2007got} is that the
resulting entanglement-assisted quantum code requires at least $c$ ebits
because any fewer ebits would not be able to resolve the anticommutativity of
the generators on Alice's side of the code.

The SGSOP performs row operations that do not change the error-correcting
properties of the quantum code (because the code is additive), but these row
operations do change the symplectic product relations. These row operations
are either a row swap $S\left(  i,j\right)  $, where $S\left(  i,j\right)  $
is a full-rank $\left(  n-k\right)  \times\left(  n-k\right)  $ matrix that
swaps row $i$ with $j$, or a row addition $A\left(  i,j\right)  $, where
$A\left(  i,j\right)  $ is a full-rank $\left(  n-k\right)  \times\left(
n-k\right)  $ matrix that adds row $i$ to row $j$. These row operations
multiply the matrix $H$ from the left. The SGSOP then is equivalent to a
full-rank $\left(  n-k\right)  \times\left(  n-k\right)  $ matrix $G$ that
contains all of the row operations and produces a new quantum check matrix
$H^{\prime}=GH$ with corresponding symplectic product matrix $\Omega
_{H^{\prime}}=G\left(  H_{X}H_{Z}^{T}+H_{Z}H_{X}^{T}\right)  G^{T}$. In
particular, the resulting symplectic product matrix $\Omega_{H^{\prime}}$\ is
in the standard form in (\ref{eq:standard-symp-form}) so that%
\begin{equation}
\Omega_{H^{\prime}}=%
{\displaystyle\bigoplus\limits_{i=1}^{c}}
J\oplus
{\displaystyle\bigoplus\limits_{j=1}^{n-k-2c}}
\left[  0\right]  .
\end{equation}
Each matrix $J$ in the direct sum corresponds to half of an ebit as described
in Section~\ref{sec:EA-symp-relations} and has rank two. Each matrix
$\left[  0\right]  $ has rank zero and corresponds to an ancilla qubit. The
optimal number of ebits required for the code is $\mathrm{rank}\left(
\Omega_{H^{\prime}}\right)  /2$:%
\begin{equation}
\text{rank}\left(  \Omega_{H^{\prime}}\right)  =\text{rank}\left(
{\displaystyle\bigoplus\limits_{i=1}^{c}}
J\oplus%
{\displaystyle\bigoplus\limits_{j=1}^{n-k-2c}}
\left[  0\right]  \right)  =\sum_{i=1}^{c}\text{rank}\left(  J\right)
+\sum_{j=1}^{n-k-2c}\text{rank}\left(  \left[  0\right]  \right)  =2c.
\end{equation}
The second equality follows because the rank of a direct sum is the sum of the
individual matrix ranks, and the third equality follows from the individual
matrix ranks given above. The number $c$\ of ebits is also equal to
$\mathrm{rank}\left(  \Omega_{H}\right)  /2$ because the matrix $G$ is full
rank. The code is an $\left[  \left[  n,k+c;c\right]  \right]  $
entanglement-assisted quantum block code by the construction in
Refs.~\cite{arx2006brun,science2006brun}.%

\end{proof}%

Our formula (\ref{eq:optimal-ebits}) is equivalent to the formula at the top
of page 14 in Ref.~\cite{arx2006brun}, but it provides the quantum code
designer with a quick method to determine how many ebits an
entanglement-assisted code requires, by simply \textquotedblleft plugging
in\textquotedblright\ the generators of the code.

The formula (\ref{eq:optimal-ebits}), like the CSS\ formula in
(\ref{eq:CSS-form}), is a measure of how far a set of generators is from being
a commuting set, or equivalently, how far it is from giving a standard
stabilizer code.

Corollary~\ref{cor:CSS} below gives a formula for the optimal number of ebits
required by a CSS\ entanglement-assisted quantum code. It is generally a bit
less difficult to compute than the above formula in (\ref{eq:optimal-ebits}).
This reduction in complexity occurs because of the special form of a
CSS\ quantum code and because the size of the matrices involved are generally
smaller for a CSS\ code than for a general code with the same number of
generators and physical qubits.

\begin{corollary}
\label{cor:CSS}Suppose we import two classical $\left[  n,k_{1},d_{1}\right]
$ and $\left[  n,k_{2},d_{2}\right]  $ binary codes with respective parity
check matrices $H_{1}$ and $H_{2}$ to build an entanglement-assisted quantum
code. The resulting code is an $\left[  \left[  n,k_{1}+k_{2}-n+c,\min\left(
d_{1},d_{2}\right)  ;c\right]  \right]  $ entanglement-assisted code, and
requires $c$ ebits where%
\begin{equation}
c=\mathrm{rank}\left(  H_{1}H_{2}^{T}\right)  .
\end{equation}

\end{corollary}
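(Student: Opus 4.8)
The plan is to specialize Theorem~\ref{thm:opt-ebit-formula} to the Calderbank--Shor--Steane quantum check matrix built by juxtaposing $H_{1}$ and $H_{2}$. First I would write that matrix out explicitly: the CSS construction from $H_{1}$ and $H_{2}$ has $Z$-block $\left[\begin{array}{c}H_{1}\\0\end{array}\right]$ and $X$-block $\left[\begin{array}{c}0\\H_{2}\end{array}\right]$, so
\begin{equation}
H=\left[\left.
\begin{array}{c}
H_{1}\\
0
\end{array}
\right\vert
\begin{array}{c}
0\\
H_{2}
\end{array}
\right],
\end{equation}
a $\left(2n-k_{1}-k_{2}\right)\times2n$ binary matrix whose rows are independent provided $H_{1}$ and $H_{2}$ have full row rank (as one may assume for parity-check matrices).

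Next I would compute the symplectic product matrix appearing in Theorem~\ref{thm:opt-ebit-formula} by block multiplication. Here $H_{X}H_{Z}^{T}$ has its only nonzero block, $H_{2}H_{1}^{T}$, in the lower-left corner, while $H_{Z}H_{X}^{T}$ has its only nonzero block, $H_{1}H_{2}^{T}$, in the upper-right corner, so that
\begin{equation}
\Omega_{H}=H_{X}H_{Z}^{T}+H_{Z}H_{X}^{T}=\left[
\begin{array}{cc}
0 & H_{1}H_{2}^{T}\\
H_{2}H_{1}^{T} & 0
\end{array}
\right].
\end{equation}
Since $H_{2}H_{1}^{T}=\left(H_{1}H_{2}^{T}\right)^{T}$, this matrix is block-antidiagonal with its two off-diagonal blocks transposes of each other. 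The crucial step is the rank count: permuting the column blocks turns a block-antidiagonal matrix into a block-diagonal one, so $\mathrm{rank}\left(\Omega_{H}\right)=\mathrm{rank}\left(H_{1}H_{2}^{T}\right)+\mathrm{rank}\left(H_{2}H_{1}^{T}\right)=2\,\mathrm{rank}\left(H_{1}H_{2}^{T}\right)$, the last equality because a matrix and its transpose have equal rank over $\mathbb{Z}_{2}$ just as over any field. The formula $c=\mathrm{rank}\left(\Omega_{H}\right)/2$ of Theorem~\ref{thm:opt-ebit-formula} then yields $c=\mathrm{rank}\left(H_{1}H_{2}^{T}\right)$.

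For the remaining parameters, writing the row count $2n-k_{1}-k_{2}$ of $H$ in the form $n-k$ gives $k=k_{1}+k_{2}-n$, and Theorem~\ref{thm:opt-ebit-formula} then produces an $\left[\left[n,k+c;c\right]\right]=\left[\left[n,k_{1}+k_{2}-n+c;c\right]\right]$ entanglement-assisted code. The distance claim $\min\left(d_{1},d_{2}\right)$ does not follow from Theorem~\ref{thm:opt-ebit-formula} and must be handled separately: I would check it directly against the entanglement-assisted error-correction conditions of Section~\ref{sec:ent-ass-gram}, observing that $X$-type errors are detected through the $H_{1}$ block and $Z$-type errors through the $H_{2}$ block exactly as in the ordinary CSS analysis, so that every Pauli error of weight below $\min\left(d_{1},d_{2}\right)$ satisfies those conditions, in agreement with Ref.~\cite{hsieh:062313}.

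I do not expect a serious obstacle. The two points requiring care are fixing the convention for which classical code sits in the $Z$-block versus the $X$-block (and checking the block multiplication that produces $\Omega_{H}$), and the distance statement, which has to come from the error-correction conditions rather than from the ebit-counting formula. As a sanity check, when the dual-containing condition $H_{1}H_{2}^{T}=0$ holds one gets $c=0$ and recovers the ordinary CSS code.
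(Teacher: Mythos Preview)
Your proposal is correct and follows essentially the same route as the paper: form the CSS quantum check matrix, compute the symplectic product matrix $\Omega_H$ as the block-antidiagonal matrix with blocks $H_1H_2^T$ and $H_2H_1^T$, reduce to a direct sum by a permutation, and invoke rank additivity plus transpose-invariance of rank to get $c=\mathrm{rank}(H_1H_2^T)$. The paper simply cites the construction in Refs.~\cite{arx2006brun,science2006brun} for the remaining parameters (including the distance), whereas you spell out the $k$ count and sketch the distance argument separately; this is a minor elaboration rather than a different method.
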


%

%TCIMACRO{\TeXButton{begin proof}{\begin{proof}}}%
%BeginExpansion
\begin{proof}%
%EndExpansion
The quantum check matrix has the following form:%
\begin{equation}
H=\left[  \left.
\begin{array}
[c]{c}%
H_{1}\\
0
\end{array}
\right\vert
\begin{array}
[c]{c}%
0\\
H_{2}%
\end{array}
\right]  .
\end{equation}
The symplectic product matrix $\Omega_{H}$ is then%
\begin{equation}
\Omega_{H}=%
\begin{bmatrix}
H_{1}\\
0
\end{bmatrix}%
\begin{bmatrix}
0 & H_{2}^{T}%
\end{bmatrix}
+%
\begin{bmatrix}
0\\
H_{2}%
\end{bmatrix}%
\begin{bmatrix}
H_{1}^{T} & 0
\end{bmatrix}
=%
\begin{bmatrix}
0 & H_{1}H_{2}^{T}\\
H_{2}H_{1}^{T} & 0
\end{bmatrix}
.
\end{equation}
The above matrix is equivalent by a full rank permutation matrix to the matrix
$H_{1}H_{2}^{T}\oplus H_{2}H_{1}^{T}$, so the rank of $\Omega_{H}$ is%
\begin{equation}
\text{rank}\left(  \Omega_{H}\right)  =\text{rank}\left(  H_{1}H_{2}^{T}\oplus
H_{2}H_{1}^{T}\right)  =\text{rank}\left(  H_{1}H_{2}^{T}\right)
+\text{rank}\left(  H_{2}H_{1}^{T}\right)  =2~\text{rank}\left(  H_{1}%
H_{2}^{T}\right)
\end{equation}
The second equality follows because the rank of a direct sum is equivalent to
the sum of the individual ranks, and the third equality follows because the
rank is invariant under matrix transposition. The number of ebits required for
the resulting entanglement-assisted quantum code is rank$\left(  H_{1}%
H_{2}^{T}\right)  $, using the result of the previous theorem. The
construction in Refs.~\cite{arx2006brun,science2006brun} produces an $\left[
\left[  n,k_{1}+k_{2}-n+c,\min\left(  d_{1},d_{2}\right)  ;c\right]  \right]
$ entanglement-assisted quantum block code.%

\end{proof}%

\begin{corollary}
Suppose we import an $\left[  n,k,d\right]  _{4}$\ classical code over
$GF\left(  4\right)  $ with parity check matrix $H$ for use as an
entanglement-assisted quantum code according to the construction in
Refs.~\cite{arx2006brun,science2006brun}. Then the resulting quantum code is
an $\left[  \left[  n,2k-n+c;c\right]  \right]  $ entanglement-assisted
quantum code where $c=rank\left(  HH^{\dag}\right)  $ and $\dag$ denotes the
conjugate transpose operation over matrices in $GF\left(  4\right)  $.
\end{corollary}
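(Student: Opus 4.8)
The plan is to reduce the statement to Theorem~\ref{thm:opt-ebit-formula} by passing from the $GF(4)$ parity check matrix $H$ to the binary quantum check matrix that actually defines the stabilizer, and then to identify the resulting binary symplectic‑product rank with $\mathrm{rank}(HH^{\dag})$ over $GF(4)$.

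First I would recall the trace construction of a quantum code from a $GF(4)$‑linear code used in Refs.~\cite{arx2006brun,science2006brun}. Write $GF(4)=\{0,1,\omega,\omega^{2}\}$ with conjugation $\bar x=x^{2}$ and trace $\mathrm{Tr}(x)=x+x^{2}\in GF(2)$. There is a $GF(2)$‑linear map $\phi\colon GF(4)^{n}\to(\mathbb{Z}_{2})^{2n}$ sending a $GF(4)$‑vector to a symplectic vector, with the property that the symplectic product of $\phi(u)$ and $\phi(v)$ equals $\mathrm{Tr}(u\cdot\bar v)$, where $u\cdot\bar v=\sum_{l}u_{l}\bar v_{l}$ is the Hermitian inner product. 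If $h_{1},\dots,h_{n-k}$ are the rows of $H$ (which are $GF(4)$‑independent), then $\{h_{i}\}\cup\{\omega h_{i}\}$ is a $GF(2)$‑basis of the $GF(4)$‑rowspan $R$ of $H$, so $\dim_{GF(2)}R=2(n-k)$. Applying $\phi$ to these $2(n-k)$ vectors yields a $2(n-k)\times2n$ binary quantum check matrix; by Theorem~\ref{thm:opt-ebit-formula} and the construction of Refs.~\cite{arx2006brun,science2006brun} the resulting entanglement‑assisted code has parameters $[[\,n,\;n-2(n-k)+c;\;c\,]]=[[\,n,\;2k-n+c;\;c\,]]$ with $c=\mathrm{rank}(\Omega)/2$, where $\Omega$ is the symplectic‑product matrix of these $2(n-k)$ generators.

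Next I would compute $\Omega$. Setting $M=HH^{\dag}$, so $M_{ij}=h_{i}\cdot\bar h_{j}$, the displayed property of $\phi$ together with $\omega\bar\omega=\omega^{3}=1$ gives, in block form with blocks indexed by $\{h_{i}\}$ and $\{\omega h_{i}\}$,
\[
\Omega=\begin{bmatrix}\mathrm{Tr}(M) & \mathrm{Tr}(\bar\omega M)\\[2pt]\mathrm{Tr}(\omega M) & \mathrm{Tr}(M)\end{bmatrix},
\]
with $\mathrm{Tr}$ applied entrywise; equivalently $\Omega$ is the Gram matrix, in the $GF(2)$‑basis $\{h_{i}\}\cup\{\omega h_{i}\}$, of the $GF(2)$‑bilinear form $(x,y)\mapsto\mathrm{Tr}(x\cdot\bar y)$ on $R$.

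The crux is then to show $\mathrm{rank}_{GF(2)}(\Omega)=2\,\mathrm{rank}_{GF(4)}(HH^{\dag})$. Here $\mathrm{rank}_{GF(4)}(HH^{\dag})=(n-k)-\dim_{GF(4)}\mathrm{rad}_{4}$, where $\mathrm{rad}_{4}=\{x\in R: x\cdot\bar y=0\ \forall y\in R\}$ is the radical of the Hermitian form on $R$, while $\mathrm{rank}_{GF(2)}(\Omega)=2(n-k)-\dim_{GF(2)}\mathrm{rad}_{2}$ with $\mathrm{rad}_{2}=\{x\in R:\mathrm{Tr}(x\cdot\bar y)=0\ \forall y\in R\}$. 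The key point I would establish is $\mathrm{rad}_{2}=\mathrm{rad}_{4}$: the inclusion $\mathrm{rad}_{4}\subseteq\mathrm{rad}_{2}$ is immediate, and conversely if $x\in\mathrm{rad}_{2}$ then for every $y\in R$ and every $\lambda\in GF(4)$ we have $\lambda y\in R$, so $0=\mathrm{Tr}(x\cdot\overline{\lambda y})=\mathrm{Tr}\!\big(\bar\lambda\,(x\cdot\bar y)\big)$; letting $\bar\lambda$ range over $GF(4)$ and using nondegeneracy of the trace pairing $GF(4)\times GF(4)\to GF(2)$ forces $x\cdot\bar y=0$, so $x\in\mathrm{rad}_{4}$. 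Since a $GF(4)$‑subspace has $GF(2)$‑dimension twice its $GF(4)$‑dimension, $\dim_{GF(2)}\mathrm{rad}_{2}=2\dim_{GF(4)}\mathrm{rad}_{4}$, hence $\mathrm{rank}_{GF(2)}(\Omega)=2\big[(n-k)-\dim_{GF(4)}\mathrm{rad}_{4}\big]=2\,\mathrm{rank}_{GF(4)}(HH^{\dag})$, and with $c=\mathrm{rank}(\Omega)/2$ this yields $c=\mathrm{rank}(HH^{\dag})$ over $GF(4)$. The one place demanding care is exactly this radical identity—tracking which products are over $GF(2)$ versus $GF(4)$ and invoking nondegeneracy of the trace—together with getting the factor of two right in passing from $n-k$ generators over $GF(4)$ to $2(n-k)$ binary generators, which is precisely what produces the dimension $2k-n+c$.
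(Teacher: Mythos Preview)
Your proof is correct and reaches the same conclusion as the paper, but the key step is handled differently. The paper also writes the symplectic product matrix $\Omega$ as an entrywise trace of a $2\times 2$ block $GF(4)$ matrix built from $HH^{\dag}$ (using the basis $\{\omega h_i\}\cup\{\bar\omega h_i\}$ rather than your $\{h_i\}\cup\{\omega h_i\}$, which is harmless since both are $GF(2)$-bases of the same $GF(4)$-rowspan). Where you invoke the radical identity $\mathrm{rad}_2=\mathrm{rad}_4$ via nondegeneracy of the trace pairing, the paper instead views $\Omega$ as a matrix over $GF(4)$ and performs an explicit congruence by concrete full-rank $GF(4)$ matrices $A_1,A_2$ to bring it to the block-diagonal form $\bar H H^{T}\oplus HH^{\dag}$, from which $\mathrm{rank}(\Omega)=2\,\mathrm{rank}(HH^{\dag})$ is read off directly. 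Your argument is more conceptual and basis-free, and makes transparent why the factor of two appears (it is the degree $[GF(4):GF(2)]$); the paper's computation is more hands-on but has the mild advantage of exhibiting an explicit change of basis, which could be useful if one wanted to actually transform the generators rather than just count ebits.
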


\begin{proof}%

Ref.~\cite{science2006brun} shows how to produce an entanglement-assisted
quantum code from the parity check matrix $H$\ of a classical code over
$GF\left(  4\right)  $. The resulting quantum parity check matrix $H_{Q}$\ in
symplectic binary form is%
\begin{equation}
H_{Q}=\gamma\left(
\begin{bmatrix}
\omega H\\
\bar{\omega}H
\end{bmatrix}
\right)  ,
\end{equation}
where $\gamma$ denotes the isomorphism between elements of $GF\left(
4\right)  $ and symplectic binary vectors that represent Pauli matrices. We
augment the table in (\ref{eq:z2-pauli}) to include the entries from
$GF\left(  4\right)  $:%
\begin{equation}%
\begin{tabular}
[c]{l|llll}%
$\Pi$ & $I$ & $X$ & $Y$ & $Z$\\\hline
$\left(  \mathbb{Z}_{2}\right)  ^{2}$ & $00$ & $01$ & $11$ & $10$\\\hline
$GF\left(  4\right)  $ & $0$ & $\omega$ & $1$ & $\bar{\omega}$%
\end{tabular}
\ \label{eq:gf4-pauli}%
\end{equation}
The isomorphism $\gamma$ is the mapping from the third row to the second row
in the above table. The symplectic product between binary vectors is
equivalent to the trace product of their $GF\left(  4\right)  $
representations (see, e.g., Ref.~\cite{science2006brun}):%
\begin{equation}
h_{i}\odot h_{j}=\text{tr}\left\{  \gamma^{-1}\left(  h_{i}\right)
\cdot\overline{\gamma^{-1}\left(  h_{j}\right)  }\right\}  ,
\end{equation}
where $h_{i}$ and $h_{j}$ are any two rows of $H_{Q}$, $\cdot$ denotes the
inner product, the overbar denotes the conjugate operation, and tr$\left\{
x\right\}  =x+\bar{x}$ denotes the trace operation over elements of $GF\left(
4\right)  $. We exploit these correspondences to write the symplectic product
matrix $\Omega_{H_{Q}}$ for the quantum check matrix $H_{Q}$ as follows:%
\begin{align}
\Omega_{H_{Q}} &  =\text{tr}\left\{
\begin{bmatrix}
\omega H\\
\bar{\omega}H
\end{bmatrix}%
\begin{bmatrix}
\omega H\\
\bar{\omega}H
\end{bmatrix}
^{\dag}\right\}  =\text{tr}\left\{
\begin{bmatrix}
\omega H\\
\bar{\omega}H
\end{bmatrix}%
\begin{bmatrix}
\bar{\omega}H^{\dag} & \omega H^{\dag}%
\end{bmatrix}
\right\}  \\
&  =\text{tr}\left\{
\begin{bmatrix}
HH^{\dag} & \bar{\omega}HH^{\dag}\\
\omega HH^{\dag} & HH^{\dag}%
\end{bmatrix}
\right\}  =\text{tr}\left\{
\begin{bmatrix}
1 & \bar{\omega}\\
\omega & 1
\end{bmatrix}
\otimes HH^{\dag}\right\}  \\
&  =%
\begin{bmatrix}
1 & \bar{\omega}\\
\omega & 1
\end{bmatrix}
\otimes HH^{\dag}+%
\begin{bmatrix}
1 & \omega\\
\bar{\omega} & 1
\end{bmatrix}
\otimes\bar{H}H^{T}%
\end{align}
where the \textquotedblleft tr\textquotedblright\ operation above is an
element-wise trace operation over $GF\left(  4\right)  $ (it is not the matrix
trace operation.) The matrix $\Omega_{H_{Q}}$ is over the field $GF\left(
2\right)  $, but we can consider it as being over the field $GF\left(
4\right)  $ without changing its rank. Therefore, we can multiply it by
matrices over the field $GF\left(  4\right)  $. Consider the following
full-rank $GF\left(  4\right)  $\ matrices:%
\begin{equation}
A_{1}=%
\begin{bmatrix}
1 & \bar{\omega}\\
0 & 1
\end{bmatrix}
\otimes I,\ \ \ \ A_{2}=%
\begin{bmatrix}
1 & 0\\
1 & 1
\end{bmatrix}
\otimes I.
\end{equation}
We premultiply and postmultiply the matrix $\Omega_{H_{Q}}$ as follows and
obtain a matrix with the same rank as $\Omega_{H_{Q}}$:%
\begin{equation}
A_{2}A_{1}\Omega_{H_{Q}}A_{1}^{\dag}A_{2}^{\dag}=%
\begin{bmatrix}
0 & 0\\
0 & 1
\end{bmatrix}
\otimes HH^{\dag}+%
\begin{bmatrix}
1 & 0\\
0 & 0
\end{bmatrix}
\otimes\bar{H}H^{T}=%
\begin{bmatrix}
\bar{H}H^{T} & 0\\
0 & HH^{\dag}%
\end{bmatrix}
=\bar{H}H^{T}\oplus HH^{\dag}%
\end{equation}
Therefore, the rank of $\Omega_{H_{Q}}$ is%
\begin{equation}
\text{rank}\left(  \Omega_{H_{Q}}\right)  =\text{rank}\left(  \bar{H}%
H^{T}\oplus HH^{\dag}\right)  =\text{rank}\left(  \bar{H}H^{T}\right)
+\text{rank}\left(  HH^{\dag}\right)  =2\text{ rank}\left(  HH^{\dag}\right)
.
\end{equation}
The second equality holds because the rank of a direct sum is the sum of the
individual ranks and the third holds because the rank is invariant under the
matrix transpose operation. Therefore, the resulting entanglement-assisted
quantum code requires $c=~$rank$\left(  HH^{\dag}\right)  $ ebits by applying
the result of the original theorem. The construction in
Refs.~\cite{arx2006brun,science2006brun}\ produces an $\left[  \left[
n,2k-n+c;c\right]  \right]  $ entanglement-assisted quantum code.%

\end{proof}%

\begin{corollary}
We can construct a continuous-variable entanglement-assisted quantum code from
generators corresponding to the rows in quantum check matrix $H=\left[
\left.
\begin{array}
[c]{c}%
H_{Z}%
\end{array}
\right\vert
\begin{array}
[c]{c}%
H_{X}%
\end{array}
\right]  $ where $H$ is $\left(  n-k\right)  \times2n$-dimensional, $H$\ is a
real matrix representing the quantum code \cite{pra2007wildeEA}, and both
$H_{Z}$ and $H_{X}$ are $\left(  n-k\right)  \times n$-dimensional. The
resulting code is an $\left[  \left[  n,k+c;c\right]  \right]  $
continuous-variable entanglement-assisted code and requires $c$ entangled
modes where%
\begin{equation}
c=\mathrm{rank}\left(  H_{X}H_{Z}^{T}-H_{Z}H_{X}^{T}\right)  /2.
\end{equation}

\end{corollary}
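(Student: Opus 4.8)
The plan is to mirror the proof of Theorem~\ref{thm:opt-ebit-formula}, with the binary symplectic form replaced by the real symplectic form appropriate to continuous-variable (Gaussian) stabilizer codes. First I would form the \emph{symplectic product matrix} $\Omega_{H}=H_{X}H_{Z}^{T}-H_{Z}H_{X}^{T}$, whose $(i,j)$ entry is the continuous-variable symplectic product $h_{i}\odot h_{j}$ of rows $h_{i}$ and $h_{j}$ of $H$; since these rows are the symplectic vectors representing the nullifiers of the code, $\Omega_{H}$ records the commutation relations of the generators. The key structural point is that $\Omega_{H}$ is a real \emph{skew-symmetric} $(n-k)\times(n-k)$ matrix, because $\Omega_{H}^{T}=H_{Z}H_{X}^{T}-H_{X}H_{Z}^{T}=-\Omega_{H}$. (There is no characteristic-two collapse here, which is exactly why the minus sign appears in the formula, in contrast to~(\ref{eq:optimal-ebits}).)

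Next I would invoke the continuous-variable analogue of the symplectic Gram-Schmidt orthogonalization procedure of Refs.~\cite{arx2006brun,science2006brun} in the Gaussian setting of~\cite{pra2007wildeEA}. As in the binary case, this procedure performs row operations on $H$---scaling a generator by a nonzero real, swapping two generators, and adding a real multiple of one generator to another (the scaling is a genuine new ingredient relative to the binary case, where the only nonzero scalar is $1$)---none of which change the code's error-correcting properties, while together they amount to left multiplication $H^{\prime}=GH$ by a full-rank real $(n-k)\times(n-k)$ matrix $G$. Under this operation the symplectic product matrix transforms by congruence, $\Omega_{H^{\prime}}=G\,\Omega_{H}\,G^{T}$, and the standard classification of alternating real bilinear forms lets us choose $G$ so that
\begin{equation}
\Omega_{H^{\prime}}=\bigoplus_{i=1}^{c}J\oplus\bigoplus_{j=1}^{n-k-2c}\left[  0\right]  ,
\end{equation}
where $J=\left[\begin{smallmatrix}0 & 1\\ -1 & 0\end{smallmatrix}\right]$ is the $2\times2$ symplectic block (rank two, corresponding to one entangled mode) and $\left[0\right]$ is the one-element null matrix (rank zero, corresponding to an ancilla mode). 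Hence $\mathrm{rank}\left(\Omega_{H^{\prime}}\right)=2c$ by the same direct-sum rank computation as in Theorem~\ref{thm:opt-ebit-formula}.

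Finally, since $G$ is full rank, $\mathrm{rank}\left(\Omega_{H}\right)=\mathrm{rank}\left(G\,\Omega_{H}\,G^{T}\right)=2c$, which gives $c=\mathrm{rank}\left(H_{X}H_{Z}^{T}-H_{Z}H_{X}^{T}\right)/2$ as claimed; the construction of~\cite{pra2007wildeEA} then assembles the generators together with $c$ shared two-mode entangled (two-mode-squeezed) states into an $\left[\left[n,k+c;c\right]\right]$ continuous-variable entanglement-assisted code, and the counting argument of Theorem~\ref{thm:opt-ebit-formula} shows that no fewer than $c$ entangled modes can resolve the non-orthogonality of the nullifiers on the sender's side, so $c$ is optimal. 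The step I expect to be the main obstacle is not the linear algebra---the skew-symmetric normal form over $\mathbb{R}$ is routine---but rather the verification that the continuous-variable Gram-Schmidt row operations are implementable by Gaussian unitaries that genuinely preserve the error-correcting conditions of the code; this is where I would lean on the framework already established in~\cite{pra2007wildeEA} rather than redeveloping it.
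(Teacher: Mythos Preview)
Your proposal is correct and takes essentially the same approach as the paper: the paper's proof simply states that the argument is analogous to that of Theorem~\ref{thm:opt-ebit-formula} with real vectors in place of binary vectors, and defers the symplectic-geometry details to Ref.~\cite{pra2007wildeEA}. You have spelled out precisely those analogous steps---forming the skew-symmetric $\Omega_{H}$, reducing it by the real Gram-Schmidt congruence to the standard $J$-block form, and reading off $c$ from the rank---so your write-up is in fact a more detailed version of what the paper only sketches.
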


\begin{proof}%
The proof is similar to the proof of the first theorem but requires
manipulations of real vectors instead of binary vectors. See
Ref.~\cite{pra2007wildeEA} for details of the symplectic geometry required for
continuous-variable entanglement-assisted codes.%
\end{proof}%

\begin{remark}
A similar formula holds for entanglement-assisted qudit codes by replacing the
subtraction operation above with subtraction modulo $d$. Specifically, we can
construct a qudit entanglement-assisted quantum code from generators
corresponding to the rows in check matrix\ $H=\left[  \left.
\begin{array}
[c]{c}%
H_{Z}%
\end{array}
\right\vert
\begin{array}
[c]{c}%
H_{X}%
\end{array}
\right]  $ whose matrix entries are elements of the finite field
$\mathbb{Z}_{d}$. The code requires $c$ edits (a $d$-dimensional state
$\left(  \sum_{i=0}^{d-1}\left\vert i\right\rangle \left\vert i\right\rangle
\right)  /\sqrt{d}$) where%
\[
c=\mathrm{rank}\left(  H_{X}H_{Z}^{T}\ominus_{d}H_{Z}H_{X}^{T}\right)  /2
\]
and $\ominus_{d}$ is subtraction modulo $d$. We use subtraction modulo $d$
because the symplectic form over $d$-dimensional variables includes
subtraction modulo $d$.
\end{remark}

\section{Closing Remarks}

This chapter reviewed the theory of entanglement-assisted coding for block codes.
We showed how to manipulate these codes with both a Pauli representation and a binary one.
Our contributions included an algorithm for computing an encoding circuit and a method to
determine the optimal number of ebits for an entanglement-assisted code.
Much of this thesis extends the entanglement-assisted technique to the domain of quantum convolutional coding.
We introduce quantum convolutional coding in the next chapter.

\chapter{Quantum Convolutional Codes}

\label{chp:conv}\begin{saying}
Forney, Viterbi, and Elias,\\
To Andy the Trojans are pious,\\
Their methods, we want 'em,\\
Who'd think they'd go quantum?\\
So now they still stupefy us.
\end{saying}The block codes in the previous two chapters are useful
in quantum computing and in quantum communication. One of the drawbacks of the
block-coding technique for quantum communication is that, in general, the
sender must have all her qubits ready before encoding takes place. For a large
block code, this preparation may be a heavy demand on the sender.

Quantum convolutional coding theory
\cite{PhysRevLett.91.177902,arxiv2004olliv,isit2005forney,ieee2007forney}
offers a different paradigm for coding quantum information. The convolutional
structure is useful in a quantum communication scenario where a sender
possesses a stream of qubits to send to a receiver. The encoding circuit for a
quantum convolutional code has a much lower complexity than an encoding
circuit needed for a large block code. It also has a repetitive pattern so
that the same physical devices or the same routines can encode the stream of
quantum information in an online fashion as soon as information qubits are
available for encoding.

Quantum convolutional stabilizer codes borrow heavily from the structure of
their classical counterparts
\cite{PhysRevLett.91.177902,arxiv2004olliv,isit2005forney,ieee2007forney}.
Quantum convolutional codes are similar because some of the qubits feed back
into a repeated encoding unitary and give the code a memory structure like
that of a classical convolutional code. The quantum codes feature online
encoding and decoding of qubits. This feature gives quantum convolutional
codes both their low encoding and decoding complexity.

Our techniques for encoding and decoding are an expansion of previous
techniques from quantum convolutional coding theory. Previous techniques for
encoding and decoding include finite-depth operations only. A finite-depth
operation propagates errors to a finite number of neighboring qubits in the
qubit stream. We introduce an infinite-depth operation to the set of
shift-invariant Clifford operations and explain it in detail in
Section~\ref{sec:infinite-depth-ops}. We must be delicate when using
infinite-depth operations because they can propagate errors to an infinite
number of neighboring qubits in the qubit stream. We explain our assumptions
in detail when we include
infinite-depth operations in entanglement-assisted quantum convolutional
codes in Section~\ref{sec:eaqcc-iefd}\ of the next chapter. An infinite-depth operation gives more flexibility when designing
encoding circuits---similar to the way in which an infinite-impulse response
filter gives more flexibility in the design of classical convolutional
circuits.

We structure this chapter as follows. In Section~\ref{sec:conv-stabilizer}, we
introduce the definition of a quantum convolutional code and discuss how it
operates. Section~\ref{sec:P2B-conv}\ introduces the Pauli-to-binary
isomorphism. This mapping simplifies the mathematics of a quantum
convolutional code by showing how to represent it with a matrix of binary
polynomials. Section~\ref{sec:shifted-symp-prod} discusses the important
\textquotedblleft shifted symplectic product\textquotedblright\ that gives the
commutation relations of a set of convolutional generators. We discuss in
Section~\ref{sec:row-col-ops} how to manipulate the binary polynomial matrix
that represents a quantum convolutional code. In
Sections~\ref{sec:finite-depth-clifford} and \ref{sec:infinite-depth-ops}, we
respectively review finite-depth and infinite-depth operations for encoding a
quantum convolutional code.

\section{Review of the Convolutional Stabilizer Formalism}

\label{sec:conv-stabilizer}We review the theory of convolutional stabilizer
codes by considering a set of Pauli matrices that stabilize a stream of
encoded qubits. We first give the mathematical definition of a quantum
convolutional code and follow by discussing the various steps involved in the
operation of it.

\subsection{Quantum Convolutional Code Definition}

A quantum convolutional stabilizer code acts on a Hilbert space $\mathcal{H}$
that\ is\ a countably infinite tensor product of two-dimensional qubit Hilbert
spaces $\left\{  \mathcal{H}_{i}\right\}  _{i\in\mathbb{Z}^{+}}$ where%
\begin{equation}
\mathcal{H}=%
{\displaystyle\bigotimes\limits_{i=0}^{\infty}}
\ \mathcal{H}_{i}.
\end{equation}
and $\mathbb{Z}^{+}\equiv\left\{  0,1,\ldots\right\}  $. A sequence
$\mathbf{A}$ of Pauli matrices $\left\{  A_{i}\right\}  _{i\in\mathbb{Z}^{+}}%
$, where%
\begin{equation}
\mathbf{A}=%
{\displaystyle\bigotimes\limits_{i=0}^{\infty}}
\ A_{i},
\end{equation}
can act on states in $\mathcal{H}$. Let $\Pi^{\mathbb{Z}^{+}}$ denote the set
of all Pauli sequences. The support supp$\left(  \mathbf{A}\right)  $\ of a
Pauli sequence $\mathbf{A}$ is the set of indices of the entries in
$\mathbf{A}$ that are not equal to the identity. The weight of a sequence
$\mathbf{A}$ is the size $\left\vert \text{supp}\left(  \mathbf{A}\right)
\right\vert $\ of its support. The delay del$\left(  \mathbf{A}\right)  $ of a
sequence $\mathbf{A}$ is the smallest index for an entry not equal to the
identity. The degree deg$\left(  \mathbf{A}\right)  $ of a sequence
$\mathbf{A}$ is the largest index for an entry not equal to the identity.
E.g., the following Pauli sequence%
\begin{equation}%
\begin{array}
[c]{cccccccc}%
I & X & I & Y & Z & I & I & \cdots
\end{array}
,
\end{equation}
has support $\left\{  1,3,4\right\}  $, weight three, delay one, and degree
four. A sequence has finite support if its weight is finite. Let
$F(\Pi^{\mathbb{Z}^{+}})$ denote the set of Pauli sequences with finite
support. The following definition for a quantum convolutional code utilizes
the set $F(\Pi^{\mathbb{Z}^{+}})$ in its description.

\begin{definition}
\label{def:conv-code}A rate $k/n$-convolutional stabilizer code with $0\leq
k\leq n$ is specified by a commuting set $\mathcal{G}$\ of all $n$-qubit shifts of a basic
generator set $\mathcal{G}_{0}$. The commutativity requirement is necessary
for the same reason that standard stabilizer codes require it
\cite{book2000mikeandike}. The basic generator set $\mathcal{G}_{0}$ has $n-k$
Pauli sequences of finite support:%
\begin{equation}
\mathcal{G}_{0}=\left\{  \mathbf{G}_{i}\in F(\Pi^{\mathbb{Z}^{+}}):1\leq i\leq
n-k\right\}  .
\end{equation}
The constraint length $\nu$ of the code is the maximum degree of the
generators in $\mathcal{G}_{0}$. A frame of the code consists of $n$ qubits.
The definition of a quantum convolutional code as $n$-qubit shifts of the
basic set $\mathcal{G}_{0}$\ is what gives the code its periodic structure.
\end{definition}

A quantum convolutional code admits an equivalent definition in terms of the
delay transform or $D$-transform. The $D$-transform captures shifts of the
basic generator set $\mathcal{G}_{0}$. Let us define the $n$-qubit delay
operator $D$ acting on any Pauli sequence $\mathbf{A}\in\Pi^{\mathbb{Z}^{+}}%
$\ as follows:%
\begin{equation}
D\left(  \mathbf{A}\right)  =I^{\otimes n}\otimes\mathbf{A.}
\label{eq:delay-transform}%
\end{equation}
We can write $j$ repeated applications of $D$ as a power of $D$:%
\begin{equation}
D^{j}\left(  \mathbf{A}\right)  =I^{\otimes jn}\otimes\mathbf{A.}%
\end{equation}
Let $D^{j}\left(  \mathcal{G}_{0}\right)  $ be the set of shifts of elements
of $\mathcal{G}_{0}$ by $j$. Then the full stabilizer $\mathcal{G}$ for the
convolutional stabilizer code is%
\begin{equation}
\mathcal{G}=%
{\textstyle\bigcup\limits_{j\in\mathbb{Z}^{+}}}
D^{j}\left(  \mathcal{G}_{0}\right)  .
\end{equation}

\begin{example}
\label{sec:qcc-example}Forney et al. provided an example of a rate-1/3 quantum
convolutional code by importing a particular classical quaternary
convolutional code \cite{isit2005forney,ieee2007forney}. Grassl and
R\"{o}tteler determined a noncatastrophic encoding circuit for Forney et al.'s
rate-1/3 quantum convolutional code \cite{isit2006grassl}. The basic
stabilizer and its first shift are as follows:%
\begin{equation}
\cdots\left\vert
\begin{array}
[c]{c}%
III\\
III\\
III\\
III
\end{array}
\right\vert
\begin{array}
[c]{c}%
XXX\\
ZZZ\\
III\\
III
\end{array}
\left\vert
\begin{array}
[c]{c}%
XZY\\
ZYX\\
XXX\\
ZZZ
\end{array}
\right\vert
\begin{array}
[c]{c}%
III\\
III\\
XZY\\
ZYX
\end{array}
\left\vert
\begin{array}
[c]{c}%
III\\
III\\
III\\
III
\end{array}
\right\vert \cdots\label{eq:qcc-example-stabilizer}%
\end{equation}
The code consists of all three-qubit shifts of the above generators. The
vertical bars are a visual aid to illustrate the three-qubit shifts of the
basic generators. The code can correct for an arbitrary single-qubit error in
every other frame.
\end{example}

\subsection{Quantum Convolutional Code Operation}

Figure~\ref{fig:qcc} illustrates the basic operation of a quantum
convolutional code. The operation of a rate-$k/n$ quantum convolutional code
consists of several steps:%
\begin{figure*}
[ptb]
\begin{center}
\includegraphics[
natheight=4.260100in,
natwidth=18.372900in,
height=1.339in,
width=5.72in
]
{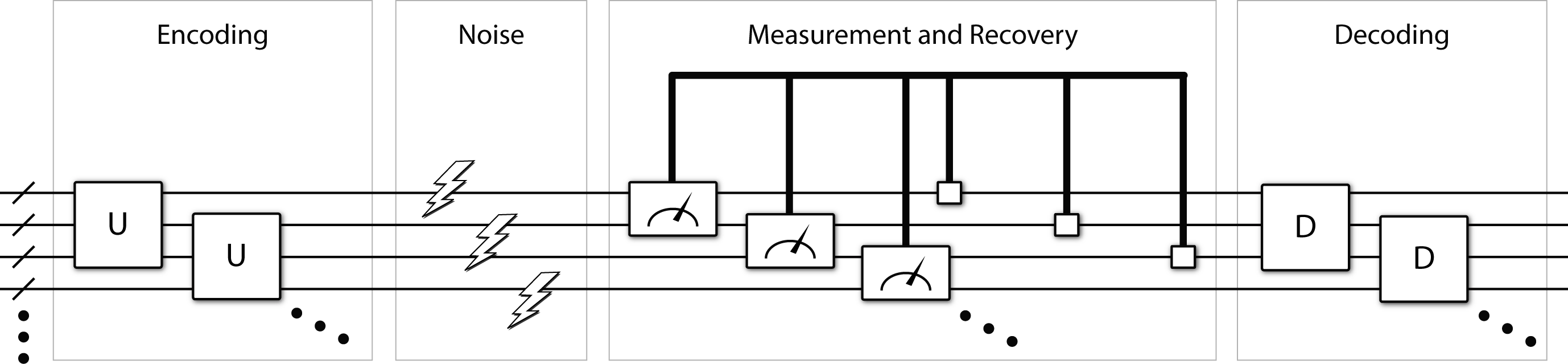}
\caption{The operation of a quantum convolutional code.
The sender applies the same unitary successively to a stream of information qubits and ancilla qubits.
The convolutional structure implies that the unitary overlaps some of the same qubits.
The sender transmits her qubits as soon as the unitary finishes processing them.
The noisy quantum channel corrupts the transmitted qubits.
The receiver performs overlapping multi-qubit measurements to diagnose
channel errors and corrects for them. The receiver performs an online decoding circuit to recover
the sender's original stream of information qubits.}
\label{fig:qcc}
\end{center}
\end{figure*}%

\begin{enumerate}
\item The protocol begins with the sender encoding a stream of information
qubits with an online encoding circuit. The sender encodes $n-k$ ancilla
qubits and $k$ information qubits per frame
\cite{ieee2007grassl,isit2006grassl}. The encoding circuit is
\textquotedblleft online\textquotedblright\ if it acts on a few frames at a time.

\item The sender transmits a set of initial qubits as soon as the first unitary
finishes processing them and continues to transmit later qubits after the next part of the
online encoding circuit finishes processing them.

\item The above basic set $\mathcal{G}_{0}$\ and all of its $n$-qubit shifts act like a
parity check matrix for the quantum convolutional code. The receiver measures
the generators in the stabilizer to determine an error syndrome. It is
important that the generators in $\mathcal{G}_{0}$ have finite weight so that
the receiver can perform the measurements and produce an error syndrome. It is
also important that the generators have a block-band form so that the receiver
can perform the measurements online as the noisy encoded qubits arrive.

\item The receiver processes the error syndromes with an online classical
error estimation algorithm such as the Viterbi algorithm
\cite{itit1967viterbi}\ or any other decoding algorithm \cite{book1999conv} to
determine the most likely error for each frame of quantum data. Syndrome-based
versions of the Viterbi algorithm that are appropriate for quantum coding are
available in Refs.~\cite{PhysRevLett.91.177902,arxiv2004olliv,ollivier:032304}.

\item The receiver performs unitary recovery operations that reverse the
estimated errors.

\item He finally processes the encoded qubits with a decoding circuit to
recover the original stream of information qubits. The qubits decoded from
this convolutional procedure should be error free and ready for quantum
computation at the receiving end.
\end{enumerate}

In the above description, each step is online---the sender and receiver do not
have to employ the above procedure sequentially. Processing of one step can
begin once the previous step has finished some amount of processing. All steps
above are always active during processing once the initial rounds have been completed.

\section{The Pauli-to-Binary Isomorphism for Quantum Convolutional Codes}

\label{sec:P2B-conv}This section contains the most important part of this
quantum convolutional coding review---the isomorphism from the set of Pauli sequences to the module over
the ring of binary polynomials
\cite{arxiv2004olliv,isit2006grassl,ieee2007forney}. We also name it the
Pauli-to-binary (P2B)\ isomorphism (whether we are considering the
P2B\ isomorphism for block codes or convolutional codes should be clear from
the context). The P2B\ isomorphism is important because it is easier to
perform manipulations with vectors of binary polynomials than with Pauli sequences.

We first define the phase-free Pauli group $\left[  \Pi^{\mathbb{Z}}\right]
$\ on a sequence of qubits. Recall that the delay transform $D$ in
(\ref{eq:delay-transform}) shifts a Pauli sequence to the right by $n$. Let us
assume for now that $n=1$. Let $\Pi^{\mathbb{Z}}$ denote the set of all
countably infinite Pauli sequences. The set $\Pi^{\mathbb{Z}}$ is equivalent
to the set of all one-qubit shifts of arbitrary Pauli operators:%
\begin{equation}
\Pi^{\mathbb{Z}}=\left\{
%TCIMACRO{\tprod \limits_{i\in\mathbb{Z}}}%
%BeginExpansion
{\textstyle\prod\limits_{i\in\mathbb{Z}}}
%EndExpansion
D^{i}\left(  A_{i}\right)  :A_{i}\in\Pi\right\}  .
\end{equation}
We remark that $D^{i}\left(  A_{i}\right)  =D^{i}\left(  A_{i}\otimes
I^{\otimes\infty}\right)  $. We make this same abuse of notation in what
follows. We can define the equivalence class $\left[  \Pi^{\mathbb{Z}}\right]
$\ of phase-free Pauli sequences:%
\begin{equation}
\left[  \Pi^{\mathbb{Z}}\right]  =\left\{  \beta\mathbf{A\ }|\ \mathbf{A}%
\in\Pi^{\mathbb{Z}},\beta\in\mathbb{C},\left\vert \beta\right\vert =1\right\}
.
\end{equation}

We develop the Pauli-to-binary (P2B) isomorphism between binary polynomials
and Pauli sequences. The P2B isomorphism is useful for representing the
shifting nature of quantum convolutional codes. Suppose $z\left(  D\right)  $
and $x\left(  D\right)  $ are arbitrary finite-degree and finite-delay
polynomials in $D$\ over $\mathbb{Z}_{2}$:%
\begin{equation}
z\left(  D\right)  =\sum_{i}z_{i}D^{i},\ \ \ \ \ \ \ x\left(  D\right)
=\sum_{i}x_{i}D^{i},\ \ \ \ \ \ \ z_{i},x_{i}\in\mathbb{Z}_{2}\ \ \forall
i\in\mathbb{Z},
\end{equation}
where del$\left(  z\left(  D\right)  \right)  $, del$\left(  x\left(
D\right)  \right)  $, deg$\left(  z\left(  D\right)  \right)  $, deg$\left(
x\left(  D\right)  \right)  <\infty$. Suppose%
\begin{equation}
u\left(  D\right)  =\left(  z\left(  D\right)  ,x\left(  D\right)  \right)
\in\left(  \mathbb{Z}_{2}\left(  D\right)  \right)  ^{2},
\end{equation}
where $\left(  \mathbb{Z}_{2}\left(  D\right)  \right)  ^{2}$ indicates the
Cartesian product $\mathbb{Z}_{2}\left(  D\right)  \times\mathbb{Z}_{2}\left(
D\right)  $. Let us employ the following shorthand:%
\begin{equation}
u\left(  D\right)  =\left[  z\left(  D\right)  |x\left(  D\right)  \right]  .
\end{equation}
Let $N$ be a map from the binary polynomials to the Pauli sequences,
$N:\left(  \mathbb{Z}_{2}\left(  D\right)  \right)  ^{2}\rightarrow
\Pi^{\mathbb{Z}}$, where%
\begin{equation}
N\left(  u\left(  D\right)  \right)  =%
%TCIMACRO{\tprod \limits_{i}}%
%BeginExpansion
{\textstyle\prod\limits_{i}}
%EndExpansion
D^{i}\left(  Z^{z_{i}}X^{x_{i}}\right)  .
\end{equation}
Let $v\left(  D\right)  =\left[  z^{\prime}\left(  D\right)  |x^{\prime
}\left(  D\right)  \right]  $ where $v\left(  D\right)  \in\left(
\mathbb{Z}_{2}\left(  D\right)  \right)  ^{2}$. The map $N$ induces an
isomorphism%
\begin{equation}
\left[  N\right]  :\left(  \mathbb{Z}_{2}\left(  D\right)  \right)
^{2}\rightarrow\left[  \Pi^{\mathbb{Z}}\right]  ,
\end{equation}
because addition of binary polynomials is equivalent to multiplication of
Pauli elements up to a global phase:%
\begin{equation}
\left[  N\left(  u\left(  D\right)  +v\left(  D\right)  \right)  \right]
=\left[  N\left(  u\left(  D\right)  \right)  \right]  \left[  N\left(
v\left(  D\right)  \right)  \right]  . \label{eq:isomorphism-poly}%
\end{equation}
The above isomorphism is a powerful way to capture the infiniteness and
shifting nature of convolutional codes with finite-degree and finite-delay
polynomials over the binary field $\mathbb{Z}_{2}$.

A quantum convolutional code in general consists of generators with $n$ qubits
per frame. Therefore, we consider the $n$-qubit extension of the definitions
and isomorphism given above. Let the delay transform $D$ now shift a Pauli
sequence to the right by an arbitrary positive integer $n$. Consider a
$2n$-dimensional vector $\mathbf{u}\left(  D\right)  $ of binary polynomials
where $\mathbf{u}\left(  D\right)  \in\left(  \mathbb{Z}_{2}\left(  D\right)
\right)  ^{2n}$. Let us write $\mathbf{u}\left(  D\right)  $\ as follows%
\[
\mathbf{u}\left(  D\right)  =\left[  \mathbf{z}\left(  D\right)
|\mathbf{x}\left(  D\right)  \right]  =\left[
\begin{array}
[c]{ccc}%
z_{1}\left(  D\right)   & \cdots & z_{n}\left(  D\right)
\end{array}
|%
\begin{array}
[c]{ccc}%
x_{1}\left(  D\right)   & \cdots & x_{n}\left(  D\right)
\end{array}
\right]  ,
\]
where $\mathbf{z}\left(  D\right)  ,\mathbf{x}\left(  D\right)  \in\left(
\mathbb{Z}_{2}\left(  D\right)  \right)  ^{n}$. Suppose%
\begin{equation}
z_{i}\left(  D\right)  =\sum_{j}z_{i,j}D^{j},\ \ \ \ \ \ x_{i}\left(
D\right)  =\sum_{j}x_{i,j}D^{j}.\nonumber
\end{equation}
Define a map $\mathbf{N}:\left(  \mathbb{Z}_{2}\left(  D\right)  \right)
^{2n}\rightarrow\Pi^{\mathbb{Z}}$:%
\begin{equation}
\mathbf{N}\left(  \mathbf{u}\left(  D\right)  \right)  =%
%TCIMACRO{\tprod \limits_{j}}%
%BeginExpansion
{\textstyle\prod\limits_{j}}
%EndExpansion
D^{j}\left(  Z^{z_{1,j}}X^{x_{1,j}}\right)  D^{j}\left(  I\otimes Z^{z_{2,j}%
}X^{x_{2,j}}\right)  \cdots D^{j}\left(  I^{\otimes n-1}\otimes Z^{z_{n,j}%
}X^{x_{n,j}}\right)  .\nonumber
\end{equation}
$\mathbf{N}$ is equivalent to the following map (up to a global phase)%
\begin{equation}
\mathbf{N}\left(  \mathbf{u}\left(  D\right)  \right)  =N\left(  u_{1}\left(
D\right)  \right)  \left(  I\otimes N\left(  u_{2}\left(  D\right)  \right)
\right)  \cdots\left(  I^{\otimes n-1}\otimes N\left(  u_{n}\left(  D\right)
\right)  \right)  ,\nonumber
\end{equation}
where%
\begin{equation}
u_{i}\left(  D\right)  =\left[  z_{i}\left(  D\right)  |x_{i}\left(  D\right)
\right]  .
\end{equation}
Suppose%
\begin{equation}
\mathbf{v}\left(  D\right)  =\left[  \mathbf{z}^{\prime}\left(  D\right)
|\mathbf{x}^{\prime}\left(  D\right)  \right]  ,
\end{equation}
where $\mathbf{v}\left(  D\right)  \in\left(  \mathbb{Z}_{2}\left(  D\right)
\right)  ^{2n}$. The map $\mathbf{N}$\ induces an isomorphism $\left[
\mathbf{N}\right]  :\left(  \mathbb{Z}_{2}\left(  D\right)  \right)
^{2n}\rightarrow\left[  \Pi^{\mathbb{Z}}\right]  $ for the same reasons given
in (\ref{eq:isomorphism-poly}):%
\begin{equation}
\left[  \mathbf{N}\left(  \mathbf{u}\left(  D\right)  +\mathbf{v}\left(
D\right)  \right)  \right]  =\left[  \mathbf{N}\left(  \mathbf{u}\left(
D\right)  \right)  \right]  \left[  \mathbf{N}\left(  \mathbf{v}\left(
D\right)  \right)  \right]  .
\end{equation}
The P2B isomorphism $\left[  \mathbf{N}\right]  $ is again useful because it
allows us to perform binary calculations instead of Pauli calculations.

\section{The Shifted Symplectic Product}

\label{sec:shifted-symp-prod}We consider the commutative properties of quantum
convolutional codes in this section. We develop some mathematics for the
important \textquotedblleft shifted symplectic product.\textquotedblright\ The
shifted symplectic product reveals the commutation relations of an arbitrary
number of shifts of a set of Pauli sequences.

Recall from Definition~\ref{def:conv-code} that a commuting set comprising a
basic set of Paulis and all their shifts specifies a quantum convolutional
code. How can we capture the commutation relations of a Pauli sequence and all
of its shifts?\ The \textit{shifted} symplectic product $\odot$, where%
\begin{equation}
\odot:\left(  \mathbb{Z}_{2}\left(  D\right)  \right)  ^{2}\times\left(
\mathbb{Z}_{2}\left(  D\right)  \right)  ^{2}\rightarrow\mathbb{Z}_{2}\left(
D\right)  ,
\end{equation}
is an elegant way to do so. The shifted symplectic product maps two vectors
$u\left(  D\right)  =\left[  z\left(  D\right)  |x\left(  D\right)  \right]  $
and $v\left(  D\right)  =\left[  z^{\prime}\left(  D\right)  |x^{\prime
}\left(  D\right)  \right]  $ to a binary polynomial with finite delay and
finite degree:%
\begin{equation}
\left(  u\odot v\right)  \left(  D\right)  =z\left(  D\right)  x^{\prime
}\left(  D^{-1}\right)  -x\left(  D\right)  z^{\prime}\left(  D^{-1}\right)  .
\end{equation}
The shifted symplectic product is not a proper symplectic product because it
fails to be alternating \cite{book2001symp}. The alternating property requires
that%
\begin{equation}
\left(  u\odot v\right)  \left(  D\right)  =-\left(  v\odot u\right)  \left(
D\right)  ,
\end{equation}
but we find instead that the following holds:%
\begin{equation}
\left(  u\odot v\right)  \left(  D\right)  =-\left(  v\odot u\right)  \left(
D^{-1}\right)  .
\end{equation}
Every vector $u\left(  D\right)  \in\mathbb{Z}_{2}\left(  D\right)  ^{2}$ is
self time-reversal antisymmetric with respect to $\odot$:%
\begin{equation}
\left(  u\odot u\right)  \left(  D\right)  =-\left(  u\odot u\right)  \left(
D^{-1}\right)  \ \ \ \ \forall u\left(  D\right)  \in\mathbb{Z}_{2}\left(
D\right)  ^{2}.
\end{equation}
Every binary vector is also self time-reversal symmetric with respect to
$\odot$\ because addition and subtraction are the same over $\mathbb{Z}_{2}$.
We employ the addition convention from now on and drop the minus signs. The
shifted symplectic product is a binary polynomial in $D$. We write its
coefficients as follows:%
\begin{equation}
\left(  u\odot v\right)  \left(  D\right)  =\sum_{i\in\mathbb{Z}}\left(
u\odot v\right)  _{i}\ D^{i}.
\end{equation}
The coefficient $\left(  u\odot v\right)  _{i}$ captures the commutation
relations of two Pauli sequences for $i$ one-qubit shifts of one of the
sequences:%
\begin{equation}
N\left(  u\left(  D\right)  \right)  D^{i}\left(  N\left(  v\left(  D\right)
\right)  \right)  =\left(  -1\right)  ^{\left(  u\odot v\right)  _{i}}%
D^{i}\left(  N\left(  v\left(  D\right)  \right)  \right)  N\left(  u\left(
D\right)  \right)  .\label{eq:shifted-symp-comm}%
\end{equation}
Thus two Pauli sequences $N\left(  u\left(  D\right)  \right)  $ and $N\left(
v\left(  D\right)  \right)  $\ commute for all shifts if and only if the
shifted symplectic product $\left(  u\odot v\right)  \left(  D\right)  $ vanishes.

The next example highlights the main features of the shifted symplectic
product and further emphasizes the relationship between Pauli commutation and
orthogonality of the shifted symplectic product.

\begin{example}
Consider two vectors of binary polynomials:%
\begin{equation}
u\left(  D\right)  =\left[
\begin{array}
[c]{c}%
D
\end{array}
|%
\begin{array}
[c]{c}%
1+D^{3}%
\end{array}
\right]  ,\ \ \ \ \ v\left(  D\right)  =\left[
\begin{array}
[c]{c}%
1+D
\end{array}
|%
\begin{array}
[c]{c}%
D^{3}%
\end{array}
\right]  .\nonumber
\end{equation}
The P2B isomorphism maps the above polynomials to the following Pauli
sequences:%
\begin{equation}
N\left(  u\left(  D\right)  \right)  =\left(  \cdots|I|X|Z|I|X|I|\cdots
\right)  ,\ \ \ \ \ \ N\left(  v\left(  D\right)  \right)  =\left(
\cdots|I|Z|Z|I|X|I|\cdots\right)  .
\end{equation}
The vertical bars between every Pauli in the sequence indicate that we are
considering one-qubit shifts. We determine the commutation relations of the
above sequences by inspection. $N\left(  u\left(  D\right)  \right)  $
anticommutes with a shift of itself by one or two to the left or right and
commutes with all other shifts of itself. $N\left(  v\left(  D\right)
\right)  $ anticommutes with a shift of itself by two or three to the left or
right and commutes with all other shifts of itself. $N\left(  u\left(
D\right)  \right)  $ anticommutes with $N\left(  v\left(  D\right)  \right)  $
shifted to the left by one or two, with the zero-shifted $N\left(  v\left(
D\right)  \right)  $, and with $N\left(  v\left(  D\right)  \right)  $ shifted
to the right by two or three. The following shifted symplectic products give
us the same information:%
\begin{align}
\left(  u\odot u\right)  \left(  D\right)   &  =D^{-2}+D^{-1}+D+D^{2}%
,\ \ \ \ \ \left(  v\odot v\right)  \left(  D\right)  =D^{-3}+D^{-2}%
+D^{2}+D^{3},\nonumber\\
\left(  v\odot u\right)  \left(  D\right)   &  =D^{-3}+D^{-2}+1+D+D^{2}.
\end{align}
The nonzero coefficients indicate the commutation relations just as
(\ref{eq:shifted-symp-comm}) claims.
\end{example}

We can again define a shifted symplectic product for the case of $n$-qubits
per frame. Let $\odot$ denote the shifted symplectic product between vectors
of binary polynomials:%
\begin{equation}
\odot:\left(  \mathbb{Z}_{2}\left(  D\right)  \right)  ^{2n}\times\left(
\mathbb{Z}_{2}\left(  D\right)  \right)  ^{2n}\rightarrow\mathbb{Z}_{2}\left(
D\right)  .
\end{equation}
It maps $2n$-dimensional vectors $\mathbf{u}\left(  D\right)  $ and
$\mathbf{v}\left(  D\right)  $ of binary polynomials to a finite-degree and
finite-delay binary polynomial%
\begin{equation}
\left(  \mathbf{u}\odot\mathbf{v}\right)  \left(  D\right)  =\sum_{i=1}%
^{n}\left(  u_{i}\odot v_{i}\right)  \left(  D\right)  ,
\end{equation}
where%
\[
u_{i}\left(  D\right)  =\left[  z_{i}\left(  D\right)  |x_{i}\left(  D\right)
\right]  ,\ \ \ \ \ \ v_{i}\left(  D\right)  =\left[  z_{i}^{\prime}\left(
D\right)  |x_{i}^{\prime}\left(  D\right)  \right]  .
\]
The standard inner product gives an alternative way to define the shifted
symplectic product:%
\begin{equation}
\left(  \mathbf{u}\odot\mathbf{v}\right)  \left(  D\right)  =\mathbf{z}\left(
D\right)  \cdot\mathbf{x}^{\prime}\left(  D^{-1}\right)  -\mathbf{x}\left(
D\right)  \cdot\mathbf{z}^{\prime}\left(  D^{-1}\right)  .
\end{equation}
Every vector $\mathbf{u}\left(  D\right)  \in\mathbb{Z}_{2}\left(  D\right)
^{2n}$ is self time-reversal symmetric with respect to $\odot$:%
\begin{equation}
\left(  \mathbf{u}\odot\mathbf{u}\right)  \left(  D\right)  =\left(
\mathbf{u}\odot\mathbf{u}\right)  \left(  D^{-1}\right)  \ \ \ \ \forall
\mathbf{u}\left(  D\right)  \in\mathbb{Z}_{2}\left(  D\right)  ^{2n}%
.\label{eq:self-time-reversal-sym}%
\end{equation}
The shifted symplectic product for vectors of binary polynomials is a binary
polynomial in $D$. We write its coefficients as follows:%
\begin{equation}
\left(  \mathbf{u}\odot\mathbf{v}\right)  \left(  D\right)  =\sum
_{i\in\mathbb{Z}}\left(  \mathbf{u}\odot\mathbf{v}\right)  _{i}\ D^{i}.
\end{equation}
The coefficient $\left(  \mathbf{u}\odot\mathbf{v}\right)  _{i}$\ captures the
commutation relations of two Pauli sequences for $i$ $n$-qubit shifts of one
of the sequences:%
\begin{equation}
\mathbf{N}\left(  \mathbf{u}\left(  D\right)  \right)  D^{i}\left(
\mathbf{N}\left(  \mathbf{v}\left(  D\right)  \right)  \right)  =\left(
-1\right)  ^{\left(  \mathbf{u}\odot\mathbf{v}\right)  _{i}}D^{i}\left(
\mathbf{N}\left(  \mathbf{v}\left(  D\right)  \right)  \right)  \mathbf{N}%
\left(  \mathbf{u}\left(  D\right)  \right)  .\nonumber
\end{equation}
The shifted symplectic product between two vectors of binary polynomials
vanishes if and only if their corresponding Pauli sequences commute.

\begin{example}
We consider the case where the frame size $n=4$. Consider the following
vectors of polynomials:%
\begin{equation}
\left[
\begin{array}
[c]{c}%
\mathbf{u}\left(  D\right)  \\
\mathbf{v}\left(  D\right)
\end{array}
\right]  =\left[  \left.
\begin{array}
[c]{cccc}%
1+D & D & 1 & D\\
0 & 1 & 0 & 0
\end{array}
\right\vert
\begin{array}
[c]{cccc}%
0 & 1 & 0 & 0\\
1+D & 1+D & 1 & D
\end{array}
\right]  .
\end{equation}
The P2B isomorphism maps $\mathbf{u}\left(  D\right)  $ and $\mathbf{v}\left(
D\right)  $ to the following Pauli sequences:%
\begin{align}
\mathbf{N}\left(  \mathbf{u}\left(  D\right)  \right)   &  =\left(
\cdots|IIII|ZXZI|ZZIZ|IIII|\cdots\right)  ,\nonumber\\
\mathbf{N}\left(  \mathbf{v}\left(  D\right)  \right)   &  =\left(
\cdots|IIII|XYXI|XXIX|IIII|\cdots\right)  .
\end{align}
We can determine the commutation relations by inspection of the above Pauli
sequences. $\mathbf{N}\left(  \mathbf{u}\left(  D\right)  \right)  $
anticommutes with itself shifted by one to the left or right, $\mathbf{N}%
\left(  \mathbf{v}\left(  D\right)  \right)  $ anticommutes with itself
shifted by one to the left or right, and $\mathbf{N}\left(  \mathbf{u}\left(
D\right)  \right)  $ anticommutes with $\mathbf{N}\left(  \mathbf{v}\left(
D\right)  \right)  $ shifted by one to the left. The following shifted
symplectic products confirm the above commutation relations:%
\begin{equation}
\left(  \mathbf{u}\odot\mathbf{u}\right)  \left(  D\right)  =D^{-1}%
+D,\ \ \ \ \ \ \left(  \mathbf{v}\odot\mathbf{v}\right)  \left(  D\right)
=D^{-1}+D,\ \ \ \ \ \ \left(  \mathbf{u}\odot\mathbf{v}\right)  \left(
D\right)  =D^{-1}.\nonumber
\end{equation}

\end{example}

We note two useful properties of the shifted symplectic product $\odot$.
Suppose $f\left(  D\right)  \in\mathbb{Z}_{2}\left(  D\right)  $ with
$\deg\left(  f\right)  \geq0$. Let us denote scalar polynomial multiplication
as follows:%
\begin{equation}
\left(  f\ \mathbf{u}\right)  \left(  D\right)  =%
\begin{bmatrix}
f\left(  D\right)  u_{1}\left(  D\right)  & \cdots & f\left(  D\right)
u_{n}\left(  D\right)
\end{bmatrix}
.
\end{equation}
The following identities hold.%
\begin{align}
\left(  \left(  f\ \mathbf{u}\right)  \odot\mathbf{v}\right)  \left(
D\right)   &  =f\left(  D\right)  \left(  \mathbf{u}\odot\mathbf{v}\right)
\left(  D\right)  ,\\
\left(  \mathbf{u}\odot\left(  f\ \mathbf{v}\right)  \right)  \left(
D\right)   &  =f\left(  D^{-1}\right)  \left(  \mathbf{u}\odot\mathbf{v}%
\right)  \left(  D\right)  .
\end{align}
We also remark that
\[
\left(  \mathbf{u}\odot\mathbf{v}\right)  \left(  D\right)  =\left(
\mathbf{v}\odot\mathbf{u}\right)  \left(  D\right)  ,
\]
iff%
\[
\left(  \mathbf{u}\odot\mathbf{v}\right)  \left(  D\right)  =\left(
\mathbf{u}\odot\mathbf{v}\right)  \left(  D^{-1}\right)  .
\]

The shifted symplectic product $\odot$ vanishes for all generators in a
quantum convolutional code because of the commutativity requirement in
Definition~\ref{def:conv-code}. The cases where the shifted symplectic product
does not vanish (where the two Pauli sequences anticommute for one or more
shifts) are important for constructing entanglement-assisted quantum
convolutional codes.

In general, we represent a rate-$k/n$ quantum convolutional code with an
$\left(  n-k\right)  \times2n$-dimensional quantum check matrix $H\left(
D\right)  $ according to the P2B isomorphism. The entries of $H\left(
D\right)  $ are binary polynomials where%
\[
H\left(  D\right)  =\left[  \left.
\begin{array}
[c]{c}%
Z\left(  D\right)
\end{array}
\right\vert
\begin{array}
[c]{c}%
X\left(  D\right)
\end{array}
\right]  ,
\]
and $Z\left(  D\right)  $ and $X\left(  D\right)  $ are both $\left(
n-k\right)  \times n$-dimensional binary polynomial matrices. The following
matrix $\Omega\left(  D\right)  $ captures the commutation relations of the
generators in $H\left(  D\right)  $:%
\[
\Omega\left(  D\right)  =Z\left(  D\right)  X^{T}\left(  D^{-1}\right)
+X\left(  D\right)  Z^{T}\left(  D^{-1}\right)  .
\]
The reader can verify that the matrix elements $\left[  \Omega\left(
D\right)  \right]  _{ij}$ of $\Omega\left(  D\right)  $ are the shifted
symplectic products between the $i^{\text{th}}$ and $j^{\text{th}}$ respective
rows $h_{i}\left(  D\right)  $ and $h_{j}\left(  D\right)  $ of $H\left(
D\right)  $:%
\[
\left[  \Omega\left(  D\right)  \right]  _{ij}=\left(  h_{i}\odot
h_{j}\right)  \left(  D\right)  .
\]
We call the matrix $\Omega\left(  D\right)  $ the \textit{shifted symplectic
product matrix} because it encodes all of the shifted symplectic products or
the commutation relations of the code. This matrix is equal to the null matrix
for a valid quantum convolutional code because all generators commute with
themselves and with all $n$-qubit shifts of themselves and each other. For a
general set of generators, $\Omega\left(  D\right)  $ is not equal to the null
matrix and obeys the symmetry:\ $\Omega\left(  D\right)  =\Omega^{T}\left(
D^{-1}\right)  $.

\section{Row and Column Operations}

\label{sec:row-col-ops}We can perform row operations on binary polynomial
matrices for quantum convolutional codes. A row operation is merely a
\textquotedblleft mental\textquotedblright\ operation that has no effect on
the states in the codespace or on the error-correcting properties of the code.
It just changes the rows of the check matrix for a code. We have three types
of row operations:

\begin{enumerate}
\item An elementary row operation multiplies a row times an arbitrary binary
polynomial and adds the result to another row. This additive invariance holds
for any code that admits a description within the stabilizer formalism.
Additive codes are invariant under multiplication of the stabilizer generators
in the \textquotedblleft Pauli\ picture\textquotedblright\ or under row
addition in the \textquotedblleft
binary-polynomial\ picture.\textquotedblright

\item Another type of row operation is to multiply a row by an arbitrary power
of $D$. Ollivier and Tillich discuss such row operations as \textquotedblleft
multiplication of a line by $D$\textquotedblright\ and use them to find
encoding operations for their quantum convolutional codes
\cite{arxiv2004olliv}. Grassl and R\"{o}tteler use this type of operation to
find a subcode of a given quantum convolutional code with an equivalent
asymptotic rate and equivalent error-correcting properties
\cite{isit2006grassl}.

\item We also employ row operations that multiply a row by an arbitrary
polynomial (not necessarily a power of $D$). This type of row operation occurs
when we have generators with infinite weight that we would like to reduce to
finite weight so that the receiver can perform measurements in an online
fashion as qubits arrive from the noisy channel.
\end{enumerate}

We can encode all of the above types of row operations in a full-rank matrix
$R\left(  D\right)  $ with rational polynomial entries. Let $H^{\prime}\left(
D\right)  $ denote the resulting check matrix after performing a set of row
operations in the matrix $R\left(  D\right)  $ where $H^{\prime}\left(
D\right)  =R\left(  D\right)  H\left(  D\right)  $. The resulting effect on
the shifted symplectic product matrix $\Omega\left(  D\right)  $ is to change
it to another shifted symplectic product matrix $\Omega^{\prime}\left(
D\right)  $ related to $\Omega\left(  D\right)  $ by%
\begin{equation}
\Omega^{\prime}\left(  D\right)  =R\left(  D\right)  \Omega\left(  D\right)
R^{T}\left(  D^{-1}\right)  .\label{eq:symp-row-op}%
\end{equation}
Row operations do not change the commutation relations of a valid quantum
convolutional code because its shifted symplectic product matrix is equal to
the null matrix. But row operations do change the commutation relations of a
set of generators whose corresponding shifted symplectic product matrix is not
equal to the null matrix. This ability to change the commutation relations
through row operations is crucial for constructing entanglement-assisted
quantum convolutional codes from an arbitrary set of generators. We use
entanglement to resolve any anticommutativity in the generators.

We can also perform column operations on binary polynomial matrices for
quantum convolutional codes. Column operations do change the error-correcting
properties of the code and are important for realizing a periodic encoding
circuit for the code. We have two types of column operations:

\begin{enumerate}
\item An elementary column operation multiplies one column by an arbitrary
binary polynomial and adds the result to another column. We implement
elementary column operations with gates from the shift-invariant Clifford
group \cite{ieee2007grassl,isit2006grassl}.

\item Another column operation is to multiply column $i$\ in both the
\textquotedblleft X\textquotedblright\ and \textquotedblleft
Z\textquotedblright\ matrix by $D^{l}$ where$\ l\in\mathbb{Z}$. We perform
this operation by delaying or advancing the processing of qubit $i$\ by $l$
frames relative to the original frame.

\item An infinite-depth column operation multiplies one column in the
\textquotedblleft X\textquotedblright\ matrix by a rational polynomial whose
numerator is one and multiplies the corresponding column in the
\textquotedblleft Z\textquotedblright\ matrix by a corresponding finite polynomial.
\end{enumerate}

A column operation implemented on the \textquotedblleft X\textquotedblright%
\ side of the binary polynomial matrix has a corresponding effect on the
\textquotedblleft Z\textquotedblright\ side of the binary polynomial matrix.
This corresponding effect is a manifestation of the Heisenberg uncertainty
principle because commutation relations remain invariant with respect to the
action of unitary quantum gates. The shifted symplectic product is therefore
invariant with respect to column operations from the shift-invariant Clifford
group. The next two sections describe possible column operations for
implementing encoding circuits.

\section{Finite-Depth Clifford Operations}

\label{sec:finite-depth-clifford}One of the main advantages of a quantum
convolutional code is that its encoding circuit has a periodic form. We can
encode a stream of quantum information with the same physical routines or
devices and therefore reduce encoding and decoding complexity.

Ollivier and Tillich were the first to discuss encoding circuits for quantum
convolutional codes \cite{PhysRevLett.91.177902,arxiv2004olliv}. They provided
a set of necessary and sufficient conditions to determine when an encoding
circuit is noncatastrophic. A \textit{noncatastrophic encoding circuit }does
not propagate uncorrected errors infinitely through the decoded information
qubit stream. Classical convolutional coding theory has a well developed
theory of noncatastrophic encoding circuits \cite{book1999conv}.

Grassl and R\"{o}tteler later showed that Ollivier and Tillich's conditions
for a circuit to be noncatastrophic are too restrictive
\cite{isit2006grassl,ieee2006grassl,ieee2007grassl}. They found subcodes of
quantum convolutional codes that admit noncatastrophic encoders. The
noncatastrophic encoders are a sequence of Clifford circuits with finite
depth. They developed a formalism for encapsulating the periodic structure of
an encoding circuit by decomposing the encoding circuit as a set of elementary
column operations. Their decoding circuits are exact inverses of their
encoding circuits because their decoding circuits perform the encoding
operations in reverse order.

\begin{definition}
A \textit{finite-depth operation} transforms every finite-weight\ stabilizer
generator to one with finite weight.
\end{definition}

This property is important because we do not want the decoding circuit to
propagate uncorrected errors into the information qubit stream
\cite{book1999conv}. A finite-depth decoding circuit corresponding to any
stabilizer for a quantum convolutional code exists by the algorithm given in
Ref.~\cite{isit2006grassl}.

\label{sec:finite-depth-ops}We review the finite-depth operations in the
shift-invariant Clifford group
\cite{isit2006grassl,ieee2006grassl,ieee2007grassl}. The shift-invariant
Clifford group is an extension of the Clifford group operations mentioned in
Section~\ref{sec:cliff-encoding}. We describe how finite-depth operations in
the shift-invariant Clifford group affect the binary polynomial matrix for a
code. Each of the following operations acts on every frame of a quantum
convolutional code.

\begin{enumerate}
\item The sender performs a CNOT\ from qubit $i$ to qubit $j$ in every frame
where qubit $j$ is in a frame delayed by $k$. The effect on the binary
polynomial matrix is to multiply column $i$ by $D^{k}$ and add the result to
column $j$ in the \textquotedblleft X\textquotedblright\ matrix and to
multiply column$\ j$ by $D^{-k}$ and add the result to column $i$ in the
\textquotedblleft Z\textquotedblright\ matrix.

\item A Hadamard on qubit $i$ swaps column $i$ in the \textquotedblleft
X\textquotedblright\ matrix with column $i$ in the \textquotedblleft
Z\textquotedblright\ matrix.

\item A phase gate on qubit $i$ adds column $i$ from the \textquotedblleft
X\textquotedblright\ matrix to column $i$ in the \textquotedblleft
Z\textquotedblright\ matrix.

\item A controlled-phase gate from qubit $i$ to qubit $j$ in a frame delayed
by $k$ multiplies column $i$ in the \textquotedblleft X\textquotedblright%
\ matrix by $D^{k}$ and adds the result to column $j$ in the \textquotedblleft
Z\textquotedblright\ matrix. It also multiplies column $j$ in the
\textquotedblleft X\textquotedblright\ matrix by $D^{-k}$ and adds the result
to column $i$ in the \textquotedblleft Z\textquotedblright\ matrix.

\item A controlled-phase gate from qubit $i$ to qubit $i$ in a frame delayed
by $k$ multiplies column $i$ in the \textquotedblleft X\textquotedblright%
\ matrix by $D^{k}+D^{-k}$ and adds the result to column $i$ in the
\textquotedblleft Z\textquotedblright\ matrix.
\end{enumerate}

We use finite-depth operations extensively in the next few chapters.
Figure~\ref{fig:example-eaqcc-fefd}\ gives an example of an
entanglement-assisted quantum convolutional code that employs several
finite-depth operations. The circuit encodes a stream of information qubits
with the help of ebits shared between sender and receiver.

Multiple CNOT\ gates can realize an elementary column operation as described
at the end of the previous section. Suppose the elementary column operation
multiplies column $i$ in the \textquotedblleft X\textquotedblright\ matrix by
$f\left(  D\right)  $ and adds the result to column $j$. Polynomial $f\left(
D\right)  $ is a summation of some finite set $\left\{  l_{1},\ldots
,l_{n}\right\}  $ of powers of $D$:%
\[
f\left(  D\right)  =D^{l_{1}}+\cdots+D^{l_{n}}.
\]
We can realize $f\left(  D\right)  $ by performing a CNOT\ gate from qubit $i$
to qubit $j$ in a frame delayed by $l_{i}$ for each $i\in\left\{
1,\ldots,n\right\}  $.

\section{Infinite-Depth Clifford Operations}

\label{sec:infinite-depth-ops}We now introduce an
infinite-depth operation to the set of operations in the shift-invariant
Clifford group available for encoding and decoding quantum convolutional
codes.

\begin{definition}
An \textit{infinite-depth operation} can transform a finite-weight stabilizer
generator to one with infinite weight (but does not necessarily do so to every
finite-weight generator).
\end{definition}

A decoding circuit with infinite-depth operations on qubits sent over the
noisy channel is undesirable because it spreads uncorrected errors infinitely
into the decoded information qubit stream. But an encoding circuit with
infinite-depth operations is acceptable if we assume a communication paradigm
in which the only noisy process is the noisy quantum channel.

The next chapter shows several examples of circuits that include
infinite-depth operations. Infinite-depth operations expand the possibilities
for quantum convolutional circuits in much the same way that incorporating
feedback expands the possibilities for classical convolutional circuits.

We illustrate the details of several infinite-depth operations by first
providing some specific examples of infinite-depth operations and then show
how to realize an arbitrary infinite-depth operation.

We consider both the stabilizer and the logical operators for the information
qubits in our analysis. Tracking both of these sets of generators is necessary
for determining the proper decoding circuit when including infinite-depth operations.

\subsection{Examples of Infinite-Depth Operations}

Our first example of an infinite-depth operation involves a stream of
information qubits and ancilla qubits. We divide the stream into frames of
three qubits where each frame has two ancilla qubits and one information
qubit. The following two generators and each of their three-qubit shifts
stabilize the initial qubit stream:%
\begin{equation}
\cdots\left\vert
\begin{array}
[c]{ccc}%
I & I & I\\
I & I & I
\end{array}
\right\vert
\begin{array}
[c]{ccc}%
Z & I & I\\
I & Z & I
\end{array}
\left\vert
\begin{array}
[c]{ccc}%
I & I & I\\
I & I & I
\end{array}
\right\vert \cdots\label{eq:id-unencoded-Paulis}%
\end{equation}
The binary polynomial matrix corresponding to this stabilizer is as follows:%
\begin{equation}
\left[  \left.
\begin{array}
[c]{ccc}%
1 & 0 & 0\\
0 & 1 & 0
\end{array}
\right\vert
\begin{array}
[c]{ccc}%
0 & 0 & 0\\
0 & 0 & 0
\end{array}
\right]  . \label{eq:id-unencoded-qubits}%
\end{equation}
We obtain any Pauli sequence in the stabilizer by multiplying the above rows
by a power of $D$ and applying the inverse of the P2B isomorphism. The logical
operators for the information qubits are as follows:%
\[
\cdots\left\vert
\begin{array}
[c]{ccc}%
I & I & I\\
I & I & I
\end{array}
\right\vert
\begin{array}
[c]{ccc}%
I & I & X\\
I & I & Z
\end{array}
\left\vert
\begin{array}
[c]{ccc}%
I & I & I\\
I & I & I
\end{array}
\right\vert \cdots
\]
They also admit a description with a binary polynomial matrix:%
\begin{equation}
\left[  \left.
\begin{array}
[c]{ccc}%
0 & 0 & 0\\
0 & 0 & 1
\end{array}
\right\vert
\begin{array}
[c]{ccc}%
0 & 0 & 1\\
0 & 0 & 0
\end{array}
\right]  . \label{eq-unencoded-info-qubits}%
\end{equation}
We refer to the above matrix as the \textquotedblleft information-qubit
matrix.\textquotedblright

\subsubsection{Encoding}

Suppose we would like to encode the above stream so that the following
generators stabilize it:%
\[
\cdots\left\vert
\begin{array}
[c]{ccc}%
I & I & I\\
I & I & I
\end{array}
\right\vert
\begin{array}
[c]{ccc}%
X & X & X\\
Z & Z & I
\end{array}
\left\vert
\begin{array}
[c]{ccc}%
X & X & I\\
I & I & I
\end{array}
\right\vert \cdots,
\]
or equivalently, the following binary polynomial matrix stabilizes it:%
\begin{equation}
\left[  \left.
\begin{array}
[c]{ccc}%
0 & 0 & 0\\
1 & 1 & 0
\end{array}
\right\vert
\begin{array}
[c]{ccc}%
D+1 & D+1 & 1\\
0 & 0 & 0
\end{array}
\right]  . \label{eq:desired-stabilizer}%
\end{equation}

We encode the above stabilizer using a combination of finite-depth operations
and an infinite-depth operation. We perform a Hadamard on the first qubit in
each frame and follow with a CNOT\ from the first qubit to the second and
third qubits in each frame. These operations transform the matrix in
(\ref{eq:id-unencoded-qubits}) to the following matrix%
\[
\left[  \left.
\begin{array}
[c]{ccc}%
0 & 0 & 0\\
1 & 1 & 0
\end{array}
\right\vert
\begin{array}
[c]{ccc}%
1 & 1 & 1\\
0 & 0 & 0
\end{array}
\right]  ,
\]
or equivalently transform the generators in (\ref{eq:id-unencoded-Paulis})\ to
the following generators:%
\[
\cdots\left\vert
\begin{array}
[c]{ccc}%
I & I & I\\
I & I & I
\end{array}
\right\vert
\begin{array}
[c]{ccc}%
X & X & X\\
Z & Z & I
\end{array}
\left\vert
\begin{array}
[c]{ccc}%
I & I & I\\
I & I & I
\end{array}
\right\vert \cdots
\]
The information-qubit matrix becomes%
\[
\left[  \left.
\begin{array}
[c]{ccc}%
0 & 0 & 0\\
1 & 0 & 1
\end{array}
\right\vert
\begin{array}
[c]{ccc}%
0 & 0 & 1\\
0 & 0 & 0
\end{array}
\right]  .
\]
We now perform an infinite-depth operation: a CNOT\ from the third qubit in
one frame to the third qubit in a delayed frame and repeat this operation for
all following frames. Figure~\ref{fig:inf-depth-simple}\ shows this operation
acting on our stream of qubits with three qubits per frame.%
\begin{figure}
[ptb]
\begin{center}
\includegraphics[
natheight=9.786200in,
natwidth=3.673700in,
height=4.7755in,
width=1.8092in
]%
{./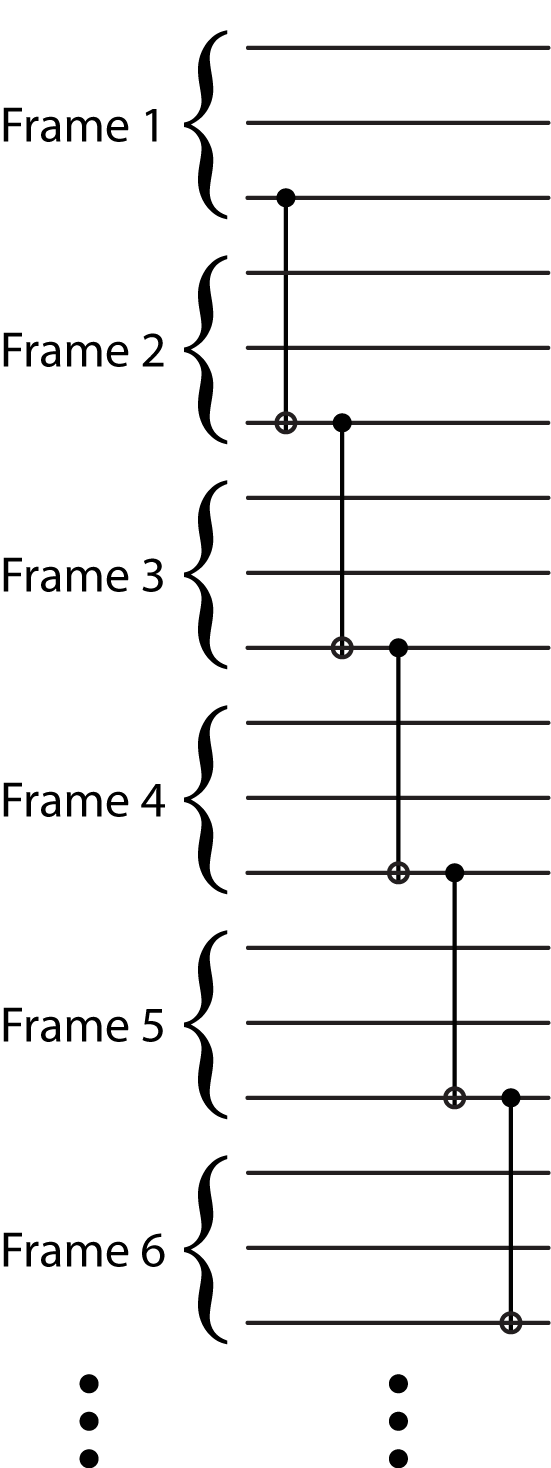}%
\caption{An example of an infinite-depth operation. A sequence of CNOT\ gates
acts on the third qubit of every frame. This infinite-depth operation
effectively multiplies the third column of the \textquotedblleft
X\textquotedblright\ side of the binary polynomial matrix by the rational
polynomial $1/\left(  1+D\right)  $ and multiplies the third column of the
\textquotedblleft Z\textquotedblright\ side of the binary polynomial matrix by
$1+D^{-1}$.}%
\label{fig:inf-depth-simple}%
\end{center}
\end{figure}
%EndExpansion
The effect of this operation is to translate the above stabilizer generators
as follows:%
\[
\cdots\left\vert
\begin{array}
[c]{ccc}%
I & I & I\\
I & I & I
\end{array}
\right\vert
\begin{array}
[c]{ccc}%
X & X & X\\
Z & Z & I
\end{array}
\left\vert
\begin{array}
[c]{ccc}%
I & I & X\\
I & I & I
\end{array}
\right\vert \left.
\begin{array}
[c]{ccc}%
I & I & X\\
I & I & I
\end{array}
\right\vert \cdots
\]
The first generator above and each of its three-qubits shifts is an
infinite-weight generator if the above sequence of CNOT gates acts on the
entire countably-infinite qubit stream. We represent the above stabilizer with
the binary \textit{rational} polynomial matrix%
\begin{equation}
\left[  \left.
\begin{array}
[c]{ccc}%
0 & 0 & 0\\
1 & 1 & 0
\end{array}
\right\vert
\begin{array}
[c]{ccc}%
1 & 1 & 1/\left(  1+D\right)  \\
0 & 0 & 0
\end{array}
\right]  ,\label{eq:id-encoded-stabilizer}%
\end{equation}
where $1/\left(  1+D\right)  =1+D+D^{2}+\ldots$ is a repeating fraction. The
operation is infinite-depth because it translates the original finite-weight
stabilizer generator to one with infinite weight.

It is possible to perform a row operation that multiplies the first row by
$D+1$. This operation gives a stabilizer matrix that is equivalent to the
desired stabilizer in (\ref{eq:desired-stabilizer}). The receiver of the
encoded qubits measures the finite-weight stabilizer generators in
(\ref{eq:desired-stabilizer}) to diagnose errors. These measurements do not
disturb the information qubits because they also stabilize the encoded stream.

The above encoding operations transform the information-qubit matrix as
follows:%
\begin{equation}
\left[  \left.
\begin{array}
[c]{ccc}%
0 & 0 & 0\\
1 & 0 & 1+D^{-1}%
\end{array}
\right\vert
\begin{array}
[c]{ccc}%
0 & 0 & 1/\left(  1+D\right)  \\
0 & 0 & 0
\end{array}
\right]  .\label{eq:id-encoded-info-qubits}%
\end{equation}
The infinite-depth operation on the third qubit has an effect on the
\textquotedblleft Z\textquotedblright\ or left side of the information-qubit
matrix as illustrated in the second row of the above matrix. The effect is to
multiply the third column of the \textquotedblleft Z\textquotedblright\ matrix
by $f\left(  D^{-1}\right)  $ if the operation multiplies the third column of
the \textquotedblleft X\textquotedblright\ matrix by $1/f\left(  D\right)  $.
This corresponding action on the \textquotedblleft Z\textquotedblright\ side
occurs because the commutation relations of the Pauli operators remain
invariant under quantum gates (due to the Heisenberg uncertainty principle),
or equivalently, the shifted symplectic product remains invariant under column
operations. The original shifted symplectic product for the logical operators
is one, and it remains as one because $f((D^{-1})^{-1})/f\left(  D\right)  =1$.

\subsubsection{Decoding}

We perform finite-depth operations to decode the stream of information qubits.
Begin with the stabilizer and information-qubit matrix in
(\ref{eq:id-encoded-stabilizer})\ and (\ref{eq:id-encoded-info-qubits})
respectively. Perform a CNOT\ from the first qubit to the second qubit. The
stabilizer becomes%
\[
\left[  \left.
\begin{array}
[c]{ccc}%
0 & 0 & 0\\
0 & 1 & 0
\end{array}
\right\vert
\begin{array}
[c]{ccc}%
1 & 0 & 1/\left(  1+D\right)  \\
0 & 0 & 0
\end{array}
\right]  ,
\]
and the information-qubit matrix does not change. Perform a CNOT\ from the
third qubit to the first qubit in the same frame and in a delayed frame. These
gates multiply column three in the \textquotedblleft X\textquotedblright%
\ matrix by $1+D$ and add the result to column one. The gates also multiply
column one in the \textquotedblleft Z\textquotedblright\ matrix by $1+D^{-1}$
and add the result to column three. The effect is as follows on both the
stabilizer%
\begin{equation}
\left[  \left.
\begin{array}
[c]{ccc}%
0 & 0 & 0\\
0 & 1 & 0
\end{array}
\right\vert
\begin{array}
[c]{ccc}%
0 & 0 & 1/\left(  1+D\right)  \\
0 & 0 & 0
\end{array}
\right]  ,\label{eq:id-decoded-stabilizer}%
\end{equation}
and the information-qubit matrix%
\begin{equation}
\left[  \left.
\begin{array}
[c]{ccc}%
0 & 0 & 0\\
1 & 0 & 0
\end{array}
\right\vert
\begin{array}
[c]{ccc}%
1 & 0 & 1/\left(  1+D\right)  \\
0 & 0 & 0
\end{array}
\right]  .\label{eq:id-decoded-info}%
\end{equation}
We can multiply the logical operators by any element of the stabilizer and
obtain an equivalent logical operator \cite{thesis97gottesman}. We perform
this multiplication in the \textquotedblleft binary-polynomial
picture\textquotedblright\ by adding the first row of the stabilizer in
(\ref{eq:id-decoded-stabilizer}) to the first row of (\ref{eq:id-decoded-info}%
). The information-qubit matrix becomes%
\begin{equation}
\left[  \left.
\begin{array}
[c]{ccc}%
0 & 0 & 0\\
1 & 0 & 0
\end{array}
\right\vert
\begin{array}
[c]{ccc}%
1 & 0 & 0\\
0 & 0 & 0
\end{array}
\right]  ,\label{eq:encoded-info-qubit-1st-ex}%
\end{equation}
so that the resulting logical operators act only on the first qubit of every
frame. We have successfully decoded the information qubits with finite-depth
operations. The information qubits teleport coherently
\cite{prl2004harrow,wilde:060303}\ from being the third qubit of each frame as
in (\ref{eq-unencoded-info-qubits}) to being the first qubit of each frame as
in (\ref{eq:encoded-info-qubit-1st-ex}). We exploit the above method of
encoding with infinite-depth operations and decoding with finite-depth
operations for the class of entanglement-assisted quantum convolutional codes
in Section~\ref{sec:coh-tele-EAQCCs}.

\subsection{General Infinite-Depth Operations}

We discuss the action of a general infinite-depth operation on two weight-one
\textquotedblleft X\textquotedblright\ and \textquotedblleft
Z\textquotedblright\ Pauli sequences where each frame has one Pauli matrix.
Our analysis then determines the effect of an infinite-depth operation on an
arbitrary stabilizer or information-qubit matrix. The generators in the
\textquotedblleft Pauli picture\textquotedblright\ are as follows:%
\begin{equation}
\cdots\left\vert
\begin{array}
[c]{c}%
I\\
I
\end{array}
\right.  \left\vert
\begin{array}
[c]{c}%
X\\
Z
\end{array}
\right\vert \left.
\begin{array}
[c]{c}%
I\\
I
\end{array}
\right\vert \cdots,\label{eq:ie-unencoded-Paulis}%
\end{equation}
or as follows in the \textquotedblleft binary-polynomial
picture\textquotedblright:%
\[
\left[  \left.
\begin{array}
[c]{c}%
0\\
1
\end{array}
\right\vert
\begin{array}
[c]{c}%
1\\
0
\end{array}
\right]  .
\]
An infinite-depth $1/f\left(  D\right)  $ operation, where $f\left(  D\right)
$ is an arbitrary polynomial, transforms the above matrix to the following
one:%
\[
\left[  \left.
\begin{array}
[c]{c}%
0\\
f\left(  D^{-1}\right)
\end{array}
\right\vert
\begin{array}
[c]{c}%
1/f\left(  D\right)  \\
0
\end{array}
\right]  .
\]
A circuit that performs this transformation preserves the shifted symplectic
product because $f\left(  D^{-1}\right)  \cdot1/f\left(  D^{-1}\right)  =1$.
The circuit operates on a few qubits at a time and is shift-invariant---the
same device or physical routines implement it.

First perform the long division expansion of binary rational polynomial
$1/f\left(  D\right)  $. This expansion has a particular repeating pattern
with period $l$. For example, suppose that $f\left(  D\right)  =1+D+D^{3}$.
Its long-division expansion is $1+D+D^{2}+D^{4}+D^{7}+D^{8}+D^{9}%
+D^{11}+\ldots$ and exhibits a repeating pattern with period seven. We want a
circuit that realizes the following Pauli generators%
\begin{equation}
\cdots\left\vert
\begin{array}
[c]{c}%
I\\
Z
\end{array}
\right.  \left\vert
\begin{array}
[c]{c}%
I\\
I
\end{array}
\right\vert
\begin{array}
[c]{c}%
I\\
Z
\end{array}
\left\vert
\begin{array}
[c]{c}%
X\\
Z
\end{array}
\right\vert
\begin{array}
[c]{c}%
X\\
I
\end{array}
\left\vert
\begin{array}
[c]{c}%
X\\
I
\end{array}
\right\vert
\begin{array}
[c]{c}%
I\\
I
\end{array}
\left\vert
\begin{array}
[c]{c}%
X\\
I
\end{array}
\right\vert \left.
\begin{array}
[c]{c}%
I\\
I
\end{array}
\right\vert \cdots, \label{eq:ie-desired-paulis}%
\end{equation}
where the pattern in the $X$ matrices is the same as the repeating polynomial
$1/f\left(  D\right)  $ and continues infinitely to the right, and the pattern
on the $Z$ matrices is the same as that in $f\left(  D^{-1}\right)  $\ and
terminates at the left. The above Pauli sequence is equivalent to the
following binary rational polynomial matrix:%
\[
\left[  \left.
\begin{array}
[c]{c}%
0\\
1+D^{-1}+D^{-3}%
\end{array}
\right\vert
\begin{array}
[c]{c}%
1/\left(  1+D+D^{3}\right) \\
0
\end{array}
\right]  .
\]

We now discuss a method that realizes an arbitrary rational polynomial
$1/f\left(  D\right)  $ as an infinite-depth operation. Our method for
encoding the generators in (\ref{eq:ie-desired-paulis}) from those in
(\ref{eq:ie-unencoded-Paulis}) consists of a \textquotedblleft
sliding-window\textquotedblright\ technique that determines transformation
rules for the circuit. The circuit is an additive, shift-invariant filtering
operation. It resembles an infinite-impulse response filter because the
resulting sequence extends infinitely. In general, the number $N$\ of qubits
that the encoding unitary operates on is as follows%
\[
N=\deg\left(  f\left(  D\right)  \right)  -\mathrm{del}\left(  f\left(
D\right)  \right)  +1,
\]
where $\deg\left(  f\left(  D\right)  \right)  $ and $\mathrm{del}\left(
f\left(  D\right)  \right)  $ are the respective highest and lowest powers of
polynomial $f\left(  D\right)  $. Therefore, our exemplary encoding unitary
operates on four qubits at a time. We delay the original sequence
in\ (\ref{eq:ie-unencoded-Paulis}) by three frames. These initial frames are
\textquotedblleft scratch\textquotedblright\ frames that give the encoding
unitary enough \textquotedblleft room\textquotedblright\ to generate the
desired Paulis in (\ref{eq:ie-desired-paulis}). The first set of
transformation rules is as follows%
\begin{equation}
\left.
\begin{array}
[c]{c}%
I\\
I
\end{array}
\right\vert
\begin{array}
[c]{c}%
I\\
I
\end{array}
\left\vert
\begin{array}
[c]{c}%
I\\
I
\end{array}
\right\vert
\begin{array}
[c]{c}%
X\\
Z
\end{array}
\rightarrow\left.
\begin{array}
[c]{c}%
I\\
Z
\end{array}
\right\vert
\begin{array}
[c]{c}%
I\\
I
\end{array}
\left\vert
\begin{array}
[c]{c}%
I\\
Z
\end{array}
\right\vert
\begin{array}
[c]{c}%
X\\
Z
\end{array}
,\label{eq:first-rule}%
\end{equation}
and generates the first four elements of the pattern in
(\ref{eq:ie-desired-paulis}). Now that the encoding unitary has acted on the
first four frames, we need to shift our eyes to the right by one frame in the
sequence in (\ref{eq:ie-desired-paulis})\ to determine the next set of rules.
So we shift the above outputs by one frame to the \textit{left} (assuming that
only identity matrices lie to the right) and determine the next set of
transformation rules that generate the next elements of the sequence in
(\ref{eq:ie-desired-paulis}):%
\[
\left.
\begin{array}
[c]{c}%
I\\
I
\end{array}
\right\vert
\begin{array}
[c]{c}%
I\\
Z
\end{array}
\left\vert
\begin{array}
[c]{c}%
X\\
Z
\end{array}
\right\vert
\begin{array}
[c]{c}%
I\\
I
\end{array}
\rightarrow\left.
\begin{array}
[c]{c}%
I\\
I
\end{array}
\right\vert
\begin{array}
[c]{c}%
I\\
Z
\end{array}
\left\vert
\begin{array}
[c]{c}%
X\\
Z
\end{array}
\right\vert
\begin{array}
[c]{c}%
X\\
I
\end{array}
.
\]
Shift the above outputs to the left by one frame to determine the next set of
transformation rules:%
\[
\left.
\begin{array}
[c]{c}%
I\\
Z
\end{array}
\right\vert
\begin{array}
[c]{c}%
X\\
Z
\end{array}
\left\vert
\begin{array}
[c]{c}%
X\\
I
\end{array}
\right\vert
\begin{array}
[c]{c}%
I\\
I
\end{array}
\rightarrow\left.
\begin{array}
[c]{c}%
I\\
Z
\end{array}
\right\vert
\begin{array}
[c]{c}%
X\\
Z
\end{array}
\left\vert
\begin{array}
[c]{c}%
X\\
I
\end{array}
\right\vert
\begin{array}
[c]{c}%
X\\
I
\end{array}
.
\]
We obtain the rest of the transformation rules by continuing this sliding
process, and we stop when the pattern in the sequence in
(\ref{eq:ie-desired-paulis}) begins to repeat:%
\[
\left.
\begin{array}
[c]{c}%
X\\
Z\\
X\\
X\\
I\\
X
\end{array}
\right\vert
\begin{array}
[c]{c}%
X\\
I\\
X\\
I\\
X\\
I
\end{array}
\left\vert
\begin{array}
[c]{c}%
X\\
I\\
I\\
X\\
I\\
I
\end{array}
\right\vert
\begin{array}
[c]{c}%
I\\
I\\
I\\
I\\
I\\
I
\end{array}
\rightarrow\left.
\begin{array}
[c]{c}%
X\\
Z\\
X\\
X\\
I\\
X
\end{array}
\right\vert
\begin{array}
[c]{c}%
X\\
I\\
X\\
I\\
X\\
I
\end{array}
\left\vert
\begin{array}
[c]{c}%
X\\
I\\
I\\
X\\
I\\
I
\end{array}
\right\vert
\begin{array}
[c]{c}%
I\\
I\\
X\\
I\\
I\\
X
\end{array}
.
\]
The above set of rules determines the encoding unitary and only a few of them
are actually necessary. We can multiply the rules together to form equivalent
rules because the circuit obeys additivity (in the \textquotedblleft
binary-polynomial picture\textquotedblright). The rules become as follows
after rearranging into a standard form:%
\[
\left.
\begin{array}
[c]{c}%
Z\\
I\\
I\\
I\\
X\\
I\\
I\\
I
\end{array}
\right\vert
\begin{array}
[c]{c}%
I\\
Z\\
I\\
I\\
I\\
X\\
I\\
I
\end{array}
\left\vert
\begin{array}
[c]{c}%
I\\
I\\
Z\\
I\\
I\\
I\\
X\\
I
\end{array}
\right\vert
\begin{array}
[c]{c}%
I\\
I\\
I\\
Z\\
I\\
I\\
I\\
X
\end{array}
\rightarrow\left.
\begin{array}
[c]{c}%
Z\\
I\\
I\\
Z\\
X\\
I\\
I\\
I
\end{array}
\right\vert
\begin{array}
[c]{c}%
I\\
Z\\
I\\
I\\
I\\
X\\
I\\
I
\end{array}
\left\vert
\begin{array}
[c]{c}%
I\\
I\\
Z\\
Z\\
I\\
I\\
X\\
I
\end{array}
\right\vert
\begin{array}
[c]{c}%
I\\
I\\
I\\
Z\\
X\\
I\\
X\\
X
\end{array}
.
\]
A CNOT\ from qubit one to qubit four and a CNOT\ from qubit three to qubit
four suffice to implement this circuit. We repeatedly apply these operations
shifting by one frame at a time to implement the infinite-depth operation. We
could have observed that these gates suffice to implement the
\textquotedblleft Z\textquotedblright\ transformation in the first set of
transformation rules in (\ref{eq:first-rule}), but we wanted to show how this
method generates the full periodic \textquotedblleft X\textquotedblright%
\ sequence in (\ref{eq:ie-desired-paulis}). Figure~\ref{fig:inf-depth-example}%
\ shows how the above encoding unitary acts on a stream of quantum
information.%
\begin{figure}
[ptb]
\begin{center}
\includegraphics[
natheight=4.633700in,
natwidth=5.233800in,
height=2.7484in,
width=3.1012in
]%
{./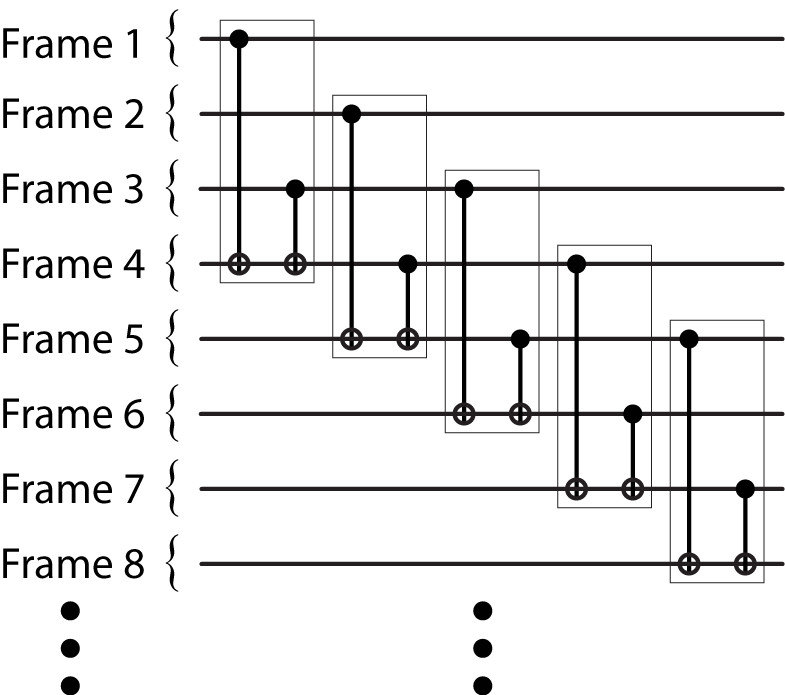}%
\caption{Another example of an infinite-depth operation. An infinite-depth
operation acts on qubit $i$ in every frame. This particular infinite-depth
operation multiplies column $i$ on the \textquotedblleft X\textquotedblright%
\ side of the binary polynomial matrix by $1/\left(  1+D+D^{3}\right)  $ and
multiplies column $i$ on the \textquotedblleft Z\textquotedblright\ side of
the binary polynomial matrix by $1+D^{-1}+D^{-3}$.}%
\label{fig:inf-depth-example}%
\end{center}
\end{figure}

We can determine the encoding unitary for an arbitrary rational polynomial
$1/f\left(  D\right)  $ using a similar method. Suppose that $\mathrm{del}%
\left(  f\left(  D\right)  \right)  =n$ and suppose $n\neq0$ as in the above
case. First delay or advance the frames if $n>0$ or if $n<0$ respectively.
Determine the CNOT\ gates that transform the \textquotedblleft
Z\textquotedblright\ Pauli sequence%
\[
\left[  \left.
\begin{array}
[c]{c}%
1
\end{array}
\right\vert
\begin{array}
[c]{c}%
0
\end{array}
\right]
\]
to%
\[
\left[  \left.
\begin{array}
[c]{c}%
D^{n}f\left(  D^{-1}\right)
\end{array}
\right\vert
\begin{array}
[c]{c}%
0
\end{array}
\right]  .
\]
These CNOT\ gates form the encoding circuit that transform both the
\textquotedblleft X\textquotedblright\ and \textquotedblleft
Z\textquotedblright\ Pauli sequences. We perform the encoding unitary, shift
by one frame, perform it again, and keep repeating. Our method encodes any
arbitrary polynomial $1/f\left(  D\right)  $ on the \textquotedblleft
X\textquotedblright\ side and $f\left(  D^{-1}\right)  $ on the
\textquotedblleft Z\textquotedblright\ side.

We can implement the \textquotedblleft time-reversed\textquotedblright%
\ polynomial $1/f\left(  D^{-1}\right)  $ on the \textquotedblleft
X\textquotedblright\ side by first delaying the frames by $m=\deg\left(
f\left(  D\right)  \right)  -\mathrm{del}\left(  f\left(  D\right)  \right)  $
frames and performing the circuit corresponding to $1/D^{m}\left(  f\left(
D^{-1}\right)  \right)  $. These operations implement the circuit $D^{m}%
/D^{m}\left(  f\left(  D^{-1}\right)  \right)  =1/f\left(  D^{-1}\right)  $.

\subsection{Infinite-Depth Operations in Practice}

We assume above that each of the infinite-depth operations acts on the entire
countably-infinite stream of qubits. In practice, each infinite-depth
operation acts on a finite number of qubits at a time so that the encoding and
decoding circuits operate in an \textquotedblleft online\textquotedblright%
\ manner. Therefore, each infinite-depth operation approximates its
corresponding rational polynomial. This approximation does not pose a barrier
to implementation. We can implement each of the above infinite-depth
operations by padding the initial qubits of the information qubit stream with
some \textquotedblleft scratch\textquotedblright\ qubits. We first transmit
these \textquotedblleft scratch\textquotedblright\ qubits that contain no
useful quantum information so that the later information qubits enjoy the full
protection of the code. These scratch qubits do not affect the asymptotic rate
of the code and merely serve as a convenience for implementing the
infinite-depth operations. From now on, we adhere to describing infinite-depth
operations with binary rational polynomials because it is more convenient to
do so mathematically.

\section{Closing Remarks}

This chapter reviewed in detail the method of quantum convolutional coding. We
discussed how to represent a quantum convolutional code with a matrix of
binary polynomials. The shifted symplectic product is crucial in several of
the forthcoming chapters because it determines the commutation relations for
an arbitrary set of convolutional generators. In general, a set of generators
do not form a commuting set and we can use entanglement to resolve their
anticommutativity. In this chapter, we also developed many of the operations
that manipulate quantum convolutional codes. We repeatedly use these
operations in the next few chapters.

\chapter{Entanglement-Assisted Quantum Convolutional Coding: The CSS Case}

\label{chp:EAQCC-CSS}\begin{saying}
O Calderbank, good Shor, and bright Steane,\\
Methinks your construction is keen,\\
But it makes me tense,\\
this Pauli sequence,\\
I just want our quantum bits clean.
\end{saying}In this chapter, we develop a theory of entanglement-assisted
quantum convolutional coding for a broad class of codes. Our major result is
that we can produce an entanglement-assisted quantum convolutional code from
two \textit{arbitrary} classical binary convolutional codes. The resulting
quantum convolutional codes admit a Calderbank-Shor-Steane (CSS)\ structure
\cite{PhysRevA.54.1098,PhysRevLett.77.793,book2000mikeandike}. The rates and
error-correcting properties of the two binary classical convolutional codes
directly determine the corresponding properties of the entanglement-assisted
quantum convolutional code.

Our CSS\ entanglement-assisted quantum convolutional codes divide into two
classes based on certain properties of the classical codes from which we
produce them. These properties of the classical codes determine the structure
of the encoding and decoding circuit for the code, and the structure of the
encoding and decoding circuit in turn determines the class of the
entanglement-assisted quantum convolutional code.

\begin{enumerate}
\item Codes in the first class admit both a finite-depth encoding and decoding circuit.

\item Codes in the second class have an encoding circuit that employs both
finite-depth and infinite-depth operations. Their decoding circuits have
finite-depth operations only.
\end{enumerate}

We structure this chapter as follows. We outline the operation of an
entanglement-assisted quantum convolutional code and present our main theorem
in Section~\ref{sec:eaqcc}. This theorem shows how to produce a CSS
entanglement-assisted quantum convolutional code from two arbitrary classical
binary convolutional codes. The theorem gives the rate and error-correcting
properties of a CSS\ entanglement-assisted quantum convolutional code as a
function of the parameters of the classical convolutional codes.
Section~\ref{sec:eaqcc-fefd} completes the proof of the theorem for our first
class of entanglement-assisted quantum convolutional codes.
Section~\ref{sec:eaqcc-iefd} completes the proof of our theorem for the second
class of entanglement-assisted quantum convolutional codes. We discuss the
implications of the assumptions for the different classes of
entanglement-assisted quantum convolutional codes while developing the
constructions. Our theory produces high-performance quantum convolutional
codes by importing high-performance classical convolutional codes.

\section{Operation of an Entanglement-Assisted Quantum Convolutional Code}

\label{sec:eaqcc}An entanglement-assisted quantum convolutional code operates
similarly to a standard quantum convolutional code. The main difference is
that the sender and receiver share entanglement in the form of ebits before quantum communication begins. An
$[[n,k;c]]$ entanglement-assisted quantum convolutional code encodes $k$
information qubits per frame with the help of $c$ ebits and $n-k-c$ ancilla
qubits per frame. Figure~\ref{fig:eaqcc} highlights the main features of the
operation of an entanglement-assisted quantum convolutional code. An
entanglement-assisted quantum convolutional code operates as follows:%
\begin{figure*}
[ptb]
\begin{center}
\includegraphics[
natheight=8.879900in,
natwidth=19.139999in,
height=2.809in,
width=5.834in
]
{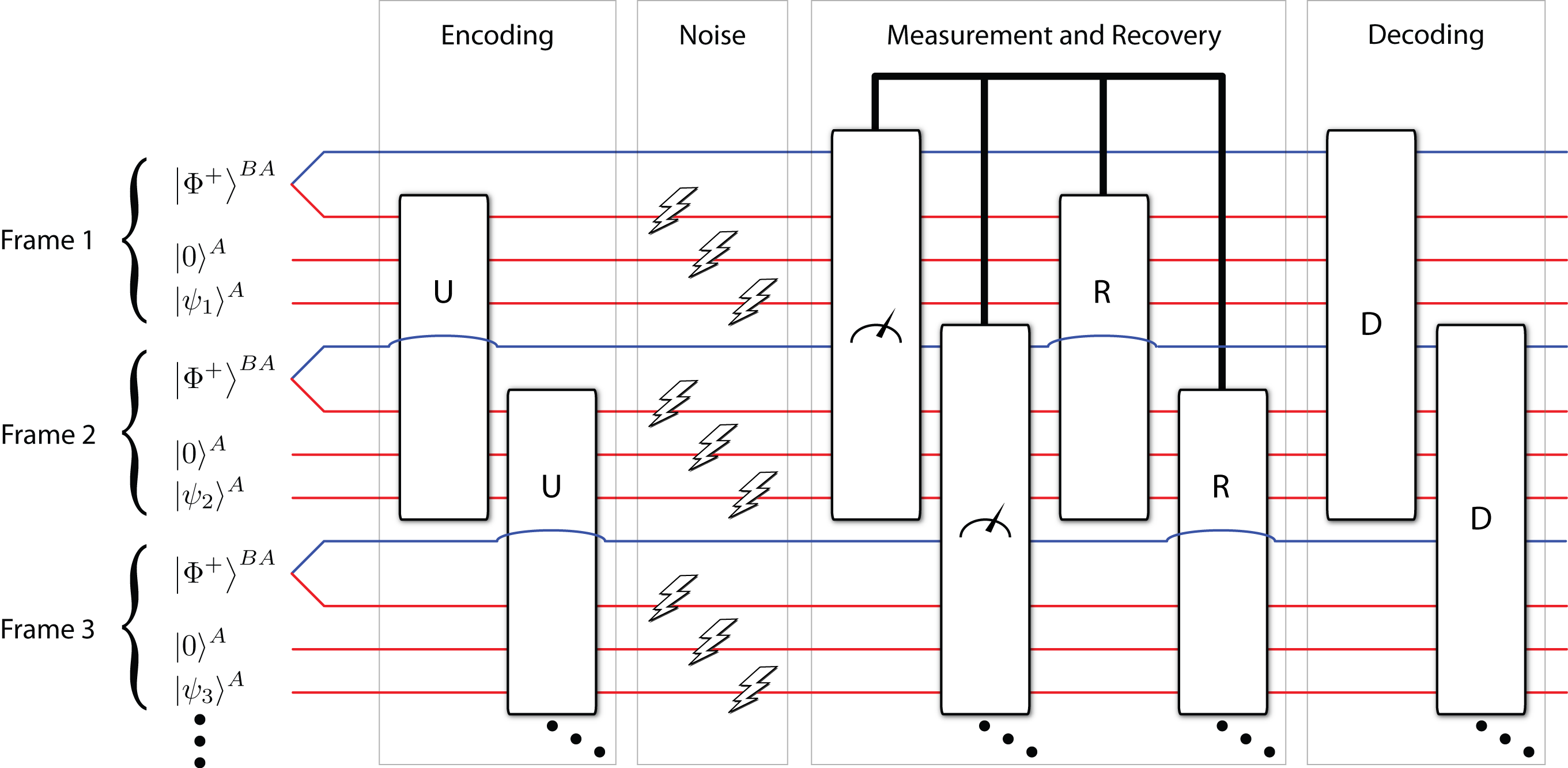}
\end{center}
\caption
{(Color online) An entanglement-assisted quantum convolutional code operates on a stream of qubits
partitioned into a countable number of frames. The sender encodes
the frames of information qubits, ancilla qubits, and half of shared ebits with
a repeated, overlapping encoding circuit $U$. The noisy channel affects the sender's encoded qubits
but does not affect the receiver's half of the shared ebits. The receiver performs overlapping
measurements on both the encoded qubits and his half of the shared ebits. These measurements
produce an error syndrome which the receiver can process to determine the most likely error.
The receiver reverses the errors on the noisy qubits from the sender. The final decoding circuit operates
on all qubits in a frame and recovers
the original stream of information qubits.}
\label{fig:eaqcc}
\end{figure*}%

\begin{enumerate}
\item The sender encodes a stream of quantum information using both additional
ancillas and ebits. The sender performs the encoding operations on her qubits
only (i.e., not including the halves of the ebits in possession of the
receiver). The encoding operations have a periodic structure so that the same
operations act on qubits in different frames and give the code a memory
structure. The sender can perform these encoding operations in an online
manner as she places more qubits in the unencoded qubit stream.

\item The sender transmits her encoded qubits over the noisy quantum
communication channel. The noisy channel does not affect the receiver's half
of the shared ebits.

\item The receiver combines the received noisy qubits with his half of the
ebits and performs measurements to diagnose errors that may occur. These
measurements may overlap on some of the same qubits.

\item The receiver then diagnoses errors using a classical error estimation
algorithm such as Viterbi error estimation \cite{itit1967viterbi} and reverses
the estimates of the errors.

\item The receiver finally performs an online decoding circuit that outputs
the original information qubit stream.
\end{enumerate}

\section{The CSS\ Construction}

Our main theorem below allows us to import two arbitrary classical
convolutional codes for use as a CSS entanglement-assisted quantum
convolutional code. Grassl and R\"{o}tteler were the first to construct
CSS\ quantum convolutional codes from two classical binary convolutional codes
that satisfy an orthogonality constraint---the polynomial parity check
matrices $H_{1}\left(  D\right)  $ and $H_{2}\left(  D\right)  $ of the two
classical codes are orthogonal with respect to the shifted symplectic product:%
\begin{equation}
H_{1}\left(  D\right)  H_{2}^{T}\left(  D^{-1}\right)  =0.
\end{equation}
The resulting symplectic code has a self-orthogonal parity-check matrix when
we join them together using the CSS\ construction. Our theorem generalizes the
work of Grassl and R\"{o}tteler because we can import two \textit{arbitrary}
classical binary convolutional codes---the codes do not necessarily have to
obey the self-orthogonality constraint.

The theorem gives a direct way to compute the amount of entanglement that the
code requires. The number of ebits required is equal to the rank of a
particular matrix derived from the check matrices of the two classical codes.
It generalizes the earlier theorems discussed in Section~\ref{sec:opt-ebit}
that determine the amount of entanglement required for an
entanglement-assisted quantum block code.

Theorem~\ref{thm:main-CSS} also provides a formula to compute the performance
parameters of the entanglement-assisted quantum convolutional code from the
performance parameters of the two classical codes. This formula ensures that
high-rate classical convolutional codes produce high-rate
entanglement-assisted quantum convolutional codes. Our constructions also
ensure high performance for the \textquotedblleft trade-off\textquotedblright%
\ and \textquotedblleft catalytic\textquotedblright\ rates by minimizing the
number of ebits that the codes require.

We begin the proof of the theorem in this section and complete it in different
ways for each of our two classes of entanglement-assisted quantum
convolutional codes in Sections~\ref{sec:eaqcc-fefd} and \ref{sec:eaqcc-iefd}.
The proofs detail how to encode a stream of information qubits, ancilla
qubits, and shared ebits into a code that has the CSS\ structure.

\begin{theorem}
\label{thm:main-CSS}Let $H_{1}\left(  D\right)  $ and $H_{2}\left(  D\right)
$ be the respective check matrices corresponding to noncatastrophic,
delay-free encoders for classical binary convolutional codes $C_{1}$ and
$C_{2}$. Suppose that classical code $C_{i}$ encodes $k_{i}$ information bits
with $n$ bits per frame where $i=1,2$. The respective\ dimensions of
$H_{1}\left(  D\right)  $ and $H_{2}\left(  D\right)  $ are thus $\left(
n-k_{1}\right)  \times n$ and $\left(  n-k_{2}\right)  \times n$. Then the
resulting entanglement-assisted quantum convolutional code encodes
$k_{1}+k_{2}-n+c$ information qubits per frame and is an $\left[  \left[
n,k_{1}+k_{2}-n+c;c\right]  \right]  $ entanglement-assisted quantum
convolutional code. The code requires $c$ ebits per frame where $c$ is equal
to the rank of $H_{1}\left(  D\right)  H_{2}^{T}\left(  D^{-1}\right)  $.
\end{theorem}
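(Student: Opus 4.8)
The plan is to carry over, almost verbatim, the argument of Theorem~\ref{thm:opt-ebit-formula} and Corollary~\ref{cor:CSS} from binary matrices to matrices of binary (Laurent) polynomials, and then to defer the explicit online encoding circuit to Sections~\ref{sec:eaqcc-fefd} and \ref{sec:eaqcc-iefd}. First I would write the quantum check matrix of the putative code in CSS form,
\[
H(D)=\left[  \left.
\begin{array}[c]{c}
H_{1}(D)\\
0
\end{array}
\right\vert
\begin{array}[c]{c}
0\\
H_{2}(D)
\end{array}
\right]  ,
\]
a $\left(  2n-k_{1}-k_{2}\right)  \times2n$ polynomial matrix that is the convolutional analog of the block-code check matrix in the proof of Corollary~\ref{cor:CSS}. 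Its shifted symplectic product matrix is
\[
\Omega_{H}(D)=Z(D)X^{T}(D^{-1})+X(D)Z^{T}(D^{-1})=
\begin{bmatrix}
0 & H_{1}(D)H_{2}^{T}(D^{-1})\\
H_{2}(D)H_{1}^{T}(D^{-1}) & 0
\end{bmatrix}
,
\]
which is equivalent under a full-rank permutation to $H_{1}(D)H_{2}^{T}(D^{-1})\oplus H_{2}(D)H_{1}^{T}(D^{-1})$.

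Second, I would compute the rank of $\Omega_{H}(D)$ over the field $\mathbb{Z}_{2}(D)$. The rank of a direct sum is the sum of the ranks, and the two summands have equal rank because transposition preserves rank and the substitution $D\mapsto D^{-1}$ is an automorphism of $\mathbb{Z}_{2}(D)$; hence $\mathrm{rank}\left(  \Omega_{H}(D)\right)  =2\,\mathrm{rank}\left(  H_{1}(D)H_{2}^{T}(D^{-1})\right)  =2c$ with $c$ as defined in the statement. Next I would invoke the row-operation form of the symplectic Gram--Schmidt procedure from Section~\ref{sec:row-col-ops}: there is a full-rank rational-polynomial matrix $R(D)$ with $H^{\prime}(D)=R(D)H(D)$ and $\Omega_{H^{\prime}}(D)=R(D)\Omega_{H}(D)R^{T}(D^{-1})$ put into the standard block form, namely $c$ copies of $J$ together with $2n-k_{1}-k_{2}-2c$ null blocks. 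Row operations preserve the error-correcting properties (additivity) and, since $R(D)$ is full rank, they preserve $\mathrm{rank}\left(  \Omega\right)$, so the code requires exactly $c$ ebits per frame: $c$ is sufficient by the construction of Refs.~\cite{arx2006brun,science2006brun} and necessary because fewer ebit pairs cannot make the shifted symplectic product matrix vanish, as in Ref.~\cite{unpub2007got}. Reading off the decomposition, there are $2n-k_{1}-k_{2}-2c$ ancilla qubits and $c$ ebits per frame, hence $n-\left(  2n-k_{1}-k_{2}-2c\right)  -c=k_{1}+k_{2}-n+c$ information qubits per frame, and the code is an $\left[  \left[  n,k_{1}+k_{2}-n+c;c\right]  \right]  $ entanglement-assisted quantum convolutional code.

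The main obstacle is that the shifted symplectic product is not alternating (Section~\ref{sec:shifted-symp-prod}), so one must verify that the Gram--Schmidt-style row reduction over $\mathbb{Z}_{2}(D)$ still terminates in the clean $\bigoplus J$ standard form---in particular, that a pair of generators anticommuting only under a single shift, whose shifted symplectic product is a monomial $D^{m}$, can be normalized to $1$ without reintroducing anticommutativity elsewhere---and that the row operations $R(D)$ together with the column operations of an actual shift-invariant encoding circuit can be made noncatastrophic. This last step is exactly where the hypotheses that $H_{1}(D)$ and $H_{2}(D)$ come from noncatastrophic, delay-free classical encoders enter, and it is what forces the split into the two classes: in some cases the circuit is realizable with finite-depth operations only (Section~\ref{sec:eaqcc-fefd}), while in others it must use an infinite-depth operation (Section~\ref{sec:eaqcc-iefd}). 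I would therefore finish the parameter and ebit-count claims here and hand the circuit construction and the finite-/infinite-depth dichotomy to those two sections.
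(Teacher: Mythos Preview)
Your proposal is correct and reaches the same parameters, but it takes a genuinely different route from the paper. You lift the block-code rank argument of Theorem~\ref{thm:opt-ebit-formula} and Corollary~\ref{cor:CSS} to the rational function field $\mathbb{Z}_{2}(D)$: compute $\Omega_{H}(D)$ directly, show $\mathrm{rank}\,\Omega_{H}(D)=2c$, and then invoke an abstract Gram--Schmidt over $\mathbb{Z}_{2}(D)$ to assert the existence of a full-rank $R(D)$ bringing $\Omega_{H}$ to standard form. The paper instead works constructively via the Smith normal form: Lemma~\ref{lemma:general-ops} first puts each $H_{i}(D)$ into Smith form (using the noncatastrophic, delay-free hypothesis immediately to force all invariant factors to be $1$), then applies the column operations $B_{2}^{-1}(D)$ to reduce the $X$-side to $[\,I\ \ 0\,]$, arriving at the explicit block $\bigl[\,E(D)\ \ F(D)\,\big|\,I\ \ 0\,\bigr]$ with $E(D)=B_{1a}(D)B_{2a}^{T}(D^{-1})$. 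The relation (\ref{eq:check-relation}) shows $E(D)$ and $H_{1}(D)H_{2}^{T}(D^{-1})$ share invariant factors and hence rank, which is how the paper recovers $c$. Your approach is shorter and conceptually cleaner for the ebit count; the paper's approach is already halfway to the encoding circuit, since the Smith-form column operations \emph{are} the finite-depth Clifford gates, and it makes transparent from the outset exactly where the noncatastrophic and delay-free hypotheses bite (trivial invariant factors of $H_{i}$ and of $F_{2}(D)$). In your route those hypotheses are invisible until the deferred circuit construction, and the abstract $R(D)$ you obtain over $\mathbb{Z}_{2}(D)$ need not itself correspond to a noncatastrophic encoder---that reconciliation is the real work of Lemmas~\ref{lemma:fefd}--\ref{lemma:ieid}, as you correctly anticipate.
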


Let us begin the proof of the above theorem by constructing an
entanglement-assisted quantum convolutional code. Consider the following
quantum check matrix in CSS\ form:%
\begin{equation}
\left[  \left.
\begin{array}
[c]{c}%
H_{1}\left(  D\right)  \\
0
\end{array}
\right\vert
\begin{array}
[c]{c}%
0\\
H_{2}\left(  D\right)
\end{array}
\right]  .\label{eq:orig-gens}%
\end{equation}
We label the above matrix as a \textquotedblleft quantum check
matrix\textquotedblright\ for now because it does not necessarily correspond
to a commuting stabilizer. The quantum check matrix corresponds to a set of
Pauli sequences whose error-correcting properties are desirable. The Pauli
sequences inherit these desirable properties from the classical codes $C_{1}$
and $C_{2}$.

The following lemma begins the proof of the above theorem. It details an
initial decomposition of the above quantum check matrix for each of our two
classes of entanglement-assisted quantum convolutional codes.

\begin{lemma}
\label{lemma:general-ops}Elementary row and column operations relate the
quantum check matrix\ in (\ref{eq:orig-gens}) to the following matrix%
\[
\left[  \left.
\begin{array}
[c]{cc}%
E\left(  D\right)  & F\left(  D\right) \\
0 & 0
\end{array}
\right\vert
\begin{array}
[c]{cc}%
0 & 0\\
I & 0
\end{array}
\right]  .
\]
where $E\left(  D\right)  $ is dimension $\left(  n-k_{1}\right)
\times\left(  n-k_{2}\right)  $, $F\left(  D\right)  $ is $\left(
n-k_{1}\right)  \times k_{2}$, the identity matrix is $\left(  n-k_{2}\right)
\times\left(  n-k_{2}\right)  $, and the null matrix on the right is $\left(
n-k_{2}\right)  \times k_{2}$. We give a definition of $E\left(  D\right)  $
and $F\left(  D\right)  $ in the following proof.
\end{lemma}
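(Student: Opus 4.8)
The plan is to bring the ``$X$''-part of the quantum check matrix in (\ref{eq:orig-gens}) into systematic form by column operations, and then to \emph{define} $E(D)$ and $F(D)$ as the two blocks of the correspondingly transformed ``$Z$''-part. Recall from Section~\ref{sec:row-col-ops} that a shift-invariant Clifford column operation right-multiplies the ``$X$'' matrix by some invertible matrix $M(D)$ (polynomial for a finite-depth operation, rational for an infinite-depth one) and simultaneously right-multiplies the ``$Z$'' matrix by $\left(M^{T}(D^{-1})\right)^{-1}$, consistent with the symplectic transformation law (\ref{eq:symp-row-op}); an elementary row operation confined to the lower $n-k_{2}$ rows left-multiplies $H_{2}(D)$ by an invertible matrix $L(D)$ and does nothing to the upper $n-k_{1}$ rows. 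Since $H_{2}(D)$ is the check matrix of a noncatastrophic, delay-free encoder it has full row rank $n-k_{2}$, so---after a qubit permutation that places the pivot positions first, itself absorbed into $M(D)$---there exist invertible $L(D)$ and $M(D)$ with $L(D)\,H_{2}(D)\,M(D)=[\,I_{n-k_{2}}\ \ 0\,]$.

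First I would apply exactly these operations to all of (\ref{eq:orig-gens}). The lower-left ``$Z$'' block is $0$ and stays $0$: column operations preserve a zero row block, and the row operations only recombine rows that are already zero there. The upper-right ``$X$'' block is $0$ and stays $0$, since the zero rows of the ``$X$'' matrix are preserved by column operations. By construction the lower-right ``$X$'' block becomes $[\,I_{n-k_{2}}\ \ 0\,]$, and the upper-left ``$Z$'' block becomes $H_{1}(D)\left(M^{T}(D^{-1})\right)^{-1}$, still of size $(n-k_{1})\times n$. I would then set
\[
[\,E(D)\ \ F(D)\,]\ :=\ H_{1}(D)\left(M^{T}(D^{-1})\right)^{-1},
\]
with $E(D)$ its first $n-k_{2}$ columns (size $(n-k_{1})\times(n-k_{2})$) and $F(D)$ its last $k_{2}$ columns (size $(n-k_{1})\times k_{2}$). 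This is precisely the matrix claimed in the lemma, and no row operation on the upper block is needed.

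The hard part will be the reduction $L(D)\,H_{2}(D)\,M(D)=[\,I\ \ 0\,]$, and in particular whether $M(D)$ can be taken to be polynomial---so that the column operations are finite-depth---or whether a genuinely rational $M(D)$, hence an infinite-depth encoding operation, is forced. That dichotomy is exactly what distinguishes the two classes of entanglement-assisted quantum convolutional codes, and the explicit choice of $M(D)$ (and hence of $E(D)$ and $F(D)$) in each case, together with the resulting encoding and decoding circuits, is carried out in Sections~\ref{sec:eaqcc-fefd} and \ref{sec:eaqcc-iefd}. For the present lemma it suffices that such invertible row and column operations exist over the field of rational functions in $D$, which follows from the full-rank and systematic-form properties guaranteed by the noncatastrophic, delay-free hypothesis; the verification that the two zero blocks are preserved is the routine bookkeeping sketched above.
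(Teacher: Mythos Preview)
Your reduction strategy—bring the $X$ block to $[I\ 0]$ by row operations on the bottom rows and column operations on all columns, then read off $E(D)$ and $F(D)$ from the resulting $Z$ block—is exactly the paper's approach, and the bookkeeping about the zero blocks is fine. The paper additionally applies a row operation $A_{1}^{-1}(D)$ to the upper block first, so its $E(D)$ and $F(D)$ differ from yours by a left factor, but that is immaterial to the lemma as stated.

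Where you go astray is in your last paragraph. You suggest that whether $M(D)$ can be taken polynomial is the open question that separates the two classes in Sections~\ref{sec:eaqcc-fefd} and~\ref{sec:eaqcc-iefd}. It is not. The noncatastrophic, delay-free hypothesis on $H_{2}(D)$ forces all of its invariant factors to equal $1$, so the Smith form reads $H_{2}(D)=A_{2}(D)[I\ 0]B_{2}(D)$ with $A_{2}(D),B_{2}(D)$ unimodular \emph{polynomial} matrices. Hence $M(D)=B_{2}^{-1}(D)$ is always polynomial, and the column operations in this lemma are always finite-depth (indeed the paper implements them with CNOT and SWAP gates). This is why the lemma can legitimately say ``elementary'' column operations; your argument, which only produces $M(D)$ over the field of rational functions, does not yet establish that.

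The actual dichotomy between the two classes enters later and hinges on the Smith form of $E(D)$ (equivalently, of $H_{1}(D)H_{2}^{T}(D^{-1})$): the first class is the case where those invariant factors are powers of $D$, and the second class is when some are nontrivial polynomials. So your proof of the present lemma needs one more sentence invoking unit invariant factors of $H_{2}(D)$, and you should revise your picture of where the infinite-depth operations arise.
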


%

%TCIMACRO{\TeXButton{begin proof}{\begin{proof}}}%
%BeginExpansion
\begin{proof}%
%EndExpansion
The Smith form \cite{book1999conv} of $H_{i}\left(  D\right)  $ for each
$i=1,2$\ is%
\begin{equation}
H_{i}\left(  D\right)  =A_{i}\left(  D\right)  \left[
\begin{array}
[c]{cc}%
I & 0
\end{array}
\right]  B_{i}\left(  D\right)  ,\label{eq:first-Smith-form}%
\end{equation}
where $A_{i}\left(  D\right)  $ is $\left(  n-k_{i}\right)  \times\left(
n-k_{i}\right)  $, the matrix in brackets is $\left(  n-k_{i}\right)  \times
n$, and $B_{i}\left(  D\right)  $ is $n\times n$. The Smith decomposition is
somewhat analogous to the SVD\ decomposition of a matrix over the reals.
The binary polynomials along the diagonal of the center
matrix of the Smith decomposition are the \textit{invariant factors}. 
The matrices $A_{i}\left(  D\right)  $ and $B_{i}\left(
D\right)  $ are a product of a sequence of elementary row and column
operations respectively \cite{book1999conv}.
Let
$B_{ia}\left(  D\right)  $\ be the first $n-k_{i}$ rows of $B_{i}\left(
D\right)  $ and let $B_{ib}\left(  D\right)  $ be the last $k_{i}$ rows of
$B_{i}\left(  D\right)  $:%
\[
B_{i}\left(  D\right)  =\left[
\begin{array}
[c]{c}%
B_{ia}\left(  D\right)  \\
B_{ib}\left(  D\right)
\end{array}
\right]  .
\]
The $\left(  n-k_{i}\right)  \times\left(  n-k_{i}\right)  $ identity matrix
in brackets in (\ref{eq:first-Smith-form}) indicates that the invariant
factors of $H_{i}\left(  D\right)  $ for each $i=1,2$ are all equal to one
\cite{book1999conv}. The
invariant factors are all unity for both check matrices because the check
matrices correspond to noncatastrophic, delay-free encoders
\cite{book1999conv}. 

Premultiplying $H_{i}\left(  D\right)  $ by $A_{i}^{-1}\left(  D\right)  $
gives a check matrix $H_{i}^{\prime}\left(  D\right)  $ for each $i=1,2$.
Matrix $H_{i}^{\prime}\left(  D\right)  $ is a check matrix for code $C_{i}$
with equivalent error-correcting properties as $H_{i}\left(  D\right)
$\ because row operations relate the two matrices. This new check matrix
$H_{i}^{\prime}\left(  D\right)  $\ is equal to the first $n-k_{i}$ rows of
matrix $B_{i}\left(  D\right)  $:%
\[
H_{i}^{\prime}\left(  D\right)  =B_{ia}\left(  D\right)  .
\]

The invariant factors of $H_{1}\left(  D\right)  H_{2}^{T}\left(
D^{-1}\right)  $ are equivalent to those of $H_{1}^{\prime}\left(  D\right)
H_{2}^{\prime T}\left(  D^{-1}\right)  $ because they are related by elementary row and
column operations \cite{book1999conv}:%
\begin{equation}
H_{1}\left(  D\right)  H_{2}^{T}\left(  D^{-1}\right)  =A_{1}\left(  D\right)
H_{1}^{\prime}\left(  D\right)  H_{2}^{\prime T}\left(  D^{-1}\right)
A_{2}^{T}\left(  D^{-1}\right)  . \label{eq:check-relation}%
\end{equation}

We now decompose the above quantum check matrix into a basic form using
elementary row and column operations. The row operations have no effect on the
error-correcting properties of the code, and the column operations correspond
to elements of an encoding circuit. We later show how to incorporate ebits so
that the quantum check matrix forms a valid commuting stabilizer.

Perform the row operations in matrices $A_{i}^{-1}\left(  D\right)  $ for both
check matrices $H_{i}\left(  D\right)  $. The quantum check matrix becomes%
\begin{equation}
\left[  \left.
\begin{array}
[c]{c}%
B_{1a}\left(  D\right)  \\
0
\end{array}
\right\vert
\begin{array}
[c]{c}%
0\\
B_{2a}\left(  D\right)
\end{array}
\right]  .\label{eq:orig-qcm}%
\end{equation}
The error-correcting properties of the above generators are equivalent to
those of the generators in (\ref{eq:orig-gens}) because row operations relate
the two sets of generators. The matrix $B_{2}\left(  D\right)  $ corresponds
to a sequence $\left\{  B_{2,i}\left(  D\right)  \right\}  _{i=1}^{l}$\ of
elementary column operations:%
\[
B_{2}\left(  D\right)  =B_{2,1}\left(  D\right)  \cdots B_{2,l}\left(
D\right)  =\prod_{i=1}^{l}B_{2,i}\left(  D\right)  .
\]
The inverse matrix $B_{2}^{-1}\left(  D\right)  $ is therefore equal to the
above sequence of operations in reverse order:%
\[
B_{2}^{-1}\left(  D\right)  =B_{2,l}\left(  D\right)  \cdots B_{2,1}\left(
D\right)  =\prod_{i=l}^{1}B_{2,i}\left(  D\right)  .
\]
Perform the elementary column operations in $B_{2}^{-1}\left(  D\right)  $
with CNOT\ and SWAP\ gates \cite{isit2006grassl}. The effect of each
elementary column operation $B_{2,i}\left(  D\right)  $ is to postmultiply the
\textquotedblleft X\textquotedblright\ matrix by $B_{2,i}\left(  D\right)  $
and to postmultiply the \textquotedblleft Z\textquotedblright\ matrix by
$B_{2,i}^{T}\left(  D^{-1}\right)  $. Therefore the effect of all elementary
operations is to postmultiply the \textquotedblleft Z\textquotedblright%
\ matrix by $B_{2}^{T}\left(  D^{-1}\right)  $ because%
\[
\prod_{i=l}^{1}B_{2,i}^{T}\left(  D^{-1}\right)  =\left(  \prod_{i=1}%
^{l}B_{2,i}\left(  D^{-1}\right)  \right)  ^{T}=B_{2}^{T}\left(
D^{-1}\right)  .
\]
The quantum check matrix in (\ref{eq:orig-qcm})\ becomes%
\begin{equation}
\left[  \left.
\begin{array}
[c]{c}%
B_{1a}\left(  D\right)  B_{2}^{T}\left(  D^{-1}\right)  \\
0
\end{array}
\right\vert
\begin{array}
[c]{cc}%
0 & 0\\
I & 0
\end{array}
\right]  .\label{eq:lemma-temp-QCM}%
\end{equation}
Let $E\left(  D\right)  $ be equal to the first $n-k_{1}$ rows and $n-k_{2}$
columns of the \textquotedblleft Z\textquotedblright\ matrix:%
\[
E\left(  D\right)  \equiv B_{1,a}\left(  D\right)  B_{2,a}^{T}\left(
D^{-1}\right)  ,
\]
and let $F\left(  D\right)  $ be equal to the first $n-k_{1}$ rows and last
$k_{2}$ columns of the \textquotedblleft Z\textquotedblright\ matrix:%
\[
F\left(  D\right)  \equiv B_{1,a}\left(  D\right)  B_{2,b}^{T}\left(
D^{-1}\right)  .
\]
The quantum check matrix in (\ref{eq:lemma-temp-QCM}) is then equivalent to
the following matrix%
\begin{equation}
\left[  \left.
\begin{array}
[c]{cc}%
E\left(  D\right)   & F\left(  D\right)  \\
0 & 0
\end{array}
\right\vert
\begin{array}
[c]{cc}%
0 & 0\\
I & 0
\end{array}
\right]  ,\label{eq:last-eq-fefd}%
\end{equation}
where each matrix above has the dimensions stated in the theorem above.%
%TCIMACRO{\TeXButton{end proof}{\end{proof}}}%
%BeginExpansion
\end{proof}%
%EndExpansion

The above operations end the initial set of operations that each of our two
classes of entanglement-assisted quantum convolutional codes employs. We
outline the remaining operations for each class of codes in what follows.

\section{Codes with Finite-Depth Encoding and Decoding Circuits}

\label{sec:eaqcc-fefd}This section details entanglement-assisted quantum
convolutional codes in our first class. Codes in the first class admit an
encoding and decoding circuit that employ finite-depth operations only. The
check matrices for codes in this class have a property that allows this type
of encoding and decoding. The following lemma gives the details of this
property, and the proof outlines how to encode and decode this class of
entanglement-assisted quantum convolutional codes.

\begin{lemma}
\label{lemma:fefd}Suppose the Smith form\ of $H_{1}\left(  D\right)  H_{2}%
^{T}\left(  D^{-1}\right)  $ is%
\[
H_{1}\left(  D\right)  H_{2}^{T}\left(  D^{-1}\right)  =A\left(  D\right)
\left[
\begin{array}
[c]{cc}%
\Gamma\left(  D\right)  & 0\\
0 & 0
\end{array}
\right]  B\left(  D\right)  ,
\]
where $A\left(  D\right)  $ is an $\left(  n-k_{1}\right)  \times\left(
n-k_{1}\right)  $ matrix, $B\left(  D\right)  $ is an $\left(  n-k_{2}\right)
\times\left(  n-k_{2}\right)  $ matrix, $\Gamma\left(  D\right)  $ is a
diagonal $c\times c$ matrix whose entries are powers of $D$, and the matrix in
brackets has dimension $\left(  n-k_{1}\right)  \times\left(  n-k_{2}\right)
$. Then the resulting entanglement-assisted quantum convolutional code has
both a finite-depth encoding and decoding circuit.
\end{lemma}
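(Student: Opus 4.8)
The plan is to pick up from the matrix in~(\ref{eq:last-eq-fefd}) and show that, under the stated hypothesis, it can be brought to the canonical unencoded form of an $[[n,k_1+k_2-n+c;c]]$ code using only finite-depth column operations together with (unconstrained) row operations; the decoding circuit will then be this sequence of column operations run in reverse, and the encoding circuit the same sequence run forward, so both will be finite-depth. The preliminary point is that the hypothesis on $\Gamma(D)$ is really a statement about $E(D)$: by~(\ref{eq:check-relation}) we have $H_1(D)H_2^T(D^{-1})=A_1(D)\,E(D)\,A_2^T(D^{-1})$ with $A_1(D),A_2(D)$ products of elementary (hence unimodular) operations, so $E(D)$ has the same invariant factors as $H_1(D)H_2^T(D^{-1})$; consequently the Smith form of $E(D)$ is $E(D)=A(D)\,\widetilde\Gamma(D)\,B(D)$, where $\widetilde\Gamma(D)$ is the $(n-k_1)\times(n-k_2)$ matrix with the diagonal monomial block $\Gamma(D)=\mathrm{diag}(D^{j_1},\ldots,D^{j_c})$ in its upper-left corner and zeros elsewhere, and $A(D)$, $B(D)$ are unimodular.

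First I would apply the row operation $A^{-1}(D)$ to the top $n-k_1$ rows of~(\ref{eq:last-eq-fefd}) --- which does not change the error-correcting properties --- turning the ``Z'' part of those rows into $\widetilde\Gamma(D)B(D)$ in the first $n-k_2$ columns and $A^{-1}(D)F(D)$ in the last $k_2$ columns. I would then clear $B(D)$ and the $F$-block by column operations. The essential observation is that the relevant columns of the ``X'' matrix are zero: the ``X'' matrix of~(\ref{eq:last-eq-fefd}) is supported on the first $n-k_2$ columns of the bottom row group, undoing $B(D)$ there is a unimodular column operation, and the clearing of the last $k_2$ (``$F$'') columns acts on all-zero ``X'' columns. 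Hence each of these column operations is an honest finite-depth shift-invariant Clifford gate and propagates no weight. In parallel I would apply a row operation to the bottom $n-k_2$ rows realigning the weight-one ``X'' generators with the Smith basis, so that the shifted symplectic product matrix of the full generator set is a direct sum of $c$ rank-two blocks encoding the twists $D^{j_1},\ldots,D^{j_c}$ and a zero block. This displays the set as $c$ anticommuting pairs --- the $i$-th having twist the single monomial $D^{j_i}$ --- together with $2n-k_1-k_2-2c$ mutually commuting generators, which is exactly the decomposition of an $[[n,k_1+k_2-n+c;c]]$ entanglement-assisted quantum convolutional code.

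It then remains to resolve the $c$ twists with entanglement and to read off the circuit. Since the twist of the $i$-th pair is $D^{j_i}$, it is removed by delaying one member of that pair by $j_i$ frames, which is a finite-depth operation, after which the pair carries the standard half-ebit symplectic form; supplying the receiver's halves of $c$ ebits (with the matching finite frame shifts on the receiver's side) gives a genuinely commuting stabilizer whose generators all have finite weight, so the receiver can measure them online. Composing the column operations above in forward order is the finite-depth encoding circuit from the canonical unencoded stream (information qubits, ancilla qubits, halves of ebits) to the encoded stabilizer equivalent to~(\ref{eq:orig-gens}); running them in reverse is the decoding circuit, also finite-depth.

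I expect the main obstacle to be precisely the bookkeeping that shows no infinite-depth operation is ever forced. Every division that appears in the argument --- undoing the twist of the $i$-th ebit pair, and cancelling entries of $A^{-1}(D)F(D)$ against the pivots of $\widetilde\Gamma(D)$ --- is a division by a diagonal entry of $\Gamma(D)$; the hypothesis that this entry is a power of $D$ makes each such division a mere frame shift (finite-depth), rather than multiplication of a column by a rational $1/f(D)$, which is the infinite-depth operation of Section~\ref{sec:infinite-depth-ops}. This is exactly the dichotomy that separates the codes of this section from those of Section~\ref{sec:eaqcc-iefd}.
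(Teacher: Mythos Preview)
Your outline matches the paper's approach in spirit---start from~(\ref{eq:last-eq-fefd}), reduce $E(D)$ via its Smith form, then resolve the $c$ monomial twists with ebits---but there is a genuine gap in the treatment of the $F$-block.

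You write that ``cancelling entries of $A^{-1}(D)F(D)$ against the pivots of $\widetilde\Gamma(D)$'' accounts for every division in the argument. But $\widetilde\Gamma(D)$ has only $c$ nonzero pivots, while $A^{-1}(D)F(D)$ has $n-k_1$ rows. Only the first $c$ rows of the $F$-block (what the paper calls $F_1(D)$) can be cleared against $\Gamma(D)$; the remaining $n-k_1-c$ rows, $F_2(D)$, sit in rows with \emph{no} pivot in $\widetilde\Gamma(D)$ and with all-zero $X$ entries. Reducing those rows to standard ancilla form requires column operations among the last $k_2$ columns themselves, and for that to be finite-depth the invariant factors of $F_2(D)$ must be monomials. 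This is not a consequence of the hypothesis on $\Gamma(D)$; the paper supplies a separate argument: because $H_2(D)$ corresponds to a noncatastrophic, delay-free encoder and all prior operations are Laurent-polynomial row/column moves, the invariant factors of $F_2(D)$ can only be powers of $D$. Without this step your reduction of the ``$2n-k_1-k_2-2c$ mutually commuting generators'' to the unencoded ancilla form is not justified, and the dichotomy you cite at the end---monomial versus rational inversion---would reappear inside $F_2(D)$ unchecked.
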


%

%TCIMACRO{\TeXButton{begin proof}{\begin{proof}}}%
%BeginExpansion
\begin{proof}%
%EndExpansion
We begin the proof of this lemma by continuing where the proof of
Lemma~\ref{lemma:general-ops} ends. The crucial assumption for the above lemma
is that the invariant factors of $H_{1}\left(  D\right)  H_{2}^{T}\left(
D^{-1}\right)  $ are all powers of $D$. The Smith form of $E\left(  D\right)
$ in (\ref{eq:last-eq-fefd}) therefore becomes%
\[
A_{1}^{-1}\left(  D\right)  A\left(  D\right)  \left[
\begin{array}
[c]{cc}%
\Gamma\left(  D\right)  & 0\\
0 & 0
\end{array}
\right]  B\left(  D\right)  A_{2}^{-1}\left(  D\right)  ,
\]
by employing the hypothesis of Lemma~\ref{lemma:fefd} and
(\ref{eq:check-relation}). The rank of both $H_{1}\left(  D\right)  H_{2}%
^{T}\left(  D^{-1}\right)  $ and $E\left(  D\right)  $ is equal to $c$.

Perform the inverse of the row operations in $A_{1}^{-1}\left(  D\right)
A\left(  D\right)  $ on the first $n-k_{1}$ rows of the quantum check matrix
in (\ref{eq:last-eq-fefd}). Perform the inverse of the column operations in
matrix $B\left(  D\right)  A_{2}^{-1}\left(  D\right)  $ on the first
$n-k_{2}$ columns of the quantum check matrix in (\ref{eq:last-eq-fefd}). We
execute these column operations with Hadamard, CNOT,\ and SWAP\ gates. These
column operations have a corresponding effect on columns in the
\textquotedblleft X\textquotedblright\ matrix, but we can exploit the identity
matrix in the last $n-k_{2}$ rows of the \textquotedblleft X\textquotedblright%
\ matrix to counteract this effect. We perform row operations on the last
$n-k_{2}$ rows of the matrix that act as the inverse of the column operations,
and therefore the quantum check matrix in (\ref{eq:last-eq-fefd})\ becomes%
\[
\left[  \left.
\begin{array}
[c]{ccc}%
\Gamma\left(  D\right)  & 0 & F_{1}\left(  D\right) \\
0 & 0 & F_{2}\left(  D\right) \\
0 & 0 & 0\\
0 & 0 & 0
\end{array}
\right\vert
\begin{array}
[c]{ccc}%
0 & 0 & 0\\
0 & 0 & 0\\
I & 0 & 0\\
0 & I & 0
\end{array}
\right]  ,
\]
where $F_{1}\left(  D\right)  $ and $F_{2}\left(  D\right)  $ are the first
$c$ and $n-k_{1}-c$\ respective rows of $A^{-1}\left(  D\right)  A_{1}\left(
D\right)  F\left(  D\right)  $. We perform Hadamard and CNOT\ gates to clear
the entries in $F_{1}\left(  D\right)  $ in the \textquotedblleft
Z\textquotedblright\ matrix above. The quantum check matrix becomes%
\begin{equation}
\left[  \left.
\begin{array}
[c]{ccc}%
\Gamma\left(  D\right)  & 0 & 0\\
0 & 0 & F_{2}\left(  D\right) \\
0 & 0 & 0\\
0 & I & 0
\end{array}
\right\vert
\begin{array}
[c]{ccc}%
0 & 0 & 0\\
0 & 0 & 0\\
I & 0 & 0\\
0 & 0 & 0
\end{array}
\right]  . \label{eq:init-example-inter-step}%
\end{equation}

The Smith form of $F_{2}\left(  D\right)  $ is%
\[
F_{2}\left(  D\right)  =A_{F}\left(  D\right)  \left[
\begin{array}
[c]{cc}%
\Gamma_{F}\left(  D\right)  & 0
\end{array}
\right]  B_{F}\left(  D\right)  ,
\]
where $\Gamma_{F}\left(  D\right)  $ is a diagonal matrix whose entries are
powers of $D$, $A_{F}\left(  D\right)  $ is $\left(  n-k_{1}-c\right)
\times\left(  n-k_{1}-c\right)  $, and $B_{F}\left(  D\right)  $ is
$k_{2}\times k_{2}$. The Smith form of $F_{2}\left(  D\right)  $ takes this
particular form because the original check matrix $H_{2}\left(  D\right)  $ is
noncatastrophic and column operations with Laurent polynomials change the
invariant factors only up to powers of $D$.

Perform row operations corresponding to $A_{F}^{-1}\left(  D\right)  $ on the
second set of $n-k_{1}-c$ rows with $F_{2}\left(  D\right)  $ in
(\ref{eq:init-example-inter-step}). Perform column operations corresponding to
$B_{F}^{-1}\left(  D\right)  $ on columns $n-k_{2}+1,\ldots,n$ with Hadamard,
CNOT, and SWAP\ gates. The resulting quantum check matrix has the following
form:%
\begin{equation}
\left[  \left.
\begin{array}
[c]{cccc}%
\Gamma\left(  D\right)  & 0 & 0 & 0\\
0 & 0 & \Gamma_{F}\left(  D\right)  & 0\\
0 & 0 & 0 & 0\\
0 & I & 0 & 0
\end{array}
\right\vert
\begin{array}
[c]{cccc}%
0 & 0 & 0 & 0\\
0 & 0 & 0 & 0\\
I & 0 & 0 & 0\\
0 & 0 & 0 & 0
\end{array}
\right]  . \label{eq:final-quantum-check-matrix}%
\end{equation}

We have now completed the decomposition of the original quantum check matrix
in (\ref{eq:orig-gens}) for this class of entanglement-assisted quantum
convolutional codes. It is not possible to perform row or column operations to
decompose the above matrix any further. The problem with the above quantum
check matrix is that it does not form a valid quantum convolutional code. The
first set of rows with matrix $\Gamma\left(  D\right)  $ are not orthogonal
under the shifted symplectic product to the third set of rows with the
identity matrix on the \textquotedblleft X\textquotedblright\ side.
Equivalently, the set of Pauli sequences corresponding to the above quantum
check matrix do not form a commuting stabilizer. We can use entanglement
shared between sender and receiver to solve this problem. Entanglement adds
columns to the above quantum check matrix to resolve the issue. The additional
columns correspond to qubits on the receiver's side. We next show in detail
how to incorporate ancilla qubits, ebits, and information qubits to obtain a
valid stabilizer code. The result is that we can exploit the error-correcting
properties of the original code to protect the sender's qubits.

Consider the following check matrix corresponding to a commuting stabilizer:%
\begin{equation}
\left[  \left.
\begin{array}
[c]{ccccc}%
I & I & 0 & 0 & 0\\
0 & 0 & 0 & I & 0\\
0 & 0 & 0 & 0 & 0\\
0 & 0 & I & 0 & 0
\end{array}
\right\vert
\begin{array}
[c]{ccccc}%
0 & 0 & 0 & 0 & 0\\
0 & 0 & 0 & 0 & 0\\
I & I & 0 & 0 & 0\\
0 & 0 & 0 & 0 & 0
\end{array}
\right]  , \label{eq:bare-1st-stab}%
\end{equation}
where the identity matrices in the first and third sets of rows each have
dimension $c\times c$, the identity matrix in the second set of rows has
dimension $\left(  n-k_{1}-c\right)  \times\left(  n-k_{1}-c\right)  $, and
the identity matrix in the fourth set of rows has dimension $\left(
n-k_{2}-c\right)  \times\left(  n-k_{2}-c\right)  $. The first and third sets
of $c$\ rows stabilize a set of $c$ ebits shared between Alice and Bob. Bob
possesses the \textquotedblleft left\textquotedblright\ $c$ qubits and Alice
possesses the \textquotedblleft right\textquotedblright\ $n$ qubits. The
second and fourth sets of rows stabilize a set of $2\left(  n-c\right)
-k_{1}-k_{2}$ ancilla qubits that Alice possesses. The stabilizer therefore
stabilizes a set of $c$ ebits, $2\left(  n-c\right)  -k_{1}-k_{2}$ ancilla
qubits, and $k_{1}+k_{2}-n+c$ information qubits.

Observe that the last $n$ columns of the \textquotedblleft Z\textquotedblright%
\ and \textquotedblleft X\textquotedblright\ matrices in the above stabilizer
are similar in their layout to the entries in
(\ref{eq:final-quantum-check-matrix}). We can delay the rows of the above
stabilizer by an arbitrary amount to obtain the desired stabilizer. So the
above stabilizer is a subcode of the following stabilizer in the sense of
Ref.~\cite{isit2006grassl}:%
\[
\left[  \left.
\begin{array}
[c]{ccccc}%
\Gamma\left(  D\right)  & \Gamma\left(  D\right)  & 0 & 0 & 0\\
0 & 0 & 0 & \Gamma_{F}\left(  D\right)  & 0\\
0 & 0 & 0 & 0 & 0\\
0 & 0 & I & 0 & 0
\end{array}
\right\vert
\begin{array}
[c]{ccccc}%
0 & 0 & 0 & 0 & 0\\
0 & 0 & 0 & 0 & 0\\
I & I & 0 & 0 & 0\\
0 & 0 & 0 & 0 & 0
\end{array}
\right]  .
\]
The stabilizer in (\ref{eq:bare-1st-stab}) has equivalent error-correcting
properties to and the same asymptotic rate as the above desired stabilizer.
The above stabilizer matrix is an augmented version of the quantum check
matrix in (\ref{eq:final-quantum-check-matrix}) that includes entanglement.
The sender performs all of the encoding column operations detailed in the
proofs of this lemma and Lemma~\ref{lemma:general-ops} in reverse order. The
result of these operations is an $\left[  \left[  n,k_{1}+k_{2}-n+c;c\right]
\right]  $ entanglement-assisted quantum convolutional code with the same
error-correcting properties as the quantum check matrix in (\ref{eq:orig-gens}%
). The receiver decodes the original information-qubit stream by performing
the column operations in the order presented. The information qubits appear as
the last $k_{1}+k_{2}-n+c$ in each frame of the stream (corresponding to the
$k_{1}+k_{2}-n+c$ columns of zeros in both the \textquotedblleft
Z\textquotedblright\ and \textquotedblleft X\textquotedblright\ matrices
above).%
%TCIMACRO{\TeXButton{end proof}{\end{proof}}}%
%BeginExpansion
\end{proof}%
%EndExpansion

\begin{example}
\label{ex:fefd-example}Consider a classical convolutional code with the
following check matrix:%
\[
H\left(  D\right)  =\left[
\begin{array}
[c]{cc}%
1+D^{2} & 1+D+D^{2}%
\end{array}
\right]  .
\]
We can use $H\left(  D\right)  $ in an entanglement-assisted quantum
convolutional code to correct for both bit-flip errors and phase-flip errors.
We form the following quantum check matrix:%
\begin{equation}
\left[  \left.
\begin{array}
[c]{cc}%
1+D^{2} & 1+D+D^{2}\\
0 & 0
\end{array}
\right\vert
\begin{array}
[c]{cc}%
0 & 0\\
1+D^{2} & 1+D+D^{2}%
\end{array}
\right]  . \label{eq:desired-QCM-first-example}%
\end{equation}
This code falls in the first class of entanglement-assisted quantum
convolutional codes because $H\left(  D\right)  H^{T}\left(  D^{-1}\right)
=1$.\newline\newline We do not show the decomposition of the above check
matrix as outlined in Lemma~\ref{lemma:fefd}, but instead show how to encode
it starting from a stream of information qubits and ebits. Each frame has one
ebit and one information qubit.%
\begin{figure}
[ptb]
\begin{center}
\includegraphics[
natheight=9.753400in,
natwidth=7.920000in,
height=4.0335in,
width=3.2802in
]%
{./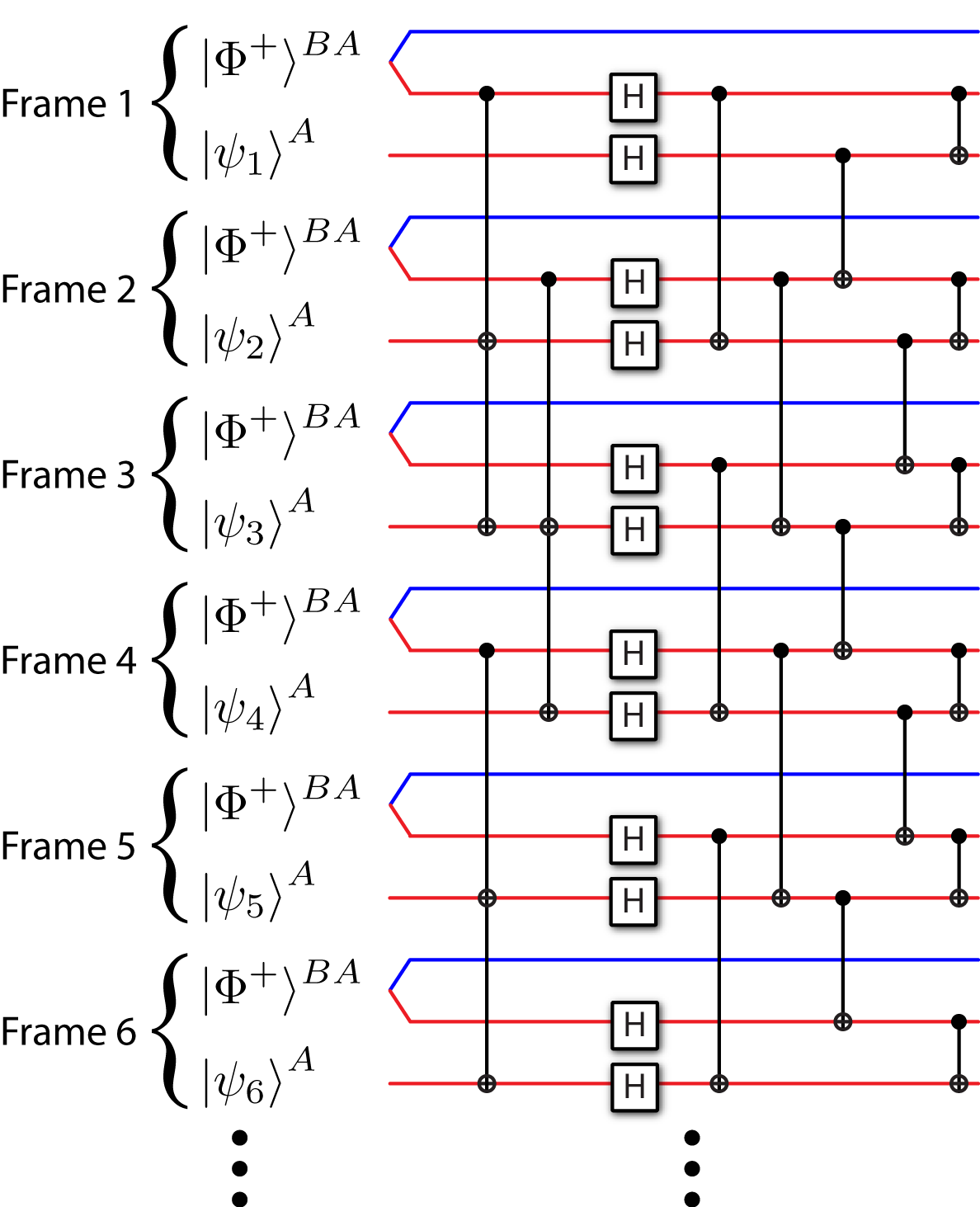}%
\caption{(Color online) The finite-depth encoding circuit for the
entanglement-assisted quantum convolutional code in
Example~\ref{ex:fefd-example}. The above operations in reverse order give a
valid decoding circuit.}%
\label{fig:example-eaqcc-fefd}%
\end{center}
\end{figure}
%EndExpansion
Let us begin with a polynomial matrix that stabilizes the unencoded state:%
\[
\left[  \left.
\begin{array}
[c]{ccc}%
1 & 1 & 0\\
0 & 0 & 0
\end{array}
\right\vert
\begin{array}
[c]{ccc}%
0 & 0 & 0\\
1 & 1 & 0
\end{array}
\right]  .
\]
Alice possesses the two qubits on the \textquotedblleft
right\textquotedblright\ and Bob possesses the qubit on the \textquotedblleft
left.\textquotedblright\ We label the middle qubit as \textquotedblleft qubit
one\textquotedblright\ and the rightmost qubit as \textquotedblleft qubit
two.\textquotedblright\ Alice performs a CNOT\ from qubit one to qubit two in
a delayed frame and a CNOT\ from qubit one to qubit two in a frame delayed by
two. The stabilizer becomes%
\[
\left[  \left.
\begin{array}
[c]{ccc}%
1 & 1 & 0\\
0 & 0 & 0
\end{array}
\right\vert
\begin{array}
[c]{ccc}%
0 & 0 & 0\\
1 & 1 & D+D^{2}%
\end{array}
\right]  .
\]
Alice performs Hadamard gates on both of her qubits. The stabilizer becomes%
\[
\left[  \left.
\begin{array}
[c]{ccc}%
1 & 0 & 0\\
0 & 1 & D+D^{2}%
\end{array}
\right\vert
\begin{array}
[c]{ccc}%
0 & 1 & 0\\
1 & 0 & 0
\end{array}
\right]  .
\]
Alice performs a CNOT\ from qubit one to qubit two in a delayed frame. The
stabilizer becomes%
\[
\left[  \left.
\begin{array}
[c]{ccc}%
1 & 0 & 0\\
0 & D & D+D^{2}%
\end{array}
\right\vert
\begin{array}
[c]{ccc}%
0 & 1 & D\\
1 & 0 & 0
\end{array}
\right]  .
\]
Alice performs a CNOT\ from qubit two to qubit one in a delayed frame. The
stabilizer becomes%
\[
\left[  \left.
\begin{array}
[c]{ccc}%
1 & 0 & 0\\
0 & D & 1+D+D^{2}%
\end{array}
\right\vert
\begin{array}
[c]{ccc}%
0 & 1+D^{2} & D\\
1 & 0 & 0
\end{array}
\right]  .
\]
Alice performs a CNOT\ from qubit one to qubit two. The stabilizer becomes%
\[
\left[  \left.
\begin{array}
[c]{ccc}%
1 & 0 & 0\\
0 & 1+D^{2} & 1+D+D^{2}%
\end{array}
\right\vert
\begin{array}
[c]{ccc}%
0 & 1+D^{2} & 1+D+D^{2}\\
1 & 0 & 0
\end{array}
\right]  .
\]
A row operation that switches the first row with the second row gives the
following stabilizer:%
\[
\left[  \left.
\begin{array}
[c]{ccc}%
0 & 1+D^{2} & 1+D+D^{2}\\
1 & 0 & 0
\end{array}
\right\vert
\begin{array}
[c]{ccc}%
1 & 0 & 0\\
0 & 1+D^{2} & 1+D+D^{2}%
\end{array}
\right]  .
\]
The entries on Alice's side of the above stabilizer have equivalent
error-correcting properties to the quantum check matrix in
(\ref{eq:desired-QCM-first-example}). Figure~\ref{fig:example-eaqcc-fefd}%
\ illustrates how the above operations encode a stream of ebits and
information qubits for our example.
\end{example}

\subsubsection{Discussion}

Codes in the first class are more useful in practice than those in the second
because their encoding and decoding circuits are finite depth. An uncorrected
error propagates only to a finite number of information qubits in the decoded
qubit stream. Codes in the first class therefore do not require any
assumptions about noiseless encoding or decoding.

The assumption about the invariant factors in the Smith form of $H_{1}\left(
D\right)  H_{2}^{T}\left(  D^{-1}\right)  $\ holds only for some classical
check matrices. Only a subclass of classical codes satisfy this assumption,
but it still expands the set of available quantum codes beyond those whose
check matrices $H_{1}\left(  D\right)  $ and $H_{2}\left(  D\right)  $ are
orthogonal. We need further techniques to handle the classical codes for which
this assumption does not hold. The following sections provide these further
techniques to handle a larger class of entanglement-assisted quantum
convolutional codes.

\section{Codes with Infinite-Depth Encoding and Finite-Depth Decoding
Circuits}

\label{sec:eaqcc-iefd}This section details codes whose encoding circuits have
both infinite-depth and finite-depth operations. We therefore assume that
encoding is noiseless to eliminate the possibility of encoding errors
spreading infinitely into the encoded qubit stream. We later briefly discuss
the effects of relaxing this assumption in a realistic system. Their decoding
circuits require finite-depth operations only. Some of the decoding circuits
are not the exact inverse of their corresponding encoding circuits, but the
decoding circuits invert the effect of the encoding circuits because they
produce the original stream of information qubits at their output.

Just as with the previous class, this class of codes is determined by the
properties of their corresponding classical check matrices, as described in
the following lemma.

\begin{lemma}
\label{lemma:ieid}Suppose the Smith form of $E\left(  D\right)  $ does not
admit the form from Lemma~\ref{lemma:fefd}. Then the entanglement-assisted
quantum convolutional code has an encoding circuit with both infinite-depth
and finite-depth operations. Its decoding circuit has finite-depth operations.
\end{lemma}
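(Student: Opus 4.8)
The plan is to continue the decomposition exactly where the proof of Lemma~\ref{lemma:general-ops} stops, at the quantum check matrix in (\ref{eq:last-eq-fefd}), and to run the same row and finite-depth column operations used in the proof of Lemma~\ref{lemma:fefd} up to the point where monomiality of the invariant factors was invoked. Concretely, I would take the Smith form $E(D)=A(D)\left[\begin{array}{cc}\Gamma(D)&0\\0&0\end{array}\right]B(D)$, where now $\Gamma(D)=\mathrm{diag}(\gamma_1(D),\dots,\gamma_c(D))$ has at least one non-monomial invariant factor; using the Laurent-polynomial freedom I may normalize each $\gamma_i(D)$ to a genuine polynomial with $\gamma_i(0)=1$ (this normalization matters later). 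Performing the inverse of the row operations in $A_1^{-1}(D)A(D)$ on the top $n-k_1$ rows and the inverse of the column operations in $B(D)A_2^{-1}(D)$ on the first $n-k_2$ columns with Hadamard, CNOT, and SWAP gates, counteracting the induced action on the ``X'' side with row operations on the last $n-k_2$ rows, then clearing $F_1(D)$ and recursing on the Smith form of $F_2(D)$ exactly as in Lemma~\ref{lemma:fefd}, produces a quantum check matrix of the same shape as (\ref{eq:final-quantum-check-matrix}) but with $\Gamma(D)$ and $\Gamma_F(D)$ now carrying genuine polynomial entries. Every operation used to this point is finite-depth.

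The new ingredient is making this matrix into a valid commuting stabilizer and realizing it by an encoding circuit. As in Lemma~\ref{lemma:fefd}, I would augment with $c$ extra columns on Bob's side, pair each $\Gamma(D)$-row with an all-identity ``X''-row to form an ebit-stabilizer pair, and take the bare stabilizer to be (\ref{eq:bare-1st-stab}), which stabilizes $c$ ebits, $2(n-c)-k_1-k_2$ ancillas, and $k_1+k_2-n+c$ information qubits. Since $\Gamma(D)$ is non-monomial, the desired encoded stabilizer is now obtained from (\ref{eq:bare-1st-stab}) not by row delays but by multiplying the relevant generators by the $\gamma_i(D)$ (and the invariant factors of $\Gamma_F(D)$) --- equivalently, by passing to a Grassl--R\"{o}tteler subcode with equivalent asymptotic rate and error-correcting properties \cite{isit2006grassl}. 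To realize this passage by an actual shift-invariant unitary I would invoke the construction of Section~\ref{sec:infinite-depth-ops}: for each non-monomial $\gamma_i(D)$, an infinite-depth CNOT-chain multiplies the corresponding ``X'' column by $1/\gamma_i(D)$ and the matching ``Z'' column by $\gamma_i(D^{-1})$, which preserves the shifted symplectic product and hence keeps the augmented stabilizer commuting. The normalization $\gamma_i(0)=1$ is what guarantees that the long-division expansion of $1/\gamma_i(D)$ exists, so the infinite-depth operation is well defined; the assumption of noiseless encoding stated at the start of Section~\ref{sec:eaqcc-iefd} is exactly what licenses using it. Composing these infinite-depth operations with the finite-depth operations above gives the encoding circuit, establishing the first claim.

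For the decoding claim I would not invert the encoder (the $1/\gamma_i(D)$ step is infinite-depth and would spread errors). Instead, mirroring the decoding half of the worked example in Section~\ref{sec:infinite-depth-ops}, I would decode with finite-depth operations only: apply the finite-depth column operation that multiplies the relevant ``X'' column back by $\gamma_i(D)$ (a bona fide finite CNOT pattern, with its dual action $\gamma_i(D^{-1})$ on the ``Z'' side), and then use the freedom to multiply a logical operator by any stabilizer element --- i.e.\ add stabilizer rows to the information-qubit matrix --- to cancel the residual rational entries this produces. Tracking both the stabilizer matrix and the information-qubit matrix throughout, this returns the information qubits to the last $k_1+k_2-n+c$ positions of each frame (a coherent teleportation from their original positions), yielding an $\left[\left[n,k_1+k_2-n+c;c\right]\right]$ entanglement-assisted quantum convolutional code with the error-correcting properties of (\ref{eq:orig-gens}).

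The step I expect to be the main obstacle is the decoding argument: one must check that multiplying back by $\gamma_i(D)$ together with stabilizer additions genuinely suffices to disentangle the information qubits using finite-depth operations, including the $\Gamma_F(D)$ block and any interaction between the two blocks, and that the resulting information-qubit matrix has finite-weight rows so that the receiver's decoding circuit is online-implementable. A secondary point needing care is confirming that the augmented stabilizer is still commuting after the infinite-depth operations; this follows from the shifted-symplectic-product invariance of Section~\ref{sec:infinite-depth-ops}, but it should be verified against the explicit $\Gamma(D),\Gamma_F(D)$ structure rather than taken for granted.
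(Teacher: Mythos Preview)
Your reduction to the shape of (\ref{eq:final-quantum-check-matrix}) via finite-depth operations does not go through. In the proof of Lemma~\ref{lemma:fefd}, the step ``Hadamard and CNOT gates to clear the entries in $F_1(D)$'' works only because the diagonal entries of $\Gamma(D)$ are monomials $D^m$: after Hadamards bring the relevant columns to the $X$ side, each entry of $F_1$ is cleared by a CNOT whose polynomial weight is $D^{-m}\cdot[F_1]_{ij}$, a genuine Laurent polynomial. If $\gamma_i(D)$ is a non-monomial polynomial, that weight becomes $[F_1]_{ij}/\gamma_i(D)$, a rational function, and there is no finite-depth gate realizing it. So you cannot ``clear $F_1(D)$ and recurse on $F_2(D)$ exactly as in Lemma~\ref{lemma:fefd}''; the block of $F$ sitting next to the non-monomial invariant factors survives. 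This is precisely why the paper's proof splits $\Gamma$ into a monomial part $\Gamma_1$ and a non-monomial part $\Gamma_2$, clears only the $F$-block aligned with $\Gamma_1$, and for the $\Gamma_2$-block uses column operations alone to reduce the residual $F$-piece to a lower-triangular $L(D)$ rather than to zero. The infinite-depth operations then act on the ebit halves attached to the $[\Gamma_2\ L]$ submatrix. (A side remark: your claim that $\Gamma_F$ may also be non-monomial is incorrect --- noncatastrophicity of $H_2(D)$ forces those invariant factors to be powers of $D$, exactly as argued in Lemma~\ref{lemma:fefd}.)

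Your decoding sketch has a related problem. There is no finite-depth column operation that ``multiplies the relevant $X$ column back by $\gamma_i(D)$'': CNOTs add polynomial multiples of one column to a \emph{different} column, Hadamards swap, and phase/controlled-phase gates add $X$ entries to $Z$ entries; none of these scale a single column by a polynomial. The paper's finite-depth decoding does not attempt any such inverse. Instead, after the finite-depth part of the encoder is undone, Bob uses CNOTs from \emph{his} half of the ebits (whose columns carry clean identity entries) to the $L(D)$ columns; this clears $L(D)$ with genuine finite-depth gates, after which a row operation from the stabilizer to the information-qubit matrix removes the remaining rational entries. Your proposal never isolates an $L(D)$ block, so you have nothing for Bob's columns to act against, and the substitute you propose is not an available gate.
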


%

%TCIMACRO{\TeXButton{begin proof}{\begin{proof}}}%
%BeginExpansion
\begin{proof}%
%EndExpansion
We perform all of the operations from Lemma~\ref{lemma:general-ops}. The Smith
form of $E\left(  D\right)  $ is in general as follows%
\[
A_{E}\left(  D\right)
\begin{bmatrix}
\Gamma_{1}\left(  D\right)  & 0 & 0\\
0 & \Gamma_{2}\left(  D\right)  & 0\\
0 & 0 & 0
\end{bmatrix}
B_{E}\left(  D\right)  ,
\]
where $A_{E}\left(  D\right)  $ is $\left(  n-k_{1}\right)  \times\left(
n-k_{1}\right)  $, $\Gamma_{1}\left(  D\right)  $ is an $s\times s$ diagonal
matrix whose entries are powers of $D$, $\Gamma_{2}\left(  D\right)  $ is a
$\left(  c-s\right)  \times\left(  c-s\right)  $ diagonal matrix whose entries
are arbitrary polynomials, and $B_{E}\left(  D\right)  $ is $\left(
n-k_{2}\right)  \times\left(  n-k_{2}\right)  $. Perform the row operations in
$A_{E}^{-1}\left(  D\right)  $ and the column operations in $B_{E}^{-1}\left(
D\right)  $ on the quantum check matrix in (\ref{eq:last-eq-fefd}). Counteract
the effect of the column operations on the identity matrix in the
\textquotedblleft X\textquotedblright\ matrix by performing row operations.
The quantum check matrix in (\ref{eq:last-eq-fefd}) becomes%
\[
\left[  \left.
\begin{array}
[c]{cccc}%
\Gamma_{1}\left(  D\right)  & 0 & 0 & F_{1}\left(  D\right) \\
0 & \Gamma_{2}\left(  D\right)  & 0 & F_{2}\left(  D\right) \\
0 & 0 & 0 & F_{3}\left(  D\right) \\
0 & 0 & 0 & 0
\end{array}
\right\vert
\begin{array}
[c]{cc}%
0 & 0\\
0 & 0\\
0 & 0\\
I & 0
\end{array}
\right]  ,
\]
where $F_{1}\left(  D\right)  $, $F_{2}\left(  D\right)  $, and $F_{3}\left(
D\right)  $ are the respective $s$, $c-s$, and $n-k_{1}-c$ rows of $A_{E}%
^{-1}\left(  D\right)  F\left(  D\right)  $. The Smith form of $F_{3}\left(
D\right)  $ is as follows%
\[
F_{3}\left(  D\right)  =A_{F_{3}}\left(  D\right)
\begin{bmatrix}
\Gamma_{F_{3}}\left(  D\right)  & 0
\end{bmatrix}
B_{F_{3}}\left(  D\right)  ,
\]
where $A_{F_{3}}\left(  D\right)  $ is $\left(  n-k_{1}-c\right)
\times\left(  n-k_{1}-c\right)  $, $\Gamma_{F_{3}}\left(  D\right)  $ is an
$\left(  n-k_{1}-c\right)  \times\left(  n-k_{1}-c\right)  $ diagonal matrix
whose entries are powers of $D$, and $B_{F_{3}}\left(  D\right)  $ is
$k_{2}\times k_{2}$. The entries of $\Gamma_{F_{3}}\left(  D\right)  $ are
powers of $D$ because the original check matrix $H_{2}\left(  D\right)  $ is
noncatastrophic and column and row operations with Laurent polynomials change
the invariant factors only by a power of $D$. Perform the row operations in
$A_{F_{3}}^{-1}\left(  D\right)  $ and the column operations in $B_{F_{3}%
}^{-1}\left(  D\right)  $. The quantum check matrix becomes%
\[
\left[  \left.
\begin{array}
[c]{ccccc}%
\Gamma_{1}\left(  D\right)  & 0 & 0 & F_{1a}^{^{\prime}}\left(  D\right)  &
F_{1b}^{^{\prime}}\left(  D\right) \\
0 & \Gamma_{2}\left(  D\right)  & 0 & F_{2a}^{^{\prime}}\left(  D\right)  &
F_{2b}^{^{\prime}}\left(  D\right) \\
0 & 0 & 0 & \Gamma_{F_{3}}\left(  D\right)  & 0\\
0 & 0 & 0 & 0 & 0
\end{array}
\right\vert
\begin{array}
[c]{cc}%
0 & 0\\
0 & 0\\
0 & 0\\
I & 0
\end{array}
\right]  ,
\]
where $F_{1a}^{^{\prime}}\left(  D\right)  $, $F_{1b}^{^{\prime}}\left(
D\right)  $, $F_{2a}^{^{\prime}}\left(  D\right)  $, $F_{2b}^{^{\prime}%
}\left(  D\right)  $ are the matrices resulting from the column operations in
$B_{F_{3}}^{-1}\left(  D\right)  $. Perform row operations from the entries in
$\Gamma_{F_{3}}\left(  D\right)  $ to the rows above it to clear the entries
in $F_{1a}^{^{\prime}}\left(  D\right)  $ and $F_{2a}^{^{\prime}}\left(
D\right)  $. Use Hadamard and CNOT\ gates to clear the entries in
$F_{1b}^{^{\prime}}\left(  D\right)  $. The quantum check matrix becomes%
\[
\left[  \left.
\begin{array}
[c]{ccccc}%
\Gamma_{1}\left(  D\right)  & 0 & 0 & 0 & 0\\
0 & \Gamma_{2}\left(  D\right)  & 0 & 0 & F_{2b}^{^{\prime}}\left(  D\right)
\\
0 & 0 & 0 & \Gamma_{F_{3}}\left(  D\right)  & 0\\
0 & 0 & 0 & 0 & 0
\end{array}
\right\vert
\begin{array}
[c]{cc}%
0 & 0\\
0 & 0\\
0 & 0\\
I & 0
\end{array}
\right]  .
\]
We can reduce $F_{2b}^{^{\prime}}\left(  D\right)  $ to a lower triangular
form with an algorithm consisting of column operations only. The algorithm
operates on the last $k_{2}+k_{1}-n+c$ columns. It is similar to the Smith
algorithm but does not involve row operations. Consider the first row of
$F_{2b}^{^{\prime}}\left(  D\right)  $. Perform column operations between the
different elements of the row to reduce it to one non-zero entry. Swap this
non-zero entry to the leftmost position. Perform the same algorithm on
elements $2,\ldots,k_{2}+k_{1}-n+c$ of the second row. Continue on for all
rows of $F_{2b}^{^{\prime}}\left(  D\right)  $ to reduce it to a matrix of the
following form\bigskip%
\[
F_{2b}^{^{\prime}}\left(  D\right)  \rightarrow%
\begin{bmatrix}%
%TCIMACRO{\TeXButton{L(D)}{\raisebox{0ex}[1.5ex]{$\overbrace{L(D)}^{c-s}$}}}%
%BeginExpansion
\raisebox{0ex}[1.5ex]{$\overbrace{L(D)}^{c-s}$}%
%EndExpansion
&
%TCIMACRO{\TeXButton{0}{\raisebox{0ex}[1.5ex]{$\overbrace{0}^{k_1+k_2-n+s}$}}}%
%BeginExpansion
\raisebox{0ex}[1.5ex]{$\overbrace{0}^{k_1+k_2-n+s}$}%
%EndExpansion
\end{bmatrix}
,
\]
where $L\left(  D\right)  $ is a lower triangular matrix. The above quantum
check matrix becomes%
\[
\left[  \left.
\begin{array}
[c]{cccccc}%
\Gamma_{1}\left(  D\right)  & 0 & 0 & 0 & 0 & 0\\
0 & \Gamma_{2}\left(  D\right)  & 0 & 0 & L\left(  D\right)  & 0\\
0 & 0 & 0 & \Gamma_{F_{3}}\left(  D\right)  & 0 & 0\\
0 & 0 & 0 & 0 & 0 & 0
\end{array}
\right\vert
\begin{array}
[c]{cc}%
0 & 0\\
0 & 0\\
0 & 0\\
I & 0
\end{array}
\right]  .
\]
We have completed decomposition of the first set of $s$ rows with $\Gamma
_{1}\left(  D\right)  $, the third set of $n-k_{1}-c$ rows with $\Gamma
_{F_{3}}\left(  D\right)  $, and rows $n-k_{1}+1,\ldots,n-k_{1}+s$ with the
identity matrix on the \textquotedblleft X\textquotedblright\ side.

We now consider an algorithm with infinite-depth operations to encode the
following submatrix of the above quantum check matrix:%
\begin{equation}
\left[  \left.
\begin{array}
[c]{cc}%
\Gamma_{2}\left(  D\right)  & L\left(  D\right) \\
0 & 0
\end{array}
\right\vert
\begin{array}
[c]{cc}%
0 & 0\\
I & 0
\end{array}
\right]  . \label{eq:ieid-desired-matrix}%
\end{equation}
We begin with a set of $c-s$ ebits and $c-s$ information qubits. The following
matrix stabilizes the ebits%
\[
\left[  \left.
\begin{array}
[c]{ccc}%
I & I & 0\\
0 & 0 & 0
\end{array}
\right\vert
\begin{array}
[c]{ccc}%
0 & 0 & 0\\
I & I & 0
\end{array}
\right]  ,
\]
and the following matrix represents the information qubits%
\[
\left[  \left.
\begin{array}
[c]{ccc}%
0 & 0 & I\\
0 & 0 & 0
\end{array}
\right\vert
\begin{array}
[c]{ccc}%
0 & 0 & 0\\
0 & 0 & I
\end{array}
\right]  ,
\]
where all matrices have dimension $\left(  c-s\right)  \times\left(
c-s\right)  $ and Bob possesses the $c-s$ qubits on the \textquotedblleft
left\textquotedblright\ and Alice possesses the $2\left(  c-s\right)  $ qubits
on the \textquotedblleft right.\textquotedblright\ We track both the
stabilizer and the information qubits as they progress through some encoding
operations. Alice performs CNOT\ and Hadamard gates on her $2\left(
c-s\right)  $ qubits. These gates multiply the middle $c-s$ columns of the
\textquotedblleft Z\textquotedblright\ matrix by $L\left(  D\right)  $ and add
the result to the last $c-s$ columns and multiply the last $c-s$ columns of
the \textquotedblleft X\textquotedblright\ matrix by $L^{T}\left(
D^{-1}\right)  $ and add the result to the middle $c-s$ columns. The
stabilizer becomes%
\[
\left[  \left.
\begin{array}
[c]{ccc}%
I & I & L\left(  D\right) \\
0 & 0 & 0
\end{array}
\right\vert
\begin{array}
[c]{ccc}%
0 & 0 & 0\\
I & I & 0
\end{array}
\right]  ,
\]
and the information-qubit matrix becomes%
\[
\left[  \left.
\begin{array}
[c]{ccc}%
0 & 0 & I\\
0 & 0 & 0
\end{array}
\right\vert
\begin{array}
[c]{ccc}%
0 & 0 & 0\\
0 & L^{T}\left(  D^{-1}\right)  & I
\end{array}
\right]  .
\]
Alice performs infinite-depth operations on her first $c-s$ qubits
corresponding to the rational polynomials $\gamma_{2,1}^{-1}\left(
D^{-1}\right)  $, $\ldots$, $\gamma_{2,c-s}^{-1}\left(  D^{-1}\right)  $ in
$\Gamma_{2}^{-1}\left(  D^{-1}\right)  $. The stabilizer matrix becomes%
\[
\left[  \left.
\begin{array}
[c]{ccc}%
I & \Gamma_{2}\left(  D\right)  & L\left(  D\right) \\
0 & 0 & 0
\end{array}
\right\vert
\begin{array}
[c]{ccc}%
0 & 0 & 0\\
I & \Gamma_{2}^{-1}\left(  D^{-1}\right)  & 0
\end{array}
\right]  ,
\]
and the information-qubit matrix becomes%
\[
\left[  \left.
\begin{array}
[c]{ccc}%
0 & 0 & I\\
0 & 0 & 0
\end{array}
\right\vert
\begin{array}
[c]{ccc}%
0 & 0 & 0\\
0 & L^{T}\left(  D^{-1}\right)  \Gamma_{2}^{-1}\left(  D^{-1}\right)  & I
\end{array}
\right]  .
\]
Alice's part of the above stabilizer matrix is equivalent to the quantum check
matrix in (\ref{eq:ieid-desired-matrix}) by row operations (premultiplying the
second set of rows in the stabilizer by $\Gamma_{2}\left(  D\right)  $.) Bob
can therefore make stabilizer measurements that have finite weight and that
are equivalent to the desired stabilizer.

We now describe a method to decode the above encoded stabilizer and
information-qubit matrix so that the information qubits appear at the output
of the decoding circuit. Bob performs Hadamard gates on his first and third
sets of $c-s$ qubits, performs CNOT\ gates from the first set of qubits to the
third set of qubits corresponding to the entries in $L\left(  D\right)  $, and
performs the Hadamard gates again. The stabilizer becomes%
\begin{equation}
\left[  \left.
\begin{array}
[c]{ccc}%
I & \Gamma_{2}\left(  D\right)  & 0\\
0 & 0 & 0
\end{array}
\right\vert
\begin{array}
[c]{ccc}%
0 & 0 & 0\\
I & \Gamma_{2}^{-1}\left(  D^{-1}\right)  & 0
\end{array}
\right]  , \label{eq:decoding-iefd}%
\end{equation}
and the information-qubit matrix becomes%
\[
\left[  \left.
\begin{array}
[c]{ccc}%
0 & 0 & I\\
0 & 0 & 0
\end{array}
\right\vert
\begin{array}
[c]{ccc}%
0 & 0 & 0\\
L^{T}\left(  D^{-1}\right)  & L^{T}\left(  D^{-1}\right)  \Gamma_{2}%
^{-1}\left(  D^{-1}\right)  & I
\end{array}
\right]  .
\]
Bob finishes decoding at this point because we can equivalently express the
information-qubit matrix as follows%
\[
\left[  \left.
\begin{array}
[c]{ccc}%
0 & 0 & I\\
0 & 0 & 0
\end{array}
\right\vert
\begin{array}
[c]{ccc}%
0 & 0 & 0\\
0 & 0 & I
\end{array}
\right]  ,
\]
by multiplying the last $c-s$ rows of the stabilizer by $L^{T}\left(
D^{-1}\right)  $ and adding to the last $c-s$ rows of the information-qubit matrix.

The overall procedure for encoding is to begin with a set of $c$ ebits,
$2\left(  n-c\right)  -k_{1}-k_{2}$ ancilla qubits, and $k_{1}+k_{2}-n+c$
information qubits. Alice performs the infinite-depth operations detailed in the
paragraph with\ (\ref{eq:ieid-desired-matrix}) for $c-s$ of the ebits. Alice then
performs the finite-depth operations detailed in the proofs of this lemma and
Lemma~\ref{lemma:general-ops} in reverse order. The resulting stabilizer has
equivalent error-correcting properties to the quantum check matrix in
(\ref{eq:orig-gens}).

The receiver decodes by first performing all of the finite-depth operations in
the encoding circuit in reverse order. The receiver then decodes the
infinite-depth operations by the procedure listed in the paragraph with
(\ref{eq:decoding-iefd}) so that the original $k_{1}+k_{2}-n+c$ information
qubits per frame are available for processing at the receiving end.%
%TCIMACRO{\TeXButton{end proof}{\end{proof}}}%
%BeginExpansion
\end{proof}%
%EndExpansion

\subsection{Special Case with Coherent Teleportation Decoding}

\label{sec:coh-tele-EAQCCs}We now detail a special case of the above codes in
this final section. These codes are interesting because the information qubits
teleport coherently to other physical qubits when encoding and decoding is complete.

\begin{lemma}
\label{lemma:iefd}Suppose that the Smith form of $F\left(  D\right)  $ in
(\ref{eq:last-eq-fefd}) is%
\[
F\left(  D\right)  =A_{F}\left(  D\right)
\begin{bmatrix}
\Gamma_{F}\left(  D\right)  & 0
\end{bmatrix}
B_{F}\left(  D\right)  ,
\]
where $A_{F}\left(  D\right)  $ is $\left(  n-k_{1}\right)  \times\left(
n-k_{1}\right)  $, $\Gamma_{F}\left(  D\right)  $ is an $\left(
n-k_{1}\right)  \times\left(  n-k_{1}\right)  $\ diagonal matrix whose entries
are powers of $D$, and $B_{F}\left(  D\right)  $ is $k_{2}\times k_{2}$. Then
the resulting entanglement-assisted code admits an encoding circuit with both
infinite-depth and finite-depth operations and admits a decoding circuit with
finite-depth operations only. The information qubits also teleport coherently
to other physical qubits for this special case of codes.
\end{lemma}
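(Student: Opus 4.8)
The plan is to continue the decomposition begun in Lemma~\ref{lemma:general-ops}, starting from the quantum check matrix in (\ref{eq:last-eq-fefd}), and to mirror the encoding/decoding argument of Lemma~\ref{lemma:ieid} while exploiting the extra hypothesis on $F\left(D\right)$ to keep the decoding circuit finite-depth and to make the coherent teleportation explicit. First I would apply the Smith form of $F\left(D\right)$: performing the row operations in $A_F^{-1}\left(D\right)$ on the first $n-k_1$ rows and the column operations in $B_F^{-1}\left(D\right)$ on the last $k_2$ columns (with Hadamard, CNOT, and SWAP gates) replaces $F\left(D\right)$ by $\left[\Gamma_F\left(D\right)\ 0\right]$. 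Because every entry of $\Gamma_F\left(D\right)$ is a power of $D$, all of these column operations are finite-depth, and the disturbance they create on the identity block in the ``X'' matrix can be undone by free row operations on the bottom $n-k_2$ rows (which are null on the ``Z'' side). The monomial pivots of $\Gamma_F\left(D\right)$ can then be used --- again by finite-depth row operations on the ``Z'' side --- to process the corresponding columns, isolating the part of $E\left(D\right)$ whose invariant factors are not powers of $D$ (the block we are told exists in this section) into a canonical submatrix of the same shape as (\ref{eq:ieid-desired-matrix}).

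Next I would dispatch that residual submatrix exactly as in the proof of Lemma~\ref{lemma:ieid}: taking its Smith form $A_E\left(D\right)\,\mathrm{diag}\left(\Gamma_1\left(D\right),\Gamma_2\left(D\right),0\right)\,B_E\left(D\right)$, the $\Gamma_1\left(D\right)$ part (monomial entries) is absorbed by finite-depth operations, while $\Gamma_2\left(D\right)$ (arbitrary polynomial entries) is realized on the encoding side by the shift-invariant infinite-depth filters of Section~\ref{sec:infinite-depth-ops} corresponding to the rational polynomials $\gamma_{2,i}^{-1}\left(D^{-1}\right)$. I would then augment the resulting quantum check matrix with the stabilizers of $c$ ebits, $2\left(n-c\right)-k_1-k_2$ ancilla qubits, and $k_1+k_2-n+c$ information qubits to obtain a genuine commuting stabilizer, verifying commutativity through the shifted symplectic product matrix $\Omega\left(D\right)$, which is preserved by every column operation used, finite- or infinite-depth, because $\gamma_{2,i}\left(D^{-1}\right)\cdot\gamma_{2,i}^{-1}\left(D^{-1}\right)=1$.

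For the decoding direction I would track the information-qubit matrix in parallel with the stabilizer, as in the examples of Section~\ref{sec:infinite-depth-ops}. The key point is that reversing only the finite-depth part of the encoder, together with the finite-depth CNOT/Hadamard pattern that ``undoes'' the infinite-depth filter as in the passage around (\ref{eq:decoding-iefd}), drives the information-qubit matrix into a form whose ``Z'' and ``X'' blocks differ from the clean $\left[\,0\ \cdots\ I\,\right]$ layout only by rows that are themselves multiples $L^T\left(D^{-1}\right)$, etc., of stabilizer rows; adding those stabilizer rows (a free operation in the binary-polynomial picture) reveals that the logical operators have migrated onto a fresh block of $k_1+k_2-n+c$ physical qubits per frame, which is precisely the statement that the information qubits teleport coherently. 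The full-row-rank, monomial hypothesis on $F\left(D\right)$ is what guarantees that the only rational-polynomial factors appearing in the encoder are the $\gamma_{2,i}^{-1}\left(D^{-1}\right)$, so that no infinite-depth step is forced on the decoding side. Assembling the pieces yields the claimed $\left[\left[n,k_1+k_2-n+c;c\right]\right]$ code.

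The hard part will be the bookkeeping in the first two steps: confirming that after the $\Gamma_F\left(D\right)$ reduction the leftover non-monomial structure is exactly a $\Gamma_2\left(D\right)$-type block of the size needed to reuse the procedure of Lemma~\ref{lemma:ieid} essentially verbatim, and that the finite-depth decoder genuinely strips every $\Gamma_2^{-1}\left(D^{-1}\right)$ and $L^T\left(D^{-1}\right)$ factor off the logical operators using only stabilizer multiplication. I expect the invariance of the shifted symplectic product under shift-invariant Clifford column operations to be the organizing principle that keeps all of these manipulations consistent.
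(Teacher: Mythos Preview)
Your proposal misidentifies what the hypothesis on $F(D)$ buys. You write that the monomial, full-row-rank Smith form of $F(D)$ ``is what guarantees that \dots\ no infinite-depth step is forced on the decoding side,'' but Lemma~\ref{lemma:ieid} already achieves a finite-depth decoder without any such hypothesis on $F(D)$. The role of the stronger assumption in the paper is different: after taking the Smith form of $E'(D)=A_F^{-1}(D)E(D)$ and counteracting the resulting row operations on the $\Gamma_F(D)$ block by column operations, the block that ends up alongside $\Gamma_2(D)$ is a \emph{diagonal matrix $\Gamma_2'(D)$ with monomial entries}, not a general lower-triangular $L(D)$ as in~(\ref{eq:ieid-desired-matrix}). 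You plan to reduce to the shape~(\ref{eq:ieid-desired-matrix}) and then replay the Lemma~\ref{lemma:ieid} encoding/decoding ``essentially verbatim,'' but that procedure (the passage around~(\ref{eq:decoding-iefd})) returns the information-qubit matrix to the \emph{same} column block it started on --- it does not produce coherent teleportation, so your assertion that the logical operators ``have migrated onto a fresh block'' is not supported by the steps you list.

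The paper obtains teleportation by running a \emph{different} encoder on the special submatrix $\left[\Gamma_2(D)\ \ \Gamma_2'(D)\right]$: the infinite-depth filters $\Gamma_2^{-1}(D)$ act on the last $c-s$ columns (the information-qubit positions), not on Alice's half-ebit columns as in Lemma~\ref{lemma:ieid}, and the finite-depth CNOTs implement the diagonal-monomial $\Gamma_2'(D)$. Bob then decodes with CNOTs between his ebit halves and the \emph{middle} $c-s$ columns; because $\Gamma_2'(D^{-1})=\Gamma_2'^{-1}(D)$ these are finite-depth, and after adding stabilizer rows the logical operators sit on the middle block rather than the last --- that is the coherent teleportation. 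Your step of using ``the monomial pivots of $\Gamma_F(D)$ \dots\ by finite-depth row operations on the Z side to process the corresponding columns'' is also not well-posed: $\Gamma_F(D)$ and $E'(D)$ occupy different columns of the same rows, so row operations cannot use the former to clear the latter. You need instead to decompose $E'(D)$ via its Smith form and then argue that the disturbance this causes in the $\Gamma_F$ block can be column-reduced back to a diagonal of monomials; that is precisely where the hypothesis on $F(D)$ is consumed.
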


%

%TCIMACRO{\TeXButton{begin proof}{\begin{proof}}}%
%BeginExpansion
\begin{proof}%
%EndExpansion
We perform all the operations in Lemma~\ref{lemma:general-ops} to obtain the
quantum check matrix in (\ref{eq:last-eq-fefd}). Then perform the row
operations in $A_{F}^{-1}\left(  D\right)  $ and the column operations in
$B_{F}^{-1}\left(  D\right)  $. The quantum check matrix becomes%
\[
\left[  \left.
\begin{array}
[c]{ccc}%
E^{\prime}\left(  D\right)  & \Gamma_{F}\left(  D\right)  & 0\\
0 & 0 & 0
\end{array}
\right\vert
\begin{array}
[c]{ccc}%
0 & 0 & 0\\
I & 0 & 0
\end{array}
\right]  ,
\]
where $E^{\prime}\left(  D\right)  =A_{F}^{-1}\left(  D\right)  E\left(
D\right)  $. The Smith form of $E^{\prime}\left(  D\right)  $ is%
\[
E^{\prime}\left(  D\right)  =A_{E^{\prime}}\left(  D\right)
\begin{bmatrix}
\Gamma_{1}\left(  D\right)  & 0 & 0\\
0 & \Gamma_{2}\left(  D\right)  & 0\\
0 & 0 & 0
\end{bmatrix}
B_{E^{\prime}}\left(  D\right)  ,
\]
where $A_{E^{\prime}}\left(  D\right)  $ is $\left(  n-k_{1}\right)
\times\left(  n-k_{1}\right)  $, $\Gamma_{1}\left(  D\right)  $ is an $s\times
s$ diagonal matrix whose entries are powers of $D$, $\Gamma_{2}\left(
D\right)  $ is a $\left(  c-s\right)  \times\left(  c-s\right)  $ diagonal
matrix whose entries are arbitrary polynomials, and $B_{E^{\prime}}\left(
D\right)  $ is $\left(  n-k_{2}\right)  \times\left(  n-k_{2}\right)  $.

Now perform the row operations in $A_{E^{\prime}}^{-1}\left(  D\right)  $ and
the column operations in $B_{E^{\prime}}^{-1}\left(  D\right)  $. It is
possible to counteract the effect of the row operations on $\Gamma_{F}\left(
D\right)  $ by performing column operations, and it is possible to counteract
the effect of the column operations on the identity matrix in the
\textquotedblleft X\textquotedblright\ matrix by performing row operations.
The quantum check matrix becomes%
\[
\left[  \left.
\begin{array}
[c]{ccccccc}%
\Gamma_{1}(D) & 0 & 0 & \Gamma_{1}^{\prime}(D) & 0 & 0 & 0\\
0 & \Gamma_{2}(D) & 0 & 0 & \Gamma_{2}^{\prime}(D) & 0 & 0\\
0 & 0 & 0 & 0 & 0 & \Gamma_{3}^{\prime}(D) & 0\\
0 & 0 & 0 & 0 & 0 & 0 & 0
\end{array}
\right\vert
\begin{array}
[c]{ccc}%
0 & 0 & 0\\
0 & 0 & 0\\
0 & 0 & 0\\
I & 0 & 0
\end{array}
\right]  ,
\]
where $\Gamma_{1}^{\prime}(D)$, $\Gamma_{2}^{\prime}(D)$, and $\Gamma
_{3}^{\prime}(D)$ represent the respective $s\times s$, $\left(  c-s\right)
\times\left(  c-s\right)  $, and $\left(  n-k_{1}-c\right)  \times\left(
n-k_{1}-c\right)  $ diagonal matrices resulting from counteracting the effect
of row operations $A_{E^{\prime}}^{-1}\left(  D\right)  $ on $\Gamma
_{F}\left(  D\right)  $.  We use Hadamard and CNOT\ gates to clear the entries
in $\Gamma_{1}^{\prime}\left(  D\right)  $. The quantum check matrix
becomes%
\[
\left[  \left.
\begin{array}
[c]{ccccccc}%
\Gamma_{1}(D) & 0 & 0 & 0 & 0 & 0 & 0\\
0 & \Gamma_{2}(D) & 0 & 0 & \Gamma_{2}^{\prime}(D) & 0 & 0\\
0 & 0 & 0 & 0 & 0 & \Gamma_{3}^{\prime}(D) & 0\\
0 & 0 & 0 & 0 & 0 & 0 & 0
\end{array}
\right\vert
\begin{array}
[c]{ccc}%
0 & 0 & 0\\
0 & 0 & 0\\
0 & 0 & 0\\
I & 0 & 0
\end{array}
\right]  .
\]
The first $s$ rows with $\Gamma_{1}(D)$ and rows $n-k_{1}-c+1,\ldots,n-k_{1}-c+s$
with the identity matrix on the \textquotedblleft X\textquotedblright\ side
stabilize a set of $s$ ebits. The $n-k_{1}-c$ rows with $\Gamma_{3}^{\prime}(D)$
and the $n-k_{2}-c$ rows with identity in the \textquotedblleft
X\textquotedblright\ matrix stabilize a set of $2\left(  n-c\right)
-k_{1}-k_{2}$\ ancilla qubits (up to Hadamard gates). The $s$ and
$k_{2}-n+k_{1}$ columns with zeros in both the \textquotedblleft
Z\textquotedblright\ and \textquotedblleft X\textquotedblright\ matrices
correspond to information qubits. The decomposition of these rows is now complete.

We need to finish processing the $c-s$ rows with $\Gamma_{2}\left(  D\right)
$ and $\Gamma_{2}^{\prime}\left(  D\right)  $ as entries and the $c-s$ rows
of the identity in the \textquotedblleft X\textquotedblright\ matrix. We
construct a submatrix of the above quantum check matrix:%
\begin{equation}
\left[  \left.
\begin{array}
[c]{cc}%
\Gamma_{2}\left(  D\right)  & \Gamma_{2}^{\prime}\left(  D\right) \\
0 & 0
\end{array}
\right\vert
\begin{array}
[c]{cc}%
0 & 0\\
I & 0
\end{array}
\right]  . \label{eq:submatrix}%
\end{equation}

We describe a procedure to encode the above entries with $c-s$ ebits and $c-s$
information qubits using infinite-depth operations. Consider the following
stabilizer matrix%
\begin{equation}
\left[  \left.
\begin{array}
[c]{ccc}%
I & I & 0\\
0 & 0 & 0
\end{array}
\right\vert
\begin{array}
[c]{ccc}%
0 & 0 & 0\\
I & I & 0
\end{array}
\right]  , \label{eq:initial-stab-submatrix}%
\end{equation}
where all identity and null matrices are $\left(  c-s\right)  \times\left(
c-s\right)  $. The above matrix stabilizes a set of $c-s$ ebits and $c-s$
information qubits. Bob's half of the ebits are the $c-s$ columns on the left
in both the \textquotedblleft Z\textquotedblright\ and \textquotedblleft
X\textquotedblright\ matrices and Alice's half are the next $c-s$ columns. We
also track the logical operators for the information qubits to verify that the
circuit encodes and decodes properly. The information-qubit matrix is as
follows%
\begin{equation}
\left[  \left.
\begin{array}
[c]{ccc}%
0 & 0 & I\\
0 & 0 & 0
\end{array}
\right\vert
\begin{array}
[c]{ccc}%
0 & 0 & 0\\
0 & 0 & I
\end{array}
\right]  , \label{eq:init-info-matrix-sub}%
\end{equation}
where all matrices are again $\left(  c-s\right)  \times\left(  c-s\right)  $.
Alice performs Hadamard gates on her first $c-s$ qubits and then performs CNOT
gates from her first $c-s$ qubits to her last $c-s$ qubits to transform
(\ref{eq:initial-stab-submatrix}) to the following stabilizer%
\[
\left[  \left.
\begin{array}
[c]{ccc}%
I & 0 & 0\\
0 & I & 0
\end{array}
\right\vert
\begin{array}
[c]{ccc}%
0 & I & \Gamma_{2}^{\prime}\left(  D\right) \\
I & 0 & 0
\end{array}
\right]  .
\]
The information-qubit matrix in (\ref{eq:init-info-matrix-sub})\ becomes%
\[
\left[  \left.
\begin{array}
[c]{ccc}%
0 & \Gamma_{2}^{\prime}\left(  D^{-1}\right)  & I\\
0 & 0 & 0
\end{array}
\right\vert
\begin{array}
[c]{ccc}%
0 & 0 & 0\\
0 & 0 & I
\end{array}
\right]  .
\]
Alice then performs infinite-depth operations on her last $c-s$ qubits. These
infinite-depth operations correspond to the elements of $\Gamma_{2}%
^{-1}\left(  D\right)  $. She finally performs Hadamard gates on her $2\left(
c-s\right)  $ qubits. The stabilizer becomes%
\begin{equation}
\left[  \left.
\begin{array}
[c]{ccc}%
I & I & \Gamma_{2}^{-1}\left(  D\right)  \Gamma_{2}^{\prime}\left(
D\right) \\
0 & 0 & 0
\end{array}
\right\vert
\begin{array}
[c]{ccc}%
0 & 0 & 0\\
I & I & 0
\end{array}
\right]  , \label{eq:next-stab-submatrix}%
\end{equation}
and the information-qubit matrix becomes%
\begin{equation}
\left[  \left.
\begin{array}
[c]{ccc}%
0 & 0 & 0\\
0 & 0 & \Gamma_{2}^{-1}\left(  D\right)
\end{array}
\right\vert
\begin{array}
[c]{ccc}%
0 & \Gamma_{2}^{\prime}\left(  D^{-1}\right)  & \Gamma_{2}\left(
D^{-1}\right) \\
0 & 0 & 0
\end{array}
\right]  . \label{eq:final-encoded-info-sub}%
\end{equation}
The stabilizer in (\ref{eq:next-stab-submatrix})\ is equivalent to the
following stabilizer by row operations (premultiplying the first $c-s$ rows by
$\Gamma_{2}\left(  D\right)  $):%
\begin{equation}
\left[  \left.
\begin{array}
[c]{ccc}%
\Gamma_{2}\left(  D\right)  & \Gamma_{2}\left(  D\right)  & \Gamma
_{2}^{^{\prime}}\left(  D\right) \\
0 & 0 & 0
\end{array}
\right\vert
\begin{array}
[c]{ccc}%
0 & 0 & 0\\
I & I & 0
\end{array}
\right]  . \label{eq:augmented-submatrix}%
\end{equation}
The measurements that Bob performs have finite weight because the row
operations are multiplications of the rows by the arbitrary polynomials in
$\Gamma_{2}\left(  D\right)  $. Alice thus encodes a code equivalent to the
desired quantum check matrix in\ (\ref{eq:submatrix})\ using $c-s$ ebits and
$c-s$ information qubits.

We now discuss decoding the stabilizer in (\ref{eq:next-stab-submatrix}) and
information qubits. Bob performs CNOTs from the first $c-s$ qubits to the next
$c-s$ qubits. The stabilizer becomes%
\begin{equation}
\left[  \left.
\begin{array}
[c]{ccc}%
0 & I & \Gamma_{2}^{-1}\left(  D\right)  \Gamma_{2}^{\prime}\left(
D\right) \\
0 & 0 & 0
\end{array}
\right\vert
\begin{array}
[c]{ccc}%
0 & 0 & 0\\
I & 0 & 0
\end{array}
\right]  , \label{eq:first-decode-stab-submatrix}%
\end{equation}
and the information-qubit matrix does not change. Bob uses Hadamard and
finite-depth CNOT\ gates to multiply the last $c-s$ columns in the
\textquotedblleft Z\textquotedblright\ matrix by $\Gamma_{2}^{^{\prime}%
}\left(  D^{-1}\right)  \Gamma_{2}\left(  D\right)  $ and add the result to
the middle $c-s$\ columns. It is possible to use finite-depth operations
because the entries of $\Gamma_{2}^{\prime}\left(  D\right)  $ are all
powers of $D$ so that $\Gamma_{2}^{\prime}\left(  D^{-1}\right)
=\Gamma_{2}^{^{\prime}-1}\left(  D\right)  $. The stabilizer in
(\ref{eq:first-decode-stab-submatrix})\ becomes%
\[
\left[  \left.
\begin{array}
[c]{ccc}%
0 & 0 & \Gamma_{2}^{-1}\left(  D\right)  \Gamma_{2}^{\prime}\left(
D\right) \\
0 & 0 & 0
\end{array}
\right\vert
\begin{array}
[c]{ccc}%
0 & 0 & 0\\
I & 0 & 0
\end{array}
\right]  ,
\]
and the information-qubit matrix in (\ref{eq:final-encoded-info-sub})\ becomes%
\[
\left[  \left.
\begin{array}
[c]{ccc}%
0 & 0 & 0\\
0 & \Gamma_{2}^{\prime}\left(  D^{-1}\right)  & \Gamma_{2}^{-1}\left(
D\right)
\end{array}
\right\vert
\begin{array}
[c]{ccc}%
0 & \Gamma_{2}^{\prime}\left(  D^{-1}\right)  & 0\\
0 & 0 & 0
\end{array}
\right]  .
\]
We premultiply the first $c-s$ rows of the stabilizer by $\Gamma_{2}%
^{^{\prime}}\left(  D^{-1}\right)  $ and add the result to the second $c-s$
rows of the information-qubit matrix. These row operations from the stabilizer
to the information-qubit matrix result in the information-qubit matrix having
pure logical operators for the middle $c-s$ qubits. Perform Hadamard gates on
the second set of $c-s$ qubits. The resulting information-qubit matrix is as
follows%
\begin{equation}
\left[  \left.
\begin{array}
[c]{ccc}%
0 & \Gamma_{2}^{\prime}\left(  D^{-1}\right)  & 0\\
0 & 0 & 0
\end{array}
\right\vert
\begin{array}
[c]{ccc}%
0 & 0 & 0\\
0 & \Gamma_{2}^{\prime}\left(  D^{-1}\right)  & 0
\end{array}
\right]  , \label{eq:second-class-info-decoded}%
\end{equation}
so that the information qubits are available at the end of decoding.
Processing may delay or advance them with respect to their initial locations
because the matrix $\Gamma_{2}^{\prime}\left(  D^{-1}\right)  $ is diagonal
with powers of $D$. We can determine that the information qubits teleport
coherently\ from the last set of $c-s$ qubits to the second set of $c-s$
qubits in every frame by comparing (\ref{eq:second-class-info-decoded}) to
(\ref{eq:init-info-matrix-sub}).

The overall procedure for encoding is to begin with a set of $c$ ebits,
$2\left(  n-c\right)  -k_{1}-k_{2}$ ancilla qubits, and $k_{1}+k_{2}-n+c$
information qubits. Alice performs the infinite-depth operations detailed
in\ (\ref{eq:submatrix}-\ref{eq:augmented-submatrix}) for $c-s$ of the ebits.
Alice then performs the finite-depth operations detailed in the proofs of this
lemma and Lemma~\ref{lemma:general-ops} in reverse order. The resulting
stabilizer has equivalent error-correcting properties to the quantum check
matrix in (\ref{eq:orig-gens}).

The receiver decodes by first performing all of the finite-depth operations in
reverse order. The receiver then decodes the infinite-depth operations by the
procedure listed in (\ref{eq:first-decode-stab-submatrix}%
-\ref{eq:second-class-info-decoded}) so that the original $k_{1}+k_{2}-n+c$
information qubits per frame are available for processing at the receiving
end.%
%TCIMACRO{\TeXButton{end proof}{\end{proof}}}%
%BeginExpansion
\end{proof}%
%EndExpansion

\begin{example}
\label{ex:brun-example}Consider a classical convolutional code with the
following check matrix:%
\[
H\left(  D\right)  =\left[
\begin{array}
[c]{cc}%
1 & 1+D
\end{array}
\right]  .
\]
We can use the above check matrix in an entanglement-assisted quantum
convolutional code to correct for both bit flips and phase flips. We form the
following quantum check matrix:%
\begin{equation}
\left[  \left.
\begin{array}
[c]{cc}%
1 & 1+D\\
0 & 0
\end{array}
\right\vert
\begin{array}
[c]{cc}%
0 & 0\\
1 & 1+D
\end{array}
\right]  . \label{eq:brun-desired-stab}%
\end{equation}
We first perform some manipulations to put the above quantum check matrix into
a standard form. Perform a CNOT\ from qubit one to qubit two in the same frame
and in the next frame. The above matrix becomes%
\[
\left[  \left.
\begin{array}
[c]{cc}%
D^{-1}+1+D & 1+D\\
0 & 0
\end{array}
\right\vert
\begin{array}
[c]{cc}%
0 & 0\\
1 & 0
\end{array}
\right]  .
\]
Perform a Hadamard gate on qubits one and two. The matrix becomes%
\[
\left[  \left.
\begin{array}
[c]{cc}%
0 & 0\\
1 & 0
\end{array}
\right\vert
\begin{array}
[c]{cc}%
D^{-1}+1+D & 1+D\\
0 & 0
\end{array}
\right]  .
\]
Perform a CNOT\ from qubit one to qubit two. The matrix becomes%
\[
\left[  \left.
\begin{array}
[c]{cc}%
0 & 0\\
1 & 0
\end{array}
\right\vert
\begin{array}
[c]{cc}%
D^{-1}+1+D & D^{-1}\\
0 & 0
\end{array}
\right]  .
\]
Perform a row operation that delays the first row by $D$. Perform a Hadamard
on both qubits. The stabilizer becomes%
\[
\left[  \left.
\begin{array}
[c]{cc}%
1+D+D^{2} & 1\\
0 & 0
\end{array}
\right\vert
\begin{array}
[c]{cc}%
0 & 0\\
1 & 0
\end{array}
\right]  .
\]
The above matrix is now in standard form. The matrix $F\left(  D\right)  =1$
as in (\ref{eq:last-eq-fefd}) so that its only invariant factor is equal to
one. The code falls into the second class of entanglement-assisted quantum
convolutional codes. We begin encoding with one ebit and one information qubit
per frame. The stabilizer matrix for the unencoded stream is as follows:%
\[
\left[  \left.
\begin{array}
[c]{ccc}%
1 & 1 & 0\\
0 & 0 & 0
\end{array}
\right\vert
\begin{array}
[c]{ccc}%
0 & 0 & 0\\
1 & 1 & 0
\end{array}
\right]  ,
\]
and the information-qubit matrix is as follows:%
\[
\left[  \left.
\begin{array}
[c]{ccc}%
0 & 0 & 0\\
0 & 0 & 1
\end{array}
\right\vert
\begin{array}
[c]{ccc}%
0 & 0 & 1\\
0 & 0 & 0
\end{array}
\right]  .
\]
Perform a Hadamard on qubit two and a CNOT\ from qubit two to qubit three so
that the above stabilizer becomes%
\[
\left[  \left.
\begin{array}
[c]{ccc}%
1 & 0 & 0\\
0 & 1 & 0
\end{array}
\right\vert
\begin{array}
[c]{ccc}%
0 & 1 & 1\\
1 & 0 & 0
\end{array}
\right]  ,
\]
and the information-qubit matrix becomes%
\[
\left[  \left.
\begin{array}
[c]{ccc}%
0 & 0 & 0\\
0 & 1 & 1
\end{array}
\right\vert
\begin{array}
[c]{ccc}%
0 & 0 & 1\\
0 & 0 & 0
\end{array}
\right]  .
\]%
%\begin{figure*}
%[ptb]
%\begin{center}
%\includegraphics[
%natheight=10.380300in,
%natwidth=15.453300in,
%height=4.2393in,
%width=6.3002in
%]
%{eaqcc-example-brun.png}
%\caption{(Color online) The encoding and decoding circuits for the
%entanglement-assisted quantum convolutional code in
%Example~\ref{ex:brun-example}. The third series of gates in the above encoding
%circuit is an infinite-depth operation. The other operations in the encoding
%circuit are finite-depth. The decoding circuit has finite-depth operations
%only.}
%\label{fig:example-brun}
%\end{center}
%\end{figure*}%
Perform an infinite-depth operation corresponding to the rational polynomial
$1/\left(  1+D+D^{2}\right)  $ on qubit three. Follow with a Hadamard gate on
qubits two and three. The stabilizer matrix becomes
\begin{equation}
\left[  \left.
\begin{array}
[c]{ccc}%
1 & 1 & 1/\left(  1+D+D^{2}\right) \\
0 & 0 & 0
\end{array}
\right\vert
\begin{array}
[c]{ccc}%
0 & 0 & 0\\
1 & 1 & 0
\end{array}
\right]  , \label{eq:brun-stab}%
\end{equation}
and the information-qubit matrix becomes%
\begin{equation}
\left[  \left.
\begin{array}
[c]{ccc}%
0 & 0 & 1/\left(  1+D+D^{2}\right) \\
0 & 0 & 0
\end{array}
\right\vert
\begin{array}
[c]{ccc}%
0 & 0 & 0\\
0 & 1 & 1+D^{-1}+D^{-2}%
\end{array}
\right]  . \label{eq:brun-info}%
\end{equation}
Perform the finite-depth operations above in reverse order so that the
stabilizer becomes%
\[
\left[  \left.
\begin{array}
[c]{ccc}%
D^{-1} & \frac{1}{1+D+D^{2}} & \frac{1+D}{1+D+D^{2}}\\
0 & 0 & 0
\end{array}
\right\vert
\begin{array}
[c]{ccc}%
0 & 0 & 0\\
1 & 1 & 1+D
\end{array}
\right]  ,
\]
and the information-qubit matrix becomes%
\[
\left[  \left.
\begin{array}
[c]{ccc}%
0 & \frac{D^{-1}+D^{-2}}{1+D+D^{2}} & \frac{1}{1+D+D^{2}}\\
0 & 0 & 0
\end{array}
\right\vert
\begin{array}
[c]{ccc}%
0 & 0 & 0\\
0 & D^{-1}+D^{-2} & D^{-1}%
\end{array}
\right]  .
\]
The above stabilizer is equivalent to the desired quantum check matrix in
(\ref{eq:brun-desired-stab}) by a row operation that multiplies its first row
by $1+D+D^{2}$.\newline\newline The receiver decodes by performing the
finite-depth encoding operations in reverse order and gets the stabilizer in
(\ref{eq:brun-stab})\ and the information-qubit matrix in (\ref{eq:brun-info}%
). The receiver performs a CNOT\ from qubit one to qubit two and follows with
a CNOT from qubit two to qubit three in the same frame, in an advanced frame,
and in a twice-advanced frame. Finally perform a Hadamard gate on qubits two
and three. The stabilizer becomes%
\[
\left[  \left.
\begin{array}
[c]{ccc}%
0 & 0 & 0\\
0 & 0 & 0
\end{array}
\right\vert
\begin{array}
[c]{ccc}%
0 & 0 & 1/\left(  1+D+D^{2}\right) \\
0 & 0 & 0
\end{array}
\right]  ,
\]
and the information-qubit matrix becomes%
\[
\left[  \left.
\begin{array}
[c]{ccc}%
0 & 0 & 0\\
0 & 1 & 0
\end{array}
\right\vert
\begin{array}
[c]{ccc}%
0 & 1 & 1/\left(  1+D+D^{2}\right) \\
0 & 0 & 0
\end{array}
\right]  .
\]
The receiver decodes the information qubits successfully because a row
operation from the first row of the stabilizer to the first row of the
information-qubit matrix gives the proper logical operators for the
information qubits. Figure~7 of Ref.~\cite{arx2007wildeEAQCC} details the above encoding
and decoding operations for this entanglement-assisted quantum convolutional code.
\end{example}

\subsection{Discussion}

\label{sec:2nd-class-assumpts}This second class of codes assumes that
noiseless encoding is available. We require this assumption because the
encoding circuit employs infinite-depth encoding operations.

If an error does occur during the encoding process, it can propagate
infinitely through the encoded qubit stream. The result of a single encoding
error can distort both the encoded quantum information, the syndromes that
result from measurements, and the final recovery operations based on the syndromes.

We may be able to relax the noiseless encoding assumption if nearly noiseless
encoding is available. The probability of an error would have to be negligible
in order to ensure that the probability for a catastrophic failure is
negligible. One way to lower the probability of an encoding error is to encode
first with a quantum block code and then further encode with our quantum
convolutional coding method. Many classical coding systems exploit this
technique, the most popular of which is a Reed-Solomon encoder followed by a
convolutional encoder.

\section{Closing Remarks}

This chapter develops the theory of entanglement-assisted quantum
convolutional coding for CSS\ codes. We show several methods for importing two
arbitrary classical binary convolutional codes for use in an
entanglement-assisted quantum convolutional code. Our methods outline
different ways for encoding and decoding our entanglement-assisted quantum
convolutional codes.

Our first class of codes employs only finite-depth operations in their
encoding and decoding procedures. These codes are the most useful in practice
because they do not have the risk of catastrophic error propagation. An error
that occurs during encoding, measurement, recovery, or decoding propagates
only to a finite number of neighboring qubits.

Our second class of codes uses infinite-depth operations during encoding. This
assumption is reasonable only if noiseless encoding is available. The method
of concatenated coding is one way to approach nearly noiseless encoding in practice.

We suggest several lines of inquiry from here. Our codes are not only useful
for quantum communication, but also should be useful for private classical
communication because of the well-known connection between a quantum channel
and private classical channel \cite{ieee2005dev}. It may make sense from a
practical standpoint to begin investigating the performance of our codes for
encoding secret classical messages. The commercial success of quantum key
distribution \cite{bb84}\ for the generation of a private shared secret key
motivates this investigation. It is also interesting to determine which
entanglement-assisted codes can correct for errors on the receiver's side.
Codes that possess this property will be more useful in practice.

We hope that our theory of entanglement-assisted quantum convolutional coding
provides a step in the direction of finding quantum codes that approach the
ultimate capacity of an entanglement-assisted quantum channel.

\chapter{Entanglement-Assisted Quantum Convolutional Coding: The General Case}

\label{chp:general-case}\begin{saying}
These patterns that anticommute,\\
They laugh and they think it's so cute,\\
Expand until fit,\\
Then ole Gram and Schmidt,\\
They'll line up with a grand salute.
\end{saying}In this chapter, we show how to encode and decode an
entanglement-assisted quantum convolutional code that does not necessarily
possess the CSS\ structure. The methods in this chapter represent a
significant extension of our work on CSS\ entanglement-assisted quantum
convolutional codes in the previous chapter.

In particular, we develop a set of techniques that make the commutation
relations for the generators of a general code the same as those of entangled
qubits (ebits) and ancilla qubits. This procedure first \textquotedblleft
expands\textquotedblright\ the check matrix for a quantum convolutional code
and applies an extended version of the symplectic Gram-Schmidt
orthogonalization procedure \cite{arx2006brun}\ that incorporates the binary
polynomials representing quantum convolutional codes. We then show how to
encode a stream of information qubits, ancilla qubits, and ebits to have the
error-correcting properties of the desired code. We follow by detailing the
decoding circuit that Bob employs at the receiving end of the channel. The
algorithms for encoding and decoding use similar techniques to those outlined
in the previous chapter. The encoding circuits incorporate both finite-depth
and infinite-depth\ operations as discussed in respective
Sections~\ref{sec:finite-depth-ops}\ and \ref{sec:infinite-depth-ops}\ and the
decoding circuits incorporate finite-depth operations only.

One benefit of the techniques developed in this chapter is that the quantum
code designer can produce an entanglement-assisted quantum convolutional code
from a classical quaternary convolutional code. More generally, quantum
convolutional code designers now have the freedom to design quantum
convolutional codes to have desirable error-correcting properties without
having to search for codes that satisfy a restrictive commutativity constraint.

We structure this chapter as follows. In Section \ref{sec:expand}, we present
an example that demonstrates how to expand the check matrix of a set of quantum
convolutional generators and then show how to expand an arbitrary check
matrix. We show in Section~\ref{sec:GS}\ how to compute the symplectic
Gram-Schmidt orthogonalization procedure for the example in
Section~\ref{sec:expand} and then generalize this procedure to work for an
arbitrary set of quantum convolutional generators. The Gram-Schmidt
orthogonalization technique reduces the quantum check matrix to have the same
commutation relations as a set of ebits and ancilla qubits. This technique is
essential for determining an encoding circuit that encodes a stream of
information qubits, ancilla qubits, and ebits. Section \ref{sec:encode-decode}%
\ gives the algorithms for computing the encoding and decoding circuits for an
arbitrary entanglement-assisted quantum convolutional code. We present a
detailed example in Section~\ref{sec:examples}\ that illustrates all of the
procedures in this chapter. We finish with a discussion of the practical
issues involved in using these entanglement-assisted quantum convolutional
codes and present some concluding remarks.

The focus of the next two sections is to elucidate techniques for reducing a
set of quantum convolutional generators to have the standard commutation
relations in (\ref{eq:comm-relations}), or equivalently, the standard
symplectic relations in (\ref{eq:standard-symp-form}). The difference between
the current techniques and the former techniques
\cite{arx2006brun,science2006brun,arx2007wildeCED,prep2007shaw} is that the
current techniques operate on generators that have a convolutional form rather
than a block form. Section~\ref{sec:expand}\ shows how to expand a set of
quantum convolutional generators to simplify their commutation relations.
Section~\ref{sec:GS} outlines a symplectic Gram-Schmidt orthogonalization
algorithm that uses binary polynomial operations to reduce the commutation
relations of the expanded generators to have the standard form
in\ (\ref{eq:standard-symp-form}).

\section{The Expansion of Quantum Convolutional Generators}

\label{sec:expand}We begin this section by demonstrating how to expand a
particular generator that we eventually incorporate in an
entanglement-assisted quantum convolutional code. We later generalize this
example and the expansion technique to an arbitrary set of generators. This
technique is important for determining how to utilize entanglement in the form
of ebits in a quantum convolutional code.

\subsection{Example of the Expansion}

Let us first suppose that we have one convolutional generator:%
\begin{equation}
\cdots\left\vert
\begin{array}
[c]{c}%
I
\end{array}
\right.  \left\vert
\begin{array}
[c]{c}%
X
\end{array}
\right\vert
\begin{array}
[c]{c}%
Z
\end{array}
\left\vert
\begin{array}
[c]{c}%
I
\end{array}
\right\vert \cdots\label{eq:pauli-conv-simple}%
\end{equation}
This generator does not yet represent a valid quantum convolutional code
because it anticommutes with a shift of itself to the left or to the right by
one qubit. We are not concerned with the error-correcting properties of this
generator but merely want to illustrate the technique of expanding it.

Let us for now consider a block version of this code that operates on eight
physical qubits with six total generators. The generators are as follows:%
\begin{equation}
\left.
\begin{array}
[c]{c}%
X\\
I\\
I\\
I\\
I\\
I
\end{array}
\right\vert
\begin{array}
[c]{c}%
Z\\
X\\
I\\
I\\
I\\
I
\end{array}
\left\vert
\begin{array}
[c]{c}%
I\\
Z\\
X\\
I\\
I\\
I
\end{array}
\right\vert
\begin{array}
[c]{c}%
I\\
I\\
Z\\
X\\
I\\
I
\end{array}
\left\vert
\begin{array}
[c]{c}%
I\\
I\\
I\\
Z\\
X\\
I
\end{array}
\right\vert
\begin{array}
[c]{c}%
I\\
I\\
I\\
I\\
Z\\
X
\end{array}
\left\vert
\begin{array}
[c]{c}%
I\\
I\\
I\\
I\\
I\\
Z
\end{array}
\right\vert
\begin{array}
[c]{c}%
I\\
I\\
I\\
I\\
I\\
I
\end{array}
\label{eq:pauli-block-simple}%
\end{equation}
We still use the vertical bars in the above block code to denote that the
frame size of the code is one. Observe that we can view the frame size of the
code as two without changing any of the error-correcting properties of the
block code:%
\begin{equation}
\left.
\begin{array}
[c]{cc}%
X & Z\\
I & X\\
I & I\\
I & I\\
I & I\\
I & I
\end{array}
\right\vert
\begin{array}
[c]{cc}%
I & I\\
Z & I\\
X & Z\\
I & X\\
I & I\\
I & I
\end{array}
\left\vert
\begin{array}
[c]{cc}%
I & I\\
I & I\\
I & I\\
Z & I\\
X & Z\\
I & X
\end{array}
\right\vert
\begin{array}
[c]{cc}%
I & I\\
I & I\\
I & I\\
I & I\\
I & I\\
Z & I
\end{array}
. \label{eq:two-expanded-code}%
\end{equation}
The frame is merely a way to organize our qubits so that we send one frame at
a time over the channel after the online encoding circuit has finished
processing them. We can extend the above block code with frame size two to
have a convolutional structure with the following two convolutional
generators:%
\[
\cdots\left\vert
\begin{array}
[c]{cc}%
I & I\\
I & I
\end{array}
\right.  \left\vert
\begin{array}
[c]{cc}%
X & Z\\
I & X
\end{array}
\right\vert
\begin{array}
[c]{cc}%
I & I\\
Z & I
\end{array}
\left\vert
\begin{array}
[c]{cc}%
I & I\\
I & I
\end{array}
\right\vert \cdots
\]
The above two generators with frame size two have equivalent error-correcting
properties to the original generator in (\ref{eq:pauli-conv-simple}) by the
arguments above. We say that we have expanded the original generator by a
factor of two or that the above code is a two-expanded version of the original
generator. We can also extend the original generator in
(\ref{eq:pauli-conv-simple})\ to have a frame size of three and we require
three convolutional generators so that they have equivalent error-correcting
properties to the original generator:%
\[
\cdots\left\vert
\begin{array}
[c]{ccc}%
I & I & I\\
I & I & I\\
I & I & I
\end{array}
\right.  \left\vert
\begin{array}
[c]{ccc}%
X & Z & I\\
I & X & Z\\
I & I & X
\end{array}
\right\vert
\begin{array}
[c]{ccc}%
I & I & I\\
I & I & I\\
Z & I & I
\end{array}
\left\vert
\begin{array}
[c]{ccc}%
I & I & I\\
I & I & I\\
I & I & I
\end{array}
\right\vert \cdots
\]

The representation of the original generator in (\ref{eq:pauli-conv-simple})
as a quantum check matrix in the polynomial formalism is as follows:%
\begin{equation}
g\left(  D\right)  =\left[  \left.
\begin{array}
[c]{c}%
D
\end{array}
\right\vert
\begin{array}
[c]{c}%
1
\end{array}
\right]  .\label{eq:poly-conv-simple}%
\end{equation}
The two-expanded check matrix is as follows:%
\begin{equation}
G_{2}\left(  D\right)  =\left[  \left.
\begin{array}
[c]{cc}%
0 & 1\\
D & 0
\end{array}
\right\vert
\begin{array}
[c]{cc}%
1 & 0\\
0 & 1
\end{array}
\right]  ,\label{eq:two-exp-check}%
\end{equation}
and the three-expanded check matrix is as follows:%
\[
G_{3}\left(  D\right)  =\left[  \left.
\begin{array}
[c]{ccc}%
0 & 1 & 0\\
0 & 0 & 1\\
D & 0 & 0
\end{array}
\right\vert
\begin{array}
[c]{ccc}%
1 & 0 & 0\\
0 & 1 & 0\\
0 & 0 & 1
\end{array}
\right]  .
\]

An alternative method for obtaining the polynomial representation of the
two-expanded check matrix consists of two steps. We first multiply $g\left(
D\right)  $ as follows:%
\[
G_{2}^{\prime}\left(  D\right)  =\left[
\begin{array}
[c]{c}%
1\\
D
\end{array}
\right]  \left[  g\left(  D\right)  \right]  \left[
\begin{array}
[c]{cccc}%
1 & D^{-1} & 0 & 0\\
0 & 0 & 1 & D^{-1}%
\end{array}
\right]  .
\]
We then \textquotedblleft plug in\textquotedblright\ the fractional delay
operator $D^{1/2}$ and apply the flooring operation $\left\lfloor
\cdot\right\rfloor $ that nulls the coefficients of any fractional power of
\thinspace$D$:%
\[
G_{2}\left(  D\right)  =\left\lfloor G_{2}^{\prime}\left(  D^{1/2}\right)
\right\rfloor .
\]
A similar technique applies to find the check matrix of the three-expanded
matrix. We first multiply $g\left(  D\right)  $ as follows:%
\[
G_{3}\left(  D\right)  =\left[
\begin{array}
[c]{c}%
1\\
D\\
D^{2}%
\end{array}
\right]  \left[  g\left(  D\right)  \right]  \left[
\begin{array}
[c]{cccccc}%
1 & \frac{1}{D} & \frac{1}{D^{2}} & 0 & 0 & 0\\
0 & 0 & 0 & 1 & \frac{1}{D} & \frac{1}{D^{2}}%
\end{array}
\right]
\]
We then \textquotedblleft plug in\textquotedblright\ the fractional delay
operator $D^{1/3}$ and apply the flooring operation $\left\lfloor
\cdot\right\rfloor $:%
\[
G_{3}\left(  D\right)  =\left\lfloor G_{3}^{\prime}\left(  D^{1/3}\right)
\right\rfloor
\]
We discuss the general method for expanding an arbitrary check matrix in the
next subsection.

\subsection{General Technique for Expansion}

We generalize the above example to determine how to expand an arbitrary
quantum check matrix by a factor of $l$. Suppose that we have an $n-k\times2n$
quantum check matrix $H\left(  D\right)  $ where%
\[
H\left(  D\right)  =\left[  \left.
\begin{array}
[c]{c}%
Z\left(  D\right)
\end{array}
\right\vert
\begin{array}
[c]{c}%
X\left(  D\right)
\end{array}
\right]  .
\]
Let $\mathbf{D}$ denote a diagonal matrix whose diagonal entries are the delay
operator $D$. We take the convention that $\mathbf{D}^{0}$ is the identity
matrix and $\mathbf{D}^{m}$ is the matrix $\mathbf{D}$ multiplied $m$ times so
that its diagonal entries are $D^{m}$. Let $R_{l}\left(  D\right)  $ and
$C_{l}\left(  D\right)  $ denote the following matrices:%
\begin{align*}
R_{l}\left(  D\right)   &  \equiv\left[
\begin{array}
[c]{cccc}%
\mathbf{D}^{0} & \mathbf{D}^{1} & \cdots & \mathbf{D}^{l-1}%
\end{array}
\right]  ^{T}\\
C_{l}\left(  D\right)   &  \equiv\left[
\begin{array}
[c]{cccccc}%
\mathbf{D}^{0} & \cdots & \mathbf{D}^{-\left(  l-1\right)  } & 0 & \cdots &
0\\
0 & \cdots & 0 & \mathbf{D}^{0} & \cdots & \mathbf{D}^{-\left(  l-1\right)  }%
\end{array}
\right]
\end{align*}
where the diagonal $\mathbf{D}$ matrices in $R_{l}\left(  D\right)  $ and
$C_{l}\left(  D\right)  $ have respective dimensions $n-k\times n-k$ and
$n\times n$. We premultiply and postmultiply the matrix $H\left(  D\right)  $
by respective matrices $R_{l}\left(  D\right)  $ and $C_{l}\left(  D\right)
$, evaluate the resulting matrix at a fractional power $1/l$ of the delay
operator $D$, and perform the flooring operation $\left\lfloor \cdot
\right\rfloor $ to null the coefficients of any fractional power of
\thinspace$D$. The $l$-expanded check matrix $H_{l}\left(  D\right)  $ is as
follows:%
\[
H_{l}\left(  D\right)  =\left\lfloor R_{l}\left(  D^{1/l}\right)  H\left(
D^{1/l}\right)  C_{l}\left(  D^{1/l}\right)  \right\rfloor .
\]
The $l$-expanded quantum check matrix $H_{l}\left(  D\right)  $ has equivalent
error-correcting properties to the original check matrix.

\section{Polynomial Symplectic Gram-Schmidt Orthogonalization Procedure}

\label{sec:GS}In general, a given set of generators may have complicated
commutation relations. We have to simplify the commutation relations so that
we can encode information qubits with the help of ancilla qubits and halves of
ebits shared with the receiver. In this section, we begin with an arbitrary
set of convolutional generators. We show how to determine a set of generators
with equivalent error-correcting properties and commutation relations that are
the same as those of halves of ebits and ancilla qubits. We first show an example of the
technique by illustrating it for Pauli sequences, for the
quantum check matrix, and with the shifted symplectic product matrix. We then
state a polynomial symplectic Gram-Schmidt orthogonalization algorithm that
performs this action for an arbitrary set of quantum convolutional generators.

\subsection{Example of the Procedure}

\subsubsection{Pauli Picture}

Let us consider again our example from the previous section. Specifically,
consider the respective expressions in (\ref{eq:pauli-conv-simple})\ and
(\ref{eq:two-expanded-code}) for the convolutional generator and the block
code. Recall that we can multiply the generators in a block code without
changing the error-correcting properties of the code \cite{book2000mikeandike}%
. Therefore, we can multiply the sixth generator in
(\ref{eq:two-expanded-code}) to the fourth. We can then multiply the modified
fourth to the second to yield the following equivalent code:%
\begin{equation}
\left.
\begin{array}
[c]{cc}%
X & Z\\
I & X\\
I & I\\
I & I\\
I & I\\
I & I
\end{array}
\right\vert
\begin{array}
[c]{cc}%
I & I\\
Z & X\\
X & Z\\
I & X\\
I & I\\
I & I
\end{array}
\left\vert
\begin{array}
[c]{cc}%
I & I\\
Z & X\\
I & I\\
Z & X\\
X & Z\\
I & X
\end{array}
\right\vert
\begin{array}
[c]{cc}%
I & I\\
Z & I\\
I & I\\
Z & I\\
I & I\\
Z & I
\end{array}
\label{eq:equiv-set}%
\end{equation}
We have manipulated the two-expanded matrix rather than the code in
(\ref{eq:pauli-block-simple})\ because the commutation relations of the above
code are equivalent to the commutation relations of the following operators:%
\[
\left.
\begin{array}
[c]{cc}%
I & Z\\
I & X\\
I & I\\
I & I\\
I & I\\
I & I
\end{array}
\right\vert
\begin{array}
[c]{cc}%
I & I\\
I & I\\
I & Z\\
I & X\\
I & I\\
I & I
\end{array}
\left\vert
\begin{array}
[c]{cc}%
I & I\\
I & I\\
I & I\\
I & I\\
I & Z\\
I & X
\end{array}
\right\vert
\begin{array}
[c]{cc}%
I & I\\
I & I\\
I & I\\
I & I\\
I & I\\
I & I
\end{array}
.
\]
We can use three ebits to encode the set of generators in (\ref{eq:equiv-set})
because they have the same commutation relations as the above operators and
the above operators correspond to halves of three ebits. We resolve the
anticommutation relations by using the following entanglement-assisted code:%
\[
\left.
\begin{array}
[c]{ccc}%
Z & X & Z\\
X & I & X\\
I & I & I\\
I & I & I\\
I & I & I\\
I & I & I
\end{array}
\right\vert
\begin{array}
[c]{ccc}%
I & I & I\\
I & Z & X\\
Z & X & Z\\
X & I & X\\
I & I & I\\
I & I & I
\end{array}
\left\vert
\begin{array}
[c]{ccc}%
I & I & I\\
I & Z & X\\
I & I & I\\
I & Z & X\\
Z & X & Z\\
X & I & X
\end{array}
\right\vert
\begin{array}
[c]{ccc}%
I & I & I\\
I & Z & I\\
I & I & I\\
I & Z & I\\
I & I & I\\
I & Z & I
\end{array}
\]
The convention above is that the first qubit of each frame belongs to Bob and
corresponds to half of an ebit. The second two qubits of each frame belong to
Alice. The overall code forms a commuting stabilizer so that it corresponds to
a valid quantum code. Bob could measure the above operators to diagnose errors
or he could measure the following operators that are equivalent by row
operations:%
\[
\left.
\begin{array}
[c]{ccc}%
Z & X & Z\\
X & I & X\\
I & I & I\\
I & I & I\\
I & I & I\\
I & I & I
\end{array}
\right\vert
\begin{array}
[c]{ccc}%
I & I & I\\
X & Z & I\\
Z & X & Z\\
X & I & X\\
I & I & I\\
I & I & I
\end{array}
\left\vert
\begin{array}
[c]{ccc}%
I & I & I\\
I & I & I\\
I & I & I\\
I & Z & I\\
Z & X & Z\\
X & I & X
\end{array}
\right\vert
\begin{array}
[c]{ccc}%
I & I & I\\
I & I & I\\
I & I & I\\
I & I & I\\
I & I & I\\
I & Z & I
\end{array}
\]
One can check that the operators corresponding to the second two qubits of
each frame are equivalent to the desired generators in
(\ref{eq:two-expanded-code}).

\subsubsection{Polynomial Picture}

Let us consider the convolutional generator in (\ref{eq:pauli-conv-simple}).
We now use the polynomial formalism because it is easier to perform the
manipulations for general codes in this picture rather than in the Pauli picture.

We extend the row operations from the above block code to the polynomial
picture. Each row operation multiplied the even-numbered generators by
themselves shifted by two qubits. Extending this operation to act on an
infinite Pauli sequence corresponds to multiplying the second generator in
(\ref{eq:two-exp-check})\ by the rational polynomial $1/\left(  1+D\right)  $.
Consider the two-expanded check matrix with the operation described above
applied to the second generator:%
\[
\left[
\begin{array}
[c]{c}%
g_{1}\left(  D\right)  \\
g_{2}\left(  D\right)
\end{array}
\right]  =\left[  \left.
\begin{array}
[c]{cc}%
0 & 1\\
\frac{D}{1+D} & 0
\end{array}
\right\vert
\begin{array}
[c]{cc}%
1 & 0\\
0 & \frac{1}{1+D}%
\end{array}
\right]  ,
\]
The shifted symplectic products $\left(  g_{1}\odot g_{1}\right)  \left(
D\right)  =0$, $\left(  g_{1}\odot g_{2}\right)  \left(  D\right)  =1$,
$\ $and $\left(  g_{2}\odot g_{2}\right)  \left(  D\right)  =0$ capture the
commutation relations of the resulting code for all frames. We can therefore
resolve this anticommutativity by prepending a column that corresponds to Bob
possessing half of an ebit:%
\[
\left[  \left.
\begin{array}
[c]{ccc}%
1 & 0 & 1\\
0 & \frac{D}{1+D} & 0
\end{array}
\right\vert
\begin{array}
[c]{ccc}%
0 & 1 & 0\\
1 & 0 & \frac{1}{1+D}%
\end{array}
\right]  .
\]
Bob possesses the qubit corresponding to column one of both the
\textquotedblleft Z\textquotedblright\ and \textquotedblleft
X\textquotedblright\ matrix and Alice possesses the two qubits corresponding
to the second and third columns. The code above has equivalent
error-correcting properties to those of the desired generators.

The generators in the above polynomial setting correspond to Pauli
sequences with infinite weight. Infinite-weight generators are undesirable
because Bob cannot measure an infinite number of qubits. There is a simple
solution to this problem and it is similar to what we did at the end of the
previous subsection. We multiply the second row of the above check matrix by
$1+D$ to obtain the following check matrix that has equivalent
error-correcting properties to the above one:%
\[
\left[  \left.
\begin{array}
[c]{ccc}%
1 & 0 & 1\\
0 & D & 0
\end{array}
\right\vert
\begin{array}
[c]{ccc}%
0 & 1 & 0\\
1+D & 0 & 1
\end{array}
\right]  .
\]
Bob can measure operators that have finite weight because the above check
matrix corresponds to Pauli sequences with finite weight.

\subsubsection{Shifted Symplectic Product Matrix Picture}

We can also determine the row operations that simplify the commutation
relations by looking only at the shifted symplectic product matrix. It is
easier to determine the row operations that resolve anticommutativity by
considering the shifted symplectic product matrix. We would like to perform
row operations to reduce the shifted symplectic product matrix to the standard
form in (\ref{eq:standard-symp-form}) so that it has commutation relations
equivalent to those of halves of ebits and ancilla qubits.

The shifted symplectic product matrix corresponding to the check matrix in
(\ref{eq:poly-conv-simple}) is a one-element matrix $\Omega_{g}\left(
D\right)  $ where%
\[
\Omega_{g}\left(  D\right)  =\left[  D^{-1}+D\right]  .
\]
It is clear that $\Omega_{g}\left(  D\right)  $ is not reducible by row
operations to the $2\times2$ matrix $J$ in (\ref{eq:J-matrix})\ because
$\Omega_{g}\left(  D\right)  $ is a one-element matrix. It is also not
reducible to the one-element null matrix $\left[  0\right]  $ because there is
no valid row operation that can zero out the term $D^{-1}+D$.

We therefore consider the two-expanded check matrix in (\ref{eq:two-exp-check}%
) to determine if we can reduce it with row operations to the standard form in
(\ref{eq:standard-symp-form}). The shifted symplectic product matrix
$\Omega_{G_{2}}\left(  D\right)  $ for the two-expanded check matrix
$G_{2}\left(  D\right)  $ is as follows:%
\[
\Omega_{G_{2}}\left(  D\right)  =\left[
\begin{array}
[c]{cc}%
0 & 1+D^{-1}\\
1+D & 0
\end{array}
\right]
\]
We can formulate the row operation of multiplying the second generator by
$1/\left(  1+D\right)  $ as a matrix $R\left(  D\right)  $ where%
\[
R\left(  D\right)  =\left[
\begin{array}
[c]{cc}%
1 & 0\\
0 & 1/\left(  1+D\right)
\end{array}
\right]  .
\]
The effect on the matrix $\Omega_{G_{2}}\left(  D\right)  $ is to change it to%
\[
R\left(  D\right)  \Omega_{G_{2}}\left(  D\right)  R^{T}\left(  D^{-1}\right)
=J
\]
as described in (\ref{eq:symp-row-op}). The above matrix $J$ has equivalent
commutation relations to half of an ebit. We can therefore use one ebit per
two qubits to encode this code.

In a later section, we show how to devise encoding circuits beginning from $c$
ebits per frame and $a$ ancilla qubits per frame and also give the algorithm
for their online decoding circuits. It is important for us right now to devise
a procedure to reduce the commutation relations of any check matrix to those
of ebits and ancilla qubits.

\subsection{The Procedure for General Codes}

We detail a polynomial version of the symplectic Gram-Schmidt
orthogonalization procedure in this section. It is a generalized version of
the procedure we developed for the above example. Before detailing the
algorithm, we first prove a lemma that shows how to determine the shifted
symplectic product matrix for an $l$-expanded version of the generators by
starting from the shifted symplectic product matrix of the original generators.

\subsubsection{The Shifted Symplectic Product Matrix for an $l$-Expanded Code}

\begin{lemma}
Suppose the shifted symplectic product matrix $\Omega\left(  D\right)  $ of a
given check matrix $H\left(  D\right)  $ is as follows:%
\[
\Omega\left(  D\right)  =Z\left(  D\right)  X^{T}\left(  D^{-1}\right)
+X\left(  D\right)  Z^{T}\left(  D^{-1}\right)  .
\]
The shifted symplectic product matrix $\Omega_{l}\left(  D\right)  $ of the
$l$-expanded check matrix $H_{l}\left(  D\right)  $ is as follows:%
\[
\Omega_{l}\left(  D\right)  =\left\lfloor R_{l}\left(  D^{1/l}\right)
\Omega\left(  D^{1/l}\right)  R_{l}^{T}\left(  D^{-1/l}\right)  \right\rfloor
\]
where the flooring operation $\left\lfloor \cdot\right\rfloor $ nulls the
coefficients of any fractional power of \thinspace$D$.
\end{lemma}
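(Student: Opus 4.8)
The plan is to verify the formula by unwinding the definition of the $l$-expansion and using the fact that the shifted symplectic product is bilinear (over the ring of Laurent polynomials with the appropriate time-reversal in the second argument). Recall from Section~\ref{sec:expand} that the $l$-expanded check matrix is $H_l(D) = \lfloor R_l(D^{1/l}) H(D^{1/l}) C_l(D^{1/l}) \rfloor$, where $R_l$ stacks the matrices $\mathbf{D}^0,\ldots,\mathbf{D}^{l-1}$ vertically and $C_l$ arranges the matrices $\mathbf{D}^0,\ldots,\mathbf{D}^{-(l-1)}$ in the block pattern that splits the $Z$ and $X$ halves. First I would write $H(D) = [\,Z(D)\mid X(D)\,]$ and compute $R_l(D) H(D) C_l(D)$ explicitly in block form, observing that the $C_l$ matrix is designed precisely so that the left half of $R_l H C_l$ is $R_l Z \,[\mathbf{D}^0 \cdots \mathbf{D}^{-(l-1)}]$ and the right half is $R_l X\,[\mathbf{D}^0\cdots\mathbf{D}^{-(l-1)}]$, i.e.\ the ``$Z$'' and ``$X$'' pieces of the expanded matrix are $Z_l'(D) = R_l(D) Z(D) P_l(D)$ and $X_l'(D) = R_l(D) X(D) P_l(D)$ with $P_l(D) \equiv [\mathbf{D}^0 \cdots \mathbf{D}^{-(l-1)}]$, before the substitution $D\mapsto D^{1/l}$ and flooring.

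Next I would compute the shifted symplectic product matrix of the \emph{pre-floor} expanded matrix using the definition $\Omega = Z(D) X^T(D^{-1}) + X(D) Z^T(D^{-1})$. Plugging in $Z_l'$ and $X_l'$ gives
\[
Z_l'(D) X_l'^T(D^{-1}) + X_l'(D) Z_l'^T(D^{-1}) = R_l(D)\Big( Z(D) P_l(D) P_l^T(D^{-1}) X^T(D^{-1}) + X(D) P_l(D) P_l^T(D^{-1}) Z^T(D^{-1})\Big) R_l^T(D^{-1}),
\]
using $R_l(D^{-1})^T$ appropriately and the fact that the flooring and the $D\mapsto D^{1/l}$ substitution commute with matrix multiplication. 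The key algebraic observation is that $P_l(D) P_l^T(D^{-1}) = \sum_{j=0}^{l-1} \mathbf{D}^{-j}\mathbf{D}^{j} = l\,I$ over the integers, but over $\mathbb{Z}_2$ this is $I$ when $l$ is odd and $0$ when $l$ is even --- so I need to be a little careful here, and the cleanest route is to note that for the purpose of recovering the commutation relations, what matters is the identity $P_l(D^{1/l}) P_l^T(D^{-1/l})$ \emph{after} flooring: the cross terms $\mathbf{D}^{(j-j')/l}$ with $j\neq j'$ are fractional powers of $D$ and get killed by $\lfloor\cdot\rfloor$, leaving exactly $I$. This is the same mechanism that makes the expansion work in the first place, so I would phrase it that way.

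Therefore, after substituting $D\mapsto D^{1/l}$ and applying the flooring operation, the shifted symplectic product matrix of $H_l(D)$ is $\lfloor R_l(D^{1/l}) \big(Z(D^{1/l}) X^T(D^{-1/l}) + X(D^{1/l}) Z^T(D^{-1/l})\big) R_l^T(D^{-1/l}) \rfloor = \lfloor R_l(D^{1/l}) \Omega(D^{1/l}) R_l^T(D^{-1/l}) \rfloor$, which is the claimed formula. One subtlety I would address explicitly: strictly speaking $\Omega_l(D)$ should be computed from $H_l(D)$ directly (i.e.\ from the \emph{floored} matrices), not from the floored product of the pre-floor $\Omega$; so I need to justify that flooring $H$ and then forming $\Omega$ gives the same answer as forming $\Omega$ and then flooring. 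This holds because the shifted symplectic product is $\mathbb{Z}_2$-bilinear in the coefficients and the substitution $D\mapsto D^{1/l}$ sends integer powers to powers in $\frac{1}{l}\mathbb{Z}$, so an entry of the product of two floored matrices (whose entries have only integer powers) automatically has only integer powers, while any fractional-power contributions in the unfloored computation come precisely from cross terms between a floored-away (fractional) monomial and another monomial --- these either get floored away or were zero to begin with.

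The main obstacle I anticipate is bookkeeping: tracking the $D^{1/l}$ substitution through products of block matrices while keeping straight which factor carries $D$ versus $D^{-1}$ (because of the time-reversal in the second slot of $\odot$), and cleanly arguing that the flooring operation commutes past the relevant matrix multiplications and past the formation of $\Omega$. There is no deep idea here beyond the bilinearity of $\odot$ and the ``fractional powers die under $\lfloor\cdot\rfloor$'' principle already established for the expansion of the check matrix itself; the lemma is essentially the statement that the construction $H\mapsto H_l$ is natural with respect to taking symplectic products. I would keep the write-up short, pointing back to the definition of $H_l(D)$ and the identity $[\Omega(D)]_{ij} = (h_i \odot h_j)(D)$ from Section~\ref{sec:shifted-symp-prod}.
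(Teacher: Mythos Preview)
Your approach has a genuine gap that makes it fail, not just a bookkeeping issue. You propose to compute the symplectic product of the \emph{pre-floor} expanded matrices and then floor. But in that pre-floor product the middle factor is $P_l(D^{1/l})\,P_l^T(D^{-1/l})=\sum_{j=0}^{l-1}\mathbf{D}^{-j/l}\mathbf{D}^{j/l}=l\,I$ \emph{exactly}, with no fractional-power cross terms at all (each summand is $D^0$; the ``cross terms $\mathbf{D}^{(j-j')/l}$'' you describe would only appear without the time reversal on the second factor). Over $\mathbb{Z}_2$ this is $0$ whenever $l$ is even, so your pre-floor symplectic product matrix is identically zero for even $l$, and flooring zero gives zero---contradicting the lemma already in the paper's two-expanded example, where $\Omega_2(D)=\left[\begin{smallmatrix}0 & 1+D^{-1}\\ 1+D & 0\end{smallmatrix}\right]\neq 0$.

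The real crux is exactly the subtlety you flagged but then dismissed: flooring does \emph{not} commute with forming the symplectic product, and your attempted justification is incorrect (a product of two floored-away fractional monomials can land at an integer power, e.g.\ $D^{1/2}\cdot D^{1/2}=D$). The paper avoids this trap by never forming a pre-floor $\Omega$. It works directly with the floored matrices $Z_l=\lfloor R_l Z C_l'\rfloor$, $X_l=\lfloor R_l X C_l'\rfloor$ and uses the scalar identity
\[
\bigl\lfloor f(D^{1/l})\,g(D^{1/l})\bigr\rfloor
=\sum_{i=0}^{l-1}\bigl\lfloor D^{-i/l}f(D^{1/l})\bigr\rfloor\,\bigl\lfloor D^{i/l}g(D^{1/l})\bigr\rfloor,
\]
which one checks by sorting monomials by residue class modulo $l$. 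The block-column structure of $C_l'$ makes $X_l(D)\,Z_l^T(D^{-1})$ expand precisely as $\sum_{i=0}^{l-1}\lfloor D^{-i/l}R_l X\rfloor\,\lfloor D^{i/l}Z^T R_l^T\rfloor$, and the identity then collapses this to $\lfloor R_l X Z^T R_l^T\rfloor$. That identity---not any commutation of flooring with products---is the missing ingredient your argument needs.
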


%

%TCIMACRO{\TeXButton{begin proof}{\begin{proof}}}%
%BeginExpansion
\begin{proof}%
%EndExpansion
Consider that the \textquotedblleft X\textquotedblright\ matrix $X_{l}\left(
D\right)  $ of the $l$-expanded check matrix $H_{l}\left(  D\right)  $ is as
follows:%
\[
X_{l}\left(  D\right)  =\left\lfloor R_{l}\left(  D^{1/l}\right)  X\left(
D^{1/l}\right)  C_{l}^{\prime}\left(  D^{1/l}\right)  \right\rfloor ,
\]
where
\[
C_{l}^{\prime}\left(  D^{1/l}\right)  \equiv\left[
\begin{array}
[c]{cccc}%
\mathbf{D}^{0} & \mathbf{D}^{-1} & \cdots & \mathbf{D}^{-\left(  l-1\right)  }%
\end{array}
\right]  ,
\]
and each diagonal $\mathbf{D}$ matrix is $n\times n$-dimensional. It is also
then true that the \textquotedblleft Z\textquotedblright\ matrix of
$H_{l}\left(  D\right)  $ is as follows:%
\[
Z_{l}\left(  D\right)  =\left\lfloor R_{l}\left(  D^{1/l}\right)  Z\left(
D^{1/l}\right)  C_{l}^{\prime}\left(  D^{1/l}\right)  \right\rfloor .
\]
The matrix transpose operation, the time reversal operation (substituting
$D^{-1}$ for $D$), and matrix addition are all invariant under the flooring
operation $\left\lfloor \cdot\right\rfloor $ for arbitrary matrices $M\left(
D\right)  $ and $N\left(  D\right)  $:%
\begin{align*}
\left\lfloor M^{T}\left(  D^{1/l}\right)  \right\rfloor  &  =\left\lfloor
M\left(  D^{1/l}\right)  \right\rfloor ^{T},\\
\left\lfloor M\left(  D^{-1/l}\right)  \right\rfloor  &  =\left.  \left\lfloor
M\left(  D^{1/l}\right)  \right\rfloor \right\vert _{D=D^{-1}},\\
\left\lfloor M\left(  D^{1/l}\right)  +N\left(  D^{1/l}\right)  \right\rfloor
&  =\left\lfloor M\left(  D^{1/l}\right)  \right\rfloor +\left\lfloor N\left(
D^{1/l}\right)  \right\rfloor .
\end{align*}
Additionally, the following property holds for two arbitrary binary
polynomials $f\left(  D\right)  $ and $g\left(  D\right)  $:%
\begin{equation}
\left\lfloor f\left(  D^{1/l}\right)  g\left(  D^{1/l}\right)  \right\rfloor
=\sum_{i=0}^{l-1}\left\lfloor D^{-i/l}f\left(  D^{1/l}\right)  \right\rfloor
\left\lfloor D^{i/l}g\left(  D^{1/l}\right)  \right\rfloor
.\label{eq:mult-prop-floor}%
\end{equation}
Now consider the product $X_{l}\left(  D\right)  Z_{l}^{T}\left(
D^{-1}\right)  $:%
\begin{align*}
X_{l}\left(  D\right)  Z_{l}^{T}\left(  D^{-1}\right)   &  =\left\lfloor
R_{l}\left(  D^{1/l}\right)  X\left(  D^{1/l}\right)  C_{l}^{\prime}\left(
D^{1/l}\right)  \right\rfloor \left.  \left(  \left\lfloor R_{l}\left(
D^{1/l}\right)  Z\left(  D^{1/l}\right)  C_{l}^{\prime}\left(  D^{1/l}\right)
\right\rfloor \right)  ^{T}\right\vert _{D=D^{-1}}\\
&  =\left\lfloor R_{l}\left(  D^{1/l}\right)  X\left(  D^{1/l}\right)
C_{l}^{\prime}\left(  D^{1/l}\right)  \right\rfloor \left\lfloor C_{l}^{\prime
T}\left(  D^{-1/l}\right)  Z^{T}\left(  D^{-1/l}\right)  R_{l}^{T}\left(
D^{-1/l}\right)  \right\rfloor \\
&  =\sum_{i=0}^{l-1}\left\lfloor D^{-i/l}R_{l}\left(  D^{1/l}\right)  X\left(
D^{1/l}\right)  \right\rfloor \left\lfloor D^{i/l}Z^{T}\left(  D^{-1/l}%
\right)  R_{l}^{T}\left(  D^{-1/l}\right)  \right\rfloor \\
&  =\left\lfloor R_{l}\left(  D^{1/l}\right)  X\left(  D^{1/l}\right)
Z^{T}\left(  D^{-1/l}\right)  R_{l}^{T}\left(  D^{-1/l}\right)  \right\rfloor
,
\end{align*}
where the second line uses the invariance of the flooring operation with
respect to matrix transposition and time reversal, the third line expands the
matrix multiplications using the matrix $C_{l}^{\prime}\left(  D\right)  $
defined above, and the last line uses the matrix generalization of the
multiplication property defined in (\ref{eq:mult-prop-floor}). Our final step
is to use the invariance of the flooring operation with respect to matrix
addition:%
\begin{align*}
\Omega_{l}\left(  D\right)   &  =X_{l}\left(  D\right)  Z_{l}^{T}\left(
D^{-1}\right)  +Z_{l}\left(  D\right)  X_{l}^{T}\left(  D^{-1}\right)  \\
&  =\left\lfloor R_{l}\left(  D^{\frac{1}{l}}\right)  X\left(  D^{\frac{1}{l}%
}\right)  Z^{T}\left(  D^{\frac{-1}{l}}\right)  R_{l}^{T}\left(  D^{\frac
{-1}{l}}\right)  \right\rfloor +\left\lfloor R_{l}\left(  D^{\frac{1}{l}%
}\right)  Z\left(  D^{\frac{1}{l}}\right)  X^{T}\left(  D^{\frac{-1}{l}%
}\right)  R_{l}^{T}\left(  D^{\frac{-1}{l}}\right)  \right\rfloor \\
&  =\left\lfloor R_{l}\left(  D^{\frac{1}{l}}\right)  X\left(  D^{\frac{1}{l}%
}\right)  Z^{T}\left(  D^{\frac{-1}{l}}\right)  R_{l}^{T}\left(  D^{\frac
{-1}{l}}\right)  +R_{l}\left(  D^{\frac{1}{l}}\right)  Z\left(  D^{\frac{1}%
{l}}\right)  X^{T}\left(  D^{\frac{-1}{l}}\right)  R_{l}^{T}\left(
D^{\frac{-1}{l}}\right)  \right\rfloor \\
&  =\left\lfloor R_{l}\left(  D^{1/l}\right)  \Omega\left(  D^{1/l}\right)
R_{l}^{T}\left(  D^{-1/l}\right)  \right\rfloor .
\end{align*}%
%TCIMACRO{\TeXButton{end proof}{\end{proof}}}%
%BeginExpansion
\end{proof}%
%EndExpansion

\subsubsection{The Gram-Schmidt Procedure}

We now present the polynomial symplectic Gram-Schmidt orthogonalization
procedure that reduces the commutation relations of a given set of
convolutional generators to have the standard form in
(\ref{eq:standard-symp-form}).

Consider the following $n-k\times2n$-dimensional quantum check matrix
$H\left(  D\right)  $:%
\[
H\left(  D\right)  =\left[  \left.
\begin{array}
[c]{c}%
Z\left(  D\right)
\end{array}
\right\vert
\begin{array}
[c]{c}%
X\left(  D\right)
\end{array}
\right]  .
\]
Label each row as $h_{i}\left(  D\right)  =\left[  \left.
\begin{array}
[c]{c}%
z_{i}\left(  D\right)
\end{array}
\right\vert
\begin{array}
[c]{c}%
x_{i}\left(  D\right)
\end{array}
\right]  $ for all $i\in\left\{  1,\ldots,n-k\right\}  $.

We state the Gram-Schmidt procedure in terms of its effect on the shifted
symplectic product matrix. It is easier to see how the algorithm proceeds by
observing the shifted symplectic product matrix rather than by tracking the
generators in the check matrix.

Let $l$ denote the amount by which we expand the check matrix $H\left(
D\right)  $. Suppose first that $l=1$ (we do not expand the check matrix). Let
us say that the check matrix has $r$ generators (we take a number different
from $n-k$ because the number of generators may not be equal to $n-k$ for
future iterations). There are the following possibilities:

\begin{enumerate}
\item There is a generator $h_{i}\left(  D\right)  $ such that $\left(
h_{i}\odot h_{j}\right)  \left(  D\right)  =0$ for all $j\in\left\{
1,\ldots,r\right\}  $. In this case, the generator is already decoupled from
all the others and corresponds to an ancilla qubit because an ancilla and it
share the same commutation relations. Swap $h_{i}\left(  D\right)  $ to be the
first row of the matrix so that it is $h_{1}\left(  D\right)  $. The shifted
symplectic product matrix then has the form:%
\[%
\begin{bmatrix}
0 & 0 & \cdots & 0\\
0 & h_{2,2} & \cdots & h_{2,r}\\
\vdots & \vdots & \ddots & \vdots\\
0 & h_{r,2} & \cdots & h_{r,r}%
\end{bmatrix}
=%
\begin{bmatrix}
0
\end{bmatrix}
\oplus%
\begin{bmatrix}
h_{2,2} & \cdots & h_{2,r}\\
\vdots & \ddots & \vdots\\
h_{r,2} & \cdots & h_{r,r}%
\end{bmatrix}
\]
where $\left[  0\right]  $ is the one-element zero matrix and we use the
shorthand $h_{i,j}=\left(  h_{i}\odot h_{j}\right)  \left(  D\right)  $. We
remove generator $h_{1}\left(  D\right)  $ from check matrix $H\left(
D\right)  $ and continue to step two below for the remaining generators in the matrix.

\item There are two generators $h_{i}\left(  D\right)  $ and $h_{j}\left(
D\right)  $ for which $\left(  h_{i}\odot h_{j}\right)  \left(  D\right)
=D^{m}$ for some integer $m$ and $\left(  h_{i}\odot h_{i}\right)  \left(
D\right)  =\left(  h_{j}\odot h_{j}\right)  \left(  D\right)  =0$. In this
case these generators correspond exactly to half of an ebit. Multiply
generator $h_{j}\left(  D\right)  $ by $D^{m}$. This row operation has the
effect of delaying (or advancing) the generator by an amount $m$ and changes
the shifted symplectic product to be $\left(  h_{i}\odot h_{j}\right)  \left(
D\right)  =1$. These two generators considered by themselves now have the
commutation relations of half of an ebit. Swap the generators $h_{i}\left(
D\right)  $ and $h_{j}\left(  D\right)  $ to be the first and second
respective rows of the check matrix $H\left(  D\right)  $. Call them
$h_{1}\left(  D\right)  $ and $h_{2}\left(  D\right)  $ respectively. The
shifted symplectic product matrix is then as follows:%
\[%
\begin{bmatrix}
0 & 1 & h_{1,3} & \cdots & h_{1,r}\\
1 & 0 & h_{2,3} & \cdots & h_{2,r}\\
h_{3,1} & h_{3,2} & h_{3,3} & \cdots & h_{3,r}\\
\vdots & \vdots & \vdots & \ddots & \vdots\\
h_{r,2} & h_{r,2} & h_{r,3} & \cdots & h_{r,r}%
\end{bmatrix}
.
\]
We use the following row operations to decouple the other generators from
these two generators:%
\[
h_{i}^{\prime}\left(  D\right)  \equiv h_{i}\left(  D\right)  +\left[\left(  h_{i}\odot
h_{2}\right)  \left(  D\right)\right]   h_{1}\left(  D\right)  + \left[\left(
h_{i}\odot h_{1}\right)  \left(  D\right) \right]  h_{2}\left(  D\right)  \text{
for all }i\in\left\{  3,\ldots,r\right\}.
\]
The shifted symplectic product matrix becomes as follows under these row
operations:%
\[%
\begin{bmatrix}
0 & 1 & 0 & \cdots & 0\\
1 & 0 & 0 & \cdots & 0\\
0 & 0 & h_{3,3}^{\prime} & \cdots & h_{3,r}^{\prime}\\
\vdots & \vdots & \vdots & \ddots & \vdots\\
0 & 0 & h_{r,3}^{\prime} & \cdots & h_{r,r}^{\prime}%
\end{bmatrix}
=
\begin{bmatrix}
J
\end{bmatrix}
\oplus%
\begin{bmatrix}
h_{3,3}^{\prime} & \cdots & h_{3,r}^{\prime}\\
\vdots & \ddots & \vdots\\
h_{r,3}^{\prime} & \cdots & h_{r,r}^{\prime}%
\end{bmatrix}
.
\]
The first two generators are now decoupled from the other generators and have
the commutation relations of half of an ebit. We remove the first two
generators from the check matrix so that it now consists of generators
$h_{3}^{\prime}\left(  D\right)  $, $\ldots$, $h_{r}^{\prime}\left(  D\right)
$. We check to see if the conditions at the beginning of the previous step or
this step hold for any other generators. If so, repeat the previous step or
this step on the remaining generators. If not, see if the conditions for step
three hold.

\item There are two generators $h_{i}\left(  D\right)  $ and $h_{j}\left(
D\right)  $ for which $\left(  h_{i}\odot h_{i}\right)  \left(  D\right)
=\left(  h_{j}\odot h_{j}\right)  \left(  D\right)  =0$ but $\left(
h_{i}\odot h_{j}\right)  \left(  D\right)  \neq D^{m}$ for all $m$ and
$\left(  h_{i}\odot h_{j}\right)  \left(  D\right)  \neq0$. Multiply generator
$h_{j}\left(  D\right)  $ by $1/\left(  h_{j}\odot h_{i}\right)  \left(
D\right)  $. Generator $h_{j}\left(  D\right)  $ becomes infinite weight
because $\left(  h_{j}\odot h_{i}\right)  \left(  D\right)  $ is a polynomial
with two or more powers of $D$ with non-zero coefficients. Now the shifted
symplectic product relations are as follows: $\left(  h_{i}\odot h_{i}\right)
\left(  D\right)  =\left(  h_{j}\odot h_{j}\right)  \left(  D\right)  =0$ and
$\left(  h_{i}\odot h_{j}\right)  \left(  D\right)  =1$. We handle this case
as we did the previous case after the two generators there had the commutation
relations of half of an ebit.

\item None of these conditions hold. In this case, we stop this iteration of
the algorithm and expand the check matrix by the next factor $l:=l+1$ and
repeat the above steps.
\end{enumerate}

We have not proven that this procedure converges on all codes; however, it
does converge on all the codes we have tried. We conjecture that this
procedure converges for all codes, but even if this is true, in principle it
might require expansion to a large number of generators. A simple and
practical convergence condition is as follows. In practice, convolutional
codes do not act on an infinite stream of qubits but instead act on a finite
number of qubits. It may be that there are codes for which this procedure
either does not converge or must be expanded until the frame size of the
expanded code exceeds the number of qubits that the code acts on. In this
case, we would not employ this procedure, and instead would treat the code as
a block code, where we could employ the methods from
Refs.~\cite{science2006brun,arx2006brun} for encoding and decoding. It is
unclear if this practical convergence condition will ever really be necessary.
This procedure does converge for a large number of useful codes, so that the
frame size of the expanded code is much less than the number of qubits that
the code acts on, and we have not found an example where this procedure fails.

\section{Encoding and Decoding Circuits}

\label{sec:encode-decode}This section proves the main theorem of this chapter.
The theorem assumes that we have already processed an arbitrary check matrix
with the Gram-Schmidt algorithm and that the shifted symplectic product matrix
corresponding to the processed check matrix has the standard form in
(\ref{eq:standard-symp-form}). The theorem shows how to encode a set of
information qubits, ancilla qubits, and halves of ebits into a code that has
equivalent error-correcting properties to those of a desired set of
convolutional generators. The theorem uses both finite-depth and
infinite-depth operations in the encoding circuit and finite-depth operations
in the decoding circuit.

\begin{theorem}
\label{thm:main}Suppose we have a set of quantum convolutional generators in
the $n-k\times2n$-dimensional matrix $H\left(  D\right)  $ where%
\[
H\left(  D\right)  =\left[  \left.
\begin{array}
[c]{c}%
Z\left(  D\right)
\end{array}
\right\vert
\begin{array}
[c]{c}%
X\left(  D\right)
\end{array}
\right]  .
\]
Its shifted symplectic product matrix $\Omega\left(  D\right)  $ is as
follows:%
\[
\Omega\left(  D\right)  =Z\left(  D\right)  X^{T}\left(  D^{-1}\right)
+X\left(  D\right)  Z^{T}\left(  D^{-1}\right)  .
\]
Suppose check matrix $H\left(  D\right)  $ is the matrix resulting from
processing with the polynomial symplectic Gram-Schmidt orthogonalization
procedure. Therefore, $\Omega\left(  D\right)  $ has the standard form in
(\ref{eq:standard-symp-form}) with parameters $c$ and $a=n-k-2c$. Then there
exists an online encoding circuit for the code that uses finite-depth and
infinite-depth operations in the shift-invariant Clifford group and there
exists an online decoding circuit for the code that uses finite-depth
operations. The code encodes $k+c$ information qubits per frame with the help
of $c$ ebits and $a=n-k-2c$ ancilla qubits.
\end{theorem}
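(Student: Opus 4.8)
The plan is to mimic, in the general (non-CSS) setting, the decomposition-and-augmentation strategy already used for the CSS case in Lemmas~\ref{lemma:fefd}, \ref{lemma:ieid}, and~\ref{lemma:iefd}. Since the hypothesis guarantees that the shifted symplectic product matrix $\Omega\left(D\right)$ already has the standard form in~(\ref{eq:standard-symp-form}), the generators of $H\left(D\right)$ split into $a=n-k-2c$ mutually orthogonal ``ancilla-type'' rows and $c$ anticommuting ``ebit-type'' pairs obeying the standard commutation relations in~(\ref{eq:comm-relations}). First I would reduce $H\left(D\right)$ to a canonical form using elementary column operations from the shift-invariant Clifford group (CNOT, Hadamard, phase, and controlled-phase gates, together with per-qubit delay and advance) of Section~\ref{sec:finite-depth-ops}, supplemented by the infinite-depth operations of Section~\ref{sec:infinite-depth-ops}, while allowing arbitrary row operations, which do not affect the error-correcting properties. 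Concretely, I would apply Smith-form decompositions to the relevant submatrices: the isotropic block reduces to a diagonal matrix of invariant factors, those equal to powers of $D$ being cleared with finite-depth gates and those that are genuine polynomials being cleared with infinite-depth operations exactly as in the proof of Lemma~\ref{lemma:ieid}; each ebit pair $\left(\bar{Z}_{i},\bar{X}_{i}\right)$ is first reduced by finite-depth operations so that $\bar{X}_{i}$ becomes weight one, and the residual anticommuting $\bar{Z}_{i}$ is precisely the part that no further row or column operation can decouple.

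Next I would incorporate entanglement. The residual anticommutativity is resolved by adjoining $c$ extra columns corresponding to Bob's halves of $c$ ebits, producing a commuting stabilizer on $n+c$ qubits per frame that stabilizes $c$ ebits, $a$ ancilla qubits, and $k+c$ information qubits (the $2c+a=n-k$ generators leave a codespace of dimension $2^{k+c}$). Writing out this canonical commuting stabilizer, I would then observe, as in~(\ref{eq:bare-1st-stab}) and the surrounding discussion and using the subcode notion of Ref.~\cite{isit2006grassl}, that the unencoded stabilizer of $c$ ebits plus $a$ ancilla qubits plus $k+c$ information qubits is a subcode of a suitably delayed version of this canonical stabilizer, so it has equivalent error-correcting properties and the same asymptotic rate. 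Alice's encoding circuit is then obtained by performing all of the column operations from the decomposition, finite-depth and infinite-depth, in reverse order, starting from the unencoded stream; by construction the encoded stabilizer has the error-correcting properties of the original $H\left(D\right)$, so this is an $\left[\left[n,k+c;c\right]\right]$ entanglement-assisted quantum convolutional code.

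Finally I would construct the decoding circuit using finite-depth operations only, tracking the information-qubit matrix through the encoding circuit in parallel with the stabilizer, just as in the proofs of Lemmas~\ref{lemma:ieid} and~\ref{lemma:iefd}. The finite-depth portion of the encoder is simply inverted. For the infinite-depth portion, I would invoke the coherent-teleportation argument of Section~\ref{sec:infinite-depth-ops}: an infinite-depth operation that multiplies a column on the ``X'' side by $1/f\left(D\right)$ multiplies the corresponding column on the ``Z'' side by the finite polynomial $f\left(D^{-1}\right)$, the two being forced to be inverses by the invariance of the shifted symplectic product under Clifford column operations. Consequently, after encoding, the logical operators of the information qubits carry rational factors $1/f\left(D\right)$ on the ``X'' side but only finite polynomial factors on the ``Z'' side; Bob can therefore use finite-depth CNOT and Hadamard gates, together with row operations that multiply the logical operators by stabilizer elements, to return the information qubits to canonical positions within each frame, possibly delayed or advanced, yielding an output information-qubit matrix of the same form as the unencoded one.

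The main obstacle I expect is precisely this last step: verifying in the general, non-CSS setting that the residual rational factors always land on the ``X'' side and the finite factors on the ``Z'' side consistently, so that the decoder never needs an infinite-depth operation. This is guaranteed by the time-reversal symmetry of $\odot$ and the fact that column operations preserve it, but making the bookkeeping explicit across all $c$ ebit blocks and the $a$ ancilla blocks simultaneously, and checking that the row operations needed to render Bob's measured generators finite-weight do not reintroduce infinite weight into the logical operators, is the delicate part. A secondary subtlety, absent in the CSS case, is that the ``Z'' and ``X'' halves of a single generator interact, so every column operation must be tracked for its simultaneous effect on both halves; this is handled as in Section~\ref{sec:row-col-ops} but requires care in the expanded, Gram-Schmidt-processed coordinates.
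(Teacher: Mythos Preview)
Your high-level strategy---reduce $H(D)$ to a canonical form by column operations in the shift-invariant Clifford group together with row operations, augment with ebit columns, encode by reversing the operations, and decode with finite-depth operations by tracking the information-qubit matrix---is correct and is the same as the paper's. However, the paper's proof organizes the reduction of the $2c$ ``ebit-type'' rows differently from what you sketch, and this difference is precisely what dissolves the main obstacle you flag.

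Rather than reducing each anticommuting pair $(\bar Z_i,\bar X_i)$ individually, the paper first \emph{permutes} the $2c$ rows so that the shifted symplectic product matrix changes from $\bigoplus_{i=1}^{c} J$ to the block form $\left[\begin{smallmatrix}0&I\\ I&0\end{smallmatrix}\right]$. After this permutation the \emph{first} $c$ rows form a commuting set, so the Grassl--R\"otteler single-set reduction (column operations only, no row operations) can be applied to them collectively. This yields a lower-triangular $L(D)$ on the ``X'' side of the first $c$ rows and forces, by preservation of the shifted symplectic product, an upper-triangular $U(D)$ on the ``Z'' side of the second $c$ rows with diagonal entries $u_{ii}(D)=1/l_{ii}(D^{-1})$. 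Row operations by rational polynomials then turn both $L$ and $U$ into identities. At that point the residual rational content is concentrated in a single diagonal matrix $\Gamma(D)$ of denominators, and the encoding applies CNOTs for the numerators followed by infinite-depth operations for $\Gamma(D)$; the decoding inverts only the finite-depth CNOTs and uses row operations from the stabilizer to absorb the remaining rational pieces in the information-qubit matrix. This triangularization is exactly the bookkeeping device that guarantees the rational factors sit where the finite-depth decoder can eliminate them, so your worry about ``verifying the residual rational factors always land on the X side'' is resolved mechanically rather than by a delicate case analysis.

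A second, smaller difference: for the $a$ isotropic rows the paper does \emph{not} use infinite-depth column operations as you propose. It applies the Grassl--R\"otteler algorithm, obtains a diagonal $\Gamma(D)$ that may contain rational entries, and then simply replaces those entries by $1$---equivalently, passes to a subcode with the same error-correcting properties and rate, or performs rational row operations. This keeps all infinite-depth operations confined to the ebit block.
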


%

%TCIMACRO{\TeXButton{begin proof}{\begin{proof}}}%
%BeginExpansion
\begin{proof}%
%EndExpansion
We prove the theorem by giving an algorithm to compute both the encoding
circuit and the decoding circuit. The shifted symplectic product matrix
$\Omega\left(  D\right)  $ for check matrix $H\left(  D\right)  $ is in
standard form so that the first $2c$ rows have the commutation relations of
$c$ halves of ebits and the last $a$ rows have the commutation relations of
$a$ ancilla qubits. Perform the algorithm outlined in
Refs.~\cite{ieee2006grassl,isit2006grassl}\ on the last $a$ generators that
correspond to the ancilla qubits. The algorithm uses finite-depth CNOT\ gates,
Hadamard gates, and phase gates. The resulting check matrix has the following
form:%
\begin{equation}
\left[  \left.
\begin{array}
[c]{cc}%
Z^{\prime\prime}\left(  D\right)  & Z^{\prime}\left(  D\right) \\
\Gamma\left(  D\right)  & 0
\end{array}
\right\vert
\begin{array}
[c]{cc}%
0 & X^{\prime}\left(  D\right) \\
0 & 0
\end{array}
\right]  , \label{eq:start-finite-depth-encode}%
\end{equation}
where the matrix $Z^{\prime\prime}\left(  D\right)  $ and the null matrix at
the top left of the \textquotedblleft X\textquotedblright\ matrix each have
dimension $2c\times a$, the matrices $Z^{\prime}\left(  D\right)  $ and
$X^{\prime}\left(  D\right)  $ each have dimension $2c\times n-a$, and all
matrices in the second set of rows each have $a$ rows. The matrix
$\Gamma\left(  D\right)  $ may have entries that are rational polynomials. If
so, replace each of these entries with a \textquotedblleft1\textquotedblright%
\ so that the resulting matrix has the following form:%
\[
\left[  \left.
\begin{array}
[c]{cc}%
Z^{\prime\prime}\left(  D\right)  & Z^{\prime}\left(  D\right) \\
I & 0
\end{array}
\right\vert
\begin{array}
[c]{cc}%
0 & X^{\prime}\left(  D\right) \\
0 & 0
\end{array}
\right]  .
\]
This replacement is equivalent to taking a subcode of the original that has
equivalent error-correcting properties and rate~\cite{isit2006grassl}. We can
also think of the replacement merely as row operations with rational
polynomials. We then perform row operations from the last $a$ rows to the
first $2c$ rows to obtain the following check matrix:%
\[
\left[  \left.
\begin{array}
[c]{cc}%
0 & Z^{\prime}\left(  D\right) \\
I & 0
\end{array}
\right\vert
\begin{array}
[c]{cc}%
0 & X^{\prime}\left(  D\right) \\
0 & 0
\end{array}
\right]  .
\]
The shifted symplectic product matrix still has the standard form in
(\ref{eq:standard-symp-form}) because these last row operations do not change
its entries. We now focus exclusively on the first $2c$ rows because the
previous steps decoupled the last $a$ rows from the first $2c$ rows. Consider
the following submatrix:%
\[
H^{\prime}\left(  D\right)  =\left[  \left.
\begin{array}
[c]{c}%
Z\left(  D\right)
\end{array}
\right\vert
\begin{array}
[c]{c}%
X\left(  D\right)
\end{array}
\right]  ,
\]
where we have reset variable labels so that $Z\left(  D\right)  =Z^{\prime
}\left(  D\right)  $ and $X\left(  D\right)  =X^{\prime}\left(  D\right)  $.
Perform row permutations on the above matrix so that the shifted symplectic
product matrix for $H^{\prime}\left(  D\right)  $ changes from%
\[%
%TCIMACRO{\dbigoplus \limits_{i=1}^{c}}%
%BeginExpansion
{\displaystyle\bigoplus\limits_{i=1}^{c}}
%EndExpansion
J,
\]
to become%
\begin{equation}
\left[
\begin{array}
[c]{cc}%
0 & I\\
I & 0
\end{array}
\right]  , \label{eq:alt-symp-rels}%
\end{equation}
where each identity and null matrix in the above matrix are $c\times
c$-dimensional. We can employ the algorithm from Ref.~\cite{ieee2006grassl} on
the first $c$ generators because the first $c$ rows of the resulting check
matrix form a commuting set (we do not use the row operations in that
algorithm). The algorithm employs finite-depth CNOT\ gates, Hadamard gates,
and phase gates and reduces the check matrix to have the following form:%
\[
\left[  \left.
\begin{array}
[c]{cc}%
0 & 0\\
U\left(  D\right)  & Z_{2}\left(  D\right)
\end{array}
\right\vert
\begin{array}
[c]{cc}%
L\left(  D\right)  & 0\\
X_{1}\left(  D\right)  & X_{2}\left(  D\right)
\end{array}
\right]  ,
\]
where $L\left(  D\right)  $ is a $c\times c$ lower triangular matrix and
$U\left(  D\right)  $ is a $c\times c$ upper triangular matrix. The
$i^{\text{th}}$ diagonal entry $u_{ii}\left(  D\right)  $ of $U\left(
D\right)  $ is equal to $1/l_{ii}\left(  D^{-1}\right)  $ where $l_{ii}\left(
D\right)  $ is the $i^{\text{th}}$ diagonal entry of $L\left(  D\right)  $.
This relationship holds because of the shifted symplectic relations in
(\ref{eq:alt-symp-rels}) and because gates in the shift-invariant Clifford
group do not affect the shifted symplectic relations. We now employ several
row operations whose net effect is to preserve the shifted symplectic
relations in (\ref{eq:alt-symp-rels})---we can therefore include them as a
part of the original polynomial symplectic Gram-Schmidt orthogonalization
procedure. Multiply row $i$ of the above check matrix by $1/l_{ii}\left(
D\right)  $ and multiply row $i+c$ by $1/u_{ii}\left(  D\right)  $ for all
$i\in\left\{  1,\ldots,c\right\}  $. Then use row operations to cancel all the
off-diagonal entries in both $L\left(  D\right)  $ and $U\left(  D\right)  $.
The resulting check matrix has the following form:%
\[
\left[  \left.
\begin{array}
[c]{cc}%
0 & 0\\
I & Z_{2}^{\prime}\left(  D\right)
\end{array}
\right\vert
\begin{array}
[c]{cc}%
I & 0\\
X_{1}^{\prime}\left(  D\right)  & X_{2}^{\prime}\left(  D\right)
\end{array}
\right]  ,
\]
where the primed matrices result from all the row operations. One can check
that the shifted symplectic relations of the above matrix are equivalent to
those in (\ref{eq:alt-symp-rels}). Perform Hadamard gates on the first $c$
qubits. The check matrix becomes%
\begin{equation}
\left[  \left.
\begin{array}
[c]{cc}%
I & 0\\
X_{1}^{\prime}\left(  D\right)  & Z_{2}^{\prime}\left(  D\right)
\end{array}
\right\vert
\begin{array}
[c]{cc}%
0 & 0\\
I & X_{2}^{\prime}\left(  D\right)
\end{array}
\right]  . \label{eq:desired-check-matrix}%
\end{equation}
We show how to encode the above matrix starting from $c$ ebits and $k+c$
information qubits. The following matrix stabilizes a set of $c$ ebits:%
\begin{equation}
\left[  \left.
\begin{array}
[c]{ccc}%
I & I & 0\\
0 & 0 & 0
\end{array}
\right\vert
\begin{array}
[c]{ccc}%
0 & 0 & 0\\
I & I & 0
\end{array}
\right]  , \label{eq:start-encode}%
\end{equation}
where each identity matrix is $c\times c$ and the last column of zeros in each
matrix is $c\times\left(  k+c\right)  $. The receiver Bob possesses the first
$c$ qubits and the sender Alice possesses the last $k+2c$ qubits. The
following matrix is the information-qubit:%
\[
\left[  \left.
\begin{array}
[c]{ccc}%
0 & 0 & I\\
0 & 0 & 0
\end{array}
\right\vert
\begin{array}
[c]{ccc}%
0 & 0 & 0\\
0 & 0 & I
\end{array}
\right]
\]
where each identity matrix is $\left(  k+c\right)  \times\left(  k+c\right)  $
and each column of zeros is $\left(  k+c\right)  \times c$. It is important to
track the information-qubit matrix throughout encoding and decoding so that we
can determine at the end of the process if we have truly decoded the
information qubits. Perform finite-depth CNOT\ operations from the first $c$
ebits to the last $k+c$ qubits to encode the numerators of the entries in
matrix $Z_{2}^{\prime}\left(  D\right)  $. Let $Z_{2,N}^{\prime}\left(
D\right)  $ denote this matrix of the numerators of the entries in
$Z_{2}^{\prime}\left(  D\right)  $. The stabilizer matrix becomes%
\[
\left[  \left.
\begin{array}
[c]{ccc}%
I & I & 0\\
0 & 0 & 0
\end{array}
\right\vert
\begin{array}
[c]{ccc}%
0 & 0 & 0\\
I & I & Z_{2,N}^{\prime}\left(  D\right)
\end{array}
\right]  ,
\]
and the information-qubit matrix becomes%
\[
\left[  \left.
\begin{array}
[c]{ccc}%
0 & Z_{2,N}^{\prime\prime}\left(  D\right)  & I\\
0 & 0 & 0
\end{array}
\right\vert
\begin{array}
[c]{ccc}%
0 & 0 & 0\\
0 & 0 & I
\end{array}
\right]  ,
\]
where $Z_{2,N}^{\prime\prime}\left(  D\right)  $ is the matrix that results on
the \textquotedblleft Z\textquotedblright\ side after performing the CNOT
operations corresponding to the entries in $Z_{2,N}^{\prime}\left(  D\right)
$. Perform Hadamard gates on the last $k+c$ qubits. The stabilizer matrix
becomes%
\[
\left[  \left.
\begin{array}
[c]{ccc}%
I & I & 0\\
0 & 0 & Z_{2,N}^{\prime}\left(  D\right)
\end{array}
\right\vert
\begin{array}
[c]{ccc}%
0 & 0 & 0\\
I & I & 0
\end{array}
\right]  ,
\]
and the information-qubit matrix becomes%
\[
\left[  \left.
\begin{array}
[c]{ccc}%
0 & Z_{2,N}^{\prime\prime}\left(  D\right)  & 0\\
0 & 0 & I
\end{array}
\right\vert
\begin{array}
[c]{ccc}%
0 & 0 & I\\
0 & 0 & 0
\end{array}
\right]  .
\]
Let $X_{2,N}^{\prime}\left(  D\right)  $ denote the matrix whose entries are
the numerators of the entries in $X_{2}^{\prime}\left(  D\right)  $. Perform
CNOT\ gates from the first $c$ qubits to the last $k+c$ qubits corresponding
to the entries in $X_{2,N}^{\prime}\left(  D\right)  $. The stabilizer matrix
becomes%
\[
\left[  \left.
\begin{array}
[c]{ccc}%
I & I & 0\\
0 & A\left(  D\right)  & Z_{2,N}^{\prime}\left(  D\right)
\end{array}
\right\vert
\begin{array}
[c]{ccc}%
0 & 0 & 0\\
I & I & X_{2,N}^{\prime}\left(  D\right)
\end{array}
\right]  ,
\]
where $A\left(  D\right)  =Z_{2,N}^{\prime}\left(  D\right)  X_{2,N}%
^{\prime\prime}\left(  D\right)  $ and $X_{2,N}^{\prime\prime}\left(
D\right)  $ is the matrix that results on the \textquotedblleft
Z\textquotedblright\ side after performing the CNOT operations on the
\textquotedblleft X\textquotedblright\ side corresponding to the entries in
$X_{2,N}^{\prime}\left(  D\right)  $. The information-qubit matrix becomes%
\[
\left[  \left.
\begin{array}
[c]{ccc}%
0 & Z_{2,N}^{\prime\prime}\left(  D\right)  & 0\\
0 & X_{2,N}^{\prime\prime}\left(  D\right)  & I
\end{array}
\right\vert
\begin{array}
[c]{ccc}%
0 & 0 & I\\
0 & 0 & 0
\end{array}
\right]  .
\]
Let $\Gamma\left(  D\right)  $ be a diagonal matrix whose $i^{\text{th}}$
diagonal entry is the denominator of the $i^{\text{th}}$ row of $Z_{2}%
^{\prime}\left(  D\right)  $ and $X_{2}^{\prime}\left(  D\right)  $. We
perform infinite-depth operations corresponding to
the entries in $\Gamma\left(  D\right)  $. The stabilizer matrix becomes%
\[
\left[  \left.
\begin{array}
[c]{ccc}%
I & \Gamma^{-1}\left(  D^{-1}\right)  & 0\\
0 & A\left(  D\right)  & Z_{2,N}^{\prime}\left(  D\right)
\end{array}
\right\vert
\begin{array}
[c]{ccc}%
0 & 0 & 0\\
I & \Gamma\left(  D\right)  & X_{2,N}^{\prime}\left(  D\right)
\end{array}
\right]  ,
\]
and the information-qubit matrix becomes%
\[
\left[  \left.
\begin{array}
[c]{ccc}%
0 & Z_{2,N}^{\prime\prime}\left(  D\right)  \Gamma^{-1}\left(  D^{-1}\right)
& 0\\
0 & X_{2,N}^{\prime\prime}\left(  D\right)  \Gamma^{-1}\left(  D^{-1}\right)
& I
\end{array}
\right\vert
\begin{array}
[c]{ccc}%
0 & 0 & I\\
0 & 0 & 0
\end{array}
\right]  .
\]
The above stabilizer matrix is equivalent to the desired one
in\ (\ref{eq:desired-check-matrix})\ by several row operations. We premultiply
the first set of rows by $\Gamma\left(  D^{-1}\right)  $ and multiply the
second set of rows by $\Gamma^{-1}\left(  D\right)  $. We can also use the
resulting identity matrix in the first set of rows to perform row operations
from the first set of rows to the second set of rows to realize the matrix
$X_{1}^{\prime}\left(  D\right)  $. The operators that Bob would really
measure need to have finite weight so he would measure the operators
corresponding to the entries in the following stabilizer matrix:%
\begin{equation}
\left[  \left.
\begin{array}
[c]{ccc}%
\Gamma\left(  D^{-1}\right)  & I & 0\\
0 & A\left(  D\right)  & Z_{2,N}^{\prime}\left(  D\right)
\end{array}
\right\vert
\begin{array}
[c]{ccc}%
0 & 0 & 0\\
I & \Gamma\left(  D\right)  & X_{2,N}^{\prime}\left(  D\right)
\end{array}
\right]  . \label{eq:end-encode}%
\end{equation}
We are done with the encoding algorithm. Alice begins with a set of ebits and
performs the encoding operations detailed in (\ref{eq:start-encode}%
-\ref{eq:end-encode}) and then performs the finite-depth operations detailed
in (\ref{eq:start-finite-depth-encode}-\ref{eq:desired-check-matrix})\ in
reverse order. We now detail the steps of the decoding algorithm. Perform CNOT
gates corresponding to the entries in $X_{2,N}^{\prime}\left(  D\right)  $
from the first set of $c$ qubits to the last set of $k+c$ qubits. The
stabilizer matrix becomes%
\[
\left[  \left.
\begin{array}
[c]{ccc}%
I & \Gamma^{-1}\left(  D^{-1}\right)  & 0\\
B\left(  D\right)  & A\left(  D\right)  & Z_{2,N}^{\prime}\left(  D\right)
\end{array}
\right\vert
\begin{array}
[c]{ccc}%
0 & 0 & 0\\
I & \Gamma\left(  D\right)  & 0
\end{array}
\right]  ,
\]
where $B\left(  D\right)  \equiv Z_{2,N}^{\prime}\left(  D\right)
X_{2,N}^{\prime\prime}\left(  D\right)  $. The information-qubit matrix
becomes%
\[
\left[  \left.
\begin{array}
[c]{ccc}%
0 & Z_{2,N}^{\prime\prime}\left(  D\right)  \Gamma^{-1}\left(  D^{-1}\right)
& 0\\
X_{2,N}^{\prime\prime}\left(  D\right)  & X_{2,N}^{\prime\prime}\left(
D\right)  \Gamma^{-1}\left(  D^{-1}\right)  & I
\end{array}
\right\vert
\begin{array}
[c]{ccc}%
0 & 0 & I\\
0 & 0 & 0
\end{array}
\right]  .
\]
Perform Hadamard gates on the last set of $k+c$ qubits. The stabilizer matrix
becomes%
\[
\left[  \left.
\begin{array}
[c]{ccc}%
I & \Gamma^{-1}\left(  D^{-1}\right)  & 0\\
B\left(  D\right)  & A\left(  D\right)  & 0
\end{array}
\right\vert
\begin{array}
[c]{ccc}%
0 & 0 & 0\\
I & \Gamma\left(  D\right)  & Z_{2,N}^{\prime}\left(  D\right)
\end{array}
\right]  ,
\]
and the information-qubit matrix becomes%
\[
\left[  \left.
\begin{array}
[c]{ccc}%
0 & Z_{2,N}^{\prime\prime}\left(  D\right)  \Gamma^{-1}\left(  D^{-1}\right)
& I\\
X_{2,N}^{\prime\prime}\left(  D\right)  & X_{2,N}^{\prime\prime}\left(
D\right)  \Gamma^{-1}\left(  D^{-1}\right)  & 0
\end{array}
\right\vert
\begin{array}
[c]{ccc}%
0 & 0 & 0\\
0 & 0 & I
\end{array}
\right]  .
\]
Perform CNOT gates from the first set of $c$ qubits to the last set of $k+c$
qubits. These CNOT\ gates correspond to the entries in $Z_{2,N}^{\prime
}\left(  D\right)  $. The stabilizer matrix becomes%
\[
\left[  \left.
\begin{array}
[c]{ccc}%
I & \Gamma^{-1}\left(  D^{-1}\right)  & 0\\
B\left(  D\right)  & A\left(  D\right)  & 0
\end{array}
\right\vert
\begin{array}
[c]{ccc}%
0 & 0 & 0\\
I & \Gamma\left(  D\right)  & 0
\end{array}
\right]  ,
\]
and the information-qubit matrix becomes%
\[
\left[  \left.
\begin{array}
[c]{ccc}%
Z_{2,N}^{\prime\prime}\left(  D\right)  & Z_{2,N}^{\prime\prime}\left(
D\right)  \Gamma^{-1}\left(  D^{-1}\right)  & I\\
X_{2,N}^{\prime\prime}\left(  D\right)  & X_{2,N}^{\prime\prime}\left(
D\right)  \Gamma^{-1}\left(  D^{-1}\right)  & 0
\end{array}
\right\vert
\begin{array}
[c]{ccc}%
0 & 0 & 0\\
0 & 0 & I
\end{array}
\right]  .
\]
Row operations from the first set of rows of the stabilizer to each of the two
sets of rows in the information-qubit matrix reduce the information-qubit
matrix to the following form:%
\[
\left[  \left.
\begin{array}
[c]{ccc}%
0 & 0 & I\\
0 & 0 & 0
\end{array}
\right\vert
\begin{array}
[c]{ccc}%
0 & 0 & 0\\
0 & 0 & I
\end{array}
\right]  .
\]
Then we perform the finite-depth operations detailed in
(\ref{eq:start-finite-depth-encode}-\ref{eq:desired-check-matrix}). We have
now finished the algorithm for the decoding circuit because the logical
operators for the information qubits appear in their original form.%
%TCIMACRO{\TeXButton{end proof}{\end{proof}}}%
%BeginExpansion
\end{proof}%
%EndExpansion

\subsection{Discussion}

Similar practical issues arise in these circuits as we discussed in the
previous chapter. Encoding circuits with infinite-depth operations are
acceptable if we assume that noiseless encoding is possible. Otherwise,
infinite-depth operations could lead to catastrophic propagation of
uncorrected errors. Noiseless encoding is difficult to achieve in practice but
we may be able to come close to it by concatenation of codes at the encoder.

There is a dichotomy of these codes similar to that in the previous chapter.
Some of the codes may have a simpler form in which the encoding circuit
requires finite-depth operations only. These codes fall within the first class
of codes discussed in the previous chapter and will be more useful in practice
because they propagate errors in the encoding circuit to a finite number of
qubits only. The remaining codes that do not have this structure fall within
the second class of codes whose encoding circuits have both finite-depth and
infinite-depth operations and whose decoding circuits have finite-depth
operations only.

\subsection{Importing Classical Convolutional Codes over $GF\left(  4\right)
$}

One benefit of the new entanglement-assisted quantum convolutional codes is
that we can produce one from an arbitrary classical convolutional code over
$GF\left(  4\right)  $. The error-correcting properties of the classical
convolutional code translate to the resulting quantum convolutional code. It
is less clear how the rate translates because we use the expansion technique.
We know that the term $\left(  2k-n\right)  /n$ lower bounds the
\textquotedblleft entanglement-assisted\textquotedblright\ rate where $n$ and
$k$ are the parameters from the imported classical code. The rate should get a
significant boost from entanglement---the rate boosts by the number of ebits
that the code requires.

The construction for importing an $\left[  n,k\right]  $ classical
convolutional code over $GF\left(  4\right)  $ is as follows. Suppose the
check matrix for the classical code is an $n-k\times n$-dimensional matrix
$H\left(  D\right)  $ whose entries are polynomials over $GF\left(  4\right)
$. We construct the quantum check matrix $\tilde{H}\left(  D\right)  $
according to the following formula:%
\[
\tilde{H}\left(  D\right)  =\gamma\left(  \left[
\begin{array}
[c]{c}%
\omega H\left(  D\right)  \\
\bar{\omega}H\left(  D\right)
\end{array}
\right]  \right)
\]
where $\gamma$ denotes the isomorphism between elements of $GF(4)$ and
symplectic binary vectors detailed in (\ref{eq:gf4-pauli}). We use this
construction for the example in the next section.

\section{Example}

\label{sec:examples}We take the convolutional generators from
Ref.~\cite{arx2007wildeCED} as our example. Ref.~\cite{science2006brun}
originally used these generators in a block code. We import the following
classical convolutional code over $GF\left(  4\right)  $:%
\begin{equation}
\left(  \cdots|0000|1\bar{\omega}10|1101|0000|\cdots\right)  .
\end{equation}
We produce two quantum convolutional generators by multiplying the above
generator by $\omega$ and $\bar{\omega}$ and applying the map in (\ref{eq:gf4-pauli}%
). The resulting quantum convolutional generators are as follows:%
\begin{equation}
\left(  \cdots|IIII|ZXZI|ZZIZ|IIII|\cdots\right)  ,\ \ \ \ \ \ \left(
\cdots|IIII|XYXI|XXIX|IIII|\cdots\right)  .\nonumber
\end{equation}
These generators have the following representation in the polynomial
formalism:%
\[
\left[  \left.
\begin{array}
[c]{cccc}%
1+D & D & 1 & D\\
0 & 1 & 0 & 0
\end{array}
\right\vert
\begin{array}
[c]{cccc}%
0 & 1 & 0 & 0\\
1+D & 1+D & 1 & D
\end{array}
\right]  .
\]
The shifted symplectic product matrix $\Omega\left(  D\right)  $ for the above
code is as follows:%
\[
\Omega\left(  D\right)  =%
\begin{bmatrix}
D+D^{-1} & D^{-1}\\
D & D+D^{-1}%
\end{bmatrix}
.
\]
The above matrix is not reducible to the standard form by any row operations.
We therefore expand the code by a factor of two to give four generators with a
frame size of eight. The two-expanded check matrix $H\left(  D\right)  $ is as
follows:%
\[
H\left(  D\right)  =\left[  \left.
\begin{array}
[c]{cccccccc}%
1 & 0 & 1 & 0 & 1 & 1 & 0 & 1\\
0 & 1 & 0 & 0 & 0 & 0 & 0 & 0\\
D & D & 0 & D & 1 & 0 & 1 & 0\\
0 & 0 & 0 & 0 & 0 & 1 & 0 & 0
\end{array}
\right\vert
\begin{array}
[c]{cccccccc}%
0 & 1 & 0 & 0 & 0 & 0 & 0 & 0\\
1 & 1 & 1 & 0 & 1 & 1 & 0 & 1\\
0 & 0 & 0 & 0 & 0 & 1 & 0 & 0\\
D & D & 0 & D & 1 & 1 & 1 & 0
\end{array}
\right]  .
\]
The shifted symplectic product matrix $\Omega_{2}\left(  D\right)  $ of the
two-expanded check matrix is as follows:%
\[
\Omega_{2}\left(  D\right)  =%
\begin{bmatrix}
0 & 0 & 1+D^{-1} & D^{-1}\\
0 & 0 & 1 & 1+D^{-1}\\
1+D & 1 & 0 & 0\\
D & 1+D & 0 & 0
\end{bmatrix}
.
\]
We proceed with the Gram-Schmidt procedure because this matrix satisfies its
initial requirements. We swap generators two and three to be the first and
second generators of the check matrix because they have the commutation
relations of half of an ebit. The shifted symplectic product matrix becomes%
\[%
\begin{bmatrix}
0 & 1 & 0 & 1+D^{-1}\\
1 & 0 & 1+D & 0\\
0 & 1+D^{-1} & 0 & D^{-1}\\
1+D & 0 & D & 0
\end{bmatrix}
.
\]
Multiply generator two by $1+D$ and add to generator four. Multiply generator
one by $1+D^{-1}$ and add to generator three. The shifted symplectic matrix
becomes%
\[%
\begin{bmatrix}
0 & 1 & 0 & 0\\
1 & 0 & 0 & 0\\
0 & 0 & 0 & 1+D^{-1}+D^{-2}\\
0 & 0 & 1+D+D^{2} & 0
\end{bmatrix}
.
\]
We finally divide generator four by $1+D+D^{2}$ and the shifted symplectic
product matrix then becomes%
\[%
%TCIMACRO{\dbigoplus \limits_{i=1}^{2}}%
%BeginExpansion
{\displaystyle\bigoplus\limits_{i=1}^{2}}
%EndExpansion
J\text{,}%
\]
so that it has the commutation relations of halves of two ebits. The check
matrix resulting from these operations is as follows:%
\begin{equation}
H_{2}\left(  D\right)  =\left[  \left.
\begin{array}
[c]{c}%
Z_{2}\left(  D\right)
\end{array}
\right\vert
\begin{array}
[c]{c}%
X_{2}\left(  D\right)
\end{array}
\right]  , \label{eq:example-code}%
\end{equation}
where%
\begin{align*}
Z_{2}\left(  D\right)   &  =\left[
\begin{array}
[c]{cccccccc}%
0 & 1 & 0 & 0 & 0 & 0 & 0 & 0\\
D & D & 0 & D & 1 & 0 & 1 & 0\\
1 & \frac{1}{D}+1 & 1 & 0 & 1 & 1 & 0 & 1\\
\frac{D^{2}+D}{D^{2}+D+1} & \frac{D^{2}+D}{D^{2}+D+1} & 0 & \frac{D^{2}%
+D}{D^{2}+D+1} & \frac{D+1}{D^{2}+D+1} & \frac{1}{D^{2}+D+1} & \frac
{D+1}{D^{2}+D+1} & 0
\end{array}
\right]  ,\\
X_{2}\left(  D\right)   &  =\left[
\begin{array}
[c]{cccccccc}%
1 & 1 & 1 & 0 & 1 & 1 & 0 & 1\\
0 & 0 & 0 & 0 & 0 & 1 & 0 & 0\\
\frac{1}{D}+1 & \frac{1}{D} & \frac{1}{D}+1 & 0 & \frac{1}{D}+1 & \frac{1}%
{D}+1 & 0 & \frac{1}{D}+1\\
\frac{D}{D^{2}+D+1} & \frac{D}{D^{2}+D+1} & 0 & \frac{D}{D^{2}+D+1} & \frac
{1}{D^{2}+D+1} & \frac{D}{D^{2}+D+1} & \frac{1}{D^{2}+D+1} & 0
\end{array}
\right]  .
\end{align*}
The error-correcting properties of the above check matrix are equivalent to
the error-correcting properties of the original two generators.

This code sends six information qubits and consumes two ebits per eight
channel uses. The rate pair for this code is therefore $\left(
3/4,1/4\right)  $.

We can now apply the algorithm in Theorem~\ref{thm:main} to determine the
encoding and decoding circuits for this code. The encoding circuit begins from
a set of two ebits and eight information qubits per frame with the following
stabilizer matrix:%
\[
H_{0}\left(  D\right)  =\left[  \left.
\begin{array}
[c]{cccccccccc}%
1 & 0 & 1 & 0 & 0 & 0 & 0 & 0 & 0 & 0\\
0 & 1 & 0 & 1 & 0 & 0 & 0 & 0 & 0 & 0\\
0 & 0 & 0 & 0 & 0 & 0 & 0 & 0 & 0 & 0\\
0 & 0 & 0 & 0 & 0 & 0 & 0 & 0 & 0 & 0
\end{array}
\right\vert
\begin{array}
[c]{cccccccccc}%
0 & 0 & 0 & 0 & 0 & 0 & 0 & 0 & 0 & 0\\
0 & 0 & 0 & 0 & 0 & 0 & 0 & 0 & 0 & 0\\
1 & 0 & 1 & 0 & 0 & 0 & 0 & 0 & 0 & 0\\
0 & 1 & 0 & 1 & 0 & 0 & 0 & 0 & 0 & 0
\end{array}
\right]  .
\]
We label the eight qubits on the right side of each matrix above as
$1,\ldots,8$. We label Bob's two qubits on the left as $B1$ and $B2$. Perform
the following finite-depth operations (in order from left to right and then
top to bottom):%
\begin{align*}
&  C\left(  1,4,D+D^{2}\right)  C\left(  1,5,1+D^{2}\right)  C\left(
1,6,1\right)  C\left(  1,7,1+D\right)  C\left(  2,4,D\right)  C\left(
2,5,1+D\right)  C\left(  2,6,1\right) \\
&  H\left(  3,\ldots,8\right)  C\left(  1,4,D+D^{2}+D^{4}\right)  C\left(
1,5,D^{2}\right)  C\left(  1,6,1+D\right)  C\left(  1,7,D^{2}\right)  C\left(
2,4,D+D^{2}\right) \\
&  C\left(  2,5,1+D\right)  C\left(  2,6,1\right)  C\left(  2,7,1+D\right)
\end{align*}
where we use the notation $C\left(  q_{1},q_{2},f\left(  D\right)  \right)  $
to represent a finite-depth CNOT\ gate from qubit one to qubit two that
implements the polynomial $f\left(  D\right)  $, $H\left(  q_{i},\ldots
,q_{j}\right)  $ is a sequence of Hadamard gates applied to qubits $q_{i}$
through $q_{j}$ in each frame, $P\left(  q\right)  $ is a phase gate applied
to qubit $q$ in each frame, and $C\left(  q,1/f\left(  D\right)  \right)  $ is
an infinite-depth CNOT\ gate implementing the rational polynomial $1/f\left(
D\right)  $ on qubit $q$. Alice performs the following infinite-depth operations:%
\[
H\left(  1,2\right)  C\left(  1,\frac{1}{1+D^{-1}+D^{-2}}\right)  C\left(
2,\frac{1}{1+D^{-1}+D^{-2}}\right)  H\left(  1,2\right)  .
\]
She then finishes the encoding circuit with the following finite-depth
operations:%
\begin{align}
&  H\left(  1,2\right)  C\left(  2,3,1\right)  C\left(  2,5,1\right)  C\left(
2,6,1\right)  C\left(  2,8,1\right)  P\left(  2\right)  H\left(
3,\ldots,8\right)  S\left(  2,3\right) \label{eq:example-finite-depth}\\
&  C\left(  1,2,1\right)  C\left(  1,3,1\right)  C\left(  1,5,1\right)
C\left(  1,6,1\right)  C\left(  1,8,1\right)  P\left(  2\right) \nonumber
\end{align}
The code she encodes has equivalent error-correcting properties to the check
matrix in (\ref{eq:example-code}).

Bob performs the following operations in the decoding circuit. He first
performs the operations in (\ref{eq:example-finite-depth})\ in reverse order.
He then performs the following finite-depth operations:%
\begin{align*}
&  C\left(  B1,4,D+D^{2}+D^{4}\right)  C\left(  B1,5,D^{2}\right)  C\left(
B1,6,1+D\right)  C\left(  B1,7,D^{2}\right)  C\left(  B2,4,D+D^{2}\right) \\
&  C\left(  B2,5,1+D\right)  C\left(  B2,6,1\right)  C\left(  B2,7,1+D\right)
H\left(  3,\ldots,8\right) \\
&  C\left(  B1,4,D+D^{2}\right)  C\left(  B1,5,1+D^{2}\right)  C\left(
B1,6,1\right)  C\left(  B1,7,1+D\right)  C\left(  B2,4,D\right) \\
&  C\left(  B2,5,1+D\right)  C\left(  B2,6,1\right)
\end{align*}
The information qubits then appear at the output of this online decoding circuit.

\section{Optimal Entanglement Formulas}

We conjecture two formulas for the optimal number of ebits that a general
(non-CSS) entanglement-assisted quantum convolutional code or one imported
from a classical quaternary convolutional code need to consume per frame of
operation. We show that this conjecture holds for a particular example.

\begin{conjecture}
The optimal number\ $c$ of ebits necessary per frame for an
entanglement-assisted quantum convolutional code is%
\begin{equation}
c=\mathrm{rank}\left(  H_{X}(D)H_{Z}^{T}(D^{-1})+H_{Z}(D)H_{X}^{T}%
(D^{-1})\right)  /2
\end{equation}
where $H(D)=\left[  \left.
\begin{array}
[c]{c}%
H_{Z}(D)
\end{array}
\right\vert
\begin{array}
[c]{c}%
H_{X}(D)
\end{array}
\right]  $ represents the parity check matrix for a set of quantum
convolutional generators that do not necessarily form a commuting set.
\end{conjecture}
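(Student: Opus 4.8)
The plan is to mirror the proof of Theorem~\ref{thm:opt-ebit-formula}, with the shifted symplectic product matrix $\Omega(D) = H_{X}(D)H_{Z}^{T}(D^{-1}) + H_{Z}(D)H_{X}^{T}(D^{-1})$ playing the role of the binary symplectic product matrix and with the rank taken over the field of rational functions $\mathbb{Z}_{2}(D)$ (the fraction field of $\mathbb{Z}_{2}[D,D^{-1}]$). First I would recall from Section~\ref{sec:GS} that the polynomial symplectic Gram-Schmidt orthogonalization procedure, possibly preceded by an $l$-expansion of the generators, produces a row-operation matrix $R(D)$ that is invertible over $\mathbb{Z}_{2}(D)$ and yields a new check matrix $H'(D) = R(D)H(D)$ whose shifted symplectic product matrix is
\[
\Omega'(D) = R(D)\,\Omega(D)\,R^{T}(D^{-1}) = \bigoplus_{i=1}^{c} J \ \oplus\ \bigoplus_{j=1}^{n-k-2c}\left[0\right],
\]
using (\ref{eq:symp-row-op}) and the standard form (\ref{eq:standard-symp-form}). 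Theorem~\ref{thm:main} then realizes this code as an $[[n,k+c;c]]$ entanglement-assisted quantum convolutional code consuming exactly $c$ ebits per frame, which gives the achievability half of the conjecture.

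For the rank count, note that $M \mapsto R(D)M$ and $M \mapsto M R^{T}(D^{-1})$ are invertible linear maps on matrices over $\mathbb{Z}_{2}(D)$, so $\mathrm{rank}(\Omega'(D)) = \mathrm{rank}(\Omega(D))$; and the right-hand side above has rank $2c$ since each $J$ block is full rank two and each $\left[0\right]$ block has rank zero. When an $l$-expansion is needed, I would invoke the lemma of Section~\ref{sec:GS} that expresses $\Omega_{l}(D)$ via $\lfloor R_{l}(D^{1/l})\Omega(D^{1/l})R_{l}^{T}(D^{-1/l})\rfloor$ and show $\mathrm{rank}_{\mathbb{Z}_{2}(D)}(\Omega_{l}(D)) = l\cdot\mathrm{rank}_{\mathbb{Z}_{2}(D)}(\Omega(D))$, matching the fact that the $l$-expanded code bundles $l$ original frames (and $l$ times as many ebits) into one new frame; hence the ebit count per original frame is $\mathrm{rank}(\Omega(D))/2$, independent of any expansion used along the way. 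The example in Section~\ref{sec:examples} is a useful sanity check: there $\mathrm{rank}(\Omega(D)) = 2$ while $\mathrm{rank}(\Omega_{2}(D)) = 4$, consistent with one ebit per original frame.

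The optimality (converse) half would use the convolutional analogue of Gottesman's argument (Ref.~\cite{unpub2007got}), the same mechanism invoked in the proof of Theorem~\ref{thm:opt-ebit-formula}: any entanglement-assisted quantum convolutional code realizing the given generators must resolve, on Alice's side alone, the anticommutativity recorded by $\Omega(D)$; adjoining ancilla qubits contributes only rank-zero blocks, and each ebit resolves the commutation relations of at most one $J$ block, so at least $\mathrm{rank}(\Omega(D))/2$ ebits per frame are necessary. The content to be checked is that $\mathrm{rank}(\Omega(D))/2$ is a genuine invariant of the code's commutation structure over $\mathbb{Z}_{2}(D)$, unchanged by the additive (error-correction-preserving) row operations and by adjoining ancillas.

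The main obstacle is that the polynomial symplectic Gram-Schmidt procedure is only \emph{conjectured} to converge (after bounded expansion), as Section~\ref{sec:GS} explicitly notes, so a clean proof should bypass it. I would instead aim to establish directly, using the structure theory of Hermitian forms over the field $\mathbb{Z}_{2}(D)$ equipped with the involution $D \mapsto D^{-1}$, that every matrix $\Omega(D)$ satisfying $\Omega(D) = \Omega^{T}(D^{-1})$ together with the self time-reversal relations (\ref{eq:self-time-reversal-sym}) and with vanishing constant terms along its diagonal is congruent, via $\Omega(D) \mapsto R(D)\Omega(D)R^{T}(D^{-1})$ with $R(D)$ invertible over $\mathbb{Z}_{2}(D)$, to $\bigoplus_{i} J \oplus \bigoplus_{j}\left[0\right]$ with exactly $\mathrm{rank}(\Omega(D))/2$ copies of $J$. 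Proving this normal form — in particular that the rank is automatically even and that no defective, non-$J$ indecomposable blocks occur over this twisted-involution field — is the technical heart, and once it is in hand both the achievability and the converse follow without any appeal to convergence of the Gram-Schmidt iteration (which would then matter only for explicit circuit construction, not for the formula).
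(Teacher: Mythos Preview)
The paper does not prove this statement --- it is explicitly a \emph{conjecture}, and the paper's accompanying discussion amounts to exactly the first half of your proposal: once the generators have been expanded and the polynomial symplectic Gram-Schmidt procedure has reduced $\Omega(D)$ to the standard form (\ref{eq:standard-symp-form}), the formula follows by the same mechanism as Theorem~\ref{thm:opt-ebit-formula}. The paper then stops, stating ``we are not sure how to apply this formula to an initial set of unexpanded generators.'' So there is no proof in the paper to compare your attempt against; rather, you are attempting to close a gap the paper leaves open.

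Your plan identifies the obstacle correctly and proposes two genuine advances beyond the paper. The first is the claim $\mathrm{rank}_{\mathbb{Z}_{2}(D)}(\Omega_{l}(D)) = l\cdot\mathrm{rank}_{\mathbb{Z}_{2}(D)}(\Omega(D))$, which would indeed make the per-original-frame ebit count independent of the expansion factor; your sanity check against Section~\ref{sec:examples} is correct (rank $2$ before expansion, rank $4$ after $2$-expansion, one ebit per original frame). However, you have not proved this scaling lemma --- the flooring operation in the expression $\Omega_{l}(D) = \lfloor R_{l}(D^{1/l})\Omega(D^{1/l})R_{l}^{T}(D^{-1/l})\rfloor$ is not a ring homomorphism, so rank preservation is not automatic, and this is where the real work lies.

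Your second idea, to bypass Gram-Schmidt convergence entirely by proving a normal form for matrices satisfying $\Omega(D) = \Omega^{T}(D^{-1})$ with zero constant diagonal terms under the congruence $\Omega \mapsto R(D)\Omega R^{T}(D^{-1})$, is the right abstraction and is more ambitious than anything the paper attempts. But you correctly flag it as ``the technical heart'': classifying Hermitian forms over $\mathbb{Z}_{2}(D)$ with the involution $D\mapsto D^{-1}$ is a nontrivial piece of algebra, and in particular you would need to rule out indecomposable blocks other than $J$ and $[0]$. Without either this normal form or the rank-scaling lemma established, the proposal remains a well-aimed research outline rather than a proof, which is consistent with the statement's status as an open conjecture in the paper.
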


It is clear that the above formula holds after we have expanded an original
set of generators and their commutation relations are reducible by the
polynomial symplectic Gram-Schmidt orthogonalization procedure to the standard
form in (\ref{eq:standard-symp-form}). The proof technique follows from the
proof technique outlined in Section~\ref{sec:opt-ebit}. But we are not sure
how to apply this formula to an initial set of unexpanded generators.

\begin{conjecture}
The optimal number $c$ of ebits required per frame for an
entanglement-assisted quantum convolutional code imported from a classical
quaternary convolutional code with parity check matrix $H\left(  D\right)  $
is%
\begin{equation}
c=\mathrm{rank}\left(  H(D)H^{\dag}(D^{-1})\right)  .
\end{equation}

\end{conjecture}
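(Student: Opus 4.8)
The plan is to adapt, almost verbatim, the proof of the $GF(4)$ block-code corollary of Section~\ref{sec:opt-ebit}, replacing the ordinary symplectic product by the shifted symplectic product and treating the classical parity check matrix $H(D)$ as a matrix of Laurent polynomials over $GF(4)$. Throughout I would take the first conjecture of this section as granted: for the quantum check matrix $\tilde H(D)=\gamma\left(\left[\begin{smallmatrix}\omega H(D)\\ \bar\omega H(D)\end{smallmatrix}\right]\right)$ produced by the import construction, the optimal number of ebits per frame equals $\mathrm{rank}\left(\Omega_{\tilde H}(D)\right)/2$, where $\Omega_{\tilde H}(D)$ is the shifted symplectic product matrix of $\tilde H(D)$. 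It then suffices to compute this rank.

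First I would extend the identity ``symplectic product of binary vectors $=$ trace product of their $GF(4)$ representations'' to the shifted product: for rows $h_i(D)$ and $h_j(D)$ of $\tilde H(D)$,
\[
\left(h_i\odot h_j\right)(D)=\text{tr}\left\{\gamma^{-1}\left(h_i(D)\right)\cdot\overline{\gamma^{-1}\left(h_j(D^{-1})\right)}\right\},
\]
which follows from the definition $(\mathbf u\odot\mathbf v)(D)=\mathbf z(D)\cdot\mathbf x'(D^{-1})-\mathbf x(D)\cdot\mathbf z'(D^{-1})$ of the shifted product together with $\text{tr}\{x\}=x+\bar x$ over $GF(4)$; it should be verified coefficient by coefficient, using that $\omega,\bar\omega$ are scalars commuting with every power of $D$. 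Packaging these entries exactly as in the block proof then yields
\[
\Omega_{\tilde H}(D)=\begin{bmatrix}1 & \bar\omega\\ \omega & 1\end{bmatrix}\otimes H(D)H^{\dag}(D^{-1})+\begin{bmatrix}1 & \omega\\ \bar\omega & 1\end{bmatrix}\otimes\bar H(D)H^{T}(D^{-1}),
\]
where $\dag$ combines the conjugate transpose over $GF(4)$ with the substitution $D\mapsto D^{-1}$. Regarding $\Omega_{\tilde H}(D)$ --- whose entries lie in $GF(2)(D)$ --- as a matrix over the extension field $GF(4)(D)$ does not change its rank, so I may premultiply and postmultiply by the full-rank matrices $A_{1}=\left[\begin{smallmatrix}1 & \bar\omega\\ 0 & 1\end{smallmatrix}\right]\otimes I$ and $A_{2}=\left[\begin{smallmatrix}1 & 0\\ 1 & 1\end{smallmatrix}\right]\otimes I$ used in the block case, obtaining the block-diagonal matrix $\bar H(D)H^{T}(D^{-1})\oplus H(D)H^{\dag}(D^{-1})$. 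Since rank is additive over direct sums and invariant under transposition and entrywise conjugation, $\mathrm{rank}\left(\Omega_{\tilde H}(D)\right)=2\,\mathrm{rank}\left(H(D)H^{\dag}(D^{-1})\right)$; dividing by two and applying the first conjecture finishes the argument, and the resulting code has parameters $\left[\left[n,2k-n+c;c\right]\right]$ per frame.

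The hard part will not be this rank computation --- it is as routine as in the block case --- but rather the first conjecture on which it rests. As the authors themselves flag, the general polynomial formula $c=\mathrm{rank}\left(\Omega(D)\right)/2$ is established only after the generators have been $l$-expanded and reduced to the standard form of~(\ref{eq:standard-symp-form}) by the polynomial symplectic Gram--Schmidt procedure, and it is not known that this procedure converges, nor that the ebit count transforms as one needs under expansion. A genuinely complete proof would therefore also have to: (i) show the polynomial Gram--Schmidt procedure terminates for the matrices $\tilde H(D)$ arising from $GF(4)$ codes; (ii) show that the shifted symplectic rank is compatible with expansion, presumably via the lemma relating $\Omega_{l}(D)$ to $\Omega(D)$ through $R_{l}$ and the flooring operation $\left\lfloor\cdot\right\rfloor$; and (iii) re-establish a Gottesman-type optimality --- that no fewer ebits can resolve the anticommutativity on Alice's side --- in the convolutional setting, in analogy with Ref.~\cite{unpub2007got}. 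A smaller but genuine point to check is that $\gamma$ extends to polynomial matrices entrywise in a manner compatible with shifts, so that $\gamma^{-1}$ of a shifted row is the shift of $\gamma^{-1}$ of that row, which is precisely what legitimizes the shifted trace-product identity above.
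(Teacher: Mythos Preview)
The paper does not prove this statement: it is explicitly labeled a \emph{conjecture}. The paper offers only heuristic support --- it notes that the CSS convolutional formula is known, points to the analogous block-code formulas in Theorem~\ref{thm:opt-ebit-formula} and Corollary~\ref{cor:CSS}, remarks that ``the above conjecture seems natural,'' and then verifies it on a single worked example. There is no proof in the paper for you to compare against.

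Your proposal goes further than the paper: you sketch a reduction of this $GF(4)$ conjecture to the first (general) conjecture by adapting the block $GF(4)$ rank computation to the shifted symplectic product. That reduction step looks correct and is exactly the right analogue of the block argument. You also correctly identify the real obstruction: the first conjecture is itself unproven, because (i) convergence of the polynomial Gram--Schmidt procedure is not established, (ii) compatibility of the shifted-symplectic rank with $l$-expansion is not established, and (iii) no convolutional analogue of Gottesman's optimality argument is available. The paper flags these same gaps in its discussion of the polynomial Gram--Schmidt procedure and in the sentence ``But we are not sure how to apply this formula to an initial set of unexpanded generators.'' So your assessment of what is missing is accurate; you have reduced an open conjecture to another open conjecture, which is progress the paper does not claim, but not a proof.
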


We know the number of ebits required for a CSS entanglement-assisted quantum
convolutional code is$\ \mathrm{rank}\left(  H_{1}(D)H_{2}^{T}(D^{-1})\right)
$ where $H_{1}(D)$ and $H_{2}(D)$ correspond to the classical binary
convolutional codes that we import to correct respective bit and phase flips.
Comparing the CSS\ convolutional formula, the CSS\ block formula
in\ Corollary~\ref{cor:CSS}, and the general block formula in
Theorem~\ref{thm:opt-ebit-formula}, the above conjecture seems natural.

We finally provide an example of the above conjecture. It is a slight
modification of the code presented in Ref.~\cite{arx2008wildeUQCC}.

\begin{example}
Consider the quantum convolutional code with quantum check matrix as follows:%
\[
\left[  \left.
\begin{array}
[c]{ccccc}%
0 & 0 & 0 & 0 & 0\\
h\left(  D\right)  & D & 0 & 1 & h\left(  D\right) \\
0 & 0 & D & D & D\\
0 & \frac{1}{D} & 1 & \frac{1}{D} & 0\\
0 & \frac{1}{D} & 0 & 0 & 0
\end{array}
\right\vert
\begin{array}
[c]{ccccc}%
h\left(  D\right)  & 0 & D & 1 & h\left(  D\right) \\
0 & 0 & 0 & 0 & 0\\
0 & 1 & 0 & 1 & 1\\
0 & 0 & 0 & 0 & 0\\
0 & 0 & 1 & 0 & 0
\end{array}
\right]  ,
\]
where $h\left(  D\right)  =1+D$. This code requires two ebits and one ancilla
qubit for quantum redundancy and encodes two information qubits. The shifted
symplectic product matrix $H_{X}(D)H_{Z}^{T}(D^{-1})+H_{Z}(D)H_{X}^{T}%
(D^{-1})$ \cite{arx2007wildeCED,arx2007wildeEAQCC}\ for this code is as
follows:%
\begin{equation}%
\begin{bmatrix}
0 & 1 & 0 & 0 & 0\\
1 & 0 & 0 & 0 & 0\\
0 & 0 & 0 & 0 & 0\\
0 & 0 & 0 & 0 & 1\\
0 & 0 & 0 & 1 & 0
\end{bmatrix}
\end{equation}
The rank of the above matrix is four and the code requires two ebits.
Therefore, the conjecture holds for this example.
\end{example}

\section{Closing Remarks}

\label{sec:conclusion}There are several differences between the methods used
for general, non-CSS\ codes discussed in this chapter and the CSS\ codes used
in the previous chapter. It was more straightforward to determine how to use
ebits efficiently in CSS entanglement-assisted quantum convolutional codes,
but we have had to introduce the expansion technique in
Section~\ref{sec:expand}\ in order to determine how to use ebits efficiently
for codes in this chapter. There was also no need for an explicit Gram-Schmidt
orthogonalization procedure in the previous chapter. The Smith algorithm
implicitly produced the row operations necessary for symplectic orthogonalization.

We do have some methods in the next chapter that do not require expansion of a check matrix or an
explicit Gram-Schmidt procedure but these methods do not make efficient use of
entanglement and have a lower rate of quantum communication and
higher rate of entanglement consumption than the codes discussed in this
chapter. Nonetheless, we have determined ways to make these other codes more
useful by encoding classical information in the extra entanglement with a
superdense-coding-like effect \cite{PhysRevLett.69.2881}. These other codes
are grandfather codes in the sense of
Refs.~\cite{arx2008wildeUQCC,arx2008kremsky} because they consume entanglement
to send both quantum and classical information. We discuss these techniques in
the next chapter.

One negative implication of the expansion of a code is that the expanded code
requires more qubits per frame. The expanded code then requires a larger
buffer at both the sender's and receiver's local stations. The increased buffer
will be a concern right now because it is difficult to build large quantum
memories. This issue will become less of a concern as quantum technology
advances. The entanglement-inefficient codes in the next chapter have the
advantage that they do not require expansion and thus require smaller buffers.
It therefore should be of interest to find solutions in between the
entanglement-efficient codes discussed in this chapter and the
entanglement-inefficient codes discussed in the previous paragraph.

Some outstanding issues remain. We have not proven the convergence of the
polynomial symplectic Gram-Schmidt orthogonalization procedure and have
instead provided a practical stopping condition. We have a conjecture for how
to proceed with proving convergence. Suppose that we would like to construct a
code consisting of one generator that does not commute with shifts of itself.
We have found for many examples that the correct expansion factor for the
generator is equal to the period of the inverse polynomial of the generator's
shifted symplectic product. We do not have a proof that this factor is the
correct one and we are not sure what the expansion factor should be when we
would like to construct a code starting from more than one generator.

The techniques developed in this chapter represent a useful way for encoding
quantum information. This framework will most likely lead to codes with
reasonable performance but they will most likely not come close to achieving
the quantum capacity of an entanglement-assisted channel. The next step should
be to combine the theory in this chapter with Poulin et al.'s recent theory of
quantum serial-turbo coding~\cite{arx2007poulin}. This combination might lead
to entanglement-assisted quantum turbo codes that come close to achieving capacity.

\chapter{Entanglement-Assisted Quantum Convolutional Coding: Free
Entanglement}

\label{chp:free-ent}\begin{saying}
Sometimes we've a free cater,\\
Perhaps on the house we've a waiter,\\
But free entanglement you say?\\
Not a price to pay?\\
Oh assume it for now and pay later.
\end{saying}We mentioned in the previous chapter that the expansion of a set
of quantum convolutional generators increases the frame size of a code. This
increase implies that each round of transmission in the protocol sends a
larger number of encoded qubits and requires a larger quantum memory for its
operation. Thus increasing the frame size is somewhat undesirable.

We offer the \textquotedblleft free-entanglement\textquotedblright\ codes in
this chapter as an alternative option to the codes in the previous chapter.
The free-entanglement codes have a basic generator set that includes an
arbitrary set of Pauli sequences. The free-entanglement codes consume more
entanglement than is necessary and have a lower quantum information rate, but
the benefit is that they do not require a larger quantum memory for each round
of transmission. It is up to the quantum coding engineer to decide which
option is more desirable for the particular coding application: using less
entanglement and having a higher rate of quantum transmission or using a
smaller quantum memory to send fewer qubits for each round of transmission.

We make the additional assumption that shared entanglement is freely
available. Quantum information theorists make the \textquotedblleft free
entanglement\textquotedblright\ assumption when deriving the
entanglement-assisted capacity of a quantum channel
\cite{PhysRevLett.83.3081,ieee2002bennett}. This model makes sense when the
sender and receiver can use the noisy channel and entanglement distillation
protocols \cite{PhysRevLett.76.722,PhysRevA.54.3824}\ during off-peak times to
generate a large amount of noiseless entanglement.

The \textquotedblleft entanglement-assisted\textquotedblright\ rate
(Section~\ref{sec:EA-rates}) of a quantum code is the ratio of the number of
encoded information qubits to the number of physical qubits assuming that
entanglement is available for free. The entanglement-assisted rate of our
codes is at least $k/n$. Parameter $k$ is the number of information qubits
that the code encodes and parameter $n$ is the number of qubits used for
encoding. A basic set of $n-k$ generators specifies all of our
entanglement-assisted quantum convolutional codes and yields the above
entanglement-assisted rate.

We say that the entanglement-assisted rate is at least $k/n$ because the
example in Section~\ref{sec:free-ent-example} discusses a method to boost the
rate. This method encodes some classical information into the extra
entanglement along the lines of the \textquotedblleft
grandfather\textquotedblright\ coding technique discussed in the next chapter
and in Ref.~\cite{arx2008kremsky}. The sender can use this classical
information and the entanglement to teleport \cite{PhysRevLett.70.1895}%
\ additional information qubits and increase the rate of quantum transmission.

\section{Construction}

The proof of our main theorem below outlines how to encode a stream of
information qubits, ancilla qubits, and shared ebits so that the encoded
qubits have the error-correcting properties of an arbitrary set of Paulis. The
receiver may employ an error estimation algorithm such as Viterbi decoding
\cite{itit1967viterbi}\ to determine the most likely errors that the noisy
quantum communication channel induces on the encoded stream. We then show how
to decode the encoded qubit stream so that the information qubits become
available at the receiving end of the channel.

The encoding circuits in the proof of our theorem employ both finite-depth and
infinite-depth operations. The decoding circuits employ finite-depth
operations only. Infinite-depth operations can lead to catastrophic error
propagation as discussed in previous chapters. In our proof below, we restrict
infinite-depth operations to act on qubits before sending them over the noisy
channel. Catastrophic error propagation does not occur under the ideal
circumstance when the operations in the encoding circuit are noiseless.

Our theorem below begins with a \textquotedblleft quantum check
matrix\textquotedblright\ that consists of a set of Pauli sequences with
desirable error-correcting properties. This quantum check matrix does not
necessarily correspond to a commuting stabilizer. The proof of the theorem
shows how to incorporate ebits so that the sender realizes the same quantum
check matrix for her qubits and the sender and receiver's set of generators
form a valid commuting stabilizer.

\begin{theorem}
Suppose the following quantum check matrix%
\[
S\left(  D\right)  =\left[  \left.
\begin{array}
[c]{c}%
Z\left(  D\right)
\end{array}
\right\vert
\begin{array}
[c]{c}%
X\left(  D\right)
\end{array}
\right]  \in\mathbb{F}_{2}\left[  D\right]  ^{\left(  n-k\right)  \times2n},
\]
where $S\left(  D\right)  $ is of full rank and does not necessarily form a
commuting stabilizer and $\mathbb{F}_{2}\left[  D\right]  $ is the field of
binary polynomials. Then an entanglement-assisted quantum
convolutional code exists that has the same error-correcting properties as the
above quantum check matrix $S\left(  D\right)  $. The entanglement-assisted
rate of the code is at least $k/n$.
\end{theorem}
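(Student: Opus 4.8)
The plan is to avoid the expansion technique of the previous chapter entirely, paying for this by consuming more entanglement than is strictly necessary. The object that controls everything is the shifted symplectic product matrix
\[
\Omega\left(  D\right)  =Z\left(  D\right)  X^{T}\left(  D^{-1}\right)
+X\left(  D\right)  Z^{T}\left(  D^{-1}\right)  ,
\]
an $\left(  n-k\right)  \times\left(  n-k\right)  $ matrix with Laurent-polynomial entries. Two structural facts about $\Omega\left(  D\right)  $ drive the construction. First, it obeys the symmetry $\Omega\left(  D\right)  =\Omega^{T}\left(  D^{-1}\right)  $ inherited from the general symmetry of a shifted symplectic product matrix noted in Section~\ref{sec:shifted-symp-prod}. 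Second, every diagonal entry of $\Omega\left(  D\right)  $ has a vanishing constant term, because a Pauli sequence commutes with itself at zero shift. Together these let one write a polynomial splitting $\Omega\left(  D\right)  =M\left(  D\right)  +M^{T}\left(  D^{-1}\right)  $, taking $M\left(  D\right)  $ to be the strictly lower-triangular part of $\Omega\left(  D\right)  $ together with the strictly-positive-power part of each diagonal entry; crucially $M\left(  D\right)  $ is again a finite-weight (Laurent-polynomial) matrix, so the row operations it generates will not destroy finite weight.

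Next I would adjoin extra entanglement---at most one ebit per generator of $S\left(  D\right)  $---so that the receiver Bob holds up to $n-k$ qubits per frame, without growing the sender's frame size of $n$. I would then assemble a commuting convolutional stabilizer on the combined qubits whose ``data'' rows restrict on the sender's $n$ channel qubits to exactly the rows of $S\left(  D\right)  $, and whose restriction to Bob's halves carries the pieces of $M\left(  D\right)  $ (plus the identity blocks needed to close up the ebit structure). The point of the splitting is that the twisted anticommutativity recorded by $\Omega\left(  D\right)  $ on the sender's side is cancelled term by term by the $M\left(  D\right)  $ operators on the receiver's side, so that the shifted symplectic product matrix of the combined generators vanishes and the set commutes with all of its $\left(  n+c\right)$-qubit shifts; the identity blocks on Bob's side guarantee full rank, hence a legitimate entanglement-assisted quantum convolutional code in the sense of Definition~\ref{def:conv-code}. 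Because the data generators restrict on the channel qubits to precisely $S\left(  D\right)  $, measuring this code's generators reproduces, among other bits, exactly the syndrome bits that $S\left(  D\right)  $ would produce; the extra generators only enlarge the sender-side stabilizer group, which can never spoil the active or passive error-correction conditions of Section~\ref{sec:ent-ass-gram} (a larger abelian stabilizer still contains any degeneracy element and still anticommutes with any actively-correctable error). So the code inherits the error-correcting properties of $S\left(  D\right)  $.

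To finish I would exhibit the encoding and decoding circuits following the template of Theorems~\ref{thm:main-CSS} and \ref{thm:main}. The unencoded state is the $c$ adjoined ebits together with at least $k$ information qubits per frame, so the unencoded stabilizer is the canonical one stabilizing those ebits; clearing Bob's block by row operations yields the target stabilizer, whose sender-side entries may acquire rational-polynomial denominators that are absorbed by a shift-invariant Clifford encoder built from the finite-depth operations of Section~\ref{sec:finite-depth-ops} and the infinite-depth $1/f\left(  D\right)  $ operations of Section~\ref{sec:infinite-depth-ops}. A final set of row operations on Bob's side replaces any rational entries by finite polynomials so that Bob measures finite-weight generators online, and the decoder inverts the encoder using finite-depth operations only, exactly as in the second class of codes in the previous two chapters; tracking the information-qubit matrix confirms the information qubits emerge at the output. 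Since the code encodes at least $k$ information qubits per frame into $n$ physical qubits, its entanglement-assisted rate (Section~\ref{sec:EA-rates}) is at least $k/n$, with equality in the bare construction and strict improvement when one additionally encodes classical bits into the surplus entanglement and teleports extra information qubits, as in the example of Section~\ref{sec:free-ent-example}.

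The main obstacle is the encoder. The commuting, full-rank, and rate claims reduce to routine linear algebra over $\mathbb{F}_{2}\left[  D\right]  $, and the splitting $\Omega=M+M^{T}\left(  D^{-1}\right)  $ is immediate from the two structural facts about $\Omega$. What needs genuine care is showing that the target combined stabilizer is actually reachable from the canonical unencoded stabilizer by a shift-invariant Clifford circuit: that every rational entry arising in the reduction is of the restricted $1/f\left(  D\right)  $ form handled in Section~\ref{sec:infinite-depth-ops}, that all infinite-depth operations are confined to the encoding side (so that, under the standing assumption of noiseless encoding, no catastrophic error propagation occurs), and that the decoder can be realized with finite-depth operations alone. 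This is the same Smith-form and row/column bookkeeping that already appears in Theorems~\ref{thm:main-CSS} and \ref{thm:main}, so I expect it to go through with the analogous arrangement of elementary operations, but it is where essentially all of the work lies.
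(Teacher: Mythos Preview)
Your route is genuinely different from the paper's, and the idea you reach for---the splitting $\Omega(D)=M(D)+M^{T}(D^{-1})$ with $M$ built from the positive-index diagonal pieces and one triangle of $\Omega$---is exactly the augmentation device the paper deploys later in Chapter~\ref{chp:ced} for convolutional entanglement distillation (see the matrix in (\ref{eq:augmented-conv-general})). So you are right that this splitting produces $n-k$ commuting convolutional generators on $n+(n-k)$ qubits whose restriction to the channel qubits is $S(D)$. That part is solid.

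The gap is that commuting generators are not yet an entanglement-assisted code, and the passage from one to the other is precisely what the theorem asserts. Two things are missing. First, an EA code with $n-k$ ebits has $2(n-k)$ stabilizer rows, not $n-k$; your ``identity blocks needed to close up the ebit structure'' must be $n-k$ additional rows that commute with your data rows and with each other, have the correct anticommutation pattern to pair with the data rows as ebit halves, and still leave the Alice-side error-correcting properties equivalent to $S(D)$. You do not specify these rows or argue their existence. Second---and this is the crux---you correctly flag the encoder as ``the main obstacle'' and then defer it to ``the template of Theorems~\ref{thm:main-CSS} and \ref{thm:main}.'' But those theorems do not supply a template you can invoke here: Theorem~\ref{thm:main-CSS} uses the Smith form of $H_{1}(D)H_{2}^{T}(D^{-1})$ and relies on CSS structure; Theorem~\ref{thm:main} presupposes that the shifted symplectic product matrix has already been brought to the standard form (\ref{eq:standard-symp-form}) by the Gram-Schmidt expansion procedure, which is exactly what this chapter is trying to avoid. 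Neither reduces your $M$-augmented target to the canonical ebit stabilizer by shift-invariant Clifford operations.

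The paper's proof takes a different and fully constructive path: it never touches $\Omega(D)$ at all. Instead it computes the Smith form of $X(D)$ alone, uses the resulting row and column operations (realized as finite-depth Clifford gates) to drive the ``X'' side to a diagonal $\Gamma_{1}\oplus\Gamma_{2}\oplus 0$ shape, then iterates Smith-form and lower-triangular reductions on the remaining ``Z'' blocks until the check matrix reaches the explicit form (\ref{eq:big-QCM}). At that point the number of ebits is $c=\operatorname{rank}X(D)$ (with $n-k-c$ ancilla qubits taking up the slack), and the encoding circuit is read off directly: the infinite-depth operations are confined to the $\Gamma_{2}^{-1}(D^{-1})$ block on Alice's ebit halves, and the decoder is finite-depth by construction. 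Your approach, if completed, would always consume $n-k$ ebits; the paper's consumes only $\operatorname{rank}X(D)\le n-k$ ebits and makes up the difference with ancillas, which is cheaper whenever $X(D)$ is row-rank-deficient. More importantly, the paper's Smith-form route \emph{is} the missing encoder construction you were hoping to borrow---so the work you deferred is essentially the proof itself.
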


%

%TCIMACRO{\TeXButton{begin proof}{\begin{proof}}}%
%BeginExpansion
\begin{proof}%
%EndExpansion
Suppose that the Smith form \cite{book1999conv}\ of $X\left(  D\right)  $ is
as follows%
\begin{equation}
X\left(  D\right)  =A\left(  D\right)  \left[
\begin{array}
[c]{ccc}%
\Gamma_{1}\left(  D\right)  & 0 & 0\\
0 & \Gamma_{2}\left(  D\right)  & 0\\
0 & 0 & 0
\end{array}
\right]  B\left(  D\right)  , \label{eq:first-finite-depth-encode}%
\end{equation}
where $A\left(  D\right)  $ is $\left(  n-k\right)  \times\left(  n-k\right)
$, $B\left(  D\right)  $ is $n\times n$, $\Gamma_{1}\left(  D\right)  $ is an
$s\times s$ diagonal matrix whose entries are powers of $D$, and $\Gamma
_{2}\left(  D\right)  $ is a $\left(  c-s\right)  \times\left(  c-s\right)  $
diagonal matrix whose entries are arbitrary polynomials. Perform the row
operations in $A^{-1}\left(  D\right)  $ and the column operations in
$B^{-1}\left(  D\right)  $ on $S\left(  D\right)  $. The quantum check matrix
$S\left(  D\right)  $\ becomes%
\begin{equation}
\left[
\begin{array}
[c]{c}%
E\left(  D\right)
\end{array}
\left\vert
\begin{array}
[c]{ccc}%
\Gamma_{1}\left(  D\right)  & 0 & 0\\
0 & \Gamma_{2}\left(  D\right)  & 0\\
0 & 0 & 0
\end{array}
\right.  \right]  ,
\end{equation}
where $E\left(  D\right)  =A^{-1}\left(  D\right)  Z\left(  D\right)
B^{T}\left(  D^{-1}\right)  $. Suppose $E_{1}\left(  D\right)  $ is the first
$c$ columns of $E\left(  D\right)  $ and $E_{2}\left(  D\right)  $ is the next
$n-c$ columns of $E\left(  D\right)  $ so that the quantum check matrix is as
follows:%
\begin{equation}
\left[
\begin{array}
[c]{cc}%
E_{1}\left(  D\right)  & E_{2}\left(  D\right)
\end{array}
\left\vert
\begin{array}
[c]{ccc}%
\Gamma_{1}\left(  D\right)  & 0 & 0\\
0 & \Gamma_{2}\left(  D\right)  & 0\\
0 & 0 & 0
\end{array}
\right.  \right]  .
\end{equation}
Perform Hadamard gates on the last $n-c$ qubits so that the quantum check
matrix becomes%
\begin{equation}
\left[
\begin{array}
[c]{cc}%
E_{1}\left(  D\right)  & 0
\end{array}
\left\vert
\begin{array}
[c]{ccc}%
\Gamma_{1}\left(  D\right)  & 0 & E_{2,1}\left(  D\right) \\
0 & \Gamma_{2}\left(  D\right)  & E_{2,2}\left(  D\right) \\
0 & 0 & E_{2,3}\left(  D\right)
\end{array}
\right.  \right]  ,
\end{equation}
where%
\begin{equation}
E_{2}\left(  D\right)  =\left[
\begin{array}
[c]{c}%
E_{2,1}\left(  D\right) \\
E_{2,2}\left(  D\right) \\
E_{2,3}\left(  D\right)
\end{array}
\right]  .
\end{equation}
Perform CNOT\ operations from the first $s$ qubits to the last $n-c$ qubits to
clear the entries in $E_{2,1}\left(  D\right)  $. The quantum check matrix
becomes%
\begin{equation}
\left[
\begin{array}
[c]{cc}%
E_{1}\left(  D\right)  & 0
\end{array}
\left\vert
\begin{array}
[c]{ccc}%
\Gamma_{1}\left(  D\right)  & 0 & 0\\
0 & \Gamma_{2}\left(  D\right)  & E_{2,2}\left(  D\right) \\
0 & 0 & E_{2,3}\left(  D\right)
\end{array}
\right.  \right]  .
\end{equation}
The Smith form of $E_{2,3}\left(  D\right)  $ is as follows%
\begin{equation}
E_{2,3}\left(  D\right)  =A_{E}\left(  D\right)  \left[
\begin{array}
[c]{cc}%
\Gamma\left(  D\right)  & 0
\end{array}
\right]  B_{E}\left(  D\right)  ,
\end{equation}
where $A_{E}\left(  D\right)  $ is $\left(  n-k-c\right)  \times\left(
n-k-c\right)  $, $B_{E}\left(  D\right)  $ is $\left(  n-c\right)
\times\left(  n-c\right)  $, and $\Gamma\left(  D\right)  $ is a $\left(
n-k-c\right)  \times\left(  n-k-c\right)  $\ diagonal matrix whose entries are
polynomials. The Smith form of $E_{2,3}\left(  D\right)  $\ is full rank
because\ the original quantum check matrix $S\left(  D\right)  $\ is full
rank. Perform the row operations in $A_{E}^{-1}\left(  D\right)  $ and the
column operations in $B_{E}^{-1}\left(  D\right)  $. The quantum check matrix
becomes%
\begin{equation}
\left[
\begin{array}
[c]{cc}%
E_{1}^{\prime}\left(  D\right)  & 0
\end{array}
\left\vert
\begin{array}
[c]{cccc}%
\Gamma_{1}\left(  D\right)  & 0 & 0 & 0\\
0 & \Gamma_{2}\left(  D\right)  & E_{2,2a}^{^{\prime}}\left(  D\right)  &
E_{2,2b}^{^{\prime}}\left(  D\right) \\
0 & 0 & \Gamma\left(  D\right)  & 0
\end{array}
\right.  \right]  ,
\end{equation}
where%
\begin{align}
E_{1}^{\prime}\left(  D\right)   &  =\left[
\begin{array}
[c]{cc}%
I & 0\\
0 & A_{E}^{-1}\left(  D\right)
\end{array}
\right]  E_{1}\left(  D\right)  ,\\
E_{2,2}^{^{\prime}}\left(  D\right)   &  =E_{2,2}\left(  D\right)  B_{E}%
^{-1}\left(  D\right)  =\left[
\begin{array}
[c]{cc}%
E_{2,2a}^{^{\prime}}\left(  D\right)  & E_{2,2b}^{^{\prime}}\left(  D\right)
\end{array}
\right]  .
\end{align}
Perform a modified version of the Smith algorithm to reduce the $\left(
c-s\right)  \times\left(  n-c\right)  $ matrix $E_{2,2b}^{^{\prime}}\left(
D\right)  $ to a lower triangular form (discussed in
Chapter~\ref{chp:EAQCC-CSS}). This modified algorithm uses only column
operations to transform%
\begin{equation}
E_{2,2b}^{^{\prime}}\left(  D\right)  \rightarrow\left[
\begin{array}
[c]{cc}%
L\left(  D\right)  & 0
\end{array}
\right]  ,
\end{equation}
where $L\left(  D\right)  $ is $\left(  c-s\right)  \times\left(  c-s\right)
$ and the null matrix is $\left(  c-s\right)  \times\left(  n+s-2c\right)  $.
The quantum check matrix becomes%
\begin{equation}
\left[
\begin{array}
[c]{cc}%
E_{1}^{\prime}\left(  D\right)  & 0
\end{array}
\left\vert
\begin{array}
[c]{ccccc}%
\Gamma_{1}\left(  D\right)  & 0 & 0 & 0 & 0\\
0 & \Gamma_{2}\left(  D\right)  & E_{2,2a}^{^{\prime}}\left(  D\right)  &
L\left(  D\right)  & 0\\
0 & 0 & \Gamma\left(  D\right)  & 0 & 0
\end{array}
\right.  \right]  . \label{eq:big-QCM}%
\end{equation}
We have now completed the decomposition of the quantum check matrix with
column and row operations.

We turn to showing how to encode and decode a certain quantum check matrix
that proves to be useful in encoding the above quantum check matrix. Consider
the following quantum check matrix%
\begin{equation}
\left[
\begin{array}
[c]{cc}%
I & 0\\
0 & 0
\end{array}
\left\vert
\begin{array}
[c]{cc}%
0 & 0\\
\Gamma_{2}\left(  D\right)  & L\left(  D\right)
\end{array}
\right.  \right]  , \label{eq:desired-QCM}%
\end{equation}
where $\Gamma_{2}\left(  D\right)  $ and $L\left(  D\right)  $ are from the
matrix in (\ref{eq:big-QCM})\ and each of them, the identity matrix, and the
null matrices have dimension $\left(  c-s\right)  \times\left(  c-s\right)  $.
We use a method for encoding and decoding the quantum check matrix in
(\ref{eq:desired-QCM})\ similar to the method in Chapter~\ref{chp:EAQCC-CSS}%
\ for the second class of CSS\ entanglement-assisted quantum convolutional
codes. We begin with a set of $c-s$ ebits and $c-s$ information qubits. The
following matrix stabilizes the ebits%
\begin{equation}
\left[  \left.
\begin{array}
[c]{ccc}%
I & I & 0\\
0 & 0 & 0
\end{array}
\right\vert
\begin{array}
[c]{ccc}%
0 & 0 & 0\\
I & I & 0
\end{array}
\right]  , \label{eq:first-inf-depth-encode}%
\end{equation}
where Bob possesses the $c-s$ qubits on the \textquotedblleft
left,\textquotedblright\ Alice possesses the $2\left(  c-s\right)  $ qubits on
the \textquotedblleft right,\textquotedblright\ and each matrix is $\left(
c-s\right)  \times\left(  c-s\right)  $. The following matrix represents the
information qubits:%
\begin{equation}
\left[  \left.
\begin{array}
[c]{ccc}%
0 & 0 & I\\
0 & 0 & 0
\end{array}
\right\vert
\begin{array}
[c]{ccc}%
0 & 0 & 0\\
0 & 0 & I
\end{array}
\right]  . \label{eq:first-inf-depth-encode-info}%
\end{equation}
The above information-qubit matrix represents the logical operators for the
information qubits and gives a useful way of tracking the information qubits
while processing them. Tracking the information-qubit matrix helps to confirm
that the information qubits decode properly at the receiver's end
\cite{arx2007wildeEAQCC}. We track both the above stabilizer and the
information-qubit matrix as they progress through some encoding operations.
Alice then performs CNOT\ gates from her first $c-s$ qubits to her next $c-s$
qubits. These gates multiply the middle $c-s$ columns of the \textquotedblleft
X\textquotedblright\ matrix by $L\left(  D\right)  $ and add the result to the
last $c-s$ columns and multiply the last $c-s$ columns of the
\textquotedblleft Z\textquotedblright\ matrix by $L^{T}\left(  D^{-1}\right)
$ and add the result to the last $c-s$ columns. The stabilizer becomes%
\begin{equation}
\left[  \left.
\begin{array}
[c]{ccc}%
I & I & 0\\
0 & 0 & 0
\end{array}
\right\vert
\begin{array}
[c]{ccc}%
0 & 0 & 0\\
I & I & L\left(  D\right)
\end{array}
\right]  ,
\end{equation}
and the information-qubit matrix becomes%
\begin{equation}
\left[  \left.
\begin{array}
[c]{ccc}%
0 & L^{T}\left(  D^{-1}\right)  & I\\
0 & 0 & 0
\end{array}
\right\vert
\begin{array}
[c]{ccc}%
0 & 0 & 0\\
0 & 0 & I
\end{array}
\right]  .
\end{equation}
Alice performs infinite-depth operations on her first $c-s$ qubits
corresponding to the rational polynomials $\gamma_{2,1}^{-1}\left(
D^{-1}\right)  $, $\ldots$, $\gamma_{2,c-s}^{-1}\left(  D^{-1}\right)  $ in
$\Gamma_{2}^{-1}\left(  D^{-1}\right)  $. These operations multiply the middle
$c-s$ columns of the \textquotedblleft Z\textquotedblright\ matrix by
$\Gamma_{2}^{-1}\left(  D^{-1}\right)  $ and multiply the middle $c-s$ columns
of the \textquotedblleft X\textquotedblright\ matrix by $\Gamma_{2}\left(
D\right)  $. The stabilizer matrix becomes%
\begin{equation}
\left[  \left.
\begin{array}
[c]{ccc}%
I & \Gamma_{2}^{-1}\left(  D^{-1}\right)  & 0\\
0 & 0 & 0
\end{array}
\right\vert
\begin{array}
[c]{ccc}%
0 & 0 & 0\\
I & \Gamma_{2}\left(  D\right)  & L\left(  D\right)
\end{array}
\right]  , \label{eq:last-inf-depth-encode-stab}%
\end{equation}
and the information-qubit matrix becomes%
\begin{equation}
\left[  \left.
\begin{array}
[c]{ccc}%
0 & L^{T}\left(  D^{-1}\right)  \Gamma_{2}^{-1}\left(  D^{-1}\right)  & I\\
0 & 0 & 0
\end{array}
\right\vert
\begin{array}
[c]{ccc}%
0 & 0 & 0\\
0 & 0 & I
\end{array}
\right]  . \label{eq:last-inf-depth-encode}%
\end{equation}
Alice's part of the above stabilizer matrix is equivalent to the quantum check
matrix in (\ref{eq:desired-QCM}) by row operations (premultiplying the first
set of rows by $\Gamma_{2}\left(  D^{-1}\right)  $.)

We now illustrate a way to decode the encoded stabilizer
in\ (\ref{eq:last-inf-depth-encode-stab}) and information-qubit matrix in
(\ref{eq:last-inf-depth-encode})\ so that the information qubits appear at the
output of the decoding circuit. Bob performs CNOT\ gates from the first set of
qubits to the third set of qubits corresponding to the entries in $L\left(
D\right)  $. The stabilizer becomes%
\begin{equation}
\left[  \left.
\begin{array}
[c]{ccc}%
I & \Gamma_{2}^{-1}\left(  D^{-1}\right)  & 0\\
0 & 0 & 0
\end{array}
\right\vert
\begin{array}
[c]{ccc}%
0 & 0 & 0\\
I & \Gamma_{2}\left(  D\right)  & 0
\end{array}
\right]  , \label{eq:first-inf-depth-decode}%
\end{equation}
and the information-qubit matrix becomes%
\begin{equation}
\left[  \left.
\begin{array}
[c]{ccc}%
L^{T}\left(  D^{-1}\right)  & L^{T}\left(  D^{-1}\right)  \Gamma_{2}%
^{-1}\left(  D^{-1}\right)  & I\\
0 & 0 & 0
\end{array}
\right\vert
\begin{array}
[c]{ccc}%
0 & 0 & 0\\
0 & 0 & I
\end{array}
\right]  .
\end{equation}
Bob finishes decoding at this point because we can equivalently express the
information-qubit matrix as follows%
\begin{equation}
\left[  \left.
\begin{array}
[c]{ccc}%
0 & 0 & I\\
0 & 0 & 0
\end{array}
\right\vert
\begin{array}
[c]{ccc}%
0 & 0 & 0\\
0 & 0 & I
\end{array}
\right]  , \label{eq:last-inf-depth-decode}%
\end{equation}
by multiplying the first $c-s$ rows of the stabilizer by $L^{T}\left(
D^{-1}\right)  $ and adding to the first $c-s$ rows of the information-qubit
matrix. The information qubits are available at the receiving end of the
channel because the above information-qubit matrix is equivalent to the
original one in (\ref{eq:first-inf-depth-encode-info}).

We show how to encode the quantum check matrix in\ (\ref{eq:big-QCM}) using
ebits, ancilla qubits, and information qubits. We employ the encoding
technique for the submatrix listed above and use some other techniques as
well. Suppose that we have the following matrix that stabilizes a set of $c$
ebits per frame, $n-k-c$ ancilla qubits per frame, and $k$ information qubits
per frame:%
\begin{equation}
\left[
\begin{array}
[c]{ccccccc}%
I & 0 & I & 0 & 0 & 0 & 0\\
0 & I & 0 & I & 0 & 0 & 0\\
0 & 0 & 0 & 0 & 0 & 0 & 0\\
0 & 0 & 0 & 0 & 0 & 0 & 0\\
0 & 0 & 0 & 0 & 0 & 0 & 0
\end{array}
\left\vert
\begin{array}
[c]{ccccccc}%
0 & 0 & 0 & 0 & 0 & 0 & 0\\
0 & 0 & 0 & 0 & 0 & 0 & 0\\
I & 0 & I & 0 & 0 & 0 & 0\\
0 & I & 0 & I & 0 & 0 & 0\\
0 & 0 & 0 & 0 & I & 0 & 0
\end{array}
\right.  \right]  .
\end{equation}
The first and third sets of rows have $s$ rows and correspond to $s$ ebits per
frame, the second and fourth sets of rows have $c-s$ rows and correspond to
$c-s$ ebits per frame, and the last set of $n-k-c$ rows corresponds to $n-k-c$
ancilla qubits per frame. The above matrix has $n+c$ columns on both the
\textquotedblleft Z\textquotedblright\ and \textquotedblleft
X\textquotedblright\ side so that the above matrix stabilizes $k$ information
qubits per frame. Bob possesses the first $c$ qubits and Alice possesses the
next $n$ qubits. Alice performs the encoding operations in
(\ref{eq:first-inf-depth-encode}-\ref{eq:last-inf-depth-encode}) to get the
following stabilizer:%
\begin{equation}
\left[
\begin{array}
[c]{ccccccc}%
I & 0 & I & 0 & 0 & 0 & 0\\
0 & I & 0 & \Gamma_{2}^{-1}\left(  D^{-1}\right)  & 0 & 0 & 0\\
0 & 0 & 0 & 0 & 0 & 0 & 0\\
0 & 0 & 0 & 0 & 0 & 0 & 0\\
0 & 0 & 0 & 0 & 0 & 0 & 0
\end{array}
\left\vert
\begin{array}
[c]{ccccccc}%
0 & 0 & 0 & 0 & 0 & 0 & 0\\
0 & 0 & 0 & 0 & 0 & 0 & 0\\
I & 0 & I & 0 & 0 & 0 & 0\\
0 & I  & 0 & \Gamma_{2}\left(  D\right)  & 0 &
L\left(  D\right)  & 0\\
0 & 0 & 0 & 0 & I & 0 & 0
\end{array}
\right.  \right]  . \label{eq:start-decoding}%
\end{equation}
We perform several row operations to get the quantum check matrix
in\ (\ref{eq:big-QCM}). Premultiply the middle set of rows by $\Gamma
_{1}\left(  D\right)  $. Premultiply the last set of rows by $E_{2,2a}%
^{^{\prime}}\left(  D\right)  $ and add the result to the set of rows above
the last set. Premultiply the last set of rows by $\Gamma\left(  D\right)  $.
Finally, premultiply the first two sets of rows by $E_{1}^{\prime\prime
}\left(  D\right)  =E_{1}^{\prime}\left(  D\right)  \left[  I\oplus\Gamma
_{2}\left(  D^{-1}\right)  \right]  $ and add the result to the last three
sets of rows. The quantum check matrix becomes%
\begin{equation}
\left[  \left.
\begin{tabular}
[c]{ccccccc}%
$I$ & $0$ & $I$ & $0$ & $0$ & $0$ & $0$\\
$0$ & $I$ & $0$ & $\Gamma_{2}^{-1}\left(  D^{-1}\right)  $ & $0$ & $0$ & $0$\\
&  &  &  & $0$ & $0$ & $0$\\
\multicolumn{2}{c}{$E_{1}^{\prime\prime}\left(  D\right)  $} &
\multicolumn{2}{c}{$E_{1}^{\prime}\left(  D\right)  $} & $0$ & $0$ & $0$\\
&  &  &  & $0$ & $0$ & $0$%
\end{tabular}
\ \right\vert
\begin{array}
[c]{ccccccc}%
0 & 0 & 0 & 0 & 0 & 0 & 0\\
0 & 0 & 0 & 0 & 0 & 0 & 0\\
\Gamma_{1}\left(  D\right)  & 0 & \Gamma_{1}\left(  D\right)  & 0 & 0 & 0 &
0\\
0 & I  & 0 & \Gamma_{2}\left(  D\right)  &
E_{2,2a}^{^{\prime}}\left(  D\right)  & L\left(  D\right)  & 0\\
0 & 0 & 0 & 0 & \Gamma\left(  D\right)  & 0 & 0
\end{array}
\right]  .
\end{equation}%
%TCIMACRO{\TeXButton{end proof}{\end{proof}}}%
%BeginExpansion
\end{proof}%
%EndExpansion
Alice's part of the above quantum check matrix and the last three sets of rows
are equivalent to the quantum check matrix in (\ref{eq:big-QCM}). Alice then
performs all finite-depth encoding operations (column operations)\ in
(\ref{eq:first-finite-depth-encode}-\ref{eq:big-QCM}) in reverse order to
obtain the desired quantum check matrix in the statement of the theorem.
Decoding consists of performing all the operations in
(\ref{eq:first-finite-depth-encode}-\ref{eq:big-QCM}) and then applying the
decoding operations in (\ref{eq:first-inf-depth-decode}%
-\ref{eq:last-inf-depth-decode}). The entanglement-assisted rate of the above
code is $k/n$ because the code uses a noisy quantum communication channel $n$
times per frame to send $k$ information qubits per frame.

\section{Example}

\label{sec:free-ent-example}We now present an example that begins with the
same generators as those in the example from the previous chapter. We begin
with the following two Pauli generators:%
\begin{equation}
\left(  \cdots|IIII|ZXZI|ZZIZ|IIII|\cdots\right)  ,\ \ \ \ \ \ \left(
\cdots|IIII|XYXI|XXIX|IIII|\cdots\right)  .\nonumber
\end{equation}
We write the above two generators as a quantum check matrix:%
\begin{equation}
\left[  \left.
\begin{array}
[c]{cccc}%
1+D & D & 1 & D\\
0 & 1 & 0 & 0
\end{array}
\right\vert
\begin{array}
[c]{cccc}%
0 & 1 & 0 & 0\\
1+D & 1+D & 1 & D
\end{array}
\right]  \label{eq:multigen-stab}%
\end{equation}
We encode two information qubits per frame with the help of two ebits. The
stabilizer matrix for the unencoded qubit stream is as follows:%
\begin{equation}
\left[  \left.
\begin{array}
[c]{cccccc}%
0 & 1 & 1 & 0 & 0 & 0\\
1 & 0 & 0 & 1 & 0 & 0\\
0 & 0 & 0 & 0 & 0 & 0\\
0 & 0 & 0 & 0 & 0 & 0
\end{array}
\right\vert
\begin{array}
[c]{cccccc}%
0 & 0 & 0 & 0 & 0 & 0\\
0 & 0 & 0 & 0 & 0 & 0\\
0 & 1 & 1 & 0 & 0 & 0\\
1 & 0 & 0 & 1 & 0 & 0
\end{array}
\right]
\end{equation}
Rows one and three correspond to one ebit and rows two and four correspond to
the other. Multiply row one by $D$ and add the result to row three, multiply
row one by $1+D^{-1}+D^{2}$ and add the result to row four, and multiply row
two by $1+D^{-2}$ and add the result to row four. These row operations give
the following equivalent stabilizer:%
\begin{equation}
\left[  \left.
\begin{array}
[c]{cccccc}%
0 & 1 & 1 & 0 & 0 & 0\\
1 & 0 & 0 & 1 & 0 & 0\\
0 & D & D & 0 & 0 & 0\\
1+D^{-2} & 1+D^{-1}+D^{2} & 1+D^{-1}+D^{2} & 1+D^{-2} & 0 & 0
\end{array}
\right\vert
\begin{array}
[c]{cccccc}%
0 & 0 & 0 & 0 & 0 & 0\\
0 & 0 & 0 & 0 & 0 & 0\\
0 & 1 & 1 & 0 & 0 & 0\\
1 & 0 & 0 & 1 & 0 & 0
\end{array}
\right]  . \label{eq:unencoded-free-ent-example}%
\end{equation}
Figure~\ref{fig:science-conv}\ illustrates the operations that transform the
unencoded stabilizer to the encoded one in an online encoding circuit.
\begin{figure}
[ptb]
\begin{center}
\includegraphics[
natheight=11.846200in,
natwidth=10.133900in,
height=5.3074in,
width=4.8092in
]%
{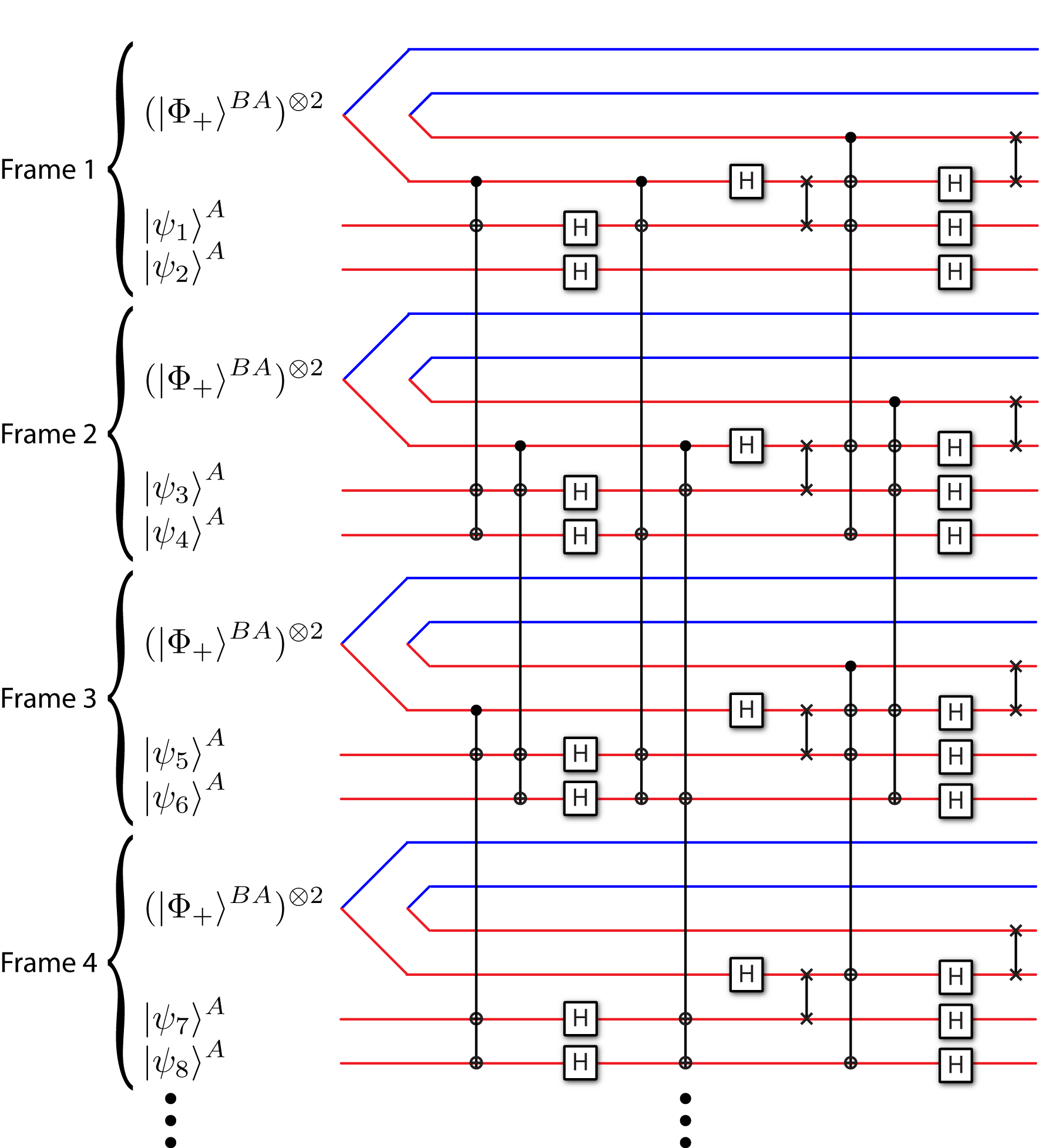}%
\caption{(Color online) An online encoding circuit for an
entanglement-assisted quantum convolutional code. The receiver Bob possesses
the first two qubits in the two ebits and the sender Alice possesses the
second two qubits in the two ebits. The sender encodes two information qubits
per frame with the help of her half of the two ebits.}%
\label{fig:science-conv}%
\end{center}
\end{figure}
%EndExpansion
The final stabilizer is as follows%
\[
\left[  \left.
\begin{array}
[c]{cccccc}%
0 & 1 & 0 & 1 & 0 & 0\\
1 & 0 & 0 & 0 & 0 & 0\\
0 & D & 1+D & D & 1 & D\\
1+D^{-2} & D^{-1}+1+D & 0 & 1 & 0 & 0
\end{array}
\right\vert
\begin{array}
[c]{cccccc}%
0 & 0 & 0 & 0 & 0 & 0\\
0 & 0 & 1 & 0 & 0 & 0\\
0 & 1 & 0 & 1 & 0 & 0\\
1 & 0 & 1+D & 1+D & 1 & D
\end{array}
\right]  .
\]
Compare Alice's Paulis in the last two rows of the above matrix to the quantum
check matrix in (\ref{eq:multigen-stab}). We have constructed a code with the
same error-correcting properties because these two matrices are equivalent.
The entanglement-assisted rate of the above code is 1/2 because it encodes two
information qubits for every four uses of the noisy quantum channel.

Consider the following two operators:%
\[
\left[  \left.
\begin{array}
[c]{cccccc}%
0 & 0 & D^{-1} & 1+D+D^{-2} & 0 & 0\\
0 & 0 & 0 & 1+D^{2} & 0 & 0
\end{array}
\right\vert
\begin{array}
[c]{cccccc}%
0 & 0 & 1 & 0 & 0 & 0\\
0 & 0 & 0 & 1 & 0 & 0
\end{array}
\right]  .
\]
The first row anticommutes with the first row in
(\ref{eq:unencoded-free-ent-example}) and commutes with all other rows in
(\ref{eq:unencoded-free-ent-example}). The second row anticommutes with the
second row in (\ref{eq:unencoded-free-ent-example}) and commutes with all
other rows in (\ref{eq:unencoded-free-ent-example}). These commutation
relations imply that the above operators are useful for encoding classical
information in a superdense-coding-like fashion. These operators encode two
classical bits into the code and make use of the first two rows in
(\ref{eq:unencoded-free-ent-example}) instead of just \textquotedblleft
wasting\textquotedblright\ them. Measuring the first two rows in
(\ref{eq:unencoded-free-ent-example}) reveals the values of the two classical
bits. We can determine the encoded versions of these \textquotedblleft
classical-information-encoding\textquotedblright\ operators by tracing how the
operators change in the Heisenberg picture through the rest of the encoding
circuit. We can use these two classical bits and consume one ebit to teleport
an additional information qubit. This technique boosts the
entanglement-assisted rate of this code from $1/2$\ to $3/4$ because we can
now encode three information qubits for every four uses of the noisy quantum
communication channel.

\section{Closing Remarks}

The ``free entanglement'' approach of this chapter uses entanglement less efficiently than the protocol
in the previous chapter. It does not require expanding a set of generators
and therefore does not require a heuristic convergence argument as do the codes of the previous chapter. The free entanglement method 
always works and results in encoding and decoding
circuits that act on smaller numbers of qubits than the circuits in the previous chapter.

\chapter{Unified Quantum Convolutional Coding}

\label{chp:unified}\begin{saying}
``Never chase after a bus, a girl, or a unifying theory.\\There'll always be another one coming along.''\\
---John Archibald Wheeler (1911--2008)
\end{saying}In this chapter, we design a framework for \textquotedblleft
grandfather\textquotedblright\ quantum convolutional codes. Our grandfather
codes are useful for the simultaneous transmission of classical and quantum
information. Rather than using block codes for this purpose, we design quantum
convolutional codes.\footnote{Kremsky, Hsieh, and Brun address the
formulation of grandfather block codes in Ref.~\cite{arx2008kremsky}.} Our
technique incorporates many of the known techniques for quantum
coding:\ subsystem codes \cite{kribs:180501,poulin:230504},
entanglement-assisted codes \cite{science2006brun}, convolutional codes
\cite{arxiv2004olliv,isit2006grassl,ieee2007forney}, and classical-quantum
mixed coding \cite{cmp2005dev,beny:100502,arx2008kremsky}. The goal of our
technique is to provide a formalism for designing codes that approach the
optimal triple trade-off rates in the grandfather resource inequality in
(\ref{eq:GF}).

We structure this chapter as follows. Section~\ref{sec:hybrid} details our
\textquotedblleft grandfather\textquotedblright\ quantum convolutional codes.
We explicitly show how to encode a stream of classical-quantum information
using finite-depth operations and discuss the error-correcting properties of
our codes. We end with an example of a grandfather quantum convolutional code.
We discuss which errors the code corrects actively and others that it corrects passively.

\section{Grandfather Quantum Convolutional Codes}

\label{sec:hybrid}We detail the\ stabilizer formalism for our grandfather
quantum convolutional codes and describe how they operate. This formalism is a
significant extension of the entanglement-assisted formalism.

An $\left[  n,k,l;r,c\right]  $ grandfather quantum convolutional code encodes
$k$ information qubits and $l$ information classical bits with the help of $c$
ebits, $a=n-k-l-c-r$ ancilla qubits, and $r$ gauge qubits. Each input frame
includes the following:

\begin{enumerate}
\item Alice's half of $c$ ebits in the state $\left\vert \Phi^{+}\right\rangle
$.

\item $a=n-k-c-l-r$ ancilla qubits in the state $\left\vert 0\right\rangle $.

\item $r$ gauge qubits (which can be in any arbitrary state $\sigma$).

\item $l$ classical information bits $x^{1}\cdots x^{l}$, given by a
computational basis state $\left\vert x\right\rangle =X^{x^{1}}\otimes
\cdots\otimes X^{x^{l}}\left\vert 0\right\rangle ^{\otimes l}$.

\item $k$ information qubits in a state $\left\vert \psi\right\rangle
$.\footnote{This statement is not entirely true because the information qubits
can be entangled across multiple frames, or with an external system, but we
use it to illustrate the idea.}
\end{enumerate}

The left side of Figure~\ref{fig:example}\ shows an example initial qubit
stream before an encoding circuit operates on it.

The stabilizer matrix $S_{0}\left(  D\right)  $\ for the initial qubit stream
is as follows:%
\begin{equation}
S_{0}\left(  D\right)  =\left[  \left.
\begin{array}
[c]{cccccc}%
I & I & 0 & 0 & 0 & 0\\
0 & 0 & 0 & 0 & 0 & 0\\
0 & 0 & I & 0 & 0 & 0
\end{array}
\right\vert
\begin{array}
[c]{cccccc}%
0 & 0 & 0 & 0 & 0 & 0\\
I & I & 0 & 0 & 0 & 0\\
0 & 0 & 0 & 0 & 0 & 0
\end{array}
\right]  ,\label{eq:init-grand-stab}%
\end{equation}
where all identity matrices in the first two sets of rows are $c\times c$, the
identity matrix in the last row is $a\times a$, the three columns of all zeros
in both the \textquotedblleft Z\textquotedblright\ and \textquotedblleft
X\textquotedblright\ matrices are respectively $\left(  a+2c\right)  \times
r$, $\left(  a+2c\right)  \times l$, and $\left(  a+2c\right)  \times k$. The
first two sets of rows stabilize a set of $c$ ebits and the last set of rows
stabilize a set of $a$ ancilla qubits. The first $c$ columns of both the
\textquotedblleft Z\textquotedblright\ and \textquotedblleft
X\textquotedblright\ matrix correspond to halves of ebits that Bob possesses
and the last $n$ columns in both matrices correspond to the qubits that Alice possesses.

Different generators for the grandfather code are important in active error
correction, in passive error correction, and for the identification of the $l$
classical information bits. We first write the unencoded generators that act
on the initial qubit stream. The first subgroup of generators is the
entanglement subgroup $\mathcal{S}_{E,0}$ with the following generators:%
\begin{equation}
S_{E,0}\left(  D\right)  =\left[  \left.
\begin{array}
[c]{ccccc}%
I & 0 & 0 & 0 & 0\\
0 & 0 & 0 & 0 & 0
\end{array}
\right\vert
\begin{array}
[c]{ccccc}%
0 & 0 & 0 & 0 & 0\\
I & 0 & 0 & 0 & 0
\end{array}
\right]  . \label{eq:grand-egroup}%
\end{equation}
The above generators are equivalent to the first two sets of rows in
(\ref{eq:init-grand-stab})\ acting on Alice's $n$ qubits. The next subgroup is
the isotropic subgroup $\mathcal{S}_{I,0}$ with the following generators:%
\begin{equation}
S_{I,0}\left(  D\right)  =\left[  \left.
\begin{array}
[c]{ccccc}%
0 & I & 0 & 0 & 0
\end{array}
\right\vert
\begin{array}
[c]{ccccc}%
0 & 0 & 0 & 0 & 0
\end{array}
\right]  .
\end{equation}
The above generators are equivalent to the last set of rows in
(\ref{eq:init-grand-stab})\ acting on Alice's $n$ qubits. The encoded versions
of both of the above two matrices are important in the active correction of
errors. The next subgroup is the gauge subgroup $\mathcal{S}_{G,0}$ whose
generators are as follows:%
\begin{equation}
S_{G,0}\left(  D\right)  =\left[  \left.
\begin{array}
[c]{ccccc}%
0 & 0 & I & 0 & 0\\
0 & 0 & 0 & 0 & 0
\end{array}
\right\vert
\begin{array}
[c]{ccccc}%
0 & 0 & 0 & 0 & 0\\
0 & 0 & I & 0 & 0
\end{array}
\right]  .
\end{equation}
The generators in $\mathcal{S}_{G,0}$ correspond to quantum operations that
have no effect on the encoded quantum information and therefore represent a
set of errors to which the code is immune. The last subgroup is the classical
subgroup $\mathcal{S}_{C,0}$ with generators%
\begin{equation}
S_{C,0}\left(  D\right)  =\left[  \left.
\begin{array}
[c]{ccccc}%
0 & 0 & 0 & I & 0
\end{array}
\right\vert
\begin{array}
[c]{ccccc}%
0 & 0 & 0 & 0 & 0
\end{array}
\right]  . \label{eq:grand-cgroup}%
\end{equation}
The grandfather code passively corrects errors corresponding to the encoded
version of the above generators because the initial qubit stream is immune to
the action of operators in $\mathcal{S}_{C,0}$ (up to a global phase). Alice
could measure the generators in $\mathcal{S}_{C,0}$ to determine the classical
information in each frame. Unlike quantum information, it is possible to
measure classical information without disturbing it.

Alice performs an encoding circuit with finite-depth operations to encode her
stream of qubits before sending them over the noisy quantum channel. The
encoding circuit transforms the initial stabilizer $S_{0}\left(  D\right)  $
to the encoded stabilizer $S\left(  D\right)  $\ as follows:%
\begin{equation}
S\left(  D\right)  =\left[  \left.
\begin{array}
[c]{cc}%
I & Z_{E1}\left(  D\right) \\
0 & Z_{E2}\left(  D\right) \\
0 & Z_{I}\left(  D\right)
\end{array}
\right\vert
\begin{array}
[c]{cc}%
0 & X_{E1}\left(  D\right) \\
I & X_{E2}\left(  D\right) \\
0 & X_{I}\left(  D\right)
\end{array}
\right]  , \label{eq:grand-stab}%
\end{equation}
where $Z_{E1}\left(  D\right)  $, $X_{E1}\left(  D\right)  $, $Z_{E2}\left(
D\right)  $ and $X_{E2}\left(  D\right)  $ are $c\times n$-dimensional,
$Z_{I}\left(  D\right)  $ and $X_{I}\left(  D\right)  $ are $a\times
n$-dimensional. The encoding circuit affects only the rightmost $n$ entries in
both the \textquotedblleft Z\textquotedblright\ and \textquotedblleft
X\textquotedblright\ matrix of $S_{0}\left(  D\right)  $ because these are the
qubits in Alice's possession. It transforms $S_{E,0}\left(  D\right)  $,
$S_{I,0}\left(  D\right)  $, $S_{G,0}\left(  D\right)  $, and $S_{C,0}\left(
D\right)  $ as follows:%
\begin{align}
S_{E}\left(  D\right)   &  =\left[  \left.
\begin{array}
[c]{c}%
Z_{E1}\left(  D\right) \\
Z_{E2}\left(  D\right)
\end{array}
\right\vert
\begin{array}
[c]{c}%
X_{E1}\left(  D\right) \\
X_{E2}\left(  D\right)
\end{array}
\right]  ,\ \ \ S_{I}\left(  D\right)  =\left[  \left.
\begin{array}
[c]{c}%
Z_{I}\left(  D\right)
\end{array}
\right\vert
\begin{array}
[c]{c}%
X_{I}\left(  D\right)
\end{array}
\right]  ,\\
S_{G}\left(  D\right)   &  =\left[  \left.
\begin{array}
[c]{c}%
Z_{G1}\left(  D\right) \\
Z_{G2}\left(  D\right)
\end{array}
\right\vert
\begin{array}
[c]{c}%
X_{G1}\left(  D\right) \\
X_{G2}\left(  D\right)
\end{array}
\right]  ,\ \ \ S_{C}\left(  D\right)  =\left[  \left.
\begin{array}
[c]{c}%
Z_{C}\left(  D\right)
\end{array}
\right\vert
\begin{array}
[c]{c}%
X_{C}\left(  D\right)
\end{array}
\right]  ,
\end{align}
where $Z_{G1}\left(  D\right)  $, $X_{G1}\left(  D\right)  $, $Z_{G2}\left(
D\right)  $ and $X_{G2}\left(  D\right)  $ are $r\times n$-dimensional and
$Z_{C}\left(  D\right)  $ and $X_{C}\left(  D\right)  $ are $l\times
n$-dimensional. The above polynomial matrices have the same commutation
relations as their corresponding unencoded polynomial matrices in
(\ref{eq:grand-egroup}-\ref{eq:grand-cgroup}) and respectively generate the
entanglement subgroup $\mathcal{S}_{E}$, the isotropic subgroup $\mathcal{S}%
_{I}$, the gauge subgroup $\mathcal{S}_{G}$, and the classical subgroup
$\mathcal{S}_{C}$.

A grandfather quantum convolutional code operates as follows.

\begin{enumerate}
\item Alice begins with an initial qubit stream as above. She performs the
finite-depth encoding operations corresponding to a specific grandfather
quantum convolutional code.

\item She sends the encoded qubits online over the noisy quantum communication
channel. The code passively protects against errors in $\left\langle
\mathcal{S}_{I},\mathcal{S}_{G},\mathcal{S}_{C}\right\rangle $.

\item Bob combines the received qubits with his half of the ebits in each
frame. He obtains the error syndrome by measuring the generators in
(\ref{eq:grand-stab}). He processes these syndrome bits with a classical error
estimation algorithm to diagnose errors and applies recovery operations to
reverse the errors.

\item He then performs the inverse of the encoding circuit to recover the
initial qubit stream with the information qubits and the classical information
bits. He recovers the classical information bits either by measuring the
generators in $\mathcal{S}_{C}$ before decoding or the generators in
$\mathcal{S}_{C,0}$ after decoding.
\end{enumerate}

A grandfather quantum convolutional code corrects errors in a Pauli error set
$\mathcal{E}$ that obey one of the following conditions $\forall E_{a}%
,E_{b}\in\mathcal{E}$:%
\[
\exists\ g\in\left\langle \mathcal{S}_{I},\mathcal{S}_{E}\right\rangle
:\left\{  g,E_{a}^{\dag}E_{b}\right\}  =0\text{ \ or \ }E_{a}^{\dag}E_{b}%
\in\left\langle \mathcal{S}_{I},\mathcal{S}_{G},\mathcal{S}_{C}\right\rangle
.
\]
It corrects errors that anticommute with generators in $\left\langle
\mathcal{S}_{I},\mathcal{S}_{E}\right\rangle $ by employing a classical error
estimation algorithm. The code passively protects against errors in the group
$\left\langle \mathcal{S}_{I},\mathcal{S}_{G},\mathcal{S}_{C}\right\rangle $.

Our scheme for quantum convolutional coding incorporates many of the known
techniques for quantum error correction. It can take full advantage of the
benefits of these different techniques.

\section{Example}

\label{sec:example}%
%\begin{figure}
%[ptb]
%\begin{center}
%\includegraphics[
%natheight=9.979900in,
%natwidth=11.080000in,
%height=4.5662in,
%width=5.0661in
%]%
%{hqcc-example.png}%
%\caption{(Color online) The encoding circuit for the grandfather quantum
%convolutional code in our example. Each frame $i$\ has an ebit $\left\vert
%\Phi^{+}\right\rangle ^{BA}$ shared between Bob and Alice, a fresh ancilla
%qubit $\left\vert 0\right\rangle ^{A}$, a gauge qubit $\sigma_{i}^{A}$, an
%information classical bit $\left\vert x_{i}\right\rangle ^{A}$, and an
%information qubit $\left\vert \psi_{i}\right\rangle ^{A}$. Bob's qubits are
%blue and Alice's qubits are red. Including the ebit and an ancilla qubit
%implies that the code incorporates active quantum error correction. Including
%the gauge qubit implies that the code has passive quantum error correction.
%The code provides passive error correction of phase errors on the classical
%bit. Alice encodes her classical bits and qubits using the encoding circuit
%above. The decoding circuit consists of the above operations in reverse
%order.}%
%\label{fig:example}%
%\end{center}
%\end{figure}
%EndExpansion
We present an example of a grandfather quantum convolutional code in this
section. The code protects one information qubit and classical bit with the
help of an ebit, an ancilla qubit, and a gauge qubit. The first frame of input
qubits has the state%
\begin{equation}
\rho_{0}=\left\vert \Phi^{+}\right\rangle \left\langle \Phi^{+}\right\vert
\otimes\left\vert 0\right\rangle \left\langle 0\right\vert \otimes\sigma
_{0}\otimes\left\vert x_{0}\right\rangle \left\langle x_{0}\right\vert
\otimes\left\vert \psi_{0}\right\rangle \left\langle \psi_{0}\right\vert ,
\end{equation}
where $\left\vert \Phi^{+}\right\rangle $ is the ebit, $\left\vert
0\right\rangle $ is the ancilla qubit, $\sigma_{0}$ is an arbitrary state for
the gauge qubit, $\left\vert x_{0}\right\rangle $ is a classical bit
represented by state $\left\vert 0\right\rangle $ or $\left\vert
1\right\rangle $, and $\left\vert \psi_{0}\right\rangle $ is one information
qubit equal to $\alpha_{0}\left\vert 0\right\rangle +\beta_{0}\left\vert
1\right\rangle $. The states of the other input frames have a similar form
though recall that information qubits can be entangled across multiple frames.

The initial stabilizer for the code is as follows:%
\[
S_{0}\left(  D\right)  =\left[  \left.
\begin{array}
[c]{cccccc}%
1 & 1 & 0 & 0 & 0 & 0\\
0 & 0 & 0 & 0 & 0 & 0\\
0 & 0 & 1 & 0 & 0 & 0
\end{array}
\right\vert
\begin{array}
[c]{cccccc}%
0 & 0 & 0 & 0 & 0 & 0\\
1 & 1 & 0 & 0 & 0 & 0\\
0 & 0 & 0 & 0 & 0 & 0
\end{array}
\right]  .
\]
The first two rows stabilize the ebit shared between Alice and Bob. Bob
possesses the half of the ebit in column one and Alice possesses the half of
the ebit in column two in both the left and right matrix. The third row
stabilizes the ancilla qubit. We name Alice's qubits one through five (they
are actually two through six in the above matrix from the left to the right).

The generators for the initial entanglement subgroup $\mathcal{S}_{E,0}$,
isotropic subgroup $\mathcal{S}_{I,0}$, gauge subgroup $\mathcal{S}_{G,0}$,
and classical subgroup $\mathcal{S}_{C,0}$ are respectively as follows:%
\begin{align*}
S_{E,0}\left(  D\right)   &  =\left[  \left.
\begin{array}
[c]{ccccc}%
1 & 0 & 0 & 0 & 0\\
0 & 0 & 0 & 0 & 0
\end{array}
\right\vert
\begin{array}
[c]{ccccc}%
0 & 0 & 0 & 0 & 0\\
1 & 0 & 0 & 0 & 0
\end{array}
\right]  ,\\
S_{I,0}\left(  D\right)   &  =\left[  \left.
\begin{array}
[c]{ccccc}%
0 & 1 & 0 & 0 & 0
\end{array}
\right\vert
\begin{array}
[c]{ccccc}%
0 & 0 & 0 & 0 & 0
\end{array}
\right]  ,\\
S_{G,0}\left(  D\right)   &  =\left[  \left.
\begin{array}
[c]{ccccc}%
0 & 0 & 1 & 0 & 0\\
0 & 0 & 0 & 0 & 0
\end{array}
\right\vert
\begin{array}
[c]{ccccc}%
0 & 0 & 0 & 0 & 0\\
0 & 0 & 1 & 0 & 0
\end{array}
\right]  ,\\
S_{C,0}\left(  D\right)   &  =\left[  \left.
\begin{array}
[c]{ccccc}%
0 & 0 & 0 & 1 & 0
\end{array}
\right\vert
\begin{array}
[c]{ccccc}%
0 & 0 & 0 & 0 & 0
\end{array}
\right]  .
\end{align*}
The sender performs the following finite-depth operations (order is from left
to right and top to bottom):%
\begin{align*}
&  H\left(  2\right)  \ C\left(  2,3,D\right)  \ C\left(  2,4,1+D\right)
\ C\left(  2,5,D\right)  \ H\left(  3,4,5\right)  \\
&  C\left(  2,3,D\right)  \ C\left(  2,5,D\right)  \ H\left(  2\right)
\ C\left(  1,2,D\right)  \ C\left(  1,4,1+D\right)  \\
&  C\left(  1,5,1+D\right)  \ H\left(  1,2,3,4,5\right)  \ C\left(
1,3,D\right)  \ C\left(  1,4,1+D\right)  \\
&  C\left(  1,5,1+D\right)  \ S\left(  1,4\right)  .
\end{align*}
where the notation for the above encoding operations was established in
Section~\ref{sec:finite-depth-clifford}.
%Figure~\ref{fig:example}\ details
%these operations on the initial qubit stream.
The initial stabilizer matrix
$S_{0}\left(  D\right)  $ transforms to $S\left(  D\right)  $ under these
encoding operations, where%
\begin{equation}
S\left(  D\right)  =\left[  \left.
\begin{array}
[c]{cccccc}%
1 & 0 & 0 & 0 & 0 & 0\\
0 & 1+D & D & 0 & 1 & 1+D\\
0 & 0 & 0 & D & D & D
\end{array}
\right\vert
\begin{array}
[c]{cccccc}%
0 & 1+D & 0 & D & 1 & 1+D\\
1 & 0 & 0 & 0 & 0 & 0\\
0 & 0 & 1 & 0 & 1 & 1
\end{array}
\right]  .
\end{equation}
The generators for the different subgroups transform respectively as follows:%
\begin{align*}
S_{E}\left(  D\right)   &  =\left[  \left.
\begin{array}
[c]{ccccc}%
0 & 0 & 0 & 0 & 0\\
1+D & D & 0 & 1 & 1+D
\end{array}
\right\vert
\begin{array}
[c]{ccccc}%
1+D & 0 & D & 1 & 1+D\\
0 & 0 & 0 & 0 & 0
\end{array}
\right]  ,\\
S_{I}\left(  D\right)   &  =\left[  \left.
\begin{array}
[c]{ccccc}%
0 & 0 & D & D & D
\end{array}
\right\vert
\begin{array}
[c]{ccccc}%
0 & 1 & 0 & 1 & 1
\end{array}
\right]  ,\\
S_{G}\left(  D\right)   &  =\left[  \left.
\begin{array}
[c]{ccccc}%
0 & \frac{1}{D} & 1 & \frac{1}{D} & 0\\
0 & \frac{1}{D} & 0 & 0 & 0
\end{array}
\right\vert
\begin{array}
[c]{ccccc}%
0 & 0 & 0 & 0 & 0\\
0 & 0 & 1 & 0 & 0
\end{array}
\right]  ,\\
S_{C}\left(  D\right)   &  =\left[  \left.
\begin{array}
[c]{ccccc}%
1 & 1+D^{-1} & 0 & 1+D^{-1} & 0
\end{array}
\right\vert
\begin{array}
[c]{ccccc}%
0 & 0 & 0 & 0 & 0
\end{array}
\right]  .
\end{align*}
The code actively protects against an arbitrary single-qubit error in every
other frame. One can check that the syndromes of the stabilizer in $S\left(
D\right)  $ satisfy this property. Consider the Pauli generators corresponding
to the generators in the entanglement subgroup and the isotropic subgroup:%
\begin{equation}
\cdots\left\vert
\begin{array}
[c]{ccccc}%
X & I & I & X & X\\
Z & I & I & Z & Z\\
I & X & I & X & X
\end{array}
\right\vert \left.
\begin{array}
[c]{ccccc}%
X & I & X & I & X\\
Z & Z & I & I & Z\\
I & I & Z & Z & Z
\end{array}
\right\vert \cdots,
\end{equation}
where all other entries in the left and right directions are tensor products
of the identity. We can use a table-lookup syndrome-based algorithm to
determine the error-correcting capability of the code. The method is similar
to the technique originally outlined in detail in Ref. \cite{ieee2007forney}.
The syndrome vector $s$\ consists of six bits where $s=s_{1}\cdots s_{6}$. The
first bit $s_{1}$ is one if the error anticommutes with the operator $XIIXX$
in the first part of the first generator above and zero otherwise. The second
bit $s_{2}$\ is one if the error anticommutes with the operator $XIXIX$ in the
delayed part of the first generator above and zero otherwise. The third
through sixth bits follow a similar pattern for the second and third
generators above. Table~\ref{TableKey}\ lists all single-qubit errors over
five qubits and their corresponding syndromes. The code corrects an arbitrary
single-qubit error in every other frame using this algorithm because the
syndromes are all unique. A syndrome-based Viterbi algorithm might achieve
better performance than the simple syndrome table-lookup algorithm outlined
above.%
%TCIMACRO{\TeXButton{B}{\begin{table}[tbp] \centering}}%
%BeginExpansion
\begin{table}[tbp] \centering
%EndExpansion%
\begin{tabular}
[c]{l|l||l|l||l|l}\hline\hline
\textbf{Error} & \textbf{Syndrome} & \textbf{Error} & \textbf{Syndrome} &
\textbf{Error} & \textbf{Syndrome}\\\hline\hline
$X_{1}$ & 001100 & $X_{3}$ & 000001 & $X_{5}$ & 001101\\\hline
$Y_{1}$ & 111100 & $Y_{3}$ & 010001 & $Y_{5}$ & 111111\\\hline
$Z_{1}$ & 110000 & $Z_{3}$ & 010000 & $Z_{5}$ & 110010\\\hline
$X_{2}$ & 000100 & $X_{4}$ & 001001 &  & \\\hline
$Y_{2}$ & 000110 & $Y_{4}$ & 101011 &  & \\\hline
$Z_{2}$ & 000010 & $Z_{4}$ & 100010 &  & \\\hline\hline
\end{tabular}
\caption{A list of possible single-qubit errors in a particular frame and the corresponding
syndrome vector. The syndrome corresponding to any single-qubit error is unique. The code therefore
corrects an arbitrary single-qubit error in every other frame.}\label{TableKey}%
%TCIMACRO{\TeXButton{E}{\end{table}}}%
%BeginExpansion
\end{table}%
%EndExpansion

This code also has passive protection against errors in $\left\langle
\mathcal{S}_{I},\mathcal{S}_{G},\mathcal{S}_{C}\right\rangle $. The Pauli form
of the errors in this group span over three frames and are as follows:%
\begin{equation}
\cdots\left\vert
\begin{array}
[c]{l}%
IIIII\\
IZIZI\\
IZIII\\
IZIZI
\end{array}
\right\vert
\begin{array}
[c]{l}%
IXIXX\\
IIZII\\
IIXII\\
ZZIZI
\end{array}
\left\vert
\begin{array}
[c]{l}%
IIZZZ\\
IIIII\\
IIIII\\
IIIII
\end{array}
\right\vert \cdots
\end{equation}
The smallest weight errors in this group have weight two and three. The code
passively corrects the above errors or any product of them or any five-qubit
shift of them.

There is a trade-off between passive error correction and the ability to
encode quantum information as discussed in Ref.~\cite{hsieh:062313}. One can
encode more quantum information by dropping the gauge group and instead
encoding extra information qubits. The gauge generators then become logical $X$ and $Z$
operators for the extra encoded qubits. One can also turn classical bits
into information qubits by dropping the generators in the classical subgroup. These
generators then become logical $Z$ operators for the extra encoded qubits. By
making the above replacements, the code loses some of its ability to correct passively, but we gain
a higher quantum information rate.
On the other hand, if we replace a gauge qubit with an
ancilla qubit or an ebit, we gain the ability to correct extra errors.
The trade-off now is that replacing gauge qubits with ebits or ancillas
enhances the active error-correcting capability of the code, but increases the overall complexity of error
correction.
%  In the extreme case of a decoherence-free subspace, the code can only correct
%special sets of errors, but the complexity of
%correction is zero.

\section{Closing Remarks}

We have presented a framework and a representative example for grandfather
quantum convolutional codes. We have explicitly shown how these codes operate,
and how to encode and decode a classical-quantum information stream by using
ebits, ancilla qubits, and gauge qubits for quantum redundancy. The ultimate
goal for this theory is to find quantum convolutional codes that might play an
integral part in larger quantum codes that approach the grandfather capacity
\cite{prep2008dev}. One useful line of investigation may be to combine this
theory with the recent quantum turbo-coding theory \cite{arx2007poulin}.

\chapter{Convolutional Entanglement Distillation}

\label{chp:ced}\begin{saying}
A chap at the ``Entanglement Distillery,''\\
Was drunk so they gave him the pillory,\\
They'd foul dirty ebits,\\
He said, ``Convolutional circuits!''\\
So they augmented the quantum artillery.
\end{saying}The goal of entanglement distillation resembles the goal of
quantum error correction \cite{PhysRevLett.76.722,PhysRevA.54.3824}. An
entanglement distillation protocol extracts noiseless, maximally-entangled
ebits from a larger set of noisy ebits. A sender and receiver can use these
noiseless ebits as a resource for several quantum communication protocols
\cite{PhysRevLett.69.2881,PhysRevLett.70.1895}.

Bennett et al. showed that a strong connection exists between quantum
error-correcting codes and entanglement distillation and demonstrated a method
for converting an arbitrary quantum error-correcting code into a one-way
entanglement distillation protocol \cite{PhysRevA.54.3824}. A one-way
entanglement distillation protocol utilizes one-way classical communication
between sender and receiver to carry out the distillation procedure. Shor and
Preskill\ improved upon Bennett et al.'s method by avoiding the use of ancilla
qubits and gave a simpler method for converting an arbitrary CSS quantum
error-correcting code into an entanglement distillation protocol
\cite{PhysRevLett.85.441}. Nielsen and Chuang showed how to convert a
stabilizer quantum error-correcting code into a stabilizer entanglement
distillation protocol \cite{book2000mikeandike}. Luo and Devetak then
incorporated shared entanglement to demonstrate how to convert an
entanglement-assisted stabilizer code into an entanglement-assisted
entanglement distillation protocol \cite{luo:010303}. All of the above
constructions exploit the relationship between quantum error correction and
entanglement distillation---we further exploit the connection in this chapter
by forming a \textit{convolutional} entanglement distillation protocol.

In this last chapter, our main contribution is a theory of convolutional
entanglement distillation. Our theory allows us to import the entirety of
classical convolutional coding theory for use in entanglement distillation.
The task of finding a good convolutional entanglement distillation protocol
now becomes the well-established task of finding a good classical
convolutional code.

We begin in Section~\ref{sec:conv-ent-dist} by showing how to construct
a\textit{ }convolutional entanglement distillation protocol from an arbitrary
quantum convolutional code. We translate earlier protocols
\cite{PhysRevLett.85.441,book2000mikeandike}\ for entanglement distillation of
a block of noisy ebits to the convolutional setting. A convolutional entanglement distillation
protocol has the benefit of distilling entanglement
\textquotedblleft online.\textquotedblright\ This online property is useful
because the sender and receiver can distill entanglement \textquotedblleft on
the fly\textquotedblright\ as they obtain more noisy ebits. This translation
from a quantum convolutional code to an entanglement distillation protocol is
useful because it paves the way for our major contribution.

Our major advance is a method for constructing a convolutional entanglement
distillation protocol when the sender and receiver initially share some
noiseless ebits. As stated previously, prior quantum convolutional work
requires the code to satisfy the restrictive self-orthogonality constraint,
and authors performed specialized searches for classical convolutional codes
that meet this constraint
\cite{PhysRevLett.91.177902,arxiv2004olliv,isit2005forney,ieee2007forney}. We
lift this constraint by allowing shared noiseless entanglement. The benefit of
convolutional entanglement distillation with entanglement assistance is that
we can import an \textit{arbitrary} classical binary or quaternary
convolutional code for use in a convolutional entanglement distillation
protocol. The error-correcting properties for the convolutional entanglement
distillation protocol follow directly from the properties of the imported
classical code. Thus we can apply the decades of research on classical
convolutional coding theory with many of the benefits of the convolutional
structure carrying over to the quantum domain.

We organize this chapter as follows. We review stabilizer entanglement
distillation in Section \ref{sec:stabilizer-ent-distill} and
entanglement-assisted entanglement distillation in Section
\ref{sec:stabilizer-ent-assist-ent-distill}. In Section
\ref{sec:conv-ent-dist}, we show how to convert an arbitrary quantum
convolutional code into a convolutional entanglement distillation protocol. In
Section \ref{sec:conv-ent-ent-assist}, we provide several methods and examples
for constructing convolutional entanglement distillation protocols where two
parties possess a few initial noiseless ebits. These initial noiseless ebits
act as a catalyst for the convolutional distillation protocol. The
constructions in Section \ref{sec:conv-ent-ent-assist}\ make it possible to
import an arbitrary classical binary or quaternary convolutional code for use
in convolutional entanglement distillation.

\section{Stabilizer Entanglement Distillation without Entanglement Assistance}

\label{sec:stabilizer-ent-distill}The purpose of an $\left[  n,k\right]
$\ entanglement distillation protocol is to distill $k$ pure ebits from $n$
noisy ebits where $0\leq k\leq n$ \cite{PhysRevLett.76.722,PhysRevA.54.3824}.
The yield of such a protocol is $k/n$. Two parties can then use the noiseless
ebits for quantum communication protocols.
Figure~\ref{fig:block-entanglement-distill} illustrates the operation of a
block entanglement distillation protocol.%
\begin{figure}
[ptb]
\begin{center}
\includegraphics[
natheight=7.639800in,
natwidth=10.253200in,
height=2.2329in,
width=2.9914in
]%
{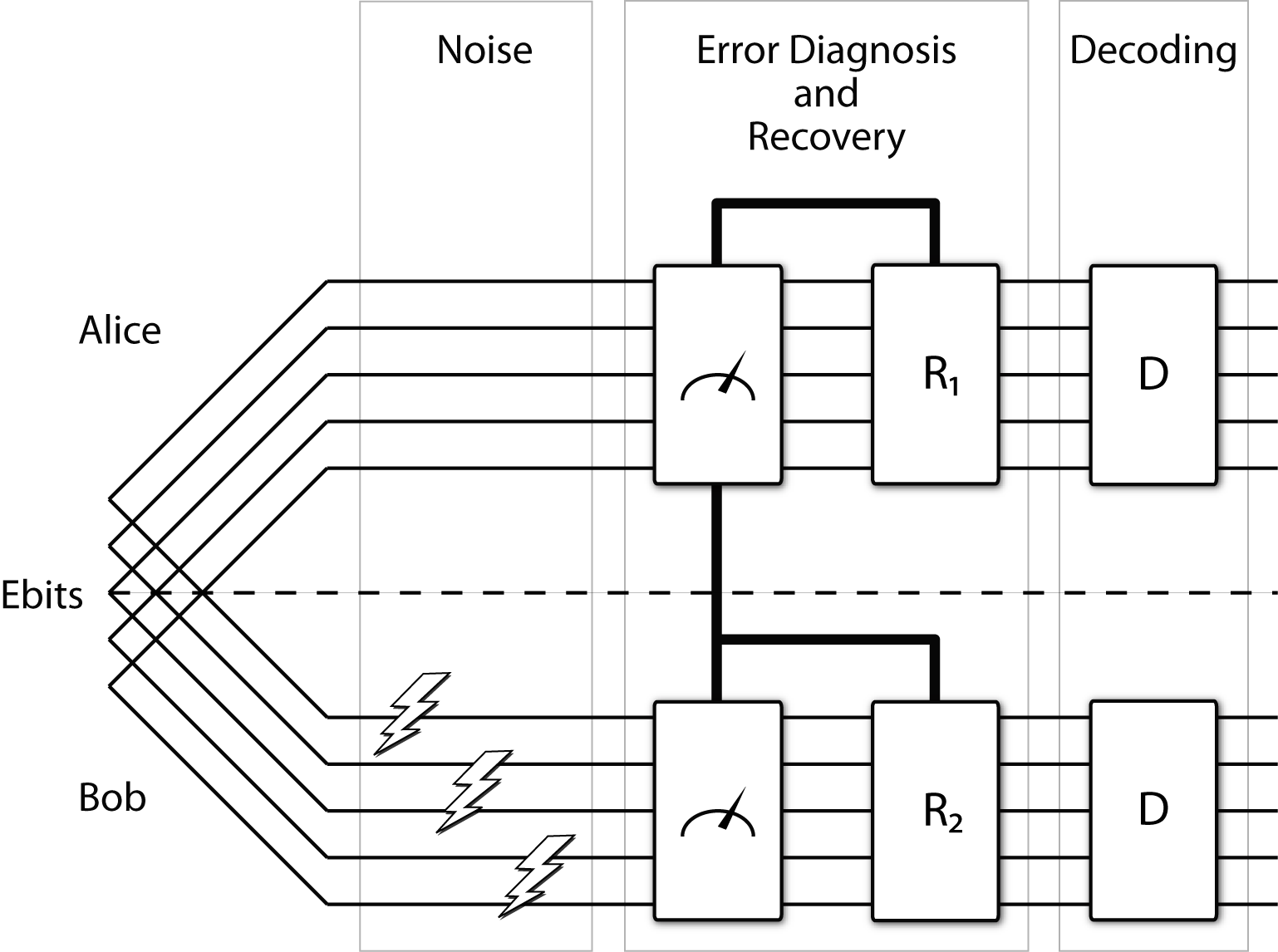}%
\caption{An example of a block entanglement distillation protocol. A sender
creates a set of noisy ebits by sending half of a set of Bell states through a
noisy quantum channel. Both sender and receiver perform multi-qubit
measurements to diagnose channel error. The sender transmits her measurement
results to the receiver over a classical communication channel. Both perform
recovery and decoding operations to obtain a set of noiseless ebits.}%
\label{fig:block-entanglement-distill}%
\end{center}
\end{figure}
%EndExpansion

The two parties establish a set of shared noisy ebits in the following way.
The sender Alice first prepares $n$ Bell states $\left\vert \Phi
^{+}\right\rangle ^{\otimes n}$ locally. She sends the second qubit of each
pair over a noisy quantum channel to a receiver Bob. Let $\left\vert \Phi
_{n}^{+}\right\rangle $ be the state $\left\vert \Phi^{+}\right\rangle
^{\otimes n}$ rearranged so that all of Alice's qubits are on the left and all
of Bob's qubits are on the right. The noisy channel applies a Pauli error in
the error set $\mathcal{E}\subset\Pi^{n}$ to the set of $n$ qubits sent over
the channel. The sender and receiver then share a set of $n$ noisy ebits of
the form $\left(  \mathbf{I}\otimes\mathbf{A}\right)  \left\vert \Phi_{n}%
^{+}\right\rangle $ where the identity $\mathbf{I}$ acts on Alice's qubits and
$\mathbf{A}$ is some Pauli operator in $\mathcal{E}$ acting on Bob's qubits.

A one-way stabilizer entanglement distillation protocol uses a stabilizer code
for the distillation procedure. Figure~\ref{fig:block-entanglement-distill}
highlights the main features of a stabilizer entanglement distillation
protocol. Suppose the stabilizer $\mathcal{S}$\ for an $\left[  n,k\right]
$\ quantum error-correcting code has generators $g_{1},\ldots,g_{n-k}$. The
distillation procedure begins with Alice measuring the $n-k$ generators in
$\mathcal{S}$. Let $\left\{  \mathbf{P}_{i}\right\}  $ be the set of the
$2^{n-k}$\ projectors that project onto the $2^{n-k}$ orthogonal subspaces
corresponding to the generators in $\mathcal{S}$. The measurement projects
$\left\vert \Phi_{n}^{+}\right\rangle $ randomly onto one of the
$i$\ subspaces. Each $\mathbf{P}_{i}$ commutes with the noisy operator
$\mathbf{A}$\ on Bob's side so that%
\begin{equation}
\left(  \mathbf{P}_{i}\otimes\mathbf{I}\right)  \left(  \mathbf{I}%
\otimes\mathbf{A}\right)  \left\vert \Phi_{n}^{+}\right\rangle =\left(
\mathbf{I}\otimes\mathbf{A}\right)  \left(  \mathbf{P}_{i}\otimes
\mathbf{I}\right)  \left\vert \Phi_{n}^{+}\right\rangle
.\label{eq:distill-proof}%
\end{equation}
The following important \textquotedblleft Bell-state\ matrix
identity\textquotedblright\ holds for an arbitrary matrix $\mathbf{M}$:%
\begin{equation}
\left(  \mathbf{M}\otimes\mathbf{I}\right)  \left\vert \Phi_{n}^{+}%
\right\rangle =\left(  \mathbf{I}\otimes\mathbf{M}^{T}\right)  \left\vert
\Phi_{n}^{+}\right\rangle .
\end{equation}
Then (\ref{eq:distill-proof}) is equal to the following:%
\begin{equation}
\left(  \mathbf{I}\otimes\mathbf{A}\right)  \left(  \mathbf{P}_{i}%
\otimes\mathbf{I}\right)  \left\vert \Phi_{n}^{+}\right\rangle =\left(
\mathbf{I}\otimes\mathbf{A}\right)  \left(  \mathbf{P}_{i}^{2}\otimes
\mathbf{I}\right)  \left\vert \Phi_{n}^{+}\right\rangle =\left(
\mathbf{I}\otimes\mathbf{A}\right)  \left(  \mathbf{P}_{i}\otimes
\mathbf{P}_{i}^{T}\right)  \left\vert \Phi_{n}^{+}\right\rangle .\nonumber
\end{equation}
Therefore each of Alice's projectors $\mathbf{P}_{i}$ projects Bob's qubits
onto a subspace $\mathbf{P}_{i}^{T}$ corresponding to Alice's projected
subspace $\mathbf{P}_{i}$. This operation is one of the \textquotedblleft
weird\textquotedblright\ properties of entanglement because it projects the
set of noisy ebits onto the codespace effectively \textquotedblleft
before\textquotedblright\ the noise acts on it. Alice restores her qubits to
the simultaneous +1-eigenspace of the generators in $\mathcal{S}$. She sends
her measurement results to Bob. Bob measures the generators in $\mathcal{S}$.
Bob combines his measurements with Alice's to determine a syndrome for the
error. He performs a recovery operation on his qubits to reverse the error. He
restores his qubits to the simultaneous +1-eigenspace of the generators in
$\mathcal{S}$. Alice and Bob both perform the decoding unitary corresponding
to stabilizer $\mathcal{S}$\ to convert their $k$ logical ebits to $k$
physical ebits.

\section{Stabilizer Entanglement Distillation with Entanglement Assistance}

\label{sec:stabilizer-ent-assist-ent-distill}Luo and Devetak provided a
straightforward extension of the above protocol \cite{luo:010303}. Their
method converts an entanglement-assisted stabilizer code into an
entanglement-assisted entanglement distillation protocol.

Luo and Devetak form an entanglement distillation protocol that has
entanglement assistance from a few noiseless ebits. The crucial assumption for
an entanglement-assisted entanglement distillation protocol is that Alice and
Bob possess $c$ noiseless ebits in addition to their $n$ noisy ebits. The
total state of the noisy and noiseless ebits is%
\begin{equation}
(\mathbf{I}^{A}\otimes\left(  \mathbf{A\otimes I}\right)  ^{B})\left\vert
\Phi_{n+c}^{+}\right\rangle
\end{equation}
where $\mathbf{I}^{A}$ is the $2^{n+c}\times2^{n+c}$ identity matrix acting on
Alice's qubits and the noisy Pauli operator $\left(  \mathbf{A\otimes
I}\right)  ^{B}$ affects Bob's first $n$ qubits only. Thus the last $c$ ebits
are noiseless, and Alice and Bob have to correct for errors on the first $n$
ebits only.

The protocol proceeds exactly as outlined in the previous section. The only
difference is that Alice and Bob measure the generators in an
entanglement-assisted stabilizer code. Each generator spans over $n+c$ qubits
where the last $c$ qubits are noiseless.

We comment on the yield of this entanglement-assisted entanglement
distillation protocol. An entanglement-assisted code has $n-k$ generators that
each have $n+c$ Pauli entries. These parameters imply that the entanglement
distillation protocol produces $k+c$ ebits. But the protocol consumes $c$
initial noiseless ebits as a catalyst for distillation. Therefore the yield of
this protocol is $k/n$.

In Section \ref{sec:conv-ent-ent-assist}, we exploit this same idea of using a
few noiseless ebits as a catalyst for distillation. The idea is similar in
spirit to that developed in this section, but the mathematics and construction
are different because we perform distillation in a convolutional manner.

\section{Convolutional Entanglement Distillation without Entanglement
Assistance}%

\begin{figure*}
[ptb]
\begin{center}
\includegraphics[
natheight=8.013400in,
natwidth=19.253300in,
height=2.304in,
width=5.518in
]
{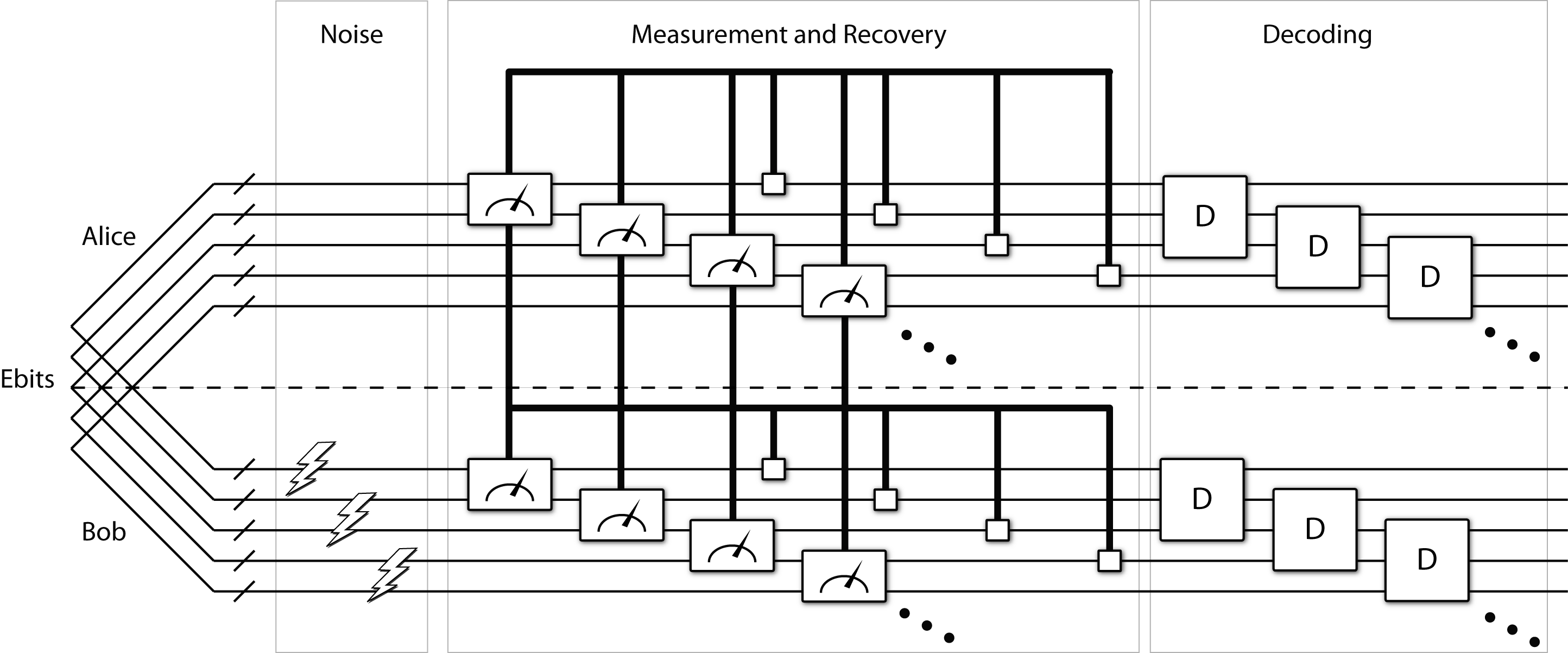}
\caption
{An example of a convolutional entanglement distillation protocol taken
from the quantum convolutional code in Ref. \cite{ieee2007forney}.
The code in Ref. \cite{ieee2007forney} has rate 1/3 and can correct
for single-qubit errors in every other frame. Alice and Bob first measure the
operators in the stabilizer for the quantum convolutional code. Alice
performs conditional unitaries on her qubits to restore them
to the +1 eigenspace of the stabilizer code. Alice forwards her measurement results to Bob. Bob performs a
maximum-likelihood decoding procedure such as Viterbi decoding \cite
{itit1967viterbi} to determine
the qubit errors. He corrects for these errors. He restores his qubits to the +1 eigenspace of the stabilizer code.
Alice and Bob both perform
online decoding to obtain ebits with yield 1/3.}
\label{fig:cedc}
\end{center}
\end{figure*}%
%EndExpansion
\label{sec:conv-ent-dist}We now show how to convert an arbitrary quantum
convolutional code into a convolutional entanglement distillation protocol.
Figure~\ref{fig:cedc}\ illustrates an example of a yield-1/3 convolutional
entanglement distillation protocol. The protocol has the same benefits as a
quantum convolutional code:\ an online decoder with less decoding complexity
than a block protocol, good error-correcting properties, and higher ebit yield
than a block protocol. The protocol we develop in this section is useful for
our major contribution presented in the next section.

We can think of our protocol in two ways. Our protocol applies when a sender
Alice and a receiver Bob possess a countably infinite number of noisy ebits.
Our protocol also applies as an online protocol when Alice and Bob begin with
a finite number of noisy ebits and establish more as time passes. The
countably infinite and online protocols are equivalent. We would actually
implement the entanglement distillation protocol in the online manner, but we
formulate the forthcoming mathematics with the countably infinite description.
Each step in the protocol does not need to wait for the completion of its
preceding step if Alice and Bob employ the protocol online.

The protocol begins with Alice and Bob establishing a set of noisy ebits.
Alice prepares a countably infinite number of Bell states $\left\vert \Phi
^{+}\right\rangle $ locally. She sends one half of each Bell state through a
noisy quantum channel. Alice and Bob then possess a state $\rho^{AB}$\ that is
a countably infinite number of noisy ebits $\rho_{i}^{AB}$ where%
\begin{equation}
\rho^{AB}=%
%TCIMACRO{\dbigotimes \limits_{i=1}^{\infty}}%
%BeginExpansion
{\displaystyle\bigotimes\limits_{i=1}^{\infty}}
%EndExpansion
\ \rho_{i}^{AB}.
\end{equation}
The state $\rho^{AB}$ is equivalent to the following ensemble%
\begin{equation}
\left\{  p_{i},\left\vert \Phi^{+}\right\rangle _{i}^{AB}\right\}  .
\end{equation}
In the above, $p_{i}$ is the probability that the state is $\left\vert
\Phi^{+}\right\rangle _{i}^{AB}$, where%
\begin{equation}
\left\vert \Phi^{+}\right\rangle _{i}^{AB}\equiv\left(  \mathbf{I}%
\otimes\mathbf{A}_{i}\right)  \left\vert \Phi_{\infty}^{+}\right\rangle ^{AB},
\end{equation}
and $\left\vert \Phi_{\infty}^{+}\right\rangle ^{AB}$ is the state $\left(
\left\vert \Phi^{+}\right\rangle ^{AB}\right)  ^{\otimes\infty}$ rearranged so
that all of Alice's qubits are on the left and all of Bob's are on the right.
$\mathbf{A}_{i}\in\Pi^{\mathbb{Z}^{+}}$ is a Pauli sequence of errors acting
on Bob's side. These errors result from the noisy quantum channel.
$\mathbf{I}$ is a sequence of identity matrices acting on Alice's side
indicating that the noisy channel does not affect her qubits. Alice and Bob
need to correct for a particular error set in order to distill noiseless ebits.

Alice and Bob employ the following strategy to distill noiseless ebits. Alice
measures the $n-k$ generators in the basic set $\mathcal{G}_{0}$. The
measurement operation projects the first $n\left(  \nu+1\right)  $ ebits
($\nu$ is the constraint length) randomly onto one of $2^{n-k}$\ orthogonal
subspaces. Alice places the measurement outcomes in an $\left(  n-k\right)
$-dimensional classical bit vector $\mathbf{a}_{0}$. She restores her half of
the noisy ebits to the simultaneous +1-eigenspace of the generators in
$\mathcal{G}_{0}$ if $\mathbf{a}_{0}$ differs from the all-zero vector. She
sends $\mathbf{a}_{0}$ to Bob over a classical communication channel. Bob
measures the generators in $\mathcal{G}_{0}$ and stores the measurement
outcomes in a classical bit vector $\mathbf{b}_{0}$. Bob compares
$\mathbf{b}_{0}$ to $\mathbf{a}_{0}$ by calculating an error vector
$\mathbf{e}_{0}=\mathbf{a}_{0}\oplus\mathbf{b}_{0}$. He corrects for any
errors that $\mathbf{e}_{0}$ can identify. He may have to wait to receive
later error vectors before determining the full error syndrome. He restores
his half of the noisy ebits to the simultaneous +1-eigenspace of the
generators in $\mathcal{G}_{0}$ if the bit vector $\mathbf{b}_{0}$ indicates
that his logical ebits are not in the +1-space. Alice and Bob repeat the above
procedure for all shifts $D\left(  \mathcal{G}_{0}\right)  $, $D^{2}\left(
\mathcal{G}_{0}\right)  $, \ldots\ of the basic generators in $\mathcal{G}%
_{0}$. Bob obtains a set $\mathcal{E}$ of classical error vectors
$\mathbf{e}_{i}$: $\mathcal{E}=\left\{  \mathbf{e}_{i}:i\in\mathbb{Z}%
^{+}\right\}  $. Bob uses a maximum-likelihood decoding technique such as
Viterbi decoding \cite{itit1967viterbi} or a table-lookup on the error set
$\mathcal{E}$ to determine which errors occur. This error determination
process is a purely classical computation. He reverses the estimated errors
after determining the syndrome.

The states that Alice and Bob possess after the above procedure are encoded
logical ebits. They can extract physical ebits from these logical ebits by
each performing the online decoding circuit for the code $\mathcal{G}$. The
algorithm outlined in Ref.~\cite{isit2006grassl}\ gives a method for
determining the online decoding circuit.

\begin{example}
We use the rate-1/3 quantum convolutional code in
Example~\ref{sec:qcc-example} to produce a yield-1/3 convolutional
entanglement distillation protocol. Alice measures the generators in the
stabilizer in (\ref{eq:qcc-example-stabilizer})\ for every noisy ebit she
shares with Bob. Alice communicates the result of her measurement of the first
two generators to Bob. Alice restores the qubits on her side to be in the
simultaneous +1-eigenspace of the first two generators. Bob measures the same
first two generators. Alice measures the next two generators, communicates her
results, etc. Bob compares his results to Alice's to determine the error bit
vectors. Bob performs Viterbi decoding on the measurement results and corrects
for errors. He rotates his states to the simultaneous +1-eigenspace of the
generators. Alice and Bob perform the above procedure in an online manner
according to Figure~\ref{fig:cedc}. Alice and Bob can decode the first six
qubits after measuring the second two generators. They can decode because
there is no overlap between the first two generators and any two generators
after the second two generators. They use the circuit from
\cite{isit2006grassl} in reverse order to decode physical ebits from logical
ebits. They distill ebits with yield 1/3 by using this convolutional
entanglement distillation protocol. The ebit yield of 1/3 follows directly
from the code rate of 1/3.
\end{example}

\section{Convolutional Entanglement Distillation with Entanglement Assistance}

\label{sec:conv-ent-ent-assist}The convolutional entanglement distillation
protocol that we develop in this section operates identically to the one
developed in the previous section. The measurements, classical communication,
and recovery and decoding operations proceed exactly as Figure~\ref{fig:cedc} indicates.

The difference between the protocol in this section and the previous one is
that we now assume the sender and receiver share a few initial noiseless
ebits. They use these initial ebits as a catalyst to get the protocol started.
The sender and receiver require noiseless ebits for each round of the
convolutional entanglement distillation protocol. They can use the noiseless
ebits generated by earlier rounds for consumption in later rounds. It is
possible to distill noiseless ebits in this way by catalyzing the process with
a few noiseless ebits. The protocol we develop in this section is a more
powerful generalization of the previous section's protocol.

The construction in this section allows sender and receiver to use an
arbitrary set of Paulis for the distillation protocol. The set does not
necessarily have to be a commuting set of Paulis.

The implication of the construction in this section is that we can import an
arbitrary binary or quaternary classical convolutional code for use as a
quantum convolutional code. We explicitly give some examples to highlight the
technique for importing. The error-correcting properties and yield translate
directly from the properties of the classical convolutional code. Thus the
problem of finding a good convolutional entanglement distillation protocol
reduces to that of finding a good classical convolutional code.

\subsection{Yield (n-1)/n Convolutional Entanglement Distillation}

We present our first method for constructing a convolutional entanglement
distillation protocol that uses entanglement assistance. The shifted
symplectic product from Section~\ref{sec:shifted-symp-prod}\ is a crucial
component of our formulation.

Suppose Alice and Bob use one generator $\mathbf{N}\left(  \mathbf{u}\left(
D\right)  \right)  $\ for an entanglement distillation protocol where%
\[
\mathbf{u}\left(  D\right)  =\left[  \mathbf{z}\left(  D\right)
|\mathbf{x}\left(  D\right)  \right]  =\left[
\begin{array}
[c]{ccc}%
z_{1}\left(  D\right)   & \cdots & z_{n}\left(  D\right)
\end{array}
|%
\begin{array}
[c]{ccc}%
x_{1}\left(  D\right)   & \cdots & x_{n}\left(  D\right)
\end{array}
\right]  .
\]
and where $z_{1}\left(  D\right)  $, \ldots, $z_{n}\left(  D\right)  $,
$x_{1}\left(  D\right)  $, \ldots, $x_{n}\left(  D\right)  $ are binary
polynomials. We do not impose a commuting constraint on generator
$\mathbf{N}\left(  \mathbf{u}\left(  D\right)  \right)  $. Alice and Bob
choose generator $\mathbf{N}\left(  \mathbf{u}\left(  D\right)  \right)  $
solely for its error-correcting capability.

The shifted symplectic product helps to produce a commuting generator from a
noncommuting one. The shifted symplectic product of $\mathbf{u}\left(
D\right)  $\ is%
\begin{equation}
\left(  \mathbf{u}\odot\mathbf{u}\right)  \left(  D\right)  =\sum
_{i\in\mathbb{Z}}\left(  \mathbf{u}\odot\mathbf{u}\right)  _{i}\ D^{i}.
\end{equation}
The coefficient $\left(  \mathbf{u}\odot\mathbf{u}\right)  _{0}$ for zero
shifts is equal to zero because every tensor product of Pauli operators
commutes with itself:%
\begin{equation}
\left(  \mathbf{u}\odot\mathbf{u}\right)  _{0}=0.
\end{equation}
Recall that $\mathbf{u}\left(  D\right)  $ is self-time-reversal symmetric
(\ref{eq:self-time-reversal-sym}). We adopt the following notation for a
polynomial that includes the positive-index or negative-index coefficients of
the shifted symplectic product $\left(  \mathbf{u}\odot\mathbf{u}\right)
\left(  D\right)  $:%
\begin{equation}
\left(  \mathbf{u}\odot\mathbf{u}\right)  \left(  D\right)  ^{+}=\sum
_{i\in\mathbb{Z}^{+}}\left(  \mathbf{u}\odot\mathbf{u}\right)  _{i}%
\ D^{i},\ \ \ \ \left(  \mathbf{u}\odot\mathbf{u}\right)  \left(  D\right)
^{-}=\sum_{i\in\mathbb{Z}^{-}}\left(  \mathbf{u}\odot\mathbf{u}\right)
_{i}\ D^{i}.
\end{equation}
The following identity holds:%
\begin{equation}
\left(  \mathbf{u}\odot\mathbf{u}\right)  \left(  D\right)  ^{+}=\left(
\mathbf{u}\odot\mathbf{u}\right)  \left(  D^{-1}\right)  ^{-}.
\end{equation}
Consider the following vector of polynomials:%
\begin{equation}
\mathbf{a}\left(  D\right)  =\left[
\begin{array}
[c]{c}%
\left(  \mathbf{u}\odot\mathbf{u}\right)  \left(  D\right)  ^{+}%
\end{array}
|%
\begin{array}
[c]{c}%
1
\end{array}
\right]  .
\end{equation}
Its relations under the shifted symplectic product are the same as
$\mathbf{u}\left(  D\right)  $:%
\begin{equation}
\left(  \mathbf{a}\odot\mathbf{a}\right)  \left(  D\right)  =\left(
\mathbf{u}\odot\mathbf{u}\right)  \left(  D\right)  ^{+}+\left(
\mathbf{u}\odot\mathbf{u}\right)  \left(  D\right)  ^{-}=\left(
\mathbf{u}\odot\mathbf{u}\right)  \left(  D\right)  .\nonumber
\end{equation}
The vector $\mathbf{a}\left(  D\right)  $ provides a straightforward way to
make $\mathbf{N}\left(  \mathbf{u}\left(  D\right)  \right)  $ commute with
all of its shifts. We augment $\mathbf{u}\left(  D\right)  $ with
$\mathbf{a}\left(  D\right)  $. The augmented generator $\mathbf{u}^{\prime
}\left(  D\right)  $\ is as follows:%
\begin{equation}
\mathbf{u}^{\prime}\left(  D\right)  =\left[
\begin{array}
[c]{cc}%
\mathbf{z}\left(  D\right)   & \left(  \mathbf{u}\odot\mathbf{u}\right)
\left(  D\right)  ^{+}%
\end{array}
|%
\begin{array}
[c]{cc}%
\mathbf{x}\left(  D\right)   & 1
\end{array}
\right]  .\label{eq:augment-conv}%
\end{equation}
The augmented generator $\mathbf{u}^{\prime}\left(  D\right)  $ has vanishing
symplectic product, $\left(  \mathbf{u}^{\prime}\odot\mathbf{u}^{\prime
}\right)  \left(  D\right)  =0$, because the shifted symplectic product of
$\mathbf{a}\left(  D\right)  $ nulls the shifted symplectic product of
$\mathbf{u}\left(  D\right)  $. The augmented generator $\mathbf{N}\left(
\mathbf{u}^{\prime}\left(  D\right)  \right)  $ commutes with itself for every
shift and is therefore useful for convolutional entanglement distillation as
outlined in Section~\ref{sec:conv-ent-dist}.

We can construct an entanglement distillation protocol using an augmented
generator of this form. The first $n$ Pauli entries for every frame of
generator $\mathbf{N}\left(  \mathbf{u}^{\prime}\left(  D\right)  \right)  $
correct errors. Entry $n+1$ for every frame of $\mathbf{N}\left(
\mathbf{u}^{\prime}\left(  D\right)  \right)  $ makes $\mathbf{N}\left(
\mathbf{u}^{\prime}\left(  D\right)  \right)  $ commute with every one of its
shifts. The error-correcting properties of the code do not include errors on
the last (extra) ebit of each frame; therefore, this ebit must be noiseless.
It is necessary to catalyze the distillation procedure with $n\nu$\ noiseless
ebits where $n$ is the frame size and $\nu$ is the constraint length. The
distillation protocol requires this particular amount because it does not
correct errors and generate noiseless ebits until it has finished processing
the first basic set of generators and $\nu-1$ of its shifts. Later frames can
use the noiseless ebits generated from previous frames. Therefore these
initial noiseless ebits are negligible when calculating the yield. This
construction allows us to exploit the error-correcting properties of an
arbitrary set of Pauli matrices for a convolutional entanglement distillation
protocol.%
%TCIMACRO{\TeXButton{B}{\begin{table}[tbp] \centering}}%
%BeginExpansion
\begin{table}[tbp] \centering
%EndExpansion
\label{tbl:syndromes}%
\begin{tabular}
[c]{l|l|l|l|l|l}\hline\hline
$X_{1}$ & $Z_{1}$ & $Y_{1}$ & \thinspace$X_{2}$ & $Z_{2}$ & $Y_{2}%
$\\\hline\hline
$1$ & $0$ & $1$ & $1$ & $0$ & $1$\\
$0$ & $0$ & $0$ & $0$ & $1$ & $1$\\
$0$ & $1$ & $1$ & $1$ & $0$ & $1$\\
$1$ & $0$ & $1$ & $0$ & $0$ & $0$\\\hline\hline
\end{tabular}
\caption{The convolutional entanglement distillation protocol
for Example~\ref{ex:conv-ed-example-one-gen} corrects for a single-qubit error
in every fourth frame. Here we list the syndromes corresponding to
errors $X_1$, $Y_1$, and $Z_1$ on the first qubit and to errors
$X_2$, $Y_2$, and $Z_2$ on the second qubit. The syndromes are
unique so that the receiver can identify which error occurs.}%
%TCIMACRO{\TeXButton{E}{\end{table}}}%
%BeginExpansion
\end{table}%
%EndExpansion

We discuss the yield of such a protocol in more detail. Our construction
employs one generator with $n+1$ qubits per frame. The protocol generates $n$
noiseless ebits for every frame. But it also consumes a noiseless\ ebit for
every frame. Every frame thus produces a net of $n-1$ noiseless ebits, and the
yield of the protocol\ is $\left(  n-1\right)  /n$.

This yield of $\left(  n-1\right)  /n$\ is superior to the yield of an
entanglement distillation protocol taken from the quantum convolutional codes
of Forney et al. \cite{ieee2007forney}. Our construction should also give
entanglement distillation protocols with superior error-correcting properties
because we have no self-orthogonality constraint on the Paulis in the stabilizer.

It is possible to construct an online decoding circuit for the generator
$\mathbf{u}^{\prime}\left(  D\right)  $ by the methods given in
\cite{isit2006grassl}. A circuit satisfies the noncatastrophic property if the
polynomial entries of all of the code generators have a greatest common
divisor that is a power of the delay operator $D$ \cite{isit2006grassl}. The
online decoding circuit for this construction obeys the noncatastrophicity
property because the augmented generator $\mathbf{u}^{\prime}\left(  D\right)
$ contains 1 as one of its entries.%
\begin{figure}
[ptb]
\begin{center}
\includegraphics[
natheight=7.639800in,
natwidth=5.626500in,
height=3.7403in,
width=2.2788in
]%
{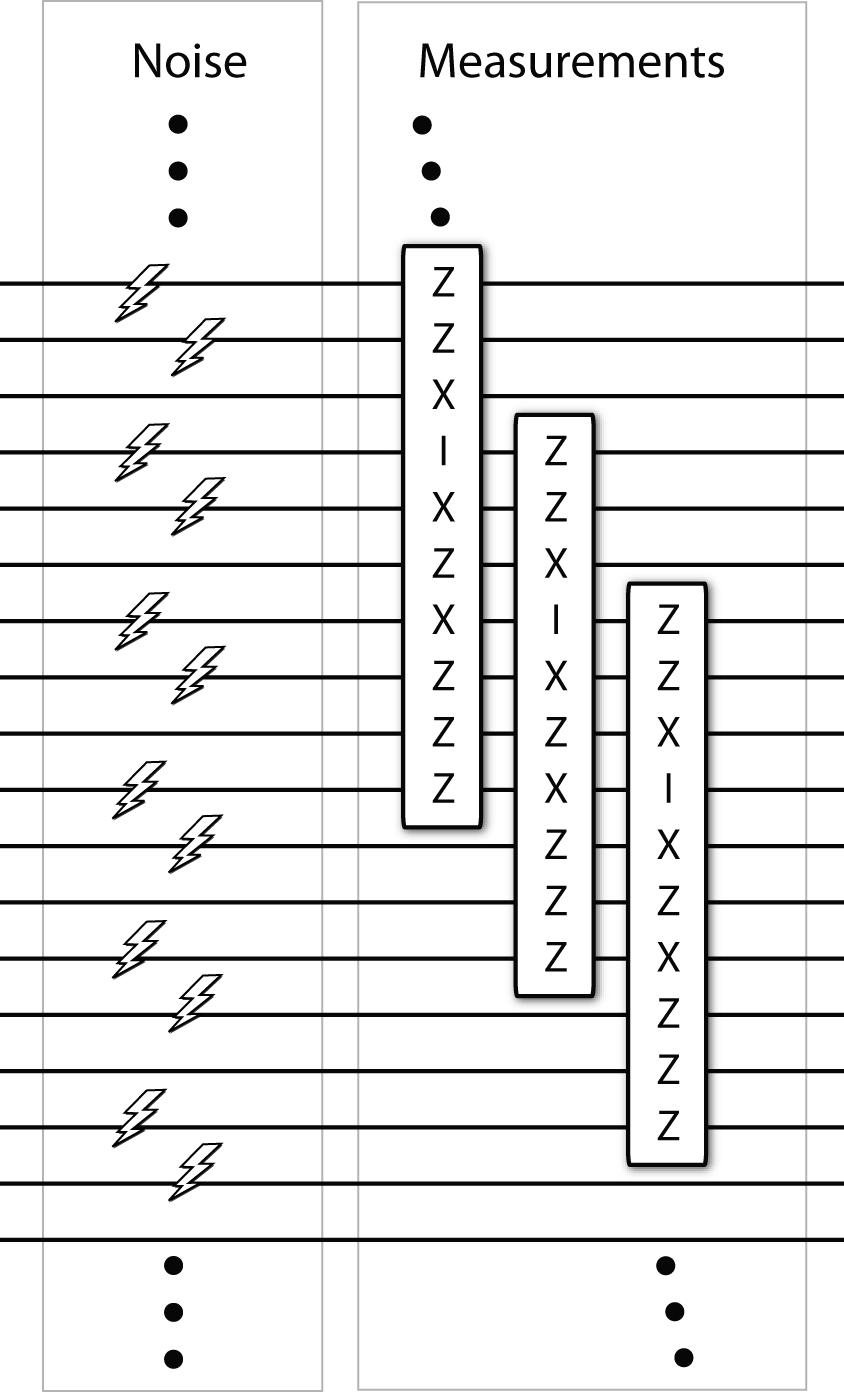}%
\caption{The above figure illustrates Bob's side of the convolutional
entanglement distillation protocol that uses entanglement assistance. The
noise affects the first and second of every three ebits that Bob shares with
Alice. Every third ebit that Alice and Bob share are noiseless. The
measurements correspond to those in Example~\ref{ex:conv-ed-example-one-gen}.}%
\label{fig:conv-dist-example}%
\end{center}
\end{figure}
%EndExpansion

\begin{example}
\label{ex:conv-ed-example-one-gen}Suppose we have the following generator%
\[
\mathbf{N}\left(  \mathbf{u}\left(  D\right)  \right)  =\left(  \cdots
|II|ZZ|IX|XZ|ZI|II|\cdots\right)  ,
\]
where%
\[
\mathbf{u}\left(  D\right)  =\left[  \left.
\begin{array}
[c]{cc}%
1+D^{3} & 1+D^{2}%
\end{array}
\right\vert
\begin{array}
[c]{cc}%
D^{2} & D
\end{array}
\right]  .
\]
The above generator corrects for an arbitrary single-qubit error in every
fourth frame. Table~\ref{tbl:syndromes} lists the unique syndromes for errors
in a single frame. The generator anticommutes with a shift of itself by one or
two to the left or right. The shifted symplectic product confirms these
commutation relations:%
\[
\left(  \mathbf{u\odot u}\right)  \left(  D\right)  =D+D^{2}+D^{-1}+D^{-2}.
\]
Let us follow the prescription in (\ref{eq:augment-conv}) for augmenting
generator $\mathbf{N}\left(  \mathbf{u}\left(  D\right)  \right)  $. The
following polynomial%
\begin{equation}
\mathbf{a}\left(  D\right)  =\left[  \left.
\begin{array}
[c]{c}%
\left(  \mathbf{u\odot u}\right)  \left(  D\right)  ^{+}%
\end{array}
\right\vert
\begin{array}
[c]{c}%
1
\end{array}
\right]  =\left[  \left.
\begin{array}
[c]{c}%
D+D^{2}%
\end{array}
\right\vert
\begin{array}
[c]{c}%
1
\end{array}
\right]  ,\nonumber
\end{equation}
has the same commutation relations as $\mathbf{u}\left(  D\right)  $:%
\begin{equation}
\left(  \mathbf{a\odot a}\right)  \left(  D\right)  =\left(  \mathbf{u\odot
u}\right)  \left(  D\right)  .
\end{equation}
We augment $\mathbf{u}\left(  D\right)  $\ as follows:%
\[
\mathbf{u}^{\prime}\left(  D\right)  =\left[  \left.
\begin{array}
[c]{cc}%
1+D^{3} & 1+D^{2}%
\end{array}%
\begin{array}
[c]{c}%
D+D^{2}%
\end{array}
\right\vert
\begin{array}
[c]{cc}%
D^{2} & D
\end{array}%
\begin{array}
[c]{c}%
1
\end{array}
\right]  .
\]
The overall generator now looks as follows in the Pauli representation:%
\[
\mathbf{N}\left(  \mathbf{u}^{\prime}\left(  D\right)  \right)  =(\cdots
|III|ZZX|IXZ|XZZ|ZII|III|\cdots).
\]
The yield of a protocol using the above construction is 1/2.
Figure~\ref{fig:conv-dist-example} illustrates Bob's side of the protocol. It
shows which of Bob's half of the ebits are noisy and noiseless, and it gives
the measurements that Bob performs.
\end{example}

\subsection{Yield (n-m)/n Convolutional Entanglement Distillation}

The construction in the above section uses only one generator for
distillation. We generalize the above construction to a code with an arbitrary
number of generators. We give an example that illustrates how to convert an
arbitrary classical quaternary convolutional code into a convolutional
entanglement distillation protocol.

Suppose we have the following $m$ generators%
\[
\left\{  \mathbf{N}\left(  \mathbf{u}_{i}\left(  D\right)  \right)  :1\leq
i\leq m\right\}  ,
\]
where%
\begin{equation}%
\begin{bmatrix}
\mathbf{u}_{1}\left(  D\right) \\
\mathbf{u}_{2}\left(  D\right) \\
\vdots\\
\mathbf{u}_{m}\left(  D\right)
\end{bmatrix}
=\left[  \left.
\begin{array}
[c]{c}%
\mathbf{z}_{1}\left(  D\right) \\
\mathbf{z}_{2}\left(  D\right) \\
\vdots\\
\mathbf{z}_{m}\left(  D\right)
\end{array}
\right\vert
\begin{array}
[c]{c}%
\mathbf{x}_{1}\left(  D\right) \\
\mathbf{x}_{2}\left(  D\right) \\
\vdots\\
\mathbf{x}_{m}\left(  D\right)
\end{array}
\right]  . \label{eq:n-k-construction-distillation}%
\end{equation}
We make no assumption about the commutation relations of the above generators.
We choose them solely for their error-correcting properties.

We again utilize the shifted symplectic product to design a convolutional
entanglement distillation protocol with multiple generators. Let us adopt the
following shorthand for the auto and cross shifted symplectic products of
generators $\mathbf{u}_{1}\left(  D\right)  ,\ldots,\mathbf{u}_{m}\left(
D\right)  $:%
\begin{align}
\mathbf{u}_{i}^{+} &  \equiv\left(  \mathbf{u}_{i}\odot\mathbf{u}_{i}\right)
\left(  D\right)  ^{+},\\
\mathbf{u}_{i,j} &  \equiv\left(  \mathbf{u}_{i}\odot\mathbf{u}_{j}\right)
\left(  D\right)  .
\end{align}
Consider the following matrix:%
\begin{equation}%
\begin{bmatrix}
\mathbf{a}_{1}\left(  D\right)  \\
\mathbf{a}_{2}\left(  D\right)  \\
\vdots\\
\mathbf{a}_{m}\left(  D\right)
\end{bmatrix}
=\left[  \left.
\begin{array}
[c]{cccc}%
\mathbf{u}_{1}^{+} & \mathbf{u}_{1,2} & \cdots & \mathbf{u}_{1,m}\\
0 & \mathbf{u}_{2}^{+} & \cdots & \mathbf{u}_{2,m}\\
\vdots &  & \ddots & \vdots\\
0 & \cdots & 0 & \mathbf{u}_{m}^{+}%
\end{array}
\right\vert \mathbf{I}_{m\times m}\right]  .\label{eq:augmented-conv-general}%
\end{equation}
The symplectic relations of the entries $\mathbf{a}_{i}\left(  D\right)  $ are
the same as the original $\mathbf{u}_{i}\left(  D\right)  $:%
\[
\left(  \mathbf{a}_{i}\odot\mathbf{a}_{j}\right)  \left(  D\right)  =\left(
\mathbf{u}_{i}\odot\mathbf{u}_{j}\right)  \left(  D\right)  \ \ \ \ \ \forall
i,j\in\left\{  1,\ldots,m\right\}  ,
\]
or equivalently, if $\Omega_{\mathbf{a}}\left(  D\right)  $ is the shifted
symplectic product matrix for the generators in
(\ref{eq:augmented-conv-general}) and $\Omega_{\mathbf{u}}\left(  D\right)  $
is the shifted symplectic product matrix for the \textquotedblleft$\mathbf{u}%
$\textquotedblright\ generators, then%
\[
\Omega_{\mathbf{a}}\left(  D\right)  =\Omega_{\mathbf{u}}\left(  D\right)  .
\]
We mention that the following matrix also has the same symplectic relations:%
\begin{equation}
\left[  \left.
\begin{array}
[c]{cccc}%
\mathbf{u}_{1}^{+} & 0 & \cdots & 0\\
\mathbf{u}_{2,1} & \mathbf{u}_{2}^{+} & \cdots & \vdots\\
\vdots &  & \ddots & 0\\
\mathbf{u}_{m,1} & \mathbf{u}_{m,2} & \cdots & \mathbf{u}_{m}^{+}%
\end{array}
\right\vert \mathbf{I}_{m\times m}\right]  .\label{eq:alternate-constr-matrix}%
\end{equation}
Let us rewrite (\ref{eq:augmented-conv-general}) as follows:%
\begin{equation}%
\begin{bmatrix}
\mathbf{a}_{1}\left(  D\right)  \\
\mathbf{a}_{2}\left(  D\right)  \\
\vdots\\
\mathbf{a}_{m}\left(  D\right)
\end{bmatrix}
=\left[  \left.
\begin{array}
[c]{c}%
\mathbf{z}_{1}^{\prime}\left(  D\right)  \\
\mathbf{z}_{2}^{\prime}\left(  D\right)  \\
\vdots\\
\mathbf{z}_{m}^{\prime}\left(  D\right)
\end{array}
\right\vert
\begin{array}
[c]{c}%
\mathbf{x}_{1}^{\prime}\left(  D\right)  \\
\mathbf{x}_{2}^{\prime}\left(  D\right)  \\
\vdots\\
\mathbf{x}_{m}^{\prime}\left(  D\right)
\end{array}
\right]  .
\end{equation}
The above matrix provides a straightforward way to make the original
generators commute with all of their shifts. We augment the generators in
(\ref{eq:n-k-construction-distillation}) by the generators $\mathbf{a}%
_{i}\left(  D\right)  $\ to get the following $m\times2\left(  n+m\right)
$\ matrix:%
\begin{equation}
\mathbf{U}^{\prime}\left(  D\right)  =\left[  \left.
\begin{array}
[c]{c}%
\mathbf{Z}\left(  D\right)
\end{array}
\right\vert
\begin{array}
[c]{c}%
\mathbf{X}\left(  D\right)
\end{array}
\right]  =\left[  \left.
\begin{array}
[c]{cc}%
\mathbf{z}_{1}\left(  D\right)   & \mathbf{z}_{1}^{\prime}\left(  D\right)  \\
\mathbf{z}_{2}\left(  D\right)   & \mathbf{z}_{2}^{\prime}\left(  D\right)  \\
\vdots & \vdots\\
\mathbf{z}_{m}\left(  D\right)   & \mathbf{z}_{m}^{\prime}\left(  D\right)
\end{array}
\right\vert
\begin{array}
[c]{cc}%
\mathbf{x}_{1}\left(  D\right)   & \mathbf{x}_{1}^{\prime}\left(  D\right)  \\
\mathbf{x}_{2}\left(  D\right)   & \mathbf{x}_{2}^{\prime}\left(  D\right)  \\
\vdots & \vdots\\
\mathbf{x}_{m}\left(  D\right)   & \mathbf{x}_{m}^{\prime}\left(  D\right)
\end{array}
\right]  .\nonumber
\end{equation}
Every row of the augmented matrix $\mathbf{U}^{\prime}\left(  D\right)  $ has
vanishing symplectic product with itself and any other row. This condition is
equivalent to the following matrix condition for shifted symplectic
orthogonality \cite{arxiv2004olliv}:%
\begin{equation}
\mathbf{Z}\left(  D\right)  \mathbf{X}^{T}\left(  D^{-1}\right)
-\mathbf{X}\left(  D\right)  \mathbf{Z}^{T}\left(  D^{-1}\right)  =0.
\end{equation}
The construction gives a commuting set of generators for arbitrary shifts and
thus forms a valid stabilizer.

We can readily develop a convolutional entanglement distillation protocol
using the above formulation. The generators in the augmented matrix
$\mathbf{U}^{\prime}\left(  D\right)  $\ correct for errors on the first $n$
ebits. The last $m$ ebits are noiseless ebits that help to obtain a commuting
stabilizer. It is necessary to catalyze the distillation protocol with
$\left(  n+m\right)  \nu$ noiseless ebits. Later frames can use the noiseless
ebits generated from previous frames. These initial noiseless ebits are
negligible when calculating the yield.

We comment more on the yield of the protocol. The protocol requires a set of
$m$ generators with $n+m$ Pauli entries. It generates $n$ ebits for every
frame. But it consumes $m$ noiseless ebits per frame. The net yield of a
protocol using the above construction is thus $\left(  n-m\right)  /n$.

The key benefit of the above construction is that we can use an arbitrary set
of Paulis for distilling noiseless ebits. This arbitrariness in the Paulis
implies that we can import an arbitrary classical convolutional binary or
quaternary code for use in a convolutional entanglement distillation protocol.

It is again straightforward to develop a noncatastrophic decoding circuit
using previous techniques \cite{isit2006grassl}. Every augmented generator in
$\mathbf{U}^{\prime}\left(  D\right)  $ has \textquotedblleft%
1\textquotedblright\ as an entry so that it satisfies the property required
for noncatastrophicity.

\begin{example}
We begin with a classical quaternary convolutional code with entries from
$GF\left(  4\right)  $:%
\begin{equation}
\left(  \cdots|0000|1\bar{\omega}10|1101|0000|\cdots\right)  .
\end{equation}
The above code is a convolutional version of the classical quaternary block
code from Ref.~\cite{science2006brun}. We multiply the above generator by
$\bar{\omega}$ and $\omega$ as prescribed in Refs.
\cite{ieee1998calderbank,ieee2007forney} and use the map in
(\ref{eq:gf4-pauli}) to obtain the following Pauli generators%
\begin{align}
\mathbf{N}\left(  \mathbf{u}_{1}\left(  D\right)  \right)   &  =\left(
\cdots|IIII|ZXZI|ZZIZ|IIII|\cdots\right)  ,\nonumber\\
\mathbf{N}\left(  \mathbf{u}_{2}\left(  D\right)  \right)   &  =\left(
\cdots|IIII|XYXI|XXIX|IIII|\cdots\right)  .
\end{align}
We determine binary polynomials corresponding to the above Pauli generators:%
\begin{equation}
\left[
\begin{array}
[c]{c}%
\mathbf{u}_{1}\left(  D\right)  \\
\mathbf{u}_{2}\left(  D\right)
\end{array}
\right]  =\left[  \left.
\begin{array}
[c]{cccc}%
1+D & D & 1 & D\\
0 & 1 & 0 & 0
\end{array}
\right\vert
\begin{array}
[c]{cccc}%
0 & 1 & 0 & 0\\
1+D & 1+D & 1 & D
\end{array}
\right]  .
\end{equation}
The first generator anticommutes with itself shifted by one to the left or
right, the second generator anticommutes with itself shifted by one to the
left or right, and the first generator anticommutes with the second shifted by
one to the left. The following shifted symplectic products confirm the above
commutation relations:%
\begin{equation}
\left(  \mathbf{u}_{1}\odot\mathbf{u}_{1}\right)  \left(  D\right)
=D^{-1}+D,\ \ \ \ \left(  \mathbf{u}_{2}\odot\mathbf{u}_{2}\right)  \left(
D\right)  =D^{-1}+D,\ \ \ \ \left(  \mathbf{u}_{1}\odot\mathbf{u}_{2}\right)
\left(  D\right)  =D^{-1}.\label{eq:symp-prod-relations1}%
\end{equation}
Consider the following two generators:%
\begin{equation}
\left[
\begin{array}
[c]{c}%
\mathbf{a}_{1}\left(  D\right)  \\
\mathbf{a}_{2}\left(  D\right)
\end{array}
\right]  =\left[  \left.
\begin{array}
[c]{cc}%
D & 0\\
D & D
\end{array}
\right\vert
\begin{array}
[c]{cc}%
1 & 0\\
0 & 1
\end{array}
\right]  .
\end{equation}
Their relations under the shifted symplectic product are the same as those in
(\ref{eq:symp-prod-relations1}).%
\begin{equation}
\left(  \mathbf{a}_{1}\odot\mathbf{a}_{1}\right)  \left(  D\right)  =\left(
\mathbf{u}_{1}\odot\mathbf{u}_{1}\right)  \left(  D\right)  ,\ \ \ \left(
\mathbf{a}_{2}\odot\mathbf{a}_{2}\right)  \left(  D\right)  =\left(
\mathbf{u}_{2}\odot\mathbf{u}_{2}\right)  \left(  D\right)  ,\ \ \ \left(
\mathbf{a}_{1}\odot\mathbf{a}_{2}\right)  \left(  D\right)  =\left(
\mathbf{u}_{1}\odot\mathbf{u}_{2}\right)  \left(  D\right)  .\nonumber
\end{equation}
We augment the generators $\mathbf{u}_{1}\left(  D\right)  $\ and
$\mathbf{u}_{2}\left(  D\right)  $\ to generators $\mathbf{u}_{1}^{\prime
}\left(  D\right)  $\ and $\mathbf{u}_{2}^{\prime}\left(  D\right)
$\ respectively as follows. The augmented matrix $\mathbf{U}^{\prime}\left(
D\right)  $ is%
\begin{equation}
\mathbf{U}^{\prime}\left(  D\right)  =\left[  \left.
\begin{array}
[c]{cccc}%
1+D & D & 1 & D\\
0 & 1 & 0 & 0
\end{array}%
\begin{array}
[c]{cc}%
D & 0\\
D & D
\end{array}
\right\vert
\begin{array}
[c]{cccc}%
0 & 1 & 0 & 0\\
1+D & 1+D & 1 & D
\end{array}%
\begin{array}
[c]{cc}%
1 & 0\\
0 & 1
\end{array}
\right]  .
\end{equation}
The first row of $\mathbf{U}^{\prime}\left(  D\right)  $ is generator
$\mathbf{u}_{1}^{\prime}\left(  D\right)  $\ and the second row is
$\mathbf{u}_{2}^{\prime}\left(  D\right)  $. The augmented generators have the
following Pauli representation.%
\begin{align}
\mathbf{N}\left(  \mathbf{u}_{1}^{\prime}\left(  D\right)  \right)   &
=\left(  \cdots|IIIIII|ZXZIXI|ZZIZZI|IIIIII|\cdots\right)  ,\nonumber\\
\mathbf{N}\left(  \mathbf{u}_{2}^{\prime}\left(  D\right)  \right)   &
=\left(  \cdots|IIIIII|XYXIIX|XXIXZZ|IIIIII|\cdots\right)  .
\end{align}
The original block code from Ref.~\cite{science2006brun} corrects for an
arbitrary single-qubit error. The above entanglement distillation protocol
corrects for a single-qubit error in eight qubits---two frames. This
error-correcting capability follows from the capability of the block code. The
yield of a protocol using the above stabilizer is again 1/2.
\end{example}

\subsection{CSS-Like Construction for Convolutional Entanglement Distillation}

We finally present a construction that allows us to import two arbitrary
binary classical codes for use in a convolutional entanglement distillation
protocol. The construction is similar to a CSS\ code because one code corrects
for bit flips and the other corrects for phase flips.

We could simply use the technique from the previous section to construct a
convolutional entanglement-distillation protocol. We could represent both
classical codes as codes over $GF\left(  4\right)  $. We could multiply the
bit-flip code by $\omega$ and the phase-flip code by $\bar{\omega}$ and use
the map in (\ref{eq:gf4-pauli})\ from $GF\left(  4\right)  $ to the Paulis. We
could then use the above method for augmentation and obtain a valid quantum
code for entanglement distillation. But there is a better method that exploits
the structure of a CSS\ code to minimize the number of initial catalytic
noiseless ebits.

Our algorithm below uses a Gram-Schmidt like orthogonalization procedure to
minimize the number of initial noiseless ebits. The procedure is similar to
the algorithm in \cite{arx2006brun}\ with some key differences.

Suppose we have $m$ generators $\left\{  \mathbf{N}\left(  \mathbf{w}%
_{i}\left(  D\right)  \right)  :1\leq i\leq m\right\}  $ where%
\begin{equation}%
\begin{bmatrix}
\mathbf{w}_{1}\left(  D\right) \\
\vdots\\
\mathbf{w}_{p}\left(  D\right) \\
\mathbf{w}_{p+1}\left(  D\right) \\
\vdots\\
\mathbf{w}_{m}\left(  D\right)
\end{bmatrix}
=\left[  \left.
\begin{array}
[c]{c}%
\mathbf{z}_{1}\left(  D\right) \\
\vdots\\
\mathbf{z}_{p}\left(  D\right) \\
\mathbf{0}\\
\vdots\\
\mathbf{0}%
\end{array}
\right\vert
\begin{array}
[c]{c}%
\mathbf{0}\\
\vdots\\
\mathbf{0}\\
\mathbf{x}_{1}\left(  D\right) \\
\vdots\\
\mathbf{x}_{m-p}\left(  D\right)
\end{array}
\right]  .
\end{equation}
and each vector $\mathbf{w}_{i}\left(  D\right)  $ has length $2n$. The above
matrix could come from two binary classical codes. The vectors $\mathbf{z}%
_{1}\left(  D\right)  $,\ldots,$\mathbf{z}_{p}\left(  D\right)  $ could come
from one code, and the vectors $\mathbf{x}_{1}\left(  D\right)  $%
,\ldots,$\mathbf{x}_{m-p}\left(  D\right)  $ could come from another code. The
following orthogonality relations hold for the above vectors:%
\begin{align}
\forall\ \ 1\leq i,j\leq p  &  :\left(  \mathbf{w}_{i}\odot\mathbf{w}%
_{j}\right)  \left(  D\right)  =0,\\
\forall\ \ p+1\leq i^{\prime},j^{\prime}\leq m  &  :\left(  \mathbf{w}%
_{i^{\prime}}\odot\mathbf{w}_{j^{\prime}}\right)  \left(  D\right)  =0.
\end{align}
We exploit the above orthogonality relations in the algorithm below.

We can perform a Gram-Schmidt process on the above set of vectors. This
process orthogonalizes the vectors with respect to the shifted symplectic
product. The procedure does not change the error-correcting properties of the
original codes because all operations are linear.

The algorithm breaks the set of vectors above into pairs. Each pair consists
of two vectors which are symplectically nonorthogonal to each other, but which
are symplectically orthogonal to all other pairs. Any remaining vectors that
are symplectically orthogonal to all other vectors are collected into a
separate set, which we call the set of isotropic vectors. This idea is similar
to the decomposition of a vector space into an isotropic and symplectic part.
We cannot label the decomposition as such because the shifted symplectic
product is not a true symplectic product.

We detail the initialization of the algorithm. Set parameters $i=0$, $c=0$,
$l=0$. The index $i$ labels the total number of vectors processed, $c$ gives
the number of pairs, and $l$ labels the number of vectors with no partner.
Initialize sets $\mathcal{U}$ and $\mathcal{V}$ to be null: $\mathcal{U}%
=\mathcal{V}=\emptyset$. $\mathcal{U}$ keeps track of the pairs and
$\mathcal{V}$ keeps track of the vectors with no partner.

The algorithm proceeds as follows. While $i\leq m$, let $j\geq2c+l+2$ be the
smallest index for a $\mathbf{w}_{j}\left(  D\right)  $ for which $\left(
\mathbf{w}_{2c+l+1}\odot\mathbf{w}_{j}\right)  \left(  D\right)  \neq0$.
Increment $l$ and $i$ by one, add $i$ to $\mathcal{V}$, and proceed to the
next round if no such pair exists. Otherwise, swap $\mathbf{w}_{j}\left(
D\right)  $ with $\mathbf{w}_{2c+l+2}\left(  D\right)  $. For $r\in\left\{
2c+l+3,\ldots,m\right\}  $, perform%
\begin{equation}
\mathbf{w}_{r}\left(  D\right)  =\left(  \mathbf{w}_{2c+l+2}\odot
\mathbf{w}_{2c+l+1}\right)  \left(  D\right)  \mathbf{w}_{r}\left(  D\right)
+\left(  \mathbf{w}_{r}\odot\mathbf{w}_{2c+l+2}\right)  \left(  D^{-1}\right)
\mathbf{w}_{2c+l+1}\left(  D\right)  ,\nonumber
\end{equation}
if $\mathbf{w}_{r}\left(  D\right)  $ has a purely $z$ component. Perform%
\begin{equation}
\mathbf{w}_{r}\left(  D\right)  =\left(  \mathbf{w}_{2c+l+1}\odot
\mathbf{w}_{2c+l+2}\right)  \left(  D\right)  \mathbf{w}_{r}\left(  D\right)
+\left(  \mathbf{w}_{r}\odot\mathbf{w}_{2c+l+1}\right)  \left(  D^{-1}\right)
\mathbf{w}_{2c+l+2}\left(  D\right)  ,\nonumber
\end{equation}
if $\mathbf{w}_{r}\left(  D\right)  $ has a purely $x$ component. Divide every
element in $\mathbf{w}_{r}\left(  D\right)  $ by the greatest common factor if
the GCF is not equal to one. Then%
\begin{equation}
\left(  \mathbf{w}_{r}\odot\mathbf{w}_{2c+l+1}\right)  \left(  D\right)
=\left(  \mathbf{w}_{r}\odot\mathbf{w}_{2c+l+2}\right)  \left(  D\right)  =0.
\end{equation}
Increment $c$ by one, increment $i$ by one, add $i$ to $\mathcal{U}$, and
increment $i$ by one. Proceed to the next round.

We now give the method for augmenting the above generators so that they form a
commuting stabilizer. At the end of the algorithm, the sets $\mathcal{U}$ and
$\mathcal{V}$\ have the following sizes: $\left\vert \mathcal{U}\right\vert
=c$ and $\left\vert \mathcal{V}\right\vert =l$. Let us relabel the vectors
$\mathbf{w}_{i}\left(  D\right)  $ for all $1\leq i\leq2c+l$. We relabel all
pairs: call the first $\mathbf{u}_{i}\left(  D\right)  $ and call its partner
$\mathbf{v}_{i}\left(  D\right)  $ for all $1\leq i\leq c$. Call any vector
without a partner $\mathbf{u}_{c+i}\left(  D\right)  $ for all $1\leq i\leq
l$. The relabeled vectors have the following shifted symplectic product
relations after the Gram-Schmidt procedure:%
\begin{align}
\left(  \mathbf{u}_{i}\odot\mathbf{v}_{j}\right)  \left(  D\right)   &
=f_{i}\left(  D\right)  \delta_{ij}\ \ \forall\ \ i,j\in\left\{
1,\ldots,c\right\}  ,\nonumber\\
\left(  \mathbf{u}_{i}\odot\mathbf{u}_{j}\right)  \left(  D\right)   &
=0\ \ \ \ \ \ \ \ \ \ \ \ \forall\ \ i,j\in\left\{  1,\ldots,l\right\}
,\nonumber\\
\left(  \mathbf{v}_{i}\odot\mathbf{v}_{j}\right)  \left(  D\right)   &
=0\ \ \ \ \ \ \ \ \ \ \ \ \forall\ \ i,j\in\left\{  1,\ldots,c\right\}  ,
\end{align}
where $f_{i}\left(  D\right)  $ is an arbitrary polynomial. Let us arrange the
above generators in a matrix as follows:%
\begin{equation}%
\begin{bmatrix}
\mathbf{u}_{1}^{T}\left(  D\right)  & \cdots & \mathbf{u}_{c}^{T}\left(
D\right)  & \mathbf{v}_{1}^{T}\left(  D\right)  & \cdots & \mathbf{v}_{c}%
^{T}\left(  D\right)  & \mathbf{u}_{c+1}^{T}\left(  D\right)  & \cdots &
\mathbf{u}_{c+l}^{T}\left(  D\right)
\end{bmatrix}
^{T}.
\end{equation}
We augment the above generators with the following matrix so that all vectors
are orthogonal to each other:%
\begin{equation}
\left[  \left.
\begin{array}
[c]{cccc}%
f_{1}\left(  D\right)  & 0 & \cdots & 0\\
0 & f_{2}\left(  D\right)  &  & \vdots\\
\vdots &  & \ddots & 0\\
0 & \cdots & 0 & f_{c}\left(  D\right) \\
\mathbf{0}_{c\times1} & \mathbf{0}_{c\times1} & \cdots & \mathbf{0}_{c\times
1}\\
\mathbf{0}_{l\times1} & \mathbf{0}_{l\times1} & \cdots & \mathbf{0}_{l\times1}%
\end{array}
\right\vert
\begin{array}
[c]{c}%
\mathbf{0}_{1\times c}\\
\mathbf{0}_{1\times c}\\
\vdots\\
\mathbf{0}_{1\times c}\\
\mathbf{I}_{c\times c}\\
\mathbf{0}_{l\times c}%
\end{array}
\right]  .
\end{equation}

The yield of a protocol using the above construction is $\left(  n-m\right)
/n$. Suppose we use an $\left[  n,k_{1}\right]  $ classical binary
convolutional code for the bit flips and an $\left[  n,k_{2}\right]  $
classical binary convolutional code for the phase flips. Then the
convolutional entanglement distillation protocol has yield $\left(
k_{1}+k_{2}-n\right)  /n$.

\begin{example}
Consider a binary classical convolutional code with the following parity check
matrix:%
\begin{equation}%
\begin{bmatrix}
1+D & D & 1
\end{bmatrix}
.
\end{equation}
We can use the above parity check matrix to correct both bit and phase flip
errors in an entanglement distillation protocol. Our initial quantum parity
check matrix is%
\begin{equation}
\left[  \left.
\begin{array}
[c]{ccc}%
1+D & D & 1\\
0 & 0 & 0
\end{array}
\right\vert
\begin{array}
[c]{ccc}%
0 & 0 & 0\\
1+D & D & 1
\end{array}
\right]  .
\end{equation}
The shifted symplectic product for the first and second row is $D^{-1}+D$. We
therefore augment the above matrix as follows:%
\begin{equation}
\left[  \left.
\begin{array}
[c]{cccc}%
1+D & D & 1 & D^{-1}+D\\
0 & 0 & 0 & 0
\end{array}
\right\vert
\begin{array}
[c]{cccc}%
0 & 0 & 0 & 0\\
1+D & D & 1 & 1
\end{array}
\right]  .
\end{equation}
The above matrix gives a valid stabilizer for use in an entanglement
distillation protocol. The yield of a protocol using the above stabilizer is 1/3.
\end{example}

\section{Closing Remarks}

We constructed a theory of convolutional entanglement distillation. The
entanglement-assisted protocol assumes that the sender and receiver have some
noiseless ebits to use as a catalyst for distilling more ebits. These
protocols have the benefit of lifting the self-orthogonality constraint. Thus
we are able to import an arbitrary classical convolutional code for use in a
convolutional entanglement distillation protocol. The error-correcting
properties and rate of the classical code translate to the quantum case. Brun,
Devetak, and Hsieh first constructed the method for importing an arbitrary
classical block code in their work on entanglement-assisted codes
\cite{arx2006brun,science2006brun}. Our theory of convolutional entanglement
distillation paves the way for exploring protocols that approach the optimal
distillable entanglement by using the well-established theory of classical
convolutional coding.

Convolutional entanglement distillation protocols also hold some key
advantages over block entanglement distillation protocols. They have a higher
yield of ebits, lower decoding complexity, and are an online protocol that a
sender and receiver can employ as they acquire more noisy ebits.

We suggest that convolutional entanglement distillation protocols may bear
some advantages for distillation of a secret key because of the strong
connection between distillation and privacy \cite{PhysRevLett.85.441}. We are
currently investigating whether convolutional entanglement distillation
protocols can improve the secret key rate for quantum key distribution.

\chapter{Conclusion}
\begin{saying}
Fly Photon! Trap Ion! Stay Spin!\\
Which of you we'll implement in?\\
For quantum coherence\\
Demands perseverance,\\
We'll try and God only knows when.
\end{saying}

Quantum error correction theory plays a fundamental role in quantum computing and communication. Without
error-correcting protocols, quantum computing and communication devices will fall prey to
the hands of decoherence. It is crucial
for theorists to continue developing techniques to protect quantum information because a
fundamental discovery in the theory might bring quantum computing closer to reality.

We have augmented the theory of quantum error correction by
contributing a theory of entanglement-assisted quantum convolutional coding.
With the ability to import arbitrary classical convolutional codes, we can now construct quantum convolutional codes
that inherit the desirable characteristics of their ancestral classical convolutional codes.
Our entanglement-assisted quantum convolutional coding theory should be useful for future quantum
communication engineers if they would like to have codes with a good performance/complexity trade-off.

We have said and again stress
that the next important line of investigation is to combine the quantum turbo coding theory \cite{arx2007poulin} with
this theory. Convolutional codes form the constituent codes of a quantum turbo code and it would
be interesting to investigate if we can enhance performance with entanglement-assisted codes
as the constituent codes.
This investigation might produce quantum codes that come close to achieving the entanglement-assisted or ``father'' capacity.

There are also other avenues to pursue---these avenues include any scenario
where entanglement assistance might help. It is useful to
inspect the results of quantum Shannon theory to determine whether we can construct a coding scenario that fits with
the constructions in the asymptotic scenario. We have conducted little analysis
of the performance of our codes beyond stating that they inherit the properties of the imported classical codes.
It would be interesting to observe the performance of these codes in a realistic noisy
quantum channel when using syndrome-based Viterbi processing for correction of quantum error. 

We have seen much creativity in the field of quantum
error correction in the past decade because of the many strange resources available in quantum theory.
Experimentalists are increasingly using an array of quantum error correction techniques with the goal of bringing us closer
to having qubits with good quantum coherence. One can only
imagine what resources future quantum coding theorists will exploit
to protect their valuable quantum information.

\bibliographystyle{unsrt}
\bibliography{wilde-thesis}

\end{document}